\newcommand{\renewtheorem}[1]{\expandafter\let\csname #1\endcsname\relax
  \expandafter\let\csname c@#1\endcsname\relax
  \expandafter\let\csname end#1\endcsname\relax
  \newtheorem{#1}}
\theoremstyle{plain}
\theoremstyle{definition}
\newtheorem{properties}[thm]{Properties}
\theoremstyle{remark}
\newtheorem{dir}{Direction}
\newtheorem{construction}[thm]{Construction}
\newenvironment{subproof}[1][\proofname]{\begin{proof}[#1]}{\end{proof}}
\newcommand\acc{\mathit{acc}}
\newcommand\rej{\mathit{rej}}
\newcommand\sink{\mathit{sink}}
\newcommand\abra[1]{\langle #1 \rangle}
\newcommand\pbra[1]{\left( #1 \right)}
\newcommand\abs[1]{\left\vert #1 \right\vert}
\newcommand\lng[2]{\mathcal{L}_{#1}(#2)}
\newcommand\lmc{\mathcal{M}}
\newcommand\waannotated{\wa^\dagger}
\newcommand\pr{\mathbb{P}}
\newcommand\tvra{\mathit{r}}
\newcommand\tvrs{\mathit{rd}}
\newcommand\tvstandard{\mathit{tv}}
\newcommand\revar{x}
\newcommand\revec{\vec{x}}
\newcommand\chr{a}
\newcommand\intvar{\lambda}
\newcommand\basevecraw{b}
\newcommand\basevec{\vec{\basevecraw}}
\newcommand\pvecraw{r}
\newcommand\period{T}
\newcommand\z[1]{z(#1)}
\newcommand\dgf{d}
\newcommand{\SCC}{\ensuremath{\mathit{SCC}}}
\newcommand\adm{\mathcal{D}}
\newcommand{\scc}{\varphi}
\newcommand\pa{\mathcal{A}}
\newcommand\nfa{\mathcal{N}}
\newcommand\wa{\mathcal{W}}
\newcommand\nfaof[2]{\nfa_{#2}(#1)}
\renewcommand\nu{f}
\newcommand\nat[1]{\ensuremath{\{1,\dots,#1\}}}
\newcommand\abfsubset{\mathrel{{\ooalign{\hss\raisebox{-0.6ex}{\small$\sim$}\hss\cr\raisebox{0.4ex}{$\subset$}}}}
}
\newcommand\myxrightarrow[2][]{
    \xrightarrow[{\raisebox{1.3ex-\heightof{$\scriptstyle#1$}}[0pt]{$\scriptstyle#1$}}]{#2}}
\NewDocumentCommand{\trns}{ O{} m }{\myxrightarrow[#1]{#2}}
\newcommand*{\br}{\nobreak\discretionary{}{\hbox{}}{}}
\newcommand{\rexp}{$\mathrm{Th}(\mathbb R_{\mathsf{exp}})$}
\keywords{Weighted automata, asymptotics, labelled Markov chains, differential privacy}
\begin{document}

\title{The Big-O Problem}
\titlecomment{{\lsuper*}A prior version~\cite{ChistikovKMP20} appeared at CONCUR'20.
  The present paper contains full proofs and strengthens some of the results.
  }

  \author[D.~Chistikov]{Dmitry Chistikov\rsuper{a}}
  \address{Centre for Discrete Mathematics and its Applications (DIMAP) \& Department of Computer Science, University of Warwick, UK}
  \thanks{Dmitry Chistikov was supported in part by the Royal Society International Exchanges scheme (IEC\textbackslash{}R2\textbackslash{}170123).}
\author[S.~Kiefer]{Stefan Kiefer\rsuper{b}}
\address{Department of Computer Science, University of Oxford, UK}
\thanks{Stefan Kiefer was supported by a Royal Society Research Fellowship.}
\author[A.~S.~Murawski]{Andrzej S. Murawski\rsuper{b}}
\thanks{Andrzej S. Murawski was supported by a Royal Society Leverhulme Trust Senior Research Fellowship and the International Exchanges Scheme (IE161701).}
\author[D.~Purser]{David Purser\rsuper{c}}
\address{Max Planck Institute for Software Systems, Saarland Informatics Campus, Germany}
\thanks{
During the development of this work David Purser was affiliated with the University of Warwick, UK, where he was supported by the UK EPSRC Centre for Doctoral Training in Urban Science (EP/L016400/1)
and in part
by the Royal Society International Exchanges scheme (IEC\textbackslash{}R2\textbackslash{}170123).
}

\begin{abstract}
  \noindent Given two weighted automata, we consider the problem of whether one is big-O of the other, i.e.,  if the weight of every finite word in the first is not greater than some constant multiple of the weight in the second. 

We show that the problem is undecidable, even for the instantiation of weighted automata as labelled Markov chains. Moreover, even when it is known that one weighted automaton is big-O of another, 
the problem of finding or approximating the associated constant is also undecidable. 

Our positive results show that the big-O problem is polynomial-time solvable for unambiguous automata, $\coNP$-complete for unlabelled weighted automata (i.e., when the alphabet is a single character) and decidable, subject to Schanuel's conjecture, when the language is bounded (i.e., a subset of $w_1^*\dots w_m^*$ for some finite words $w_1,\dots,w_m$) or when the automaton has finite ambiguity. 

On labelled Markov chains, the problem can be restated as a ratio total variation distance, which, instead of finding the maximum difference between the probabilities of any two events, finds the maximum ratio between the probabilities of any two events. The problem is related to $\varepsilon$-differential privacy, for which the optimal constant of the big-O notation is exactly $\exp(\varepsilon)$.
\end{abstract}

\maketitle

\section{Introduction}\label{sec:intro}

Weighted automata over finite words
are a well-known and powerful model of computation,
a quantitative analogue of finite-state automata.
Special cases of weighted automata include
nondeterministic finite automata
and labelled Markov chains, two standard formalisms for modelling systems
and processes.
Algorithms for analysis of weighted automata have been studied
both in the early theory of computing and
more recently by the infinite-state systems and algorithmic verification
communities.

Given two weighted automata $\mathcal A$, $\mathcal B$
over an algebraic structure $(\mathcal S, {+}, {\times})$,
the equivalence problem asks whether the two associated functions~$f_{\mathcal A}, f_{\mathcal B} \colon \Sigma^* \to \mathcal S$
are equal: $f_{\mathcal A}(w) = f_{\mathcal B}(w)$ for all finite
words $w$ over the alphabet $\Sigma$.
Over the ring $(\mathbb Q, {+}, {\times})$,
equivalence
is decidable in polynomial time by the results of
Sch\"utzenberger~\cite{schutzenberger1961definition} and Tzeng~\cite{tzeng1992polynomial};
subsequently, fast parallel (\NC\ and \RNC) algorithms have been found
for this problem~\cite{Tzeng96,kiefer2013complexity}.
In contrast, for semirings the equivalence problem is hard:
undecidable~\cite{Krob94,AlmagorBK11} for the semiring $(\mathbb Q, {\max}, {+})$
and \PSPACE-hard~\cite{MS72} for the Boolean semiring
(for which weighted automata are usual nondeterministic finite automata
 and equivalence is equality of recognized languages).
For the ring $(\mathbb Q, {+}, {\times})$, replacing $=$ with $\le$ makes the problem harder:
the question of whether $f_{\mathcal A}(w) \le f_{\mathcal B}(w)$
for all $w \in \Sigma^*$ is undecidable---even if $f_{\mathcal A}$ is constant~\cite{paz2014introduction}.
This problem subsumes the universality problem for (Rabin) probabilistic automata,
yet another subclass of weighted automata (see, e.g.,~\cite{fijalkow2017undecidability}).

The problem of whether $f_{\mathcal A}(w) \le f_{\mathcal B}(w)$ holds for all $w \in \Sigma^*$ has often been considered, under different semirings, but also for other forms of weighted automata, in which, e.g., $\limsup$ or (discounted) limit averages are used to combine the weights along a run.
For example, most decidability and complexity results in \cite{ChatterjeeDH10,BansalCV18} are on such weighted automata.
Alternatively, to regain decidability in view of the undecidability result mentioned in the previous paragraph, one can consider semantic restrictions on the weighted automaton, e.g., the assumption that, for every~$w$, all runs on~$w$ have the same weight.
This route has been taken, e.g.,  in \cite{abs-1111-0862}.

In this paper, we introduce and study another natural problem,
in which the condition $f_{\mathcal A}(w) \le f_{\mathcal B}(w)$ 
is relaxed.
Given $\mathcal A$ and $\mathcal B$ as above,
is it true that there exists a constant $c > 0$ such that
\begin{equation*}
f_{\mathcal A}(w) \le c \cdot f_{\mathcal B}(w)
\qquad
\text{for all $w \in \Sigma^*$\,?}
\end{equation*}
Using standard mathematical notation, this condition asserts
that $f_{\mathcal A}(w) = O(f_{\mathcal B}(w))$ as $|w| \to \infty$,
and we refer to this problem as the \emph{big-O} problem accordingly.\footnote{There also exists a related but slightly different definition of big-O;
          see \cref{remark:asymptotic-big-oh:first} for details on the corresponding
          version of our big-O problem.}
The \emph{big-$\Theta$} problem
(which turns out to be computationally equivalent to the big-O problem),
in line with the $\Theta(\cdot)$ notation in analysis of algorithms,
asks whether $f_{\mathcal A} = O(f_{\mathcal B})$ and
$f_{\mathcal B} = O(f_{\mathcal A})$.

We restrict our attention to
the ring $(\mathbb Q, {+}, {\times})$ and
only consider \emph{non-negative weighted automata},
i.e., those in which all transitions have non-negative weights.
We remark that, even under this restriction,
weighted automata still form a superclass of
(Rabin) probabilistic automata, a non-trivial and rich model of computation.
Our initial motivation to study the big-O problem came from yet another formalism,
labelled Markov chains (LMCs).
One can think of the semantics of LMCs as giving a probability distribution
or subdistribution on the set of all finite words.
LMCs, often under the name Hidden Markov Models, are widely employed in
a diverse range of applications;
in computer-aided verification, they are perhaps the most fundamental model for
probabilistic systems, with model-checking tools such as Prism~\cite{KNP11} or
Storm~\cite{Storm} based on analyzing LMCs efficiently.
All the results in our paper (including hardness results) hold for LMCs too. Our main findings are as follows.
\begin{itemize}
\item
The big-O problem for non-negative WA and LMCs turns out to be \textbf{undecidable
in general}, by a reduction from nonemptiness for probabilistic automata.
\item
In the \textbf{unary case}, i.e.,
if the input alphabet $\Sigma$ is a singleton, the big-O problem
becomes decidable and is, in fact, complete for the complexity class \coNP.
Unary LMCs are a simple and pure probabilistic model of computation:
they run in discrete time and can terminate at any step; the big-O problem
refers to this termination probability in two LMCs (or two WA).
Our upper bound argument refines an analysis of growth of entries in
powers of non-negative matrices by Friedland and Schneider~\cite{schneider1986influence},
and
the lower bound is obtained by a reduction from unary NFA universality~\cite{stockmeyer1973word}.\item
In a more general \textbf{bounded case}, i.e.,
if the languages of all words $w$ associated with non-zero weight are included in
$w_1^* w_2^* \ldots w_m^*$ for some finite words $w_1, \ldots, w_m \in \Sigma^*$
(that is, are \emph{bounded in the sense of Ginsburg and Spanier};
 see~\cite[Chapter~5]{GinsburgMTCFL} and~\cite{ginsburg1964bounded}),
the big-O problem is decidable subject to Schanuel's conjecture.
This is a well-known conjecture in transcendental number theory~\cite{lang1966introduction}, which
implies that the first-order theory of the real numbers with the exponential
function is decidable~\cite{macintyre1996decidability}.
Intuitively,
our reliance on this conjecture is linked to
the expressions for the growth rate in powers of non-negative matrices.
These expressions are sums of terms of the form
$\rho^n \cdot n^k$, where $n$ is the length of a word, $k \in \mathbb N$,
and $\rho$ is an algebraic number.
Our algorithms (however implicitly) need to compare for equality
pairs of real numbers of the form
$\log \rho_1 / \log \rho_2$, where $\rho_i$ are algebraic,
and it is an open problem in number theory
whether there is an effective procedure for this task
(the four exponentials conjecture asks whether two such ratios
 can ever be equal; see, e.g., Waldschmidt~\cite[Sections~1.3 and~1.4]{Waldschmidt00}).

Bounded languages form a well-known subclass of regular languages.
In fact, a regular (or even context-free) language $L$ is bounded if and only if the number of
words of length~$n$ in $L$ is at most polynomial in~$n$. All other regular
languages have, in contrast, exponential growth rate
(a fact rediscovered multiple times; see, e.g., references in Gawrychowski et~al.~\cite{GawrychowskiKRS10}).
Bounded languages have been studied from combinatorial and algorithmic points
of view since the 1960s~\cite{ginsburg1964bounded,GawrychowskiKRS10},
and have recently been used, e.g., in the analysis of
quantitative information flow problems in computer security~\cite{Mestel19fsttcs,Mestel19csf}.
In the context of labelled Markov chains,
languages that are subsets of $a_1^* a_2^* \ldots a_m^*$ (for individual
letters $a_1, \ldots, a_m \in \Sigma$) model consecutive arrival of $m$~events
in a discrete-time system. It is curious that natural decision problems
for such simple systems can lead to intricate algorithmic questions in number theory
at the border of decidability.
\item
For \textbf{unambiguous automata}, i.e., where every word has at most one accepting
path, the big-O problem is also decidable and can be solved in polynomial time.
\item
For \textbf{finitely ambiguous automata}, i.e., where there exists $k$, such that every word has at most $k$ accepting
paths, the big-O problem is decidable subject to Schanuel's conjecture, similarly to the case of bounded languages.
\end{itemize}

\paragraph*{Further motivation and related work}

In the labelled Markov chain setting,
the big-O problem can be reformulated as a boundedness problem for the following function.
For two LMCs $\mathcal A$ and $\mathcal B$, define
the (asymmetric) \emph{ratio variation function} by
\[\tvra(\mathcal A, \mathcal B) = \sup_{E \subseteq \Sigma^*} \pbra{\frac{f_{\mathcal A}(E) }{ f_{\mathcal B}(E)}},\]
where $f_{\mathcal A}(E)$ and $f_{\mathcal B}(E)$ denote the total probability mass associated with
an arbitrary set of finite words $E \subseteq \Sigma^*$ in $\mathcal A$ and $\mathcal B$, respectively.
Here we assume $\frac{0}{0}=0$ and $\frac{x}{0}=\infty$ for $x>0$.
Observe that, because $\max(\frac{a}{b},\frac{c}{d}) \ge \frac{a+c}{b+d}$ for $a,b,c,d\ge 0$, the supremum over $E \subseteq \Sigma^*$
can be replaced with supremum over $w\in \Sigma^*$. Consequently, the big-O problem for LMCs is equivalent to deciding whether $\tvra(\mathcal A, \mathcal B) <\infty$, we give a formal proof of this in~\cref{sec:bigord}.

Finding the value of $\tvra$ amounts to  asking for the optimal (minimal) constant in the big-O notation.
Further, one can consider a symmetric variant, the \emph{ratio distance}:
$\tvrs(\mathcal A, \mathcal B) = \max\{\tvra(\mathcal A, \mathcal B),\tvra(\mathcal B, \mathcal A)\}$,
by analogy with big-$\Theta$ (\cref{prop:simplifytvrasa} also applies when $\tvra$ is replaced with $\tvrs$).
Now, $\tvrs$ is a ratio-oriented variant of the classic \emph{total variation distance} $\tvstandard$, defined by $\tvstandard(\mathcal A, \mathcal B) = \sup_{E \subseteq \Sigma^*} (f_{\mathcal A}(E) - f_{\mathcal B}(E))$,
which is a well-established way of comparing two labelled Markov chains~\cite{chen2014total,kiefer2018computing}.
We also consider
the problem of approximating $\tvra$ (as well as $\tvrs$)
to a given precision and the problem of comparing it with a given constant (threshold problem),
showing that both are undecidable.

The ratio distance~$\tvrs$ is  also equivalent to  the exponential of the  \emph{multiplicative total variation distance}
defined in \cite{chatzikokolakis2014generalized,DBLP:journals/corr/abs-0809-4794} in the context of differential privacy.
Consider a system $\lmc$, modelled by a single labelled Markov chain, where output words are observable to the environment but we want to protect the privacy of the starting configuration.  Let $R\subseteq Q\times Q$ be a symmetric relation, which relates the starting configurations intended to remain indistinguishable. Given $\varepsilon\ge 0$, we say that $\lmc$ is $\varepsilon$-differentially private (with respect to $R$) if, for all $(s,s')\in R$, we have
$f_s(E) \leq e^{\varepsilon}\cdot f_{s'}(E)$
for every (observable) set of traces $E\subseteq \Sigma^*$ \cite{dwork2006calibrating,chistikov2019asymmetric}.
\textbf{Here in the subscript of $f$ and elsewhere,
references to states $s$ and $s'$ replace references to LMCs/automata:
$\mathcal M$ stays implicit, and we specify
which state it is executed from.}\footnote{It is standard that two automata $\mathcal{A},\mathcal{B}$ can be merged into the same automaton by taking the disjoint union (relabeling states where necessary) and assigning $s,s'$ as the starting state of $\mathcal{A},\mathcal{B}$ respectively.} 
Note that there exists such an $\varepsilon$ if and only if $\tvra(s,s')<\infty$ for all $(s,s')\in R$ or, equivalently,
(the LMC $\mathcal M$ executed from)  $s$ is big-O of
(the LMC $\mathcal M$ executed from)  $s'$ for all $(s,s')\in R$.
In fact, the minimal such $\varepsilon$ satisfies $e^\varepsilon = \max_{(s,s')\in R} \tvra(s,s')$, thus $\tvra$ captures the level of differential privacy between $s$ and $s'$.

Our results show that even deciding whether the multiplicative total variation distance
is finite or $+\infty$ is, in general, impossible.
Likewise, it is undecidable whether a system modelled by a labelled Markov chain provides any degree of differential privacy, however low.

\section{Preliminaries}\label{sec:prelims}

Let $\mathbb{N},\mathbb{Z},\mathbb{Q}$ and $\mathbb{R}$ be the natural, integer, rational and real numbers respectively. When accompanied by a constraint in the subscript, we restrict to the subset of numbers satisfying the constraint. For example, $\mathbb{N}_{\ge k}$ are the natural numbers greater than or equal to $k$, and in particular, $\mathbb{N}_{>0}$ are the positive natural numbers.

\begin{defi}
A \emph{weighted automaton} $\wa$ over the $(\mathbb{Q},{+},{\times})$ semiring is defined as
a 4-tuple $\abra{Q, \Sigma{}, M, F}$, where
$Q$ is a finite set of \emph{states},
$\Sigma$ is a finite \emph{alphabet},
$M: \Sigma{} \to \mathbb{Q}^{Q\times Q}$ is a \emph{transition weighting function},
and $F \subseteq Q$ is a set of \emph{final states}.
We consider only \textit{non-negative} weighted automata, i.e. $M(a)(q,q') \ge 0$ for all $a\in \Sigma$ and $q,q' \in Q$.
\end{defi}
In complexity-theoretic arguments, we assume that each weight is given as a pair of integers (numerator and denominator) in binary.
The description size is then the number of bits required to represent $\abra{Q,\Sigma,M, F}$, including the bit size of the weights.

Each weighted automaton defines functions $\nu_s: \Sigma^* \to \mathbb{R}$, where for all $s\in Q$
\[\nu_s(w) = \sum_{t\in F} \pbra{M(a_1)\times M(a_2)\times \dots \times M(a_n)}_{s,t}
\qquad\text{ for  $w= a_1a_2\dots a_n\in\Sigma^\ast$ }
\]
and
$A\times B$ is standard matrix multiplication. We refer to $\nu_s(w)$ as \emph{the weight of $w$ from state $s$}.
Without loss of generality, a weighted automaton can have a single final state. If not,  introduce a new unique final state
$t$ s.t.\ $M(a)(q,t) = \sum_{q'\in F} M(a)(q,q')$ for all $q\in Q$,$a\in \Sigma$.

We will typically define weighted automata by listing transitions as $q\trns[a]{p}q'$ (to mean $M(a)(q,q') = p$) with the assumption that any unspecified transition has weight $0$.

 \begin{defi}\label{def:nfaof} We denote by $\lng{s}{\wa}$  the set of $w\in \Sigma^\ast$
with $\nu_s(w) > 0 $, that is, with positive weight from $s$. Equivalently, this is the language of $\nfaof{\wa}{s}$, the non-deterministic finite automaton (NFA) formed from the same set of states (and final states) as $\wa$, start state $s$, and transitions $q\trns{a} q'$ whenever $M(a)(q,q') > 0$. \end{defi}

Given $s,s'\in Q$, we say that \textbf{\emph{$s$ is big-O of $s'$}} if there exists $C > 0 $ such that
$\nu_s(w) \le C \cdot \nu_{s'}(w)$  for all $w\in \Sigma^*$.
The paper studies  the following problem.

\begin{defi}[\textsc{Big-O Problem}]\mbox{}\\
\begin{tabular}{ll}
\textsc{input} & Weighted automaton $\abra{Q, \Sigma{}, M, F}$ and $s,s'\in Q$\\
\textsc{output} & Is $s$ big-O of $s'$?
\end{tabular}
\end{defi}

\noindent In the paper we also work with labelled Markov chains. In particular, they will appear in examples and hardness (including undecidability) arguments.
As they are a special class of weighted automata, this will imply hardness (resp. undecidability) for weighted automata in general.
On the other hand, our decidability results will be phrased using weighted automata, which makes them applicable  to labelled Markov chains.
\begin{defi}
A \emph{labelled Markov chain} (LMC) is a (non-negative) weighted automaton  $\abra{Q, \Sigma{}, M, F}$ such that, for all $q\in Q\setminus F$,
we have $\sum_{q'\in Q}\sum_{a\in \Sigma{}} M(a)(q,q') = 1$  and   $M(a)(q,q') = 0$ for all  $a\in\Sigma{}, q\in F$ and $q' \in Q$.
\end{defi}
Since final states have no outgoing transitions, w.l.o.g., one can assume a unique final state.
For LMCs, the function $\nu_s$ can be extended to a measure on  the powerset of $\Sigma^\ast$ by
$\nu_s(E) = \sum_{ w\in E} \nu_s(w)$, where $E\subseteq \Sigma^*$. The measure is a subdistribution: $\sum_{w\in\Sigma^*} \nu_s(w) \le 1$.

Probabilistic automata are similar to LMCs, except that $M(a)$ is stochastic for every $a$, rather than
$\sum_{a\in \Sigma} M(a) $ being stochastic. \begin{defi}\label{defi:probautomata}
  A \textit{probabilistic automaton} $\mathcal{A}$ is a
\textit{non-negative weighted automaton} with a distinguished start state $q_s$ such that
$\sum_{q'\in Q} M(a)(q,q') = 1$ for all $q\in Q$ and $a \in \Sigma$. We use the notation $\pr_\mathcal{A}(w) = \nu_{q_s}(w)$, where $q_s$ is the start state of the probabilistic automaton $\mathcal{A}$.
\end{defi}

We will also consider \textit{unary} weighted automata, and similarly LMCs, where $|\Sigma|=1$. Then we will often omit  $\Sigma$
on the understanding that $\Sigma=\{a\}$, and describe  transitions with a single transition matrix $A=M(a)$ so that $\nu_s(a^n) = A_{s,t}^n$, where $t$ is the unique final state. Note that $A^n_{s,t}$ stands for $(A^n)(s,t)$,
and not $(A(s,t))^n$. Using the notation of regular expressions, we can write $\lng{s}{\wa}\subseteq a^\ast$.  It will turn out fruitful to consider several larger classes of languages:

\begin{defi}\label{def:bounded}  \label{def:letter-bounded} \label{def:plus-letter-bounded}\begin{sloppypar}
Let $L\subseteq \Sigma^\ast$.
$L$ is \emph{bounded}~\cite{ginsburg1964bounded} if $L\subseteq w_1^\ast w_2^\ast \cdots w_m^\ast$
for some $w_1, \dots, w_m \in \Sigma^*$.
$L$ is \emph{letter-bounded} if $L\subseteq \chr_1^*\chr_2^*\dots \chr_m^*$ for some
$\chr_1,\dots,\chr_m\in \Sigma$.
$L$ is \emph{plus-letter-bounded} if $L\subseteq\chr_1^+\chr_2^+\dots \chr_m^+$ for some
$\chr_1,\dots,\chr_m\in \Sigma$.\end{sloppypar}
\end{defi}
In each case, if the language of an NFA is suitably bounded, one can extract a corresponding bounding  regular expression~\cite{GawrychowskiKRS10}.

\section{Related problems}

\label{sec:bigord}

In this section we show three related problems to the Big-O problem: the ratio variation function, the Big-$\Theta$ problem and the eventually Big-O problem. In particular, \cref{obs:bigorvd}, relating the ratio variation function, will be useful when we show undecidability.

\subsection{Ratio variation functions and distances}

We introduced in Section~\ref{sec:intro} the ratio variation function $\tvra(s,s') = \sup_{E \subseteq \Sigma^*} (f_{s}(E) / f_{s'}(E))$, (now specified between two different starting states) and the ratio distance $\tvrs(s,s') = \max\{\tvra(s,s') ,\tvra(s',s) \}$ on labelled Markov chains.
The supremum is taken over \emph{sets} of words, reflecting its origin as a generalisation of the total variation distance. However, over labelled Markov chains this is not necessary, as the supremum function can be taken over individual words. The big-O problem is also taken over individual words, and taken in the formulation  $\tvra(s,s') = \sup_{w \in \Sigma^*} (f_{s}(w) / f_{s'}(w))$ the connection between the notion ratio variation function and the big-O problem becomes clear: \begin{obs}
  \label{obs:bigorvd} We have 
  $\tvra(s,s')< \infty$ if and only if $s$ is big-O of $s'$. Furthermore, if finite, $\tvra(s,s')$ is the minimum value of $C$ witnessing that $s$ is big-O of $s'$.\end{obs}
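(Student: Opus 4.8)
The plan is to unwind both sides straight from the definitions. On a labelled Markov chain $f_s = \nu_s$ is just the pointwise weight function, and ``$s$ is big-O of $s'$'' means there is $C>0$ with $\nu_s(w)\le C\cdot\nu_{s'}(w)$ for all $w\in\Sigma^\ast$. Using the word-level formula $\tvra(s,s')=\sup_{w\in\Sigma^\ast}\pbra{\nu_s(w)/\nu_{s'}(w)}$ together with the conventions $\tfrac00=0$ and $\tfrac x0=\infty$ for $x>0$, the claim reduces to comparing, for each fixed $w$, the number $\nu_s(w)/\nu_{s'}(w)$ with a candidate constant~$C$. (The passage from supremum over sets of words to supremum over single words has already been justified above, via $\max(a/b,c/d)\ge(a+c)/(b+d)$.)

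First I would do the easy direction: if $C>0$ witnesses that $s$ is big-O of $s'$, then $\nu_s(w)/\nu_{s'}(w)\le C$ for every~$w$, by a case split on whether $\nu_{s'}(w)>0$. If it is, divide the inequality $\nu_s(w)\le C\cdot\nu_{s'}(w)$ by $\nu_{s'}(w)$. If $\nu_{s'}(w)=0$, then $\nu_s(w)\le C\cdot 0=0$, and non-negativity of the weights forces $\nu_s(w)=0$, so $\nu_s(w)/\nu_{s'}(w)=\tfrac00=0\le C$. Taking the supremum over $w$ yields $\tvra(s,s')\le C<\infty$; in particular $\tvra(s,s')$ is a lower bound for every such~$C$.

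For the converse, suppose $K:=\tvra(s,s')<\infty$ and take $C:=K$ (or, if $K=0$, take any $C>0$). I claim $\nu_s(w)\le C\cdot\nu_{s'}(w)$ for all $w$, again by a case split. If $\nu_{s'}(w)>0$, the definition of the supremum gives $\nu_s(w)/\nu_{s'}(w)\le K\le C$, hence $\nu_s(w)\le C\cdot\nu_{s'}(w)$. If $\nu_{s'}(w)=0$, then $\nu_s(w)$ must also be $0$: otherwise $\nu_s(w)/\nu_{s'}(w)=\tfrac x0=\infty$ would contradict $K<\infty$, so the inequality holds trivially. Hence $s$ is big-O of $s'$, and combining with the lower bound from the previous paragraph, when $K>0$ the constant $K=\tvra(s,s')$ is exactly the minimal witness.

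I do not expect a genuine obstacle: the argument is bookkeeping around the $\tfrac00$/$\tfrac x0$ conventions. The only slightly delicate point is the degenerate case $\nu_s\equiv 0$ (e.g.\ when no word has positive weight from $s$), where $\tvra(s,s')=0$, every positive constant is a witness, and the ``minimum'' is an infimum not attained under the convention $C>0$; I would dispose of this with a one-line remark rather than dwelling on it.
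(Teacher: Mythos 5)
Your argument is correct and is exactly the routine unwinding the paper has in mind: the Observation is stated without a separate proof, being treated as immediate from the word-level formulation $\tvra(s,s')=\sup_{w}\pbra{\nu_s(w)/\nu_{s'}(w)}$ (justified in \cref{prop:simplifytvrasa}) together with the $\tfrac00$ and $\tfrac x0$ conventions, and your two case splits plus the remark on the degenerate case $\nu_s\equiv 0$ fill in precisely those details. No gap.
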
 We now formally show this reformulation of the ratio variation distance.

\begin{prop} \label{prop:simplifytvrasa}
For $\tvra$ on \textit{labelled Markov chains},  it is sufficient to consider the supremum over  $w\in \Sigma^*$ rather than $E\subseteq \Sigma^*$.\end{prop}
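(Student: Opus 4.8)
The plan is to unpack the definition and prove that
\[
\tvra(s,s') \;=\; \sup_{E\subseteq\Sigma^*}\frac{f_s(E)}{f_{s'}(E)} \;=\; \sup_{w\in\Sigma^*}\frac{f_s(w)}{f_{s'}(w)},
\]
with the conventions $\tfrac00=0$ and $\tfrac{x}{0}=\infty$ for $x>0$. The inequality ``$\ge$'' is immediate, since every singleton $\{w\}\subseteq\Sigma^*$ is among the sets $E$ ranged over on the left; so all the work is in showing ``$\le$''.

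For the non-trivial inequality I would first dispose of the degenerate cases. If $\sup_{w}\frac{f_s(w)}{f_{s'}(w)}=\infty$ there is nothing to prove, so assume this supremum is finite; in particular, $f_{s'}(w)=0$ then forces $f_s(w)=0$ for every $w$. Now fix $E\subseteq\Sigma^*$. If $f_{s'}(E)=0$, then $f_{s'}(w)=0$ for all $w\in E$, hence $f_s(w)=0$ for all such $w$, so $f_s(E)=0$ and $\frac{f_s(E)}{f_{s'}(E)}=0$, which is below the supremum. It remains to treat the case $f_{s'}(E)>0$.

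The key tool is the mediant inequality $\frac{a+c}{b+d}\le\max\bigl(\frac ab,\frac cd\bigr)$ for $a,c\ge 0$ and $b,d>0$, already noted in \cref{sec:intro}, which extends by a straightforward induction to $\frac{\sum_{i=1}^{n} a_i}{\sum_{i=1}^{n} b_i}\le\max_{1\le i\le n}\frac{a_i}{b_i}$ whenever all $b_i>0$. Since $\Sigma^*$ is countable, enumerate $E=\{w_1,w_2,\dots\}$; discarding indices $i$ with $f_{s'}(w_i)=0$ (these also have $f_s(w_i)=0$ and so contribute nothing to either sum) and reindexing, we may assume $f_{s'}(w_i)>0$ for all $i$, and the remaining index set is non-empty because $f_{s'}(E)>0$. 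Put $E_n=\{w_1,\dots,w_n\}$. The extended mediant inequality gives $\frac{f_s(E_n)}{f_{s'}(E_n)}\le\max_{i\le n}\frac{f_s(w_i)}{f_{s'}(w_i)}\le\sup_{w}\frac{f_s(w)}{f_{s'}(w)}$ for every $n$. As $f_s,f_{s'}$ are countably additive over disjoint unions, $f_s(E_n)\to f_s(E)$ and $f_{s'}(E_n)\to f_{s'}(E)>0$, so $\frac{f_s(E_n)}{f_{s'}(E_n)}\to\frac{f_s(E)}{f_{s'}(E)}$; passing to the limit yields $\frac{f_s(E)}{f_{s'}(E)}\le\sup_{w}\frac{f_s(w)}{f_{s'}(w)}$, and taking the supremum over $E$ finishes the argument.

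I expect the only delicate point to be the bookkeeping around the conventions $\tfrac00=0$, $\tfrac{x}{0}=\infty$ in combination with the reduction from (possibly infinite) sets to finite truncations: once the ``$\sup_w=\infty$'' and ``$f_{s'}(E)=0$'' cases are separated out, the mediant inequality plus a routine limiting argument does all the work. The same reasoning applies verbatim with $\tvra$ replaced by $\tvrs$, as remarked after the statement.
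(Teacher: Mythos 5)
Your proof is correct and follows essentially the same route as the paper's: the mediant inequality $\frac{a+c}{b+d}\le\max(\frac{a}{b},\frac{c}{d})$ reduces finite sets to single words, and a limiting argument handles infinite sets. If anything, your write-up is slightly more careful than the paper's — you treat the $\frac{0}{0}$ and $f_{s'}(E)=0$ cases explicitly and replace the paper's appeal to a tail-mass lemma with a direct monotone-convergence argument on the partial sums $f_s(E_n)\to f_s(E)$, $f_{s'}(E_n)\to f_{s'}(E)>0$.
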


\begin{proof}[Proof of \cref{prop:simplifytvrasa}]
We will show we can approximate any event by a finite subset, then we can always simplify an event with more than one word, and not decrease the ratio.

Suppose $\frac{a+c}{b+d} > \frac{a}{b}$ and $\frac{a+c}{b+d} > \frac{c}{d}$.
By the first inequality, we have $ab + bc > ab + ad$, so $bc > ad$. But by the second inequality, we have $ad+cd > bc +cd$, so $ad > bc$. This is a contradiction.

By repeated application of this technique, a \emph{finite} set can always be simplified until the set has just a single element, resulting in a ratio which is no smaller.  That is, there exists $w'\in E$ such that,
\begin{equation}\label{eqn:simplifysums} \frac{f_{\mathcal A}(E) }{ f_{\mathcal B}(E)} =  \frac{\sum_{w\in E}f_{\mathcal A}(w)}{\sum_{w\in E}f_{\mathcal B}(w)} \le \frac{f_{\mathcal A}(w')}{f_{\mathcal B}(w')}.\end{equation}

Now, we show that the consideration of finite sets is sufficient: consider an event $E\subseteq\Sigma^*$, then for every $\lambda > 0 $ there is a $k\in\mathbb{N}$ such that $f_{\mathcal{A}}(E\cap \Sigma^{>k}) \le \lambda$, so then $f_{\mathcal{A}}(E\cap \Sigma^{\le k}) \le f_{\mathcal{A}}(E) \le f_{\mathcal{A}}(E\cap \Sigma^{\le k}) + \lambda$ \cite[Lemma 12]{kiefer2018computing}. For any $\varepsilon$, by choice of sufficiently small $\lambda$ there is a finite set $E'$ such that  $\frac{f_{\mathcal{A}}(E')}{f_{\mathcal{B}}(E')} - \varepsilon \le  \frac{f_{\mathcal{A}}(E)}{f_{\mathcal{B}}(E)} \le \frac{f_{\mathcal{A}}(E')}{f_{\mathcal{B}}(E')} +  \varepsilon$.

Consider $\sup_{E \subseteq \Sigma^*} \frac{f_{\mathcal{A}}(E)}{f_{\mathcal{B}}(E)} $, this is equivalent to $\lim_{k\to\infty}\sup_{E\subseteq \Sigma^* \cap \Sigma^{\le k}}\frac{f_{\mathcal{A}}(E)}{f_{\mathcal{B}}(E)} $ and by \cref{eqn:simplifysums} this is equivalent to $\lim_{k\to\infty}\sup_{w\in \Sigma^* \cap \Sigma^{\le k}}\frac{f_{\mathcal{A}}(w)}{f_{\mathcal{B}}(w)} =\sup_{w\in \Sigma^*}\frac{f_{\mathcal{A}}(w)}{f_{\mathcal{B}}(w)} $.\end{proof}

\subsection{The big-\texorpdfstring{$\Theta$}{Theta} problem}
One could consider whether $s$ is big-$\Theta$ of $s'$, defined as $s$ is big-O of $s'$ and $s'$ is big-O of $s$; or equivalently for LMCs, whether $\tvrs(s,s') <  \infty$. We note that these two notions reduce to each other, justifying our consideration of only the big-O problem. There is an obvious reduction from big-$\Theta$  to big-O making two oracle calls (a Cook reduction), but, as we show below, this can be strengthened to a single call preserving the answer (a Karp reduction). In the other direction, one can ask if $s$ big-O of $s'$ using big-$\Theta$ by asking if a linear combination of $s$ and $s'$ is big-$\Theta$ of $s'$: this ensures that the answer to $s'$ is big-O of $s$ and $s'$ is always true, essentially leaving only $s$ and $s'$ big-O of $s'$ to be checked, which depends only on whether $s$ is big-O of $s'$.

\begin{lem}\label{lemma:reductionbetweenotheta}
The big-O problem is interreducible with the big-$\Theta$ problem.
\end{lem}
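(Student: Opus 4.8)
The plan is to establish both directions of the interreducibility separately, matching the two informal descriptions given just before the lemma statement. For the direction \textbf{big-$\Theta$ $\le$ big-O} (the Karp reduction), the idea is that, given an instance $(\wa, s, s')$ of the big-$\Theta$ problem, I would build a single weighted automaton in which two fresh start states $\hat s, \hat s'$ are introduced so that $\nu_{\hat s}(w)$ behaves like $\nu_s(w) + \nu_{s'}(w)$ and $\nu_{\hat s'}(w)$ behaves like the same sum, up to harmless constants: concretely, take the disjoint union of two copies of $\wa$ and add, from each new state, transitions leading with equal weight into the $s$-copy and the $s'$-copy (picking an arbitrary letter, or splitting across letters, and using that the construction may prepend a common single letter since that only rescales the weight functions by matching factors on both sides). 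Then $\hat s$ is big-O of $\hat s'$ and $\hat s'$ is big-O of $\hat s$ hold simultaneously, and each is equivalent to the conjunction "$s$ is big-O of $s'$ and $s'$ is big-O of $s$", i.e.\ exactly the big-$\Theta$ instance. A cleaner variant, and the one I would actually write, avoids the extra letter: route $\hat s \trns[a]{1/2} s$-copy and $\hat s\trns[a]{1/2} s'$-copy only when this does not change the qualitative comparison — or, simplest of all, observe $\max(\nu_s,\nu_{s'}) \le \nu_s + \nu_{s'} \le 2\max(\nu_s,\nu_{s'})$ and that $\nu_s + \nu_{s'} = O(\nu_{s'})$ together with $\nu_s+\nu_{s'} = O(\nu_s)$ is equivalent to big-$\Theta$ of $s,s'$; so one oracle call on the single instance "$\hat s$ big-O of $\hat s'$" with both states realising the sum suffices.

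For the direction \textbf{big-O $\le$ big-$\Theta$}, given $(\wa, s, s')$ I would construct a state $\hat s$ realising a positive linear combination, e.g.\ $\nu_{\hat s} = \nu_s + \nu_{s'}$ (again via a fresh state branching into disjoint copies of $\wa$, up to a common rescaling). I then claim $s$ is big-O of $s'$ if and only if $\hat s$ is big-$\Theta$ of $s'$. Indeed, $\nu_{s'} \le \nu_{\hat s}$ always, so $s'$ is big-O of $\hat s$ holds unconditionally; and $\hat s$ is big-O of $s'$ means $\nu_s + \nu_{s'} \le C\nu_{s'}$ for some $C$, which (since $\nu_{s'} \le \nu_{s'}$ trivially) is equivalent to $\nu_s \le (C-1)\nu_{s'}$ eventually, i.e.\ to $s$ being big-O of $s'$ — with a small care that $C$ can be taken $>1$. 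So big-$\Theta$ of $\hat s, s'$ $\iff$ big-O of $s, s'$, a Karp reduction.

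The main technical point to get right — the only place where anything can go wrong — is the \textbf{linear-combination gadget}: realising $\nu_{\hat s} = \alpha\,\nu_s + \beta\,\nu_{s'}$ (or just the sum) inside one weighted automaton. The subtlety is that a weighted automaton over $(\mathbb Q,+,\times)$ cannot "branch with weights into two states on the empty word", so one must either (i) prepend a common fresh letter, which multiplies both $\nu_{\hat s}$ and $\nu_{s'}$ by the same scalar and hence does not affect any big-O or big-$\Theta$ comparison, or (ii) distribute the branching across the first letter actually read, using disjoint copies of $\wa$ so no runs interfere. I would spell out option (i) in detail: introduce a new letter (or reuse one), new states $\hat s$ with $\hat s \trns[a]{1} s$ and $\hat s \trns[a]{1} s'$ into two relabelled copies, note $\nu_{\hat s}(aw) = \nu_s(w) + \nu_{s'}(w)$ and $\nu_{\hat s}(w)=0$ for $w$ not of that form, and likewise modify $s'$ to a state $\hat{s'}$ with $\hat{s'}\trns[a]{1} s'$ so the comparison is between words of the same shape. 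Everything else is a routine unfolding of the definition of "big-O" and of $\tvra$, together with \cref{obs:bigorvd}; no deep ingredient is needed beyond the elementary inequalities $\max \le$ sum $\le 2\max$ and $\nu_{s'}\le \nu_s+\nu_{s'}$.
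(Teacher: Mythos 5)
Your reduction \emph{from} the big-O problem \emph{to} the big-$\Theta$ problem is correct and is essentially the paper's: compare a state realising $\nu_s + \nu_{s'}$ (after a common fresh prefix letter, which rescales both sides identically) against one realising $\nu_{s'}$; the direction ``$s'$ is big-O of the combination'' holds unconditionally with constant $2$, and the other direction is equivalent to ``$s$ is big-O of $s'$''. Your attention to the gadget (branching must happen on a real letter, with disjoint copies) is also exactly what the paper does.

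The other direction has a genuine gap. You propose to decide whether $s$ is big-$\Theta$ of $s'$ by a \emph{single} big-O query on two states $\hat s, \hat s'$ that \emph{both} realise $\nu_s + \nu_{s'}$. But then $\nu_{\hat s} = \nu_{\hat s'}$ identically, so ``$\hat s$ is big-O of $\hat s'$'' holds with constant $1$ regardless of the input, and the query carries no information. The two true equivalences you invoke, $\nu_s+\nu_{s'}=O(\nu_{s'})$ and $\nu_s+\nu_{s'}=O(\nu_s)$, are two \emph{different} big-O instances (one against $s'$, one against $s$), so what you actually have is the obvious two-query Cook reduction, not the single-query reduction you claim. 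Moreover, this cannot be repaired within your framework: any two positive linear combinations $\alpha\nu_s+\beta\nu_{s'}$ and $\gamma\nu_s+\delta\nu_{s'}$ with strictly positive coefficients are within fixed constant factors of each other on every word, so a single big-O instance built only from such gadgets is always a yes-instance. The paper's proof needs an extra idea to pack both tests into one instance: it first doubles every transition (so every original word $w$ is represented by $\tilde{w}$ of even length), and then uses prefixes of different parities --- $a\tilde{w}$ routes to the comparison of $\nu_s(w)$ against $\nu_{s'}(w)$, while $aa\tilde{w}$ routes to the comparison of $\nu_{s'}(w)$ against $\nu_s(w)$ --- so the single big-O question, quantifying over all words, tests both directions at once. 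Some such separation of the word set into two disjoint pieces, one per direction, is the missing ingredient. (If ``interreducible'' were read as permitting Cook reductions, your two-query observation would suffice, but the paper explicitly strengthens this to a many-one reduction, and that is what your write-up claims to deliver.)
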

\begin{proof}[Proof]
\begin{figure}[t]
\centering
\begin{subfigure}[t]{0.5\linewidth}
\centering
\includegraphics[width=0.7\linewidth]{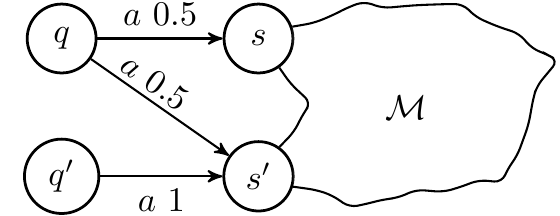}
\caption{Reduction to big-$\Theta$}\label{fig:bigtheta}

\end{subfigure}\begin{subfigure}[t]{0.49\linewidth}
\centering
\includegraphics[width=0.7\linewidth]{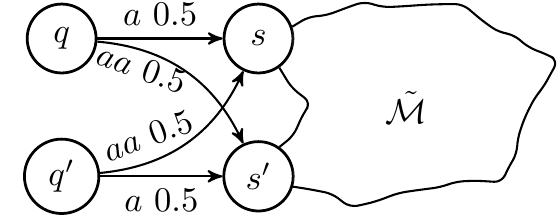}
\caption{Reduction to big-O}\label{fig:bigthetabigo}
\end{subfigure}

\caption{Reductions between big-O and big-$\Theta$}
\end{figure}

\begin{dir}[{big-O problem reduces to the big-$\Theta$ problem}]
To ask if $s$ is big-O of $s'$, add states $q,q'$ using the construction of \cref{fig:bigtheta}, then ask if $q$ is big-$\Theta$ of $q'$.
\[
\frac{\nu_{q}(aw)}{\nu_{q'}(aw)} = \frac{0.5\nu_{s}(w) + 0.5\nu_{s'}(w)}{\nu_{s'}(w)}  < C \iff \frac{\nu_{s}(w)}{\nu_{s'}(w)} < 2C -1
\]
\[
\frac{\nu_{q'}(aw)}{\nu_{q}(aw)}= \frac{\nu_{s'}(w)}{0.5\nu_{s}(w) + 0.5\nu_{s'}(w)} \le 2
\]
\end{dir}
\begin{dir}[{big-$\Theta$ problem reduces to the big-O problem}]
Given an automaton with weighting function $f$, let us construct a new automaton $f'$ by  replacing every transition $t = q\trns[a]{p}q'$ with $q\trns[a]{p}[t] \trns[a]{1} q'$, where $[t]$ is a new state. The effect of this procedure is that $f_s(a_1\dots a_n) = f'_s(a_1a_1\dots a_na_n)$. Any word of odd length has weight zero in $f'$. Given a word $w = a_1\dots a_n$, denote by $\tilde{w} = a_1 a_1 \dots a_n a_n$.

Let us choose any character $a\in \Sigma$. We now add states $q,q',\bullet_1,\bullet_2$ as follows:
\begin{enumerate}
\item $q\trns[a]{0.5}s$ and  $q\trns[a]{0.5}\bullet_1 \trns[a]{1} s'$
\item $q'\trns[a]{0.5}s'$ and  $q'\trns[a]{0.5}\bullet_2 \trns[a]{1} s$
\end{enumerate}

We claim that to ask if $s$ is big-$\Theta$ of $s'$ is equivalent to asking if $q$ is big-O of $q'$. The idea of the construction is provided in  \cref{fig:bigthetabigo}. Then we obtain the requisite bounds as follows:
\[\frac{\nu'_{q}(a\tilde{w})}{\nu'_{q'}(a\tilde{w})} = \frac{0.5\nu_{s}(w) }{0.5\nu_{s'}(w)} < C \iff \frac{\nu_{s}(w)}{\nu_{s'}(w)} < C, \]
\[\frac{\nu'_{q}(aa\tilde{w})}{\nu'_{q'}(aa\tilde{w})} = \frac{0.5\nu_{s'}(w) }{0.5\nu_{s}(w)} < C \iff \frac{\nu_{s'}(w)}{\nu_{s}(w)} < C.\]

Each of the reductions operates in logarithmic space.\qedhere\end{dir}\end{proof}

\subsection{The Eventually Big-O Problem}
\label{remark:asymptotic-big-oh:first}
Readers familiar with the big-O notation may recall definition on $f,g:\mathbb{N} \to \mathbb{N}$, of the form:
$f$ is $O(g)$ if $\exists C,k>0 $ $\forall n > k \  f(n) \le C\, g(n)$.
Despite excluding finitely many points, when $g(n)\ge 1$, it is equivalent to the definition we use,
$\exists {C > 0}\  \forall {n > 0} \  f(n) \le C\, g(n)$, by taking  $C$ large enough to deal with the finite prefix.

In the paper, though, we formally consider $s$ to not be big-O of $s'$ if there exists even a single word $w$ such that $\nu_s(w) > 0$ and $\nu_{s'}(w) = 0$.
However, for weighted automata, we could amend our definition to ``eventually big-O'' as follows: $\exists C> 0, k> 0 :  \forall w \in \Sigma^{\ge k} \ \nu_s(w) \le C \cdot \nu_{s'}(w)$.\footnote{It is instructive to juxtapose these definitions with classic definitions
in analysis,
following, e.g., de Bruijn's classic monograph~\cite[Section~1.2]{deBruijn}.
For two nonnegative functions $f, g \colon \mathbb{R} \to \mathbb{R}$:
\begin{itemize}[---]
\item
$f=O(g)$ on a set $S \subseteq \mathbb{R}$ if there is $c>0$ such that
$f(x) \le c \cdot g(x)$ for all $x \in S$,
\item
$f=O(g)$ as $x \to a$ (as $x \to \infty$, respectively)
if there is a neighbourhood of $a$ (of $\infty$, respectively)
such that $f=O(g)$ in that neighbourhood, in the sense defined above.
\end{itemize}
It is now clear that our definition of big-O from Section~\ref{sec:prelims}
asserts $\nu_s(w) = O(\nu_{s'}(w))$ for all $w\in \Sigma^*$,
i.e., on the set $\Sigma^*$.
In comparison, the ``eventually big-O'' asserts that
$\nu_s(w) = O(\nu_{s'}(w))$ as $|w| \to \infty$.
}

The Eventually Big-O Problem and Big-O Problem are also interreducible. We delay showing this equivalence to \cref{remark:asymptotic-big-oh}, as it depends on some further technical development.

Allowing additive error (in addition to the multiplicative error) in the inequality can also capture a similar relaxation of the Big-O problem.
Colcombet and Daviaud~\cite{ColcombetD13} study the problem of \emph{affine domination}, which asks whether there exists $c>0$ such that  $f_{s}(w) \le c\cdot f_{s'}(w) + c$ for all $w\in\Sigma^*$? 
The problem is shown decidable for weighted automata over the tropical  $(\mathbb{N}\cup\{\infty\},\min,+)$ semiring, contrasting the containment problem ($f_{s}(w) \le  f_{s'}(w)$ for all $w\in\Sigma^*$?) which is also undecidable in that semiring. 
We observe that both containment and the big-O problem are undecidable in the $(\mathbb{Q}_{\ge0},+,\times)$ semiring that we are considering.

\section{Big-O, Threshold and Approximation problems are undecidable}\label{sec:undeci}

We show that the big-O problem is undecidable.
We also establish undecidability for several other problems
related to computing and approximating the ratio variation distance.
Recall that this corresponds to identifying the optimal constant for  positive instances of the big-O problem
or the level of differential privacy between two states in a labelled Markov chain.
Results in this section are presented on ratio total variation distances on \textit{labelled Markov chains},
and thus apply to the big-O problem in the more general weighted automata.

In the following definition, a \emph{promise problem} (see e.g., ~\cite{promise}) restricts the inputs to the problem. An algorithm professing to decide the problem need only give a correct answer on the inputs conforming to the promise. If the input does not meet the promise, then the answer is not specified and can be arbitrary (including non-termination). In particular, it need not be decidable whether the promise holds.

\begin{defi}
\label{defi:ratiovaritionproblems}
The \emph{asymmetric threshold problem} takes
an LMC along with two states $s,s'$ and a constant $\theta$,
and asks if $\tvra(s,s') \le \theta$.
The variant under the promise of boundedness promises that $\tvra(s,s')< \infty$; i.e., the input to the problem is restricted to the instances where it is known that  $\tvra(s,s')< \infty$.

The strict variant of each problem replaces $\le$ with $<$.

The \emph{asymmetric additive approximation task} takes
an LMC, two states $s,s'$ and a constant $\gamma$,
and asks for $x$ such that $|\tvra(s,s') - x| \le \gamma$.
The \emph{asymmetric multiplicative approximation task} takes
an LMC, two states $s,s'$ and a constant $\gamma$,
and asks for $x$ such that ${1} - {\gamma}\le \frac{x}{\tvra(s,s')}\le 1+\gamma$.

In each case, the symmetric variant is obtained by replacing $\tvra$ with $\tvrs$.
\end{defi}

\begin{thm} \label{thm:approxboundundecidable}\hfill
\begin{itemize}
\item The big-O problem is undecidable, even for LMCs.
\item Each variant of the threshold problem (asymmetric/sym\-metric, non-strict/strict) is undecidable, even under the promise of boundedness.
\item All variants of the approximation tasks   (asymmetric/sym\-metric, additive/multiplicative) are recursively unsolvable, even under the promise of boundedness.
\end{itemize}
\end{thm}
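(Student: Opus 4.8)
The plan is to reduce from the (non)emptiness problem for probabilistic automata. Recall the relevant undecidability facts, for a probabilistic automaton $\pa$ over $\Sigma$ and a rational $\mu\in(0,1)$: it is undecidable whether some $v\in\Sigma^*$ has $\pr_\pa(v)>\mu$ (the strict-cut-point emptiness problem); it is undecidable, for suitable $\mu$, whether $\sup_v\pr_\pa(v)\ge\mu$ (e.g.\ by mixing $\pa$ with a fixed $\mu$-valued automaton and reducing from the value-$1$ problem); and the value $\sup_v\pr_\pa(v)$ of a probabilistic automaton cannot be approximated to within any prescribed additive error. See \cite{paz2014introduction,fijalkow2017undecidability}. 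By the standard disjoint-union trick it suffices to build, from $\pa$, a single LMC with designated states $s,s'$; by \cref{obs:bigorvd} the big-O question for $(s,s')$ is exactly whether $\tvra(s,s')<\infty$, with $\tvra(s,s')$ the optimal constant, so all statements concern this one real number (and, via the symmetric construction below, $\tvrs(s,s')$ as well).

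For undecidability of big-O, work over $\Sigma\cup\{\#\}$ with $\#$ fresh, and use a ``repeated-block'' construction. The $s$-gadget reads its input as $\Sigma$-blocks separated by $\#$'s; inside each block it faithfully simulates $\pa$ with every transition scaled by $\tfrac{1}{|\Sigma|+1}$, so after a block $v_i$ of length $\ell_i$ its accepting states carry total mass $(\tfrac{1}{|\Sigma|+1})^{\ell_i}\pr_\pa(v_i)$; on a $\#$ it survives, with a fixed factor $c_1$, from accepting states only, then either restarts $\pa$ for the next block or goes to the unique final state, and dies from non-accepting states. The $s'$-gadget is a single state looping with weight $\tfrac{1}{|\Sigma|+1}$ on each $\Sigma$-letter and surviving a $\#$ with a fixed factor $c_2\mu$. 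For any word $w=v_1\#\cdots v_k\#$ in the support the length contributions cancel, so $\nu_s(w)/\nu_{s'}(w)$ equals a positive constant times $\prod_{i=1}^{k}\bigl(c_1\pr_\pa(v_i)/(c_2\mu)\bigr)$; choosing $c_1,c_2$ with $\mu':=(c_2/c_1)\mu\in(0,1)$, each factor is $\le1$ iff $\pr_\pa(v_i)\le\mu'$, so the ratio is bounded over all $w$ when $\pr_\pa(v)\le\mu'$ for every $v$, and along $w=(v\#)^k$ it is unbounded in $k$ as soon as some $v$ has $\pr_\pa(v)>\mu'$. One checks that both gadgets are valid LMC sub-distributions and that $\mathrm{supp}(\nu_s)\subseteq\mathrm{supp}(\nu_{s'})$, so big-O cannot fail trivially through a zero denominator. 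Hence $s$ is big-O of $s'$ iff $\{v:\pr_\pa(v)>\mu'\}=\emptyset$, which is undecidable.

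For the threshold and approximation statements, use instead a ``single-block'' construction in which $\tvra(s,s')$ is always finite: $s$ simulates $\pa$ on one $\Sigma$-block (transitions scaled by $\tfrac{1}{|\Sigma|+1}$) and on the terminating $\#$ goes to the final state with the fixed weight $c_1$ from accepting states plus a uniform weight $\eta$ from every state, while $s'$ mirrors the length factor and terminates with a fixed weight $d$, with $\eta\ge d$; both are supported exactly on $\Sigma^*\#$, so $\mathrm{supp}(\nu_s)=\mathrm{supp}(\nu_{s'})$. Then $\tvra(s,s')=\alpha+\beta\sup_v\pr_\pa(v)$ with $\alpha=\eta/d\ge1$ and $\beta=c_1/d>0$, so $\alpha\le\tvra(s,s')\le\alpha+\beta<\infty$ — boundedness holds for free — and moreover $\tvra(s',s)\le d/\eta\le1\le\tvra(s,s')$, so $\tvrs(s,s')=\tvra(s,s')$ always, and every symmetric statement coincides with its asymmetric counterpart. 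Now $\tvra(s,s')\le\theta$ iff $\sup_v\pr_\pa(v)\le(\theta-\alpha)/\beta$, which for $\theta$ with $(\theta-\alpha)/\beta\in(0,1)$ is the strict-cut-point emptiness problem for $\pa$, hence undecidable; $\tvra(s,s')<\theta$ iff $\sup_v\pr_\pa(v)<(\theta-\alpha)/\beta$, the complement of ``$\sup_v\pr_\pa(v)\ge(\theta-\alpha)/\beta$'', again undecidable for suitable $\theta$; and any algorithm approximating $\tvra(s,s')$ within additive $\gamma$ would approximate $\sup_v\pr_\pa(v)$ within $\gamma/\beta$, contradicting the non-approximability of the value of a probabilistic automaton — and likewise for multiplicative approximation, since $\tvra(s,s')\ge\alpha>0$ turns a $\gamma$-multiplicative approximation into a $\gamma(\alpha+\beta)$-additive one.

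The main obstacle is locating and correctly invoking the precise undecidability results for probabilistic automata — the strict and non-strict emptiness/value problems and the non-approximability of the value — and then doing the bookkeeping that keeps $\mathrm{supp}(\nu_s)=\mathrm{supp}(\nu_{s'})$ (so that the symmetric $\tvrs$ statements reduce to the asymmetric ones) while pinning $\tvra(s,s')$ exactly to the affine quantity $\alpha+\beta\sup_v\pr_\pa(v)$. Once the constants $c_1,c_2,\eta,d$ are chosen consistently (small enough to respect the LMC sub-distribution constraints, with $\eta\ge d$), the remainder is a routine verification that the gadgets are genuine LMCs and that the claimed identities for $\nu_s$ and $\nu_{s'}$ hold.
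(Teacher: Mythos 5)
Your first bullet (undecidability of the big-O problem itself) follows essentially the same route as the paper: a pumping construction in which one branch simulates $\pa$ with a uniform per-letter scaling and restarts on a separator read from accepting states, the other branch reads everything uniformly, so that the ratio along $(v\#)^k$ is a $k$-th power of $\pr_\pa(v)/\mu'$ and is bounded iff $\pa$ is a yes-instance of \textsc{Empty}. For the threshold and approximation bullets, however, you diverge from the paper in a genuine way. The paper stays with the \emph{same} pumping construction and manufactures its own gap internally: a third state $s''$ mixes the two branches as $\frac{99}{100}$--$\frac{1}{100}$, which caps the pumped ratio so that $\tvra(s,s'')$ is always finite but lands either in $[0,2]$ or in $(49,51]$ depending on emptiness; all three bullets then follow from the single undecidable source problem \textsc{Empty}. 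You instead switch to a single-block construction in which $\tvra(s,s')=\alpha+\beta\sup_v\pr_\pa(v)$ exactly, and then import three separate undecidability facts about probabilistic automata: strict cut-point emptiness (for the non-strict threshold), undecidability of ``$\sup_v\pr_\pa(v)\ge\mu$'' via the value-$1$ problem (for the strict threshold), and inapproximability of the value (for the approximation tasks). Both routes work; the paper's buys self-containment (only Paz-style emptiness is needed, and one gap argument covers strict, non-strict, and both approximation modes uniformly), while yours buys a cleaner quantitative statement --- an exact affine formula for $\tvra$ --- at the cost of leaning on heavier external results.

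Two caveats you should tighten. First, the inapproximability of the value of a probabilistic automaton is true but is a Condon--Lipton-type gap theorem; it is not obviously contained in the two references you point to, so you must cite it explicitly rather than gesture at ``locating the precise results.'' Relatedly, your sketched reduction for ``$\sup_v\pr_\pa(v)\ge\mu$'' (``mixing $\pa$ with a fixed $\mu$-valued automaton'') does not work as stated --- mixing with a $\mu$-valued automaton preserves the comparison with $\mu$ rather than encoding value~$1$; the correct move is to scale $\pa$ by $\mu$ (mix with a $0$-valued automaton), so that $\sup\mu\pr_\pa(v)\ge\mu$ iff $\mathrm{val}(\pa)=1$. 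Second, your gadgets as described are substochastic, not stochastic: from a simulated state the outgoing mass is $\frac{|\Sigma|}{|\Sigma|+1}+\eta+c_1\mathbb{1}[q\in F]\neq 1$, whereas the paper's definition of an LMC requires outgoing weights to sum to exactly~$1$; you need an explicit non-final sink with a self-loop absorbing the residual mass (the paper's construction is arranged so that the weights sum to $1$ on the nose). Neither issue is fatal, but both need to be fixed for the proof to stand.
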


\noindent The proof of \cref{thm:approxboundundecidable} will reduce from the emptiness problem for probabilistic automata (recall~\cref{defi:probautomata}). The problem \textsc{Empty} asks, given a probabilistic automaton $\mathcal{A}$,
if $\pr_\mathcal{A}(w) \le \frac{1}{2}$ for all words $w \in \Sigma^*$. (Is the language $\{w\in \Sigma^* \mid \pr_\mathcal{A}(w) > \frac{1}{2}\}$ empty?)
It is known to be undecidable~\cite{paz2014introduction,fijalkow2017undecidability}.   Recall, we use the notation $\pr_\mathcal{A}(w)$ to refer to the probability mass in the probabilistic automaton, and the notation $\nu_s(w)$ for the weight (from state $s$) in the labelled Markov chain.

\begin{proof}[Proof idea for \cref{thm:approxboundundecidable}] We reduce from \textsc{Empty}.
We assume we are given \emph{any} probabilistic automaton $\mathcal{A}$, for which we would like to decide the emptiness problem.
We first show a construction which will construct a specific labelled Markov chain from the probabilistic automaton.
The resulting labelled Markov chain will have the property that answering any of the questions of \cref{defi:ratiovaritionproblems} would allow us to answer the emptiness problem for the probabilistic automaton $\mathcal{A}$.
Therefore, an algorithm to decide any of the questions in \cref{defi:ratiovaritionproblems} would give an algorithm to decide the emptinenss problem.
Since the emptiness problem is undecidabile, meaning there is no algorithm that can decide whether any given probabilistic automaton is empty or not, then there can be no algorithm to decide any of these questions.

The construction creates two branches of a labelled Markov chain.
The first simulates the probabilistic automaton using the original weights multiplied by a scalar ($\frac{1}{4}$ in the case $|\Sigma| = 2$).
The other branch will process each letter from $\Sigma$ with equal weight (also $\frac{1}{4}$ in an infinite loop). Consequently,
 if there is a word accepted with probability greater than $\frac{1}{2}$, the ratio between the two branches will be greater than $1$.
 The construction will enable words to be processed repeatedly, so that the ratio can then be pumped unboundedly. Certain linear combinations of the branches entail a gap promise, entailing undecidability of the threshold and approximation tasks.
\end{proof}

We now implement the proof idea: first we show the properties required of the construction and (in Section~\ref{sec:undeci:implytheorem}) how these properties imply \cref{thm:approxboundundecidable}. Then we show the construction (in Section~\ref{sec:undeci:construction}) and prove that the desired properties hold (in Section~\ref{sec:undeci:proveprop}).

Given any probabilistic automaton $\mathcal{A}$, we construct a labelled Markov chain with three distinguished starting states $s,s'$ and $s''$. These states will exhibit the following the  properties:
\begin{properties}\hfill
  \label{pty:constrctuionprops}
\begin{itemize}
\item If $\pa \not\in\textsc{Empty}$ then $\tvra(s,s') = \infty$.
\item If $\pa \in\textsc{Empty}$ then $\tvra(s,s') \le 2$.
\item If $\pa \not\in\textsc{Empty}$ then $49 < \tvra(s,s'') \le 51$.
\item If $\pa\in\textsc{Empty}$ then $\tvra(s,s'') \le 2$.
\item $\tvra(s',s)\le 2$ and $\tvra(s'',s)\le 2$.
\end{itemize}
\end{properties}
\subsection{\texorpdfstring{\cref{pty:constrctuionprops}}{Properties~\ref{pty:constrctuionprops}} imply \texorpdfstring{\cref{thm:approxboundundecidable}}{Theorem~\ref{thm:approxboundundecidable}}}
\label{sec:undeci:implytheorem}

The following lemma plays a key role in proving the result. In its statement, ``undecidable to distinguish''
means that the corresponding \emph{promise problem} is undecidable. In other words, if the input  is not in one of the
two cases which should be distinguished between, the answer is not specified and can be arbitrary (including non-termination).

Due to the undecidability of $\textsc{Empty}$, a construction satisfying \cref{pty:constrctuionprops} immediately establishes the following undecidability results:
\begin{lem}\label{lemma:technical:boundedandgap} \hfill
\begin{itemize}
\item  Given an LMC along with two states $s,s'$ and constant $c$, it is undecidable to distinguish between $\tvra(s,s') \le c$ and $\tvra(s,s') = \infty$.
\item Given an LMC along with two states $s,s''$ and two numbers $c$ and $C$ such that  $c < C$, it is undecidable to distinguish between $\tvra(s,s'') \le c$ and $C \le \tvra(s,s'') < \infty$.
\end{itemize}
\end{lem}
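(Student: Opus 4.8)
The plan is to derive \cref{lemma:technical:boundedandgap} directly from \cref{pty:constrctuionprops} and the undecidability of \textsc{Empty}. The reduction is immediate: on input a probabilistic automaton $\pa$, apply the (yet-to-be-exhibited) construction to produce an LMC with distinguished states $s, s', s''$, and return this LMC together with the relevant states. By \cref{pty:constrctuionprops}, the resulting instance lands in exactly one of the two cases to be distinguished, depending on whether $\pa \in \textsc{Empty}$, so a decision procedure for either promise problem would decide \textsc{Empty}.

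For the first bullet I would argue as follows. Fix the constant $c = 2$ in the problem statement (or observe the argument works for any $c \ge 2$). If $\pa \in \textsc{Empty}$, then by the second item of \cref{pty:constrctuionprops} we have $\tvra(s,s') \le 2 \le c$, so the instance is in the first case. If $\pa \notin \textsc{Empty}$, then by the first item $\tvra(s,s') = \infty$, so the instance is in the second case. Since these two cases are exhaustive for inputs arising from the reduction and mutually exclusive (as $2 \ne \infty$), distinguishing them is at least as hard as deciding \textsc{Empty}, hence undecidable. Note we do not need the promise to be decidable: inputs not arising from the reduction simply fall outside the promise.

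For the second bullet I would take any pair $c < C$ with $c \ge 2$ and $C \le 49$ — for concreteness $c = 2$, $C = 49$ — and use the states $s, s''$. If $\pa \in \textsc{Empty}$, the fourth item of \cref{pty:constrctuionprops} gives $\tvra(s,s'') \le 2 = c$. If $\pa \notin \textsc{Empty}$, the third item gives $49 < \tvra(s,s'') \le 51$, hence $C \le 49 < \tvra(s,s'') < \infty$, placing the instance in the second case. Again the two cases are exhaustive and disjoint on reduction outputs, so the promise problem is undecidable. (If one wants the statement verbatim for \emph{arbitrary} $c < C$ rather than just for the specific numbers produced by the construction, one can additionally note that the constants $49$ and $51$ appearing in \cref{pty:constrctuionprops} can be rescaled: composing the construction with a gadget that multiplies one branch's weights by a suitable scalar shifts the achievable ratio, so any desired separating pair $c < C$ can be realized — but the single hard instance above already suffices to establish undecidability of the promise problem as stated.)

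There is essentially no obstacle in this lemma itself: it is a routine "the construction does all the work" step, and the only thing to be careful about is matching the inequalities in \cref{pty:constrctuionprops} to the two cases so that they are genuinely exhaustive and disjoint, and remembering that a promise problem being undecidable does not require the promise to be decidable. The real content — and the main obstacle in the overall development — is deferred: it lies in exhibiting the LMC construction of Sections~\ref{sec:undeci:construction}–\ref{sec:undeci:proveprop} and verifying that it actually satisfies all five items of \cref{pty:constrctuionprops}, in particular the two-sided bound $49 < \tvra(s,s'') \le 51$ in the non-empty case, which requires both a lower bound coming from an accepted high-probability word and a matching upper bound on the ratio that holds uniformly over all words.
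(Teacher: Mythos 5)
Your proposal is correct and matches the paper's approach exactly: the paper itself treats \cref{lemma:technical:boundedandgap} as an immediate consequence of \cref{pty:constrctuionprops} together with the undecidability of \textsc{Empty}, giving no further argument. Your instantiation of the constants ($c=2$ for the first bullet, $c=2$, $C=49$ for the second) and your observation that a promise problem's undecidability does not require the promise itself to be decidable are exactly the points the paper relies on implicitly.
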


\noindent Both statements remain true if $\tvra$ is replaced with $\tvrs$. This is because $\tvra(s',s)\le 2$ and $\tvra(s'',s)\le 2$ ensures that $\tvra(s,s')$ and $\tvra(s,s'')$ dominate in the computation of $\tvrs$, thus the proceeding four statements hold when replaced by $\tvrs$.

Before we give the construction satisfying \cref{pty:constrctuionprops}, we show how \cref{lemma:technical:boundedandgap} shows our theorem.
\begin{proof}[Proof of \cref{thm:approxboundundecidable}]
We reason by contradiction using~\cref{lemma:technical:boundedandgap}.
For the big-O problem,
it suffices to observe that, if it were decidable, one could use it to solve the first promise problem from the Lemma (recall that in a promise problem the input is guaranteed to fall into one of the two cases).
This would contradict ~\cref{lemma:technical:boundedandgap}.

Similarly, the decidability of the (asymmetric) threshold problem would allow us to distinguish between $\tvra(s,s') \le c$ and $C \le \tvra(s,s') < \infty$  (second promise problem from the Lemma) by considering the instance
$\tvra(s,s') \le \frac{c+C}{2}$ (non-strict variant) or $\tvra(s,s') < \frac{c+C}{2}$ (strict variant). A positive answer (regardless of the variant) implies $\tvra(s,s') \le c$, while a negative one yields $\tvra(s,s')\ge C$, which suffices
to distinguish the cases. Note that in both cases $\tvra(s,s')$ is bounded, so the reasoning remains valid  if it is known in advance that $\tvra(s,s')$ is bounded.

For additive  (asymmetric) approximation,  we observe that finding $x$ such that $|\tvra(s,s') - x| \le \frac{C-c}{4}$ and comparing it with $\frac{c+C}{2}$ makes it possible to distinguish between
$\tvra(s,s') \le c$ and $C \le \tvra(s,s') < \infty$. This is because $\tvra(s,s') \le c$ then implies $x < \frac{c+C}{2}$ and $C\le \tvra(s,s')$ implies $\frac{c+C}{2} < x$.

In the multiplicative case,  finding $x$ such that $1-\frac{C-c}{4C}\le \frac{x}{\tvra(s,s')}\le 1+\frac{C-c}{4C}$ and comparing $x$ with $ \frac{c+C}{2}$ yields an analogous argument.

 Since \cref{lemma:technical:boundedandgap} also applies to $\tvrs$, all of our results hold when $\tvra$ is replaced by  $\tvrs$.
\end{proof}
\subsection{The construction}
\label{sec:undeci:construction}
We now show how to build, given a probabilistic automaton, a labelled Markov chain that satisfies \cref{pty:constrctuionprops}. (We prove that the construction indeed satisfies these properties in the next section.)
\begin{construction}\label{construction:undec} For both cases, we reduce from \textsc{Empty}.
We show our construction for $\Sigma{} = \{a,b\}$, but the procedure can be generalised to arbitrary alphabets.

The construction will create two branches of a labelled Markov chain.
The first, from state $q_s$, will simulate the given probabilistic automaton using the original weights multiplied by the same scalar (in this case $\frac{1}{4}$).
The other branch, from state $s_0$, will process each letter from $\Sigma$ with equal weight (also $\frac{1}{4}$ in an infinite loop). Consequently,
 if there is a word accepted with probability greater than $\frac{1}{2}$, the ratio between the two branches will be greater than $1$.
 The construction will make it possible to process words repeatedly, so that the ratio can then be pumped unboundedly.

Formally, suppose we are given a probabilistic automaton $\pa = \abra{Q,\Sigma, M,F}$ with start state $q_s$. First observe that w.l.o.g. $q_s$ is not accepting, since in this case the empty word is accepted with probability $1$, and thus there is a word with probability greater than $\frac{1}{2}$ and a trivial positive instance of the big-O problem can be returned.

We construct the LMC
$\abra{Q',\Sigma', \delta, F'}$ taking $Q' =Q \uplus \{s, s', s'', s_0, t\}$ where $\uplus$ denotes disjoint union, $\Sigma' = \{a,b,\acc,\rej, {\vdash}\}$, $F' = \{t\}$ and $\delta$ as specified below.
First we simulate the probabilistic automaton with a scaling factor of $\frac{1}{4}$: for all $q,q'\in Q$,
\[
  q\trns[a]{\frac{1}{4}M(a)(q,q')} q' \quad  \quad  q\trns[b]{\frac{1}{4}M(b)(q,q')} q'.
\]
Originally accepting runs trigger a restart, while rejecting ones are  redirected to $t$:
\[
\text{if }q \in F: q\trns[\acc]{\frac{1}{2}} q_s \quad \text{ and }  \quad \text{if } q \not\in F: q\trns[\rej]{\frac{1}{2}} t.
\]
We then add a part of the chain which behaves equally, rather than according to the probabilistic automaton:
\[
 s_0\trns[a]{\frac{1}{4}} s_0 \quad\quad s_0\trns[b]{\frac{1}{4}} s_0 \quad\quad s_0\trns[\acc]{\frac{1}{4}} s_0  \quad\quad s_0\trns[\rej]{\frac{1}{4}} t.
\]
The construction is illustrated in \cref{fig:reduction}. To complete the reduction, we add the following transitions from $s,s',s''$:
  \[ s \trns[{\vdash}]{\frac{1}{2}} s_0 \quad\quad  s \trns[{\vdash}]{\frac{1}{2}} q_s
 \quad\quad   s' \trns[{\vdash}]{1} s_0
 \quad\quad   s'' \trns[{\vdash}]{\frac{99}{100}} s_0 \quad\quad s'' \trns[{\vdash}]{\frac{1}{100}} q_s.
\]

\begin{figure}
\centering
\includegraphics[width=0.7\linewidth]{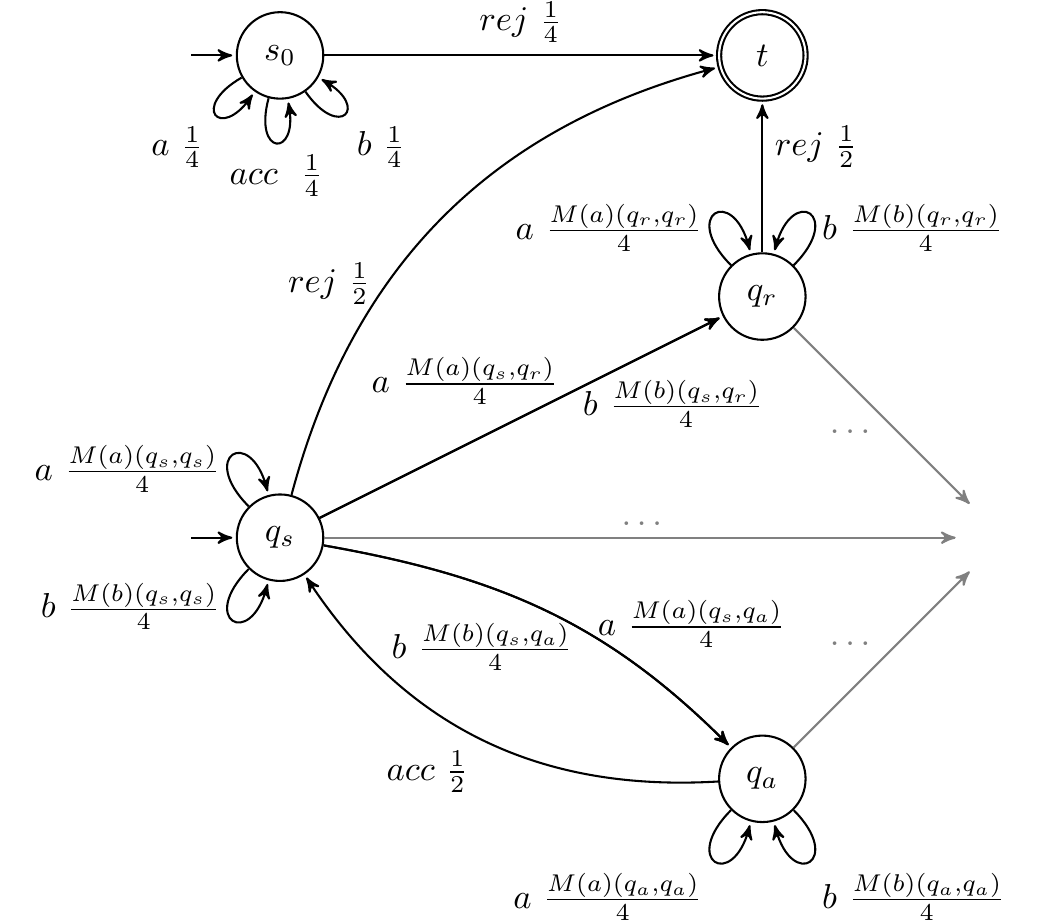}
\caption{Reduction; where $q_a$ represents accepting states of the probabilistic automaton, $q_r$ represents rejecting states and $q_s$ represents the start state (assumed to be rejecting).}
\label{fig:reduction}
\end{figure}

\end{construction}

\subsection{Proof that the construction satisfies \texorpdfstring{\cref{pty:constrctuionprops}}{Properties~\ref{pty:constrctuionprops}}}
\label{sec:undeci:proveprop}
\mbox{}\\It remains to show that \cref{construction:undec} satisfies \cref{pty:constrctuionprops}, which we show in the following two lemmas.

\begin{lem}\label{claim:boundedvsunbounded}~
If $\pa \not\in\textsc{Empty}$ then $\tvra(s,s') = \infty$.
If $\pa \in\textsc{Empty}$ then $\tvra(s,s') \le 2$.
In either case, $\tvra(s',s) \le 2$.
\end{lem}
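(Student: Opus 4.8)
\textbf{Proof plan for \cref{claim:boundedvsunbounded}.}
The plan is to analyse the structure of words with positive weight from $s$ and $s'$ in \cref{construction:undec}. Every word with positive weight from $s$, $s'$, or $s''$ begins with ${\vdash}$, and thereafter the run sits either in the copy of the probabilistic automaton $\pa$ (reading a block in $\{a,b\}^*$, then an $\acc$ that returns to $q_s$, or a $\rej$ that goes to $t$ and halts) or in the "uniform" gadget at $s_0$ (reading letters from $\{a,b,\acc\}$ freely, each with weight $\frac14$, then a $\rej$ to $t$). So the words of positive weight from $s_0$ are exactly those of the form $u\, \rej$ with $u \in \{a,b,\acc\}^*$, each with weight $(\frac14)^{|u|+1}$. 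From $q_s$ the words of positive weight are of the form $u_1 \acc\, u_2 \acc \cdots u_\ell \acc\, v\, \rej$ with $u_i, v \in \{a,b\}^*$, with weight $(\frac14)^{|u_1|+\dots+|u_\ell|+|v|}\cdot(\frac12)^{\ell} \cdot \pr_\pa(u_1)\cdots\pr_\pa(u_\ell)\cdot r$, where $r \le 1$ is the probability that $\pa$ rejects after reading $v$; note that such a word also has positive weight from $s_0$ (viewing it as an element of $\{a,b,\acc\}^*\rej$) with weight $(\frac14)^{N+1}$ where $N$ is the total number of $a,b,\acc$ symbols. First I would record these facts and the resulting weights $\nu_s(w) = \frac12\nu_{s_0}(w) + \frac12\nu_{q_s}(w)$, $\nu_{s'}(w) = \nu_{s_0}(w)$.

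For the bound $\tvra(s',s) \le 2$: since $\nu_{s'}(w) = \nu_{s_0}(w)$ and $\nu_s(w) = \tfrac12\nu_{s_0}(w) + \tfrac12\nu_{q_s}(w) \ge \tfrac12\nu_{s_0}(w) = \tfrac12\nu_{s'}(w)$ for every $w$, we get $\nu_{s'}(w)/\nu_s(w) \le 2$ whenever $\nu_{s'}(w)>0$, and when $\nu_{s'}(w)=0$ one checks $\nu_s(w)=0$ as well (a word read from $s$ with positive weight must enter $s_0$ or $q_s$ after ${\vdash}$, and the $q_s$ branch word also has positive weight from $s_0$). Hence $\tvra(s',s)\le 2$, using \cref{obs:bigorvd}/\cref{prop:simplifytvrasa}.

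For the case $\pa \in \textsc{Empty}$, I would show $\nu_{q_s}(w) \le \nu_{s_0}(w)$ for every word $w$: writing $w = {\vdash} u_1\acc \cdots u_\ell\acc\, v\, \rej$ as above, emptiness gives $\pr_\pa(u_i) \le \frac12$ for each $i$ (and $r\le 1$), so the $q_s$-weight is at most $(\frac14)^{N}(\frac12)^{\ell}\cdot\! 1$ where $N = \sum|u_i| + |v|$, while the $s_0$-weight of the same word is $(\frac14)^{N+\ell}= (\frac14)^N(\frac12)^{2\ell} \ge$ wait --- here one must be a little careful: the $\acc$ letters count toward the length in the $s_0$ reading, so $\nu_{s_0}(w) = (\frac14)^{N+\ell+1}$ whereas $\nu_{q_s}(w) = (\frac14)^{N}(\frac12)^{\ell} r = (\frac14)^{N}(\frac12)^{\ell} r$; comparing, $\nu_{q_s}(w)/\nu_{s_0}(w) = 4^{\ell+1}(\frac12)^{\ell} r = 2^{\ell+2} r$, which is \emph{not} $\le 1$ in general. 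This shows the naive bound is too weak, and indicates the real argument must use emptiness more cleverly --- the intended reading is that $\pr_\pa(u_i)\le\frac12$ must be combined with the observation that each $\acc$ transition carries weight $\frac12$ in the $q_s$-branch but, crucially, in $s_0$ the corresponding single letter $\acc$ carries weight $\frac14$ and there is \emph{no} separate block structure, so one should instead bound $\nu_{q_s}(w)$ against $\nu_{s_0}(w)$ by absorbing each factor $\pr_\pa(u_i)(\frac14)^{|u_i|}(\frac12) \le (\frac14)^{|u_i|}\cdot(\frac12)\cdot\frac12 \le (\frac14)^{|u_i|}\cdot(\frac14)$ --- i.e. $\pr_\pa(u_i)\cdot\frac12 \le \frac14$ --- which matches exactly the weight $(\frac14)^{|u_i|+1}$ that $s_0$ assigns to the block $u_i\acc$. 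Carrying this through all blocks and the final $v\,\rej$ (where $r \le 1$) yields $\nu_{q_s}(w) \le \nu_{s_0}(w)$, hence $\nu_s(w) = \tfrac12\nu_{s_0}(w)+\tfrac12\nu_{q_s}(w) \le \nu_{s_0}(w) = \nu_{s'}(w)$, so $\tvra(s,s')\le 1 \le 2$.

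For the case $\pa \notin \textsc{Empty}$, fix a word $x \in \{a,b\}^*$ with $\pr_\pa(x) > \frac12$ (so $\pr_\pa(x) = \frac12 + \eta$ for some $\eta > 0$; we may also assume $x$ leads to an accepting state, replacing $x$ by a suitable prefix/using that $\pr_\pa(x)$ is the acceptance probability). Consider, for $k\in\mathbb{N}$, the word $w_k = {\vdash}\,(x\,\acc)^k\, \rej'$ where $\rej'$ denotes reading $\rej$ from a rejecting state --- more precisely one needs to route the run into $t$ at the end; since $q_s$ is rejecting we can take $w_k = {\vdash}(x\acc)^k \rej$, but $\rej$ is only enabled from rejecting states, so I would instead append a short fixed suffix $y\,\rej$ (or use that after the final $\acc$ we are back at $q_s$ which is rejecting, so $\rej$ is immediately enabled) --- the point is that $\nu_{q_s}(w_k) \ge (\frac14)^{k|x|}(\frac12)^k \pr_\pa(x)^k \cdot c_0$ for a fixed constant $c_0>0$, while $\nu_{s_0}(w_k) = (\frac14)^{k|x|+k+\text{const}}\cdot c_1$. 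Then
\[
\frac{\nu_s(w_k)}{\nu_{s'}(w_k)} \;\ge\; \frac12\cdot\frac{\nu_{q_s}(w_k)}{\nu_{s_0}(w_k)} \;\ge\; \text{const}\cdot\bigl(4\cdot\tfrac12\cdot\pr_\pa(x)\bigr)^k \;=\; \text{const}\cdot(1+2\eta)^k \xrightarrow[k\to\infty]{}\infty,
\]
so $\tvra(s,s') = \infty$ by \cref{obs:bigorvd}. I expect the main obstacle to be purely bookkeeping: pinning down exactly which words have positive weight, getting the weight formulas right (especially the role of the $\acc$ letters, which are "free" in the $s_0$ gadget but gated by the $\frac12$-weight restart in the $\pa$-copy), and handling the boundary conditions for enabling $\rej$ (i.e. ensuring the constructed witness words actually reach $t$). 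The conceptual content is the single inequality $\pr_\pa(u)\cdot\frac12 \lessgtr \frac14$, i.e. $\pr_\pa(u) \lessgtr \frac12$, which is precisely emptiness of $\pa$; once the weight formulas are in place, the two cases follow by multiplying this comparison across all $\acc$-blocks.
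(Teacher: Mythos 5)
Your proposal is correct and follows essentially the same route as the paper: the same decomposition of positive-weight words into $\acc$-separated blocks, the same block-by-block comparison $\pr_\pa(u_i)\cdot\frac12$ versus the $\frac14$ that $s_0$ assigns to each $\acc$, and the same pumped witness $(x\,\acc)^k\,\rej$ giving growth $(2\pr_\pa(x))^k$ in the non-empty case. One small bookkeeping correction: the final block $v\,\rej$ carries weight $(\frac14)^{|v|}(1-\pr_\pa(v))\cdot\frac12$ from $q_s$ but only $(\frac14)^{|v|}\cdot\frac14$ from $s_0$, so you get $\nu_{q_s}(w')\le 2\,\nu_{s_0}(w')$ rather than $\nu_{q_s}(w')\le\nu_{s_0}(w')$, whence $\tvra(s,s')\le\frac32$ (as in the paper) rather than $\le 1$ --- still within the lemma's bound of $2$.
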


\begin{proof}

First observe that
\begin{equation} \label{eqn:ratiossprime}
\frac{\nu_s({\vdash} w')}{\nu_{s'}({\vdash} w')} =\frac{\frac{1}{2}\nu_{s_0}(w') + \frac{1}{2}\nu_{q_s}(w')}{\nu_{s_0}(w')} = \frac{1}{2} + \frac{1}{2}\frac{\nu_{q_s}(w')}{\nu_{s_0}(w')}.
\end{equation}
If there is a word $w$ that is accepted by the automaton with probability
$>\frac{1}{2}$, then let $w'_i = (w\ \acc)^i\ \rej$ and we have
\begin{equation} \label{eqn:wordgoesunbounded}
\frac{\nu_{q_s}(w'_i)}{\nu_{s_0}(w'_i)}
= \frac{((\frac{1}{4})^{|w|}\pr(w)\frac{1}{2})^i \cdot \frac{1}{2}}{((\frac{1}{4})^{|w|}\frac{1}{4})^i\cdot \frac{1}{4}} = 2 \cdot (2 \pr(w))^i.
\end{equation}

\noindent Since $\pr(w) > \frac{1}{2}$ then $2\pr(w) > 1$ and we have:
\[
\lim_{i\to\infty}\frac{\nu_s({\vdash}w'_i)}{\nu_{s'}({\vdash}w'_i)} = \infty \quad \text{ and } \quad  \tvra(s,s') =  \tvrs(s,s') = \infty.
\]

If there is no such word then $\forall w\in\Sigma^*: \pr(w) \le \frac{1}{2}$, then probability ratio of all words is bounded. All words stating from states $s$ and $s'$ start with ${\vdash}$ and are terminated by $\rej$, so in general all words take the form $w = {\vdash}(w_1\ \acc)\dots (w_n\ \acc)(w_{n+1}\ \rej)$. Let us consider the probability of $w' = (w_1\ \acc)\dots (w_n\ \acc)(w_{n+1}\ \rej)$ words from $s_0$ and $q_s$. Then:
\begin{align}
\frac{\nu_{q_s}(w')}{\nu_{s_0}(w')} &= \frac{\pbra{\prod_{i=1}^n \frac{1}{2}(\frac{1}{4})^{|w_i|}\pr(w_i)}\pbra{(\frac{1}{4})^{|w_{n+1}|}(1-\pr(w_{n+1}))\frac{1}{2}}}{(\frac{1}{4})^{|w_1| + \dots + |w_n|}(\frac{1}{4})^n(\frac{1}{4})^{|w_{n+1}|}\frac{1}{4}}
\\& \le \frac{\pbra{(\frac{1}{4})^{|w_1| + \dots + |w_n|}(\frac{1}{2})^n(\frac{1}{2})^n}\pbra{(\frac{1}{4})^{|w_{n+1}|}\frac{1}{2}}}{(\frac{1}{4})^{|w_1| + \dots + |w_n| + n}(\frac{1}{4})^{|w_{n+1}|}\frac{1}{4}}
\tag{as $\forall i: \pr(w_i) \le \frac{1}{2}$}
\\&= 2. \label{eqn:boundedtwo}
\end{align}

Then using \cref{eqn:ratiossprime} for every word $w$ we have $\frac{1}{2} \le \frac{\nu_s(w)}{\nu_{s'}(w)} \le \frac{3}{2}$ and $\tvra(s,s')\le \frac{3}{2}$ and $\tvrs(s,s')\le 2$.
\end{proof}

\begin{lem}\label{claim:gapproblem}~
If $\pa \not\in\textsc{Empty}$ then $49 < \tvra(s,s'') \le 51$.
If $\pa\in\textsc{Empty}$ then $\tvra(s,s'') \le 2$. In either case, $\tvra(s'',s) \le 2$.
\end{lem}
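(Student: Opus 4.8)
The plan is to mirror the proof of \cref{claim:boundedvsunbounded}, adapting the arithmetic to the different initial transition weights used for $s''$. The key observation is that from $s''$ the letter ${\vdash}$ splits with weights $\frac{99}{100}$ into $s_0$ and $\frac{1}{100}$ into $q_s$, so that analogously to \cref{eqn:ratiossprime} we have
\[
\frac{\nu_s({\vdash} w')}{\nu_{s''}({\vdash} w')}
= \frac{\frac{1}{2}\nu_{s_0}(w') + \frac{1}{2}\nu_{q_s}(w')}{\frac{99}{100}\nu_{s_0}(w') + \frac{1}{100}\nu_{q_s}(w')}.
\]
Writing $x = \nu_{q_s}(w')/\nu_{s_0}(w') \ge 0$, this ratio equals $\frac{\frac{1}{2} + \frac{1}{2}x}{\frac{99}{100} + \frac{1}{100}x} = \frac{50(1+x)}{99+x}$, an increasing function of $x$ that tends to $50$ as $x\to\infty$ and equals $\frac{50}{99}$ at $x=0$.

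First I would treat the case $\pa\notin\textsc{Empty}$. Here there is a word $w$ with $\pr(w)>\frac12$, and by exactly the computation in \cref{eqn:wordgoesunbounded}, taking $w'_i = (w\ \acc)^i\ \rej$ gives $\nu_{q_s}(w'_i)/\nu_{s_0}(w'_i) = 2(2\pr(w))^i \to \infty$; feeding this into the displayed formula above shows $\frac{\nu_s({\vdash}w'_i)}{\nu_{s''}({\vdash}w'_i)} \to 50$ from below, so $\tvra(s,s'') > 49$ provided we take $i$ large enough. For the upper bound $\tvra(s,s'')\le 51$: every nonzero-weight word from $s$ or $s''$ has the form ${\vdash} w'$, and since the ratio $\frac{50(1+x)}{99+x}$ is monotone increasing in $x$ and bounded above by $50$ for all $x\ge 0$, we get $\frac{\nu_s({\vdash}w')}{\nu_{s''}({\vdash}w')} < 50 \le 51$ unconditionally, hence $\tvra(s,s'')\le 51$ (indeed $< 50$). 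Combining, $49 < \tvra(s,s'') \le 51$.

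Next, the case $\pa\in\textsc{Empty}$: then $\pr(w)\le\frac12$ for all $w$, and the bound $\nu_{q_s}(w')/\nu_{s_0}(w') \le 2$ established in the derivation of \cref{eqn:boundedtwo} applies verbatim (it does not depend on the starting states $s,s'$). Plugging $x\le 2$ into $\frac{50(1+x)}{99+x}$ gives a value at most $\frac{150}{101} < 2$, so $\tvra(s,s'')\le 2$. Finally, for $\tvra(s'',s)\le 2$ in either case, I would invert the roles: $\frac{\nu_{s''}({\vdash}w')}{\nu_s({\vdash}w')} = \frac{99+x}{50(1+x)}$, which for $x\ge 0$ is maximised at $x=0$ with value $\frac{99}{50} < 2$ (as $x\to\infty$ it tends to $\frac{1}{50}$), so the reciprocal ratio is always below $2$. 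As all nonzero-weight words from $s''$ and $s$ share the prefix ${\vdash}$, this bounds $\tvra(s'',s)$ by $2$ regardless of emptiness.

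I do not expect a genuine obstacle here: the only care needed is to confirm that the $x\le 2$ bound from \cref{claim:boundedvsunbounded} is a statement purely about words $w'$ processed from $s_0$ versus $q_s$ (which it is, since that derivation starts from words of the form $(w_1\ \acc)\dots(w_{n+1}\ \rej)$ independently of $s,s'$), and to double-check the monotonicity and limiting values of the Möbius-type function $x\mapsto \frac{50(1+x)}{99+x}$, whose derivative has constant sign on $x\ge 0$. The numeric constants $49$ and $51$ are comfortably slack relative to the true limit $50$, so no delicate estimation is required.
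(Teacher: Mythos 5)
Your proposal is correct and follows essentially the same route as the paper's proof: both reduce everything to the ratio $x=\nu_{q_s}(w')/\nu_{s_0}(w')$, reuse the limit computation from \cref{eqn:wordgoesunbounded} for the lower bound $49$ and the bound $x\le 2$ from \cref{eqn:boundedtwo} for the \textsc{Empty} case. The only difference is presentational: you analyse the single monotone M\"obius function $x\mapsto \frac{50(1+x)}{99+x}$ exactly (yielding the slightly sharper bounds $<50$ and $99/50$), whereas the paper splits each fraction into two summands and bounds them term by term to get $\le 51$ and $\le 2$.
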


\begin{proof}

We first observe that $\frac{\nu_{s''}({\vdash} w)}{\nu_s({\vdash} w)}$ is always $\le 2$, resulting in the only interesting direction being $\frac{\nu_s({\vdash} w)}{\nu_{s''}({\vdash} w)}$:
\begin{align*}
\frac{\nu_{s''}({\vdash} w)}{\nu_s({\vdash} w)}
&= \frac{\frac{99}{100}\nu_{s_0}(w)+ \frac{1}{100}\nu_{q_s}(w)}{\frac{1}{2}\nu_{s_0}(w) + \frac{1}{2}\nu_{q_s}(w)}
\\&= \frac{\frac{99}{100}\nu_{s_0}(w)}{\frac{1}{2}\nu_{s_0}(w) + \frac{1}{2}\nu_{q_s}(w)}+\frac{\frac{1}{100}\nu_{q_s}(w)}{\frac{1}{2}\nu_{s_0}(w) + \frac{1}{2}\nu_{q_s}(w)}
\\ & \le \frac{\frac{99}{100}\nu_{s_0}(w)}{\frac{1}{2}\nu_{s_0}(w)}+\frac{\frac{1}{100}\nu_{q_s}(w)}{\frac{1}{2}\nu_{q_s}(w)}
\\ & = \frac{2 \cdot 99}{100}+ \frac{2}{100} = 2.
\end{align*}

We observe that for all words ${\vdash} w$, the quantities $\tvra$ and $\tvrs$ are bounded:
\begin{align*}
\frac{\nu_s({\vdash} w)}{\nu_{s''}({\vdash} w)}
&= \frac{\frac{1}{2}\nu_{s_0}(w) + \frac{1}{2}\nu_{q_s}(w)}{\frac{99}{100}\nu_{s_0}(w)+ \frac{1}{100}\nu_{q_s}(w)}
\\&= \frac{\frac{1}{2}\nu_{s_0}(w)}{\frac{99}{100}\nu_{s_0}(w)+ \frac{1}{100}\nu_{q_s}(w)} +  \frac{\frac{1}{2}\nu_{q_s}(w)}{\frac{99}{100}\nu_{s_0}(w)+ \frac{1}{100}\nu_{q_s}(w)}
\\ & \le \frac{\frac{1}{2}\nu_{s_0}(w)}{\frac{99}{100}\nu_{s_0}(w)} +  \frac{\frac{1}{2}\nu_{q_s}(w)}{\frac{1}{100}\nu_{q_s}(w)}
\\ & \le \frac{100}{2\cdot 99} + \frac{100}{2} \le 51.
\end{align*}

If there is a word $w$ that is accepted by the automaton with probability
$>\frac{1}{2}$, then we consider the sequence of words $w'_i = (w\ \acc)^i\ \rej$:
\begin{equation*}
\frac{\nu_s({\vdash}w'_i)}{\nu_{s''}({\vdash}w'_i)}= \frac{\frac{1}{2}\nu_{s_0}(w'_i) + \frac{1}{2}\nu_{q_s}(w'_i)}{\frac{99}{100}\nu_{s_0}(w'_i)+ \frac{1}{100}\nu_{q_s}(w'_i)}
\ge \frac{\frac{1}{2}\nu_{q_s}(w'_i)}{\frac{99}{100}\nu_{s_0}(w'_i)+ \frac{1}{100}\nu_{q_s}(w'_i)}. \end{equation*}

By the previous proof (\cref{eqn:wordgoesunbounded}) we know $\frac{\nu_{q_s}(w'_i)}{\nu_{s_0}(w'_i)} \trns[i \to \infty]{} \infty$, thus $\frac{\nu_{s_0}(w'_i)}{\nu_{q_s}(w'_i)} \trns[i \to \infty]{} 0$.
Consider
\[\frac{\frac{99}{100}\nu_{s_0}(w'_i)+ \frac{1}{100}\nu_{q_s}(w'_i)}{\frac{1}{2}\nu_{q_s}(w'_i)} = \frac{2}{100} + \frac{2\cdot 99}{100}\pbra{\frac{\nu_{s_0}(w'_i)}{\nu_{q_s}(w'_i)}}
\trns[i \to \infty]{} \frac{2}{100}
\]
then

\[\frac{\frac{1}{2}\nu_{q_s}(w'_i)}{\frac{99}{100}\nu_{s_0}(w'_i)+ \frac{1}{100}\nu_{q_s}(w'_i)} \trns[i \to \infty]{} \frac{100}{2} = 50.\]

So for all $\varepsilon$ there exists an $i$ such that $\frac{\nu_s({\vdash}w'_i)}{\nu_{s''}({\vdash}w'_i)} \ge 50- \varepsilon$. In particular for example $\tvra(s,s'') \ge 49$.

If there is no such word then  $\pr(w) \le \frac{1}{2}$ for all $w\in\Sigma^*$. We show that the total variation distance will be small. All words starting from state $s$ and $s'$ start with ${\vdash}$ and are terminated by $\rej$, so in general all words take the form $w = {\vdash}(w_1\ \acc)\dots (w_n\ \acc)(w_{n+1}\ \rej)$. Let us consider the probability of such words from $s,s''$:

\begin{align*}
\frac{\nu_s(w)}{\nu_{s''}(w)}
= \frac{\frac{1}{2}\nu_{s_0}(w') + \frac{1}{2}\nu_{q_s}(w')}{\frac{99}{100}\nu_{s_0}(w')+ \frac{1}{100}\nu_{q_s}(w')}
& \le  \frac{\frac{1}{2}\nu_{s_0}(w') + \frac{1}{2}\nu_{q_s}(w')}{\frac{99}{100}\nu_{s_0}(w')}
\\& \le \frac{100}{99}\cdot \pbra{\frac{1}{2} + \frac{1}{2} \frac{\nu_{q_s}(w')}{\nu_{s_0}(w')}}
\\&\le \frac{100}{99}\cdot \frac{3}{2}   \tag{by \cref{eqn:boundedtwo}}
\\ &\le 2.
\end{align*}

This creates a significant gap between the case where there is a word with probability greater than one half and not; in particular if there exists $ w$ with $\pr(w) > \frac{1}{2}$ then $49 < \tvra(s,s'') \le 51$ and $ 49 < \tvrs(s,s'') \le 51$, but if no such word exists then $\tvra(s,s'') \le 2$ and $\tvrs(s,s'') \le 2$.
\end{proof}

\begin{rem}
The classic \emph{non-strict} threshold problem for the  total variation distance (i.e. whether $\tvstandard(s,s') \le \theta$) is  known to be undecidable~\cite{kiefer2018computing}, like our distances.
However,   it is not known if its strict variant  (i.e. whether $\tvstandard(s,s') < \theta$) is also undecidable. In contrast, in our case, both variants are undecidable. Further note that (additive) approximation of $\tvstandard$ is possible~\cite{kiefer2018computing,chen2014total}, but this is not the case for our distances $\tvra$ and $\tvrs$.
\end{rem}

\section{The relation to the \textsc{Value-1} Problem}\label{append:rel-val1}

In the previous section we have shown the undecidability of the big-O problem using the undecidability of the emptiness problem for probabilistic automata. Another proof of undecidability can be obtained using the \textsc{Value-1} problem, shown to be undecidable in~\cite{DBLP:conf/icalp/GimbertO10}: indeed, we will show in this section that the big-O problem and the \textsc{Value-1} problem are interreducible.

The \textsc{Value-1} problem asks whether there exists a sequence of words with probability tending to 1:

\medskip

\begin{defi}[\textsc{Value-1} problem]~\\
\begin{tabular}{ll}
\textsc{input} & Probabilistic automaton $\mathcal{A}$\\
\textsc{output} & For all $\delta > 0$, does there exist a word $w$ such that $\pr_\mathcal{A}(w) > 1-\delta$\,?
\end{tabular}
\end{defi}

\noindent We will exhibit a close, but not complete, connection between the \textsc{Value-1} problem and big-O problem by reducing in both directions between the two, see \cref{sec:val1:interreduction}.
In this section we are concerned only with matters of decidability, and thus do not consider whether the reductions are efficient.

Our main decidability results
(in Sections~\ref{sec:bigounaryconp}, \ref{sec:bounded}, and \ref{sec:ambiguity})
are independent of the findings of this section, so it can be skipped during the first reading.
However, \cref{sec:val1:decidability} below justifies these further developments,
showing why the connection exhibited here
does not entail decidability results for the big-O problem by a transfer
from decidable cases of \textsc{Value-1}.

\pagebreak
\subsection{Interreduction between \textsc{Value-1} and the big-O problem}\label{sec:val1:interreduction}

\begin{prop}\label{prop:v1tobigo}
The \textsc{Value-1} problem reduces to the big-O problem.
\end{prop}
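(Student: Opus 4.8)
The plan is to reduce from \textsc{Value-1}. Given a probabilistic automaton $\pa = \abra{Q,\Sigma,M,F}$ with start state $q_s$, I would build a single non-negative weighted automaton $\wa$ consisting of two disjoint copies $Q_1$ and $Q_2$ of $Q$: within each copy the transition weights are exactly those given by $M$, and there are no transitions between the two copies; in $\wa$ the final states are the copy of $F$ inside $Q_1$ together with the copy of $Q\setminus F$ inside $Q_2$. Writing $s\in Q_1$ and $s'\in Q_2$ for the two copies of $q_s$, the point of the construction is that, since each $M(a)$ is stochastic and hence so is every product $M(a_1)\cdots M(a_n)$, we have $\nu_s(w) = \pr_\pa(w)$ and $\nu_{s'}(w) = 1 - \pr_\pa(w)$ for every $w\in\Sigma^*$. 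This construction is plainly computable (it is essentially $\pa$ duplicated, with the final set swapped in the second copy).

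I would then prove the equivalence: $\pa$ has value $1$ if and only if $s$ is \emph{not} big-O of $s'$ in $\wa$ --- equivalently, $\pa$ does not have value $1$ if and only if $s$ is big-O of $s'$. Since the big-O problem and its complement are decidable together, this yields the desired reduction. For the direction ``not value $1$ $\Rightarrow$ big-O'': if $\pr_\pa(w)\le 1-\delta$ for all $w$ and some fixed $\delta>0$, then $\nu_{s'}(w)=1-\pr_\pa(w)\ge\delta$ while $\nu_s(w)=\pr_\pa(w)\le 1$, so $\nu_s(w)\le\tfrac1\delta\,\nu_{s'}(w)$ for every $w$ and $C=1/\delta$ witnesses big-O. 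For the converse, assume $\pa$ has value $1$ and let $C>0$ be arbitrary; picking $\delta<1/(C+1)$ and a word $w$ with $\pr_\pa(w)>1-\delta$ gives $\nu_s(w)>1-\delta$ and $\nu_{s'}(w)<\delta$, whence either $\nu_{s'}(w)=0<\nu_s(w)$, or $\nu_s(w)/\nu_{s'}(w)>(1-\delta)/\delta>C$; in both cases $C$ fails, so $s$ is not big-O of $s'$.

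I do not expect a genuinely hard step; the two small points needing care are (i) that row-stochasticity of the matrices $M(a)$ is what makes the ``swapped'' copy compute the rejection probability $1-\pr_\pa(w)$ exactly, and (ii) the borderline case where some word is accepted by $\pa$ with probability exactly $1$ (so $\pa$ has value $1$): there, that same word $w$ satisfies $\nu_s(w)>0=\nu_{s'}(w)$, and the convention of \cref{sec:prelims} already declares that $s$ is not big-O of $s'$, so the equivalence holds without needing to assume $q_s\notin F$. If an LMC rather than a general weighted automaton is wanted, one can additionally route both copies out of a common fresh initial state that reads a new letter and moves with probability $\tfrac12$ into each copy, exactly as in \cref{construction:undec}; this is not needed for \cref{prop:v1tobigo} itself.
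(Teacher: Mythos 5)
Your proof is correct for the proposition as literally stated, but it takes a genuinely different and more elementary route than the paper. You exploit row-stochasticity directly: two disjoint copies of $\pa$ with the final set complemented in the second copy give $\nu_s(w)=\pr_\pa(w)$ and $\nu_{s'}(w)=1-\pr_\pa(w)$ exactly, and the equivalence ``value $1$ $\iff$ $s$ not big-O of $s'$'' then follows from the clean $\delta$/$C$ calculation you give (including the borderline case $\pr_\pa(w)=1$, which you handle correctly via the paper's convention on $\nu_{s'}(w)=0$). The polarity of your reduction (Value-1 maps to the \emph{complement} of big-O) matches the paper's own proof, and since this section only cares about decidability, that is acceptable. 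The paper's construction is more elaborate for one reason: it rescales every transition by $1/|\Sigma'|$ and adds a fresh letter $\$$ with $\acc/\rej$ sinks precisely so that the output is a \emph{labelled Markov chain}, which is what makes the interreduction with the LMC version of the big-O problem (used in \cref{prop:bigotov1}, which takes an LMC as input) go through. Your construction yields only a general non-negative weighted automaton --- your copies are not LMCs (outgoing weights sum to $|\Sigma|$ at each state, and final states retain outgoing transitions) --- and your suggested fix of merely adding a common initial state with two weight-$\tfrac12$ branches does not repair this; one also needs the $1/|\Sigma'|$ rescaling and a termination mechanism as in the paper. So: your argument proves \cref{prop:v1tobigo} as stated, more simply, at the cost of not delivering the LMC-restricted version that the paper's construction provides.
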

\begin{proof} 

Assume we are given a probabilistic automaton $\pa = \abra{Q,\Sigma,M,F_{\pa}}$ and a dedicated starting state $q_0 \in Q$, which accepts words~$w$ with probability $\pr_\pa(w)$. First construct $\pa'$ in which words~$w$ are accepted with probability $\pr_{\pa'}(w) = 1 - \pr_\pa(w)$, by inverting accepting states, that is, $\pa' = \abra{Q,\Sigma,M,F_{\pa'}}$ with $F_{\pa'}= Q\setminus F_{\pa} $.

Let us first consider the idea of the proof. Consider the probabilistic automaton $\mathcal{B}$ such that $\mathcal{B}(w) = 1$ for all $w\in\Sigma^*$. Then $\mathcal{B}$ is not big-O of $\pa'$ if and only if there is a sequence of words for which $\pa'(w_i) \xrightarrow{i\to\infty} 0$. Since we wish to show the undecidability even for labelled Markov chains, we adjust the weight at each step by the size of the alphabet in order to encode a probabilistic automaton inside the labelled Markov chain. The remainder of the proof formally builds this construction.

Construct a Markov chain $ \lmc_{\pa'} = \abra{Q',\Sigma', M', F'}$, where $Q' =Q \cup \{s, s', s_0, \rej, \acc\}$, $\Sigma' = \Sigma \cup \{\$\}$ (assuming $\$\not\in\Sigma$) and $F' =\{\acc\}$. The probabilistic automaton $\pa'$ will be simulated by $\lmc_{\pa'}$. The map  $M'(a)(q,q')= p$ is described using the notation $q\trns[a]{p}q'$ as follows:

For all $q\in Q:$
\[
  \forall q' \in Q: q\trns[a]{(1/|\Sigma'|)M(a)(q,q')} q' \text{ for all } a\in \Sigma
\]
\[
\text{if }q \in F_{{\pa'}}: q\trns[\$]{(1/|\Sigma'|)} \acc \quad \text{ and }  \quad \text{if } q \not\in F_{{\pa'}}: q\trns[\$]{(1/|\Sigma'|)} \rej
\]
\[
  s' \trns[\$]{1} q_0 \quad\quad s \trns[\$]{1} s_0  \quad\quad s_0\trns[\$]{(1/|\Sigma'|)} \acc\quad\quad \text{ and }\quad\quad s_0\trns[a]{(1/|\Sigma'|)} s_0 \text{ for all } a\in \Sigma.
\]

Note the only words with positive probability are words of the form $\$ \Sigma^* \$ \subseteq \Sigma'^* $. Then given a word $w\in \Sigma^*$,
$\nu_s(\$ w\$) =\left(\frac{1}{|\Sigma| + 1}\right)^{|w\$|}$ and $\nu_{s'}(\$ w\$) = \left(\frac{1}{|\Sigma| + 1}\right)^{|w\$|}(1-\pr_\pa(w))$.

Then if there is a sequence of words $(w_i)_i$ for which $\pr_\pa(w_i)$ tends to 1 then $\frac{ \nu_s(\$w_i\$)}{ \nu_{s'}(\$w_i\$)} $ is unbounded. However, if there exists some $\gamma > 0$ so that for all $w \in \Sigma^*$ we have $  \pr_\pa(w) \le (1-\gamma)$ then $(1- \pr_\pa(w)) \ge \gamma$, and so $\frac{ \nu_s(\$w\$)}{ \nu_{s'}(\$w\$)}  \le \frac{1}{\gamma}$.
\end{proof}

\begin{prop}\label{prop:bigotov1}
The big-O problem reduces to the \textsc{Value-1} problem.
\end{prop}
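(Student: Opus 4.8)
The plan is to reduce the big-O problem to the \textsc{Value-1} problem. Suppose we are given a non-negative weighted automaton together with states $s,s'$ and we wish to decide whether $s$ is big-O of $s'$. By \cref{obs:bigorvd} this amounts to deciding whether $\sup_{w\in\Sigma^*}\nu_s(w)/\nu_{s'}(w)<\infty$. The first step is a normalisation: since scaling all weights of the automaton by a common positive factor changes neither ratios nor the answer, and since $\nu_s(w),\nu_{s'}(w)$ are bounded by a simple exponential in $|w|$, we may rescale so that $\nu_{s'}(w)+\nu_s(w)\le 1$ for all $w$; more carefully, one introduces a fresh letter and a uniform per-step discount (as in \cref{prop:v1tobigo}, dividing by $|\Sigma|+1$) so that all transition matrices become substochastic, and adds a rejecting sink absorbing the missing probability mass at each step. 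The point of this preprocessing is to make each of $\nu_s$ and $\nu_{s'}$ into honest sub-probability measures on words.

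The core idea is then that $s$ fails to be big-O of $s'$ precisely when $\nu_s(w)/\nu_{s'}(w)$ can be driven arbitrarily large, i.e.\ when there is a sequence $w_i$ along which $\nu_{s'}(w_i)$ is exponentially (or at least unboundedly) smaller than $\nu_s(w_i)$. I would build a probabilistic automaton $\mathcal C$ that, on reading an input word $w$ prefixed and suffixed by delimiters, runs the $s$-branch and the $s'$-branch ``in parallel'' on the same letters, and whose acceptance probability on such an encoded word is a fixed monotone function of the ratio $\nu_{s'}(w)/(\nu_s(w)+\nu_{s'}(w))$ — for instance, accept iff a single coin flip lands in the $s'$-branch while the joint computation on $w$ succeeds, normalised so that $\pr_{\mathcal C}(\$w\$)$ tends to $1$ exactly when $\nu_{s'}(w)$ becomes negligible relative to $\nu_s(w)$. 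Concretely, one merges the two branches into one automaton, lets $\mathcal C$ on prefix $\$$ branch into the two copies with equal probability, reads $w$ in both, and on suffix $\$$ accepts in the $s'$-copy and, in the $s$-copy, accepts only with a vanishing residual probability (or routes to a rejecting sink), so that after conditioning the acceptance probability equals $\nu_{s'}(w)/(\nu_s(w)+\nu_{s'}(w))$ up to the fixed per-step discount, which can itself be absorbed. Then $\pr_{\mathcal C}$ has value $1$ iff there is a sequence with $\nu_{s'}(w_i)/(\nu_s(w_i)+\nu_{s'}(w_i))\to 1$, iff $\nu_s(w_i)/\nu_{s'}(w_i)\to 0$ along that sequence — equivalently, $\sup_w \nu_{s'}(w)/\nu_s(w)=\infty$. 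This gives a reduction from ``$s'$ is big-O of $s$'' to (the complement of) \textsc{Value-1}; composing with the symmetry of the set-up, or simply running the same construction with $s,s'$ swapped, yields the reduction in the direction claimed. One must also dispose of the degenerate case of a word $w$ with $\nu_{s'}(w)=0<\nu_s(w)$: by \cref{def:nfaof} this is an NFA-language containment failure, detectable separately, and in that case $s$ is trivially not big-O of $s'$, so the reduction can pre-check it (or encode such $w$ as forcing acceptance probability $1$).

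The main obstacle I anticipate is getting the bookkeeping of normalisation exactly right so that the constructed $\mathcal C$ is a genuine probabilistic automaton (every $M(a)$ row-stochastic) while its acceptance probability is \emph{precisely} the intended function of the ratio — in particular handling the letters $\$$ (delimiters) and the per-step discount factor $1/(|\Sigma|+1)$ uniformly in both branches so they cancel in the ratio, and ensuring that no unintended words (those not of the form $\$ w \$$ with a single matched pair of delimiters) carry positive probability or spoil the value. A secondary subtlety is that \textsc{Value-1} is about a \emph{supremum} of acceptance probabilities approaching $1$, whereas the big-O condition is about a supremum of ratios being infinite; the translation between "ratio $\to\infty$" and "normalised quantity $\to 1$" is monotone and order-preserving, so once the acceptance probability is pinned down as $g\bigl(\nu_s(w)/\nu_{s'}(w)\bigr)$ for an explicit increasing bijection $g$ of $[0,\infty]$ onto $[0,1]$ (e.g.\ $g(x)=1/(1+x)$ after swapping roles), the equivalence of the two conditions is immediate. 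Everything else is routine automaton surgery of the kind already carried out in \cref{construction:undec} and \cref{prop:v1tobigo}.
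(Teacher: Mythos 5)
There is a genuine gap, and it is the central idea of the reduction. Your construction reads a single encoded word $\$w\$$, branches with probability $\tfrac12$ into the two copies, and accepts on the final $\$$ from (say) the $s'$-copy. The acceptance probability of $\$w\$$ is then $\tfrac12\nu_{s'}(w)$ (times the per-step discount), \emph{not} the conditional quantity $\nu_{s'}(w)/(\nu_s(w)+\nu_{s'}(w))$. The phrase ``after conditioning'' is doing illegitimate work: \textsc{Value-1} is about the raw acceptance probability, and since $\nu_{s'}(w)$ decays (at least) like $(|\Sigma|+1)^{-|w|}$ after your normalisation, the acceptance probability of every encoded word tends to $0$ with $|w|$, so the constructed automaton never approaches value $1$ regardless of whether the ratio is bounded. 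The missing idea is the \emph{restart} mechanism used in the paper: on the closing $\$$, runs that reach a final state in the $s$-copy go to $\acc$, runs that reach a final state in the $s'$-copy go to $\rej$, and \emph{all other runs return to $q_0$}. Reading $(\$w\$)^i$ then iterates a three-state Markov chain on $\{q_0,\acc,\rej\}$ in which each round commits mass $\tfrac12\nu_s(w)$ to $\acc$ and $\tfrac12\nu_{s'}(w)$ to $\rej$; as $i\to\infty$ the mass remaining at $q_0$ vanishes and the acceptance probability converges to $\frac{C'}{C'+1}$ with $C'=\nu_s(w)/\nu_{s'}(w)$. This is how the ratio is converted into an acceptance probability approaching $1$, and it is exactly the step your single-pass construction cannot perform.

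The restart mechanism also creates the second obligation you gloss over: once restarting is allowed, words $\$w_m\$\$w_{m-1}\$\cdots\$w_1\$$ with \emph{different} blocks carry positive probability, so the converse direction (``big-O implies not \textsc{Value-1}'') requires an argument over all such words. The paper does this by an induction on blocks, maintaining the invariant $x_i\le C\,y_i$ between the accumulated accept and reject masses, which yields $\pr(w)\le C/(C+1)<1$ uniformly. Your plan only promises to check that ``unintended words'' do not ``spoil the value''; with restarts these words are not unintended at all, and the inductive bound is needed. (The direction swap between $s$ and $s'$ in your sketch is a cosmetic matter and easily repaired, and your target function $x\mapsto x/(x+1)$ is the right one --- but without the restart-and-amplify loop the construction does not realise it.)
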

\begin{proof}
We take an instance of the big-O problem, assumed to be a labelled Markov chain with states $s,s'$ and construct probabilistic automaton $\mathcal{A}'$ as an instance of the \textsc{Value-1} problem. This new probabilistic automaton will have the property that $s$ is not big-O of $s'$ if and only if $\mathcal{A}'$ is a positive instance of the \textsc{Value-1} problem. We start with the construction, followed by both directions of the implication.

\begin{figure}[t!]
\centering
\includegraphics[width=0.6\linewidth]{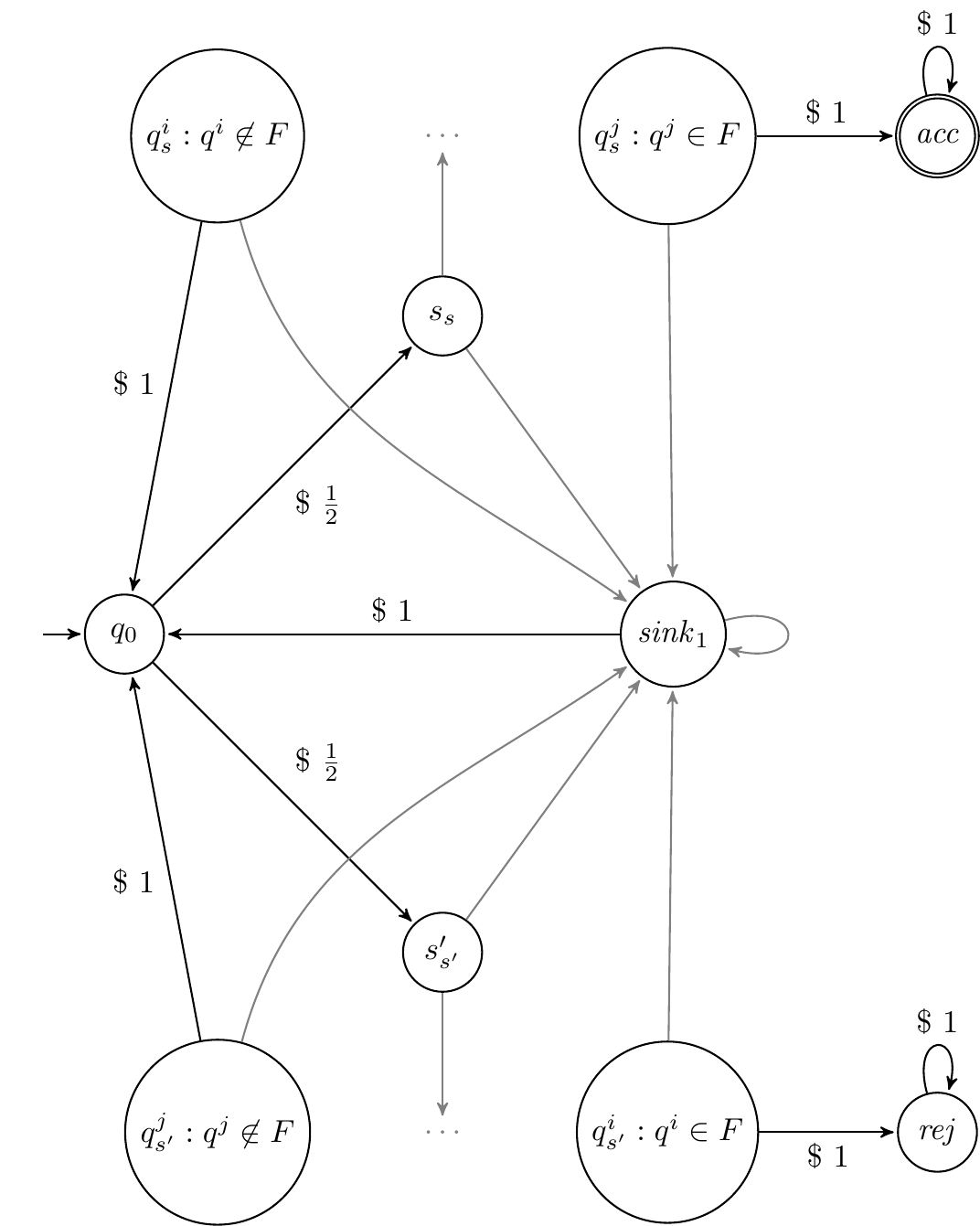}
\caption[Reduction to \textsc{Value-1}]{Reduction to \textsc{Value-1}. Only the effect of transitions on the  $\$$ symbol are shown in black, with the possibility to transition to the recoverable sink state $\sink_1$ depicted in grey (on symbols in $\Sigma$). All remaining transitions and $\sink_2$ are omitted.}
\label{bgofig:toval1}
\end{figure}

\begin{construction}
Given a labelled Markov chain $\lmc = \abra{Q,\Sigma, M, F}$ and $s,s' \in Q$, construct a probabilistic automaton $\pa = \abra{Q', \Sigma', M', F'}$. Each state of $Q$ will be duplicated, once for $s$ and once for $s'$; $Q_s = \{q_s \ | \ q \in Q\}$, $Q_{s'} = \{q_{s'} \ | \ q \in Q\}$. Let $Q' = Q_{s} \cup Q_{s'} \cup \{q_0,\acc,\rej, \sink_1,\sink_2\}$, $\Sigma' = \Sigma \cup \{\$\} $ and $F' = \{\acc\}$. The reduction can be seen in \cref{bgofig:toval1}.

Each transition of $\lmc$ will be simulated in each of the copies according to the probability in $\lmc$.
For every $q,q'\in Q$ and $a\in\Sigma$, let $M'(a)(q_s,q_s') = M(a)(q,q')$ and $M'(a)(q_{s'},q_{s'}') = M(a)(q,q')$. A probabilistic automaton should be stochastic for every $a\in \Sigma$, so there is unused probability for each character, which will divert to a (recoverable) sink. For every $q\in Q$ and $a\in\Sigma$, let
\[ M'(a)(q_{s},\sink_1) = 1- \sum_{q'\in Q}M(a)(q,q')\] and \[ M'(a)(q_{s'},\sink_1) = 1- \sum_{q'\in Q}M(a)(q,q').
\]

There is an additional control character $\$$: from $q_0$ the machine will pick either of the two machines with equal probability: $M(\$)(q_0,s_s) = M(\$)(q_0,{s'}_{s'}) = \frac{1}{2}$. If in the accepting, rejecting, or second (unrecoverable) sink state the system will stay there forever: $M'(a)(\acc,\acc) = M'(a)(\rej,\rej)= M'(a)(\sink_2,\sink_2) = 1$ for all $a\in\Sigma'$ (including $\$$).

The behaviour on $\$$ will differ in the two copies of $\lmc$. If in an accepting state reached from $s$, the system will accept. If in an accepting state reached from $s'$ the system will reject. Otherwise the system will restart back to $q_0$.

Formally,
\[
M'(\$)(q_s,\acc) = 1 \text{ when } q_s\in F  \text{ and }  M'(\$)(q_s,q_0) = 1\text{ when } q_s\not\in F
\] and
\[
M'(\$)(q_{s'},\rej)= 1 \text{ when } q_{s'}\in F  \text{ and }  M'(\$)(q_{s'},q_0)= 1 \text{ when } q_{s'}\not\in F.
\]

When in the recoverable sink state, the system restarts on $\$$, $M'(\$)(\sink_1,q_0) = 1$, and stays there for all $a\in\Sigma$, $M'(a)(\sink_1,\sink_1) = 1$.

It is intended that all $a\in\Sigma$ are unreadable from $q_0$: the system moves to the unrecoverable sink state $\sink_2$, $M'(a)(q_0,\sink_2) = 1$.

The idea is that if $\nu_s(w) \gg \nu_{s'}(w)$ then, by repeatedly reading  the word $w$, all of the probability mass will eventually move to $\acc$; otherwise a sufficiently large amount of mass will be lost to $\rej$.

Denote by $\pr_\pa(w)$ the probability of a word $w$ in the probabilistic automaton $\pa$, from state $q_0$. Here we use the notation $\nu$ to refer to the probability in the original labelled Markov chain $\lmc$. Further, the notation $\pr[q \trns{w} q']$ stands for $\left(M'(w_1)\times \dots \times M'(w_{|w|})\right)_{q,q'}$, i.e., the probability of transitioning from state $q$ to $q'$ while reading $w$ in $\pa$.
\end{construction}

\begin{dir}[Not big-O implies \textsc{Value-1}]

The proof shows that for every $\delta>0$ there exists $C > 0$ such that, for all $w\in\Sigma^*$, the inequality $\nu_s(w) > C \nu_{s'}(w) $ implies $\pr_\pa((\$w\$)^i) > 1-\delta$ for some appropriately chosen $i \in \mathbb{N}$.

Hence given $\delta$,  choose $C$ such that $(1-\frac{\delta}{2})\frac{C}{C+1} > 1-\delta$. Then by the not big-O property, choose a word~$w$ such that $\nu_s(w) = C' \nu_{s'}(w)$, with $C' > C$. Then $(1-\frac{\delta}{2})\frac{C'}{C'+1} >(1-\frac{\delta}{2})\frac{C}{C+1} > 1-\delta$.

The input word $\$w\$$ induces a (unary) Markov chain, represented by the matrix $A$, representing states $q_0, \acc$ and $\rej$ in the three positions, respectively:\[A = \left[
\begin{array}{ccccc}
0.5(1-\nu_s(w)) + 0.5(1-\nu_{s'}(w))  & 0.5\nu_s(w) & 0.5\nu_{s'}(w) \\
0 & 1 & 0 \\
0 & 0 & 1
\end{array}
\right]
\]
Then, for the word $(\$w\$)^i$, for $i \in \mathbb{N}$, starting from state $q_0$, observe:
\[
A^i
\trns[i\to\infty]{}
\left[\begin{array}{ccccc}
0 & C' x & x \\
0 & 1 & 0 \\
0 & 0 & 1 \\
\end{array}\right] \text{ with } C'x + x = 1.
\]

For each $i$, we have $A^i(q_0,\acc) +A^i(q_0,\rej) +A^i(q_0,q_0)= 1 $. Choose $i$ such that $A^i(q_0,q_0) \le \frac{\delta}{2}$. Then $A^i(q_0,\acc) +A^i(q_0,\rej) \ge 1 -\frac{\delta}{2} $, using the fact that  $A^i(q_0,\acc) = C' A^i(q_0,\rej)$,  obtaining $A^i(q_0,\acc) +\frac{A^i(q_0,\acc)}{C'} \ge 1 - \frac{\delta}{2}$. Hence $\pr_\pa((\$w\$)^i) = A^i(q_0,\acc) \ge (1 - \frac{\delta}{2})\frac{C'}{C'+1} > 1-\delta $, as required.
\end{dir}

\begin{dir}[big-O implies Not \textsc{Value-1}]

We know that there exists $C > 0$ such that, for all $w \in \Sigma^*$, $\nu_s(w) \le C \nu_{s'}(w)$, and should show that there exists $\delta > 0 $ such that every $w \in (\Sigma \cup\{\$\})^*$ has $\pr_\pa(w) \le 1-\delta$.

To move probability from $q_0$ to $\acc$ it is necessary to use words of the form $(\$ \Sigma^* \$)^*$ where $\Sigma$ is the alphabet of $\lmc$. Hence any word $w$ can be decomposed into $\$ w_m \$\$ w_{m-1} \$...\$ w_1 \$$.

Let us define $x_i = \pr[q_0 \trns{\$ w_i \$...\$ w_1 \$} \acc]$ and $y_i = \pr[q_0 \trns{\$ w_i \$...\$ w_1 \$} \rej]$, and observe $x_m =  \pr_\pa(w)$. We compute $x_i,y_i$ inductively and show $x_i\le C y_i$ for all $1\le i\le m$.

Consider reading $w_1$, the probability is such that
\begin{align*}x_1 &= \pr[q_0 \trns{\$ w_1 \$} \acc] =\frac{1}{2} \nu_s(w_1) \\ y_1 &= \pr[q_0 \trns{\$ w_1 \$} \rej] =\frac{1}{2} \nu_{s'}(w_1) \\ \pr[q_0 \trns{\$ w_1 \$} q_0]  &= 1-x_1-y_1.\end{align*}

Since there exists $C > 0$ such that, for all $w_i$, $ \nu_s(w_i)\le C \nu_{s'}(w_i) $, we have $x_1 \le C y_1$. Repeating this argument inductively, we show that $x_i \le C y_i$ for all $i$. Indeed:
\begin{align*}
x_i &= \pr[q_0 \trns{\$ w_i \$...\$ w_1 \$} \acc]
\\ &=\pr[q_0 \trns{\$ w_i \$} q_0] \cdot \pr[q_0 \trns{\$ w_{i-1} \$...\$ w_1 \$} \acc] + \pr[q_0 \trns{\$ w_i \$} \acc]
\\ &=(1- \frac{1}{2}\nu_s(w_i) -\frac{1}{2}\nu_{s'}(w_i) ) \cdot x_{i-1} + \frac{1}{2}\nu_s(w_i) \\
y_i &= \pr[q_0 \trns{\$ w_i \$...\$ w_1 \$} \rej]
\\ &=\pr[q_0 \trns{\$ w_i \$} q_0] \cdot \pr[q_0 \trns{\$ w_{i-1} \$...\$ w_1 \$} \rej] + \pr[q_0 \trns{\$ w_i \$} \rej] \\
&=(1- \frac{1}{2}\nu_s(w_i) -\frac{1}{2}\nu_{s'}(w_i) ) \cdot y_{i-1} + \frac{1}{2}\nu_{s'}(w_i).
\end{align*}
Hence, assuming $\nu_s(w_i)\le C \nu_{s'}(w_i)$ and $x_{i-1} \le C y_{i-1}$ we have: \begin{align*}x_i &=(1- \frac{1}{2}\nu_s(w_i) -\frac{1}{2}\nu_{s'}(w_i) ) \cdot x_{i-1} + \frac{1}{2}\nu_s(w_i)\\&\le (1- \frac{1}{2}\nu_s(w_i) -\frac{1}{2}\nu_{s'}(w_i) ) \cdot Cy_{i-1} + C \cdot \frac{1}{2}\nu_{s'}(w_i) \\&= C \cdot [(1- \frac{1}{2}\nu_s(w_i) -\frac{1}{2}\nu_{s'}(w_i) )y_{i-1} + \frac{1}{2}\nu_{s'}(w_i)]\\&\le Cy_{i}.\end{align*}

We have $x_m \le C y_m \le C(1-x_m) = C - C x_m$, so
$\pr_\pa(w) = x_m \le \frac{C}{C+1} < 1$, so the probability of reaching $\acc$ is bounded away from $1$ for every word.\qedhere\end{dir}\end{proof}

\subsection{What do decidable cases of the \textsc{Value-1} problem tell us about the big-O problem?}
\label{sec:val1:decidability}

The \textsc{Value-1} problem is undecidable in general, however it is decidable in the unary case  in $\coNP$ \cite{chadha2014decidable} and for \textit{leaktight automata} \cite{oualhadj2015deciding}. Note, however, that the construction of \cref{prop:bigotov1} combined with these decidability results \textit{does not} entail decidability results for the big-O problem. Firstly note that the construction adds an additional character, and so a unary instance of the big-O problem always has at least two characters when translated to the \textsc{Value-1} problem. Further, as we show in the remainder of this section, the construction does not result in a leaktight automaton.
The following argument, does not, of course, preclude the existence of an alternative construction which could, hypothetically, maintain these properties.

Let us recall the definition of leaktight automata~\cite{oualhadj2015deciding}.

\begin{defi}
A finite word $u$ is \emph{idempotent} if reading once or twice the word $u$ does not change qualitatively the transition probabilities. That is $\pr_\pa[q \trns{u} q'] > 0 \iff \pr_\pa[q \trns{uu} q'] > 0 $.

Let $u_n$ be a sequence of idempotent words.  Assume that the sequence of matrices $\pr_\pa(u_n)$ converges to a limit $M$, that this limit is idempotent and denote $\mathcal M$ the associated Markov chain. The sequence $u_n$ is a \textit{leak} if there exist $r, q\in Q$ such that the following three conditions hold:
\begin{enumerate}
\item $r$ and $q$ are recurrent in $M$, \item $\lim_{n \to \infty} \pr_\pa[r \trns{u_n} q] = 0$, \item for all $n$, $\pr_\pa[r \trns{u_n} q]>0$.
\end{enumerate}
An automaton is \emph{leaktight} if there is no leak.
\end{defi}

If there were no leak in the probabilistic automata obtained from the reduction
of~\cref{prop:bigotov1}, then the decidability  of the big-O problem for labelled Markov
chains would follow. However, this is not the case, i.e., \cref{prop:bigotov1} does not solve any cases of the big-O problem by reducing to known decidable fragments of \textsc{Value-1}.

\begin{prop}
\label{ex:leak}
The resulting automaton from the reduction of the big-O problem to the \textsc{Value-1} problem has a leak.
\end{prop}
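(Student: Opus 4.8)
The plan is to exhibit an explicit leak in the probabilistic automaton $\pa$ constructed in \cref{prop:bigotov1}. Recall that $\pa$ has states $Q' = Q_s \cup Q_{s'} \cup \{q_0,\acc,\rej,\sink_1,\sink_2\}$, and that on a word of the form $\$w\$$ (with $w\in\Sigma^*$) the relevant dynamics, restricted to $\{q_0,\acc,\rej\}$, are governed by the $3\times 3$ stochastic matrix $A$ written in the proof of that proposition, with $A(q_0,\acc)=\tfrac12\nu_s(w)$, $A(q_0,\rej)=\tfrac12\nu_{s'}(w)$, and $\acc,\rej$ absorbing. The leak will come from a word $w$ on which $\nu_s(w)=\nu_{s'}(w)=0$ but which, when iterated, drives all probability into $\acc$ in the limit while never actually reaching $\acc$ in finitely many steps along the relevant transition — or, more simply, a word on which $\nu_{s'}(w)>0$ but $\nu_s(w)=0$, so that mass that could reach $\acc$ via $s$ never does, yet $\rej$ is reached with vanishing-but-positive probability when combined appropriately. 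The cleanest choice: take any word $w$ with $\nu_s(w)>0$ (such a $w$ exists if the NFA $\nfaof{\lmc}{s}$ is nonempty; if not, the leak can be exhibited even more trivially, or one observes the instance is degenerate), and consider $u_n = (\$w\$)^n$.

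First I would check that $u = \$w\$$ is idempotent in the sense of the definition, i.e. $\pr_\pa[q\trns{u}q']>0 \iff \pr_\pa[q\trns{uu}q']>0$ for all $q,q'$. The support structure of $A$ is: from $q_0$ one can reach $q_0$ (with probability $1-\tfrac12\nu_s(w)-\tfrac12\nu_{s'}(w)$, which is $>0$ when $\nu_s(w)+\nu_{s'}(w)<1$, a condition we may assume or arrange), $\acc$ (iff $\nu_s(w)>0$), and $\rej$ (iff $\nu_{s'}(w)>0$); and $\acc,\rej$ are self-loops. Squaring does not change this support, and the $\sink_1,\sink_2,\acc,\rej$ rows are unchanged, so $u$ is idempotent (after possibly replacing $u$ by a suitable power to stabilise the support, which is the standard trick and preserves the leak). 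Next I would compute $\lim_{n}\pr_\pa(u_n)$: since $\acc$ and $\rej$ are absorbing and reachable from $q_0$ on $u$, the $(q_0,q_0)$ entry of $A^n$ tends to $0$, so the limit matrix $M$ has $q_0$ transient-collapsing entirely onto $\acc$ and $\rej$ with $M(q_0,\acc) = C'x$, $M(q_0,\rej)=x$ where $C' = \nu_s(w)/\nu_{s'}(w)$ and $C'x+x=1$ — exactly as in the proof of \cref{prop:v1tobigo}'s partner direction. This $M$ is idempotent (its nonzero entries are among absorbing states plus the collapse of $q_0$).

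Then I would identify the leak witnesses $r,q$. Take $r = q_0$ and $q = \rej$ (or $\acc$, symmetrically). One checks: (1) $q_0$ is recurrent in $M$? — no, $q_0$ is transient in $M$; so instead the right choice is $r = \sink_1$ or a genuinely recurrent state. The subtlety here — and the step I expect to be the main obstacle — is getting the recurrence condition (1) in the definition to hold: the leak must be between two states that are \emph{recurrent in the limit matrix $M$}, with positive-but-vanishing transition probability along $u_n$. So I would instead look for a word $w'$ on which the $s$-copy and $s'$-copy behave so as to route mass into $\sink_1$ (the recoverable sink, which is recurrent-looking on $\Sigma$ but restarts on $\$$) with probability tending to $0$; concretely, pick $w'$ with $\nu_s(w')>0$ but such that reading $w'$ letter-by-letter in the $s$-copy leaks a positive, tunable amount of probability into $\sink_1$ at intermediate steps, and arrange a sequence $w'_n$ (e.g. $w'_n = w'^n$ with $\nu$-weights shrinking geometrically) so that this leaked mass $\to 0$ but stays $>0$. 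Since $\sink_1$ restarts to $q_0$ on $\$$ and loops on $\Sigma$, and $q_0$ collapses to $\acc/\rej$ in the limit, in $M$ the state $\sink_1$ is recurrent exactly when the limit returns mass to it with probability $1$ — which fails, so one more iteration of the bookkeeping is needed: the honest fix is to take $r,q$ both equal to $q_0$ after noting that under a modified (standard) reading where we close $\acc,\rej$ back to $q_0$ is \emph{not} what $\pa$ does. Given this friction, the concrete plan is: exhibit $w$ with $0<\nu_s(w), \nu_{s'}(w)$ and $\nu_s(w)+\nu_{s'}(w)<1$ (generic case), set $u_n=(\$w\$)^n$, verify idempotency of $u=\$w\$$ and of the limit $M$ as above, and take $r=q_0$, $q=\rej$: then $\pr_\pa[q_0\trns{u_n}\rej] = A^n(q_0,\rej)\to x > 0$ is \emph{bounded away from $0$}, so this is \emph{not} a leak via $\rej$; the leak instead is $r=q_0,\ q=q_0$ is impossible. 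The genuine leak: with the structure of the construction, reading $\Sigma$-letters from $q_0$ sends mass to $\sink_2$, but the $s,s'$-copies bleed mass to $\sink_1$ on partial words; choosing $w = w_0\$$ is malformed, so one works with $w \in \Sigma^*$ such that along $w$ the $s$-copy sends some probability $p_n \to 0$, $p_n>0$, into $\sink_1$, while $\sink_1$ is recurrent in the limit $M$ (it loops on $\Sigma$-letters forever within $u_n$'s $\Sigma$-block and only the surrounding $\$$'s escape it). I would make $\sink_1$ and $\sink_2$ the leak pair, or $\sink_1$ and itself versus escaping — and then conditions (1)–(3) are checked directly from these definitions. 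I expect the routine verification of idempotency and the limit to be straightforward; pinning down which pair $(r,q)$ simultaneously satisfies recurrence-in-$M$ and the vanishing-positive condition is the crux, and I would resolve it by tracking the $\sink_1$ dynamics explicitly in the $3\times3$-to-full-matrix lift rather than only the $\{q_0,\acc,\rej\}$ block.
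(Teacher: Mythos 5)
There is a genuine gap, and it is exactly the one you keep bumping into without resolving: you try to build the leak from powers $u_n=(\$w\$)^n$ of a \emph{fixed} word $w$, and for such a sequence no leak exists. As you correctly compute, the limit matrix $M$ then sends $q_0$ entirely onto $\acc$ and $\rej$ with probabilities bounded away from $0$, so condition (2) of the leak definition fails for $q=\acc$ or $q=\rej$, and $q_0$ is transient in $M$, so condition (1) fails for $r=q_0$. Your subsequent attempts to salvage the argument via $\sink_1$ and $\sink_2$ do not converge to a concrete pair $(r,q)$ satisfying all three conditions, and the proposal ends without a verified leak.

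The missing idea is to take a sequence of \emph{single} words $u_n=\$ w_n\$$ where the $w_n\in\Sigma^*$ grow in length and satisfy $\nu_s(w_n)>0$. Because the LMC is a subdistribution with a sink, $\nu_s(\Sigma^{>n})$ and $\nu_{s'}(\Sigma^{>n})$ tend to $0$ (this is \cite[Lemma 12]{kiefer2018computing}, invoked in the paper's proof), so $\pr_\pa[q_0\trns{u_n}\acc]=\tfrac12\nu_s(w_n)$ is positive for every $n$ but tends to $0$, and $\pr_\pa[q_0\trns{u_n}q_0]\to 1$. Hence in the limit matrix $M$ the state $q_0$ loops on itself with probability $1$ and \emph{is} recurrent, $\acc$ is recurrent as a deterministic self-loop, and the pair $(r,q)=(q_0,\acc)$ witnesses a leak: recurrence of both states in $M$, positivity of $\pr_\pa[q_0\trns{u_n}\acc]$ for all $n$, and convergence of that probability to $0$. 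Your idempotency check for a single $\$w\$$ is essentially the right computation (the paper does the same case analysis over starting states), but the leak itself cannot be produced by iterating one word; it requires the lengths $|w_n|$ to go to infinity so that the acceptance probability vanishes.
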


\begin{proof}

Consider some infinite sequence of words $w_i$ growing in length, such that $\nu_s(w_i) > 0$ for every $i$.
Let $u_i = \$ w_i\$$.

Observe that this word is idempotent. For each starting state, consider the possible states with non-zero probability and from each of these the set of reachable states. Observe that in all cases the set reachable after one application is equal to the set reachable after two:
\begin{itemize}
\item $\acc \trns{\$ w_i\$} \acc \trns{\$ w_i\$} \acc $
\item $\rej \trns{\$ w_i\$} \rej \trns{\$ w_i\$} \rej $
\item $q_0 \trns{\$ w_i\$} q_0, \acc, \rej \trns{\$ w_i\$} q_0,\acc,\rej $

\item For $q$ accepting in $Q_s$:
$q \trns{\$ w_i\$} \acc \trns{\$ w_i\$} \acc $

\item For $q$ rejecting in $Q_s$:
$q \trns{\$ w_i\$}  \sink_2 \trns{\$ w_i\$} \sink_2 $

\item For $q$ accepting in $Q_{s'}$:
$q \trns{\$ w_i\$}  \rej \trns{\$ w_i\$} \rej $

\item For $q$ rejecting in $Q_{s'}$:
$q \trns{\$ w_i\$}  \sink_2 \trns{\$ w_i\$} \sink_2 $

\item
$\sink_1 \trns{\$ w_i\$}  \sink_2 \trns{\$ w_i\$} \sink_2 $
\item
$\sink_2 \trns{\$ w_i\$}  \sink_2 \trns{\$ w_i\$} \sink_2 $.
  \end{itemize}

\noindent Assume that the original labelled Markov chain has a sink, that is, the decision to terminate the word must be made by probability. Then for all $\lambda > 0$ there exists $n$ such that $\nu_s(\Sigma^{>n}) < \lambda$ and $\nu_{s'}(\Sigma^{>n}) < \lambda$ \cite[Lemma 12]{kiefer2018computing}.

Suppose the sequence of matrices $\pr_\pa(u_n)$ converges to a limit $M$ (this is with no loss of generality, because each matrix in the sequence belongs to a compact set, so a convergent subsequence must exist) and let $r = q_0$ and $q = \acc$.

By the observation above, for longer and longer words the probability of reaching $\acc$ from $q_0$ is diminishing. Thus $\lim \pr_\pa[r \trns{u_n} q] = 0$, and in $M$ we have $r$ and $q$ in different SCCs. The state $\acc$ is recurrent as it is deterministically looping on every character. Since the probability of reaching a final state is diminishing for longer and longer words, whenever $\$$ is read the state returns to $r$, hence all words return to $r$ with probability 1 in the limit. By the choice of words in the sequence, for every word $\nu_s(w_n) > 0$, we have $\pr_\pa[r \trns{u_n} q]>0$ for all $n$.

Hence a leak has been defined, even in the case where $\lmc$ is unary.
\end{proof}

 \section{The Language Containment condition} \label{sec:langequiv}

Towards decidability results, we identify a simple necessary  (but insufficient) condition for $s$ being big-O of $s'$.

\begin{defi}[LC condition]\label{def:lc}
A weighted automaton $\wa=\abra{Q,\Sigma,M,F}$ and $s,s'\in Q$ satisfy the
\emph{language containment condition} (LC) if  for all words $w$ with $\nu_s(w) > 0$  we also have $\nu_{s'}(w) > 0$.
Equivalently, $\lng{s}{\wa}\subseteq \lng{s'}{\wa}$.
\end{defi}
The condition can be verified by constructing
NFA $\nfaof{\wa}{s},\nfaof{\wa}{s'}$ that accept $\lng{s}{\wa}$ and $\lng{s'}{\wa}$ respectively and verifying $\lng{}{\nfaof{\wa}{s}}\subseteq \lng{}{\nfaof{\wa}{s'}}$.

\begin{figure}
\centering
\includegraphics[width=0.6\linewidth]{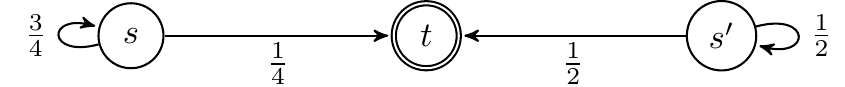}
\caption{Unbounded ratio but language equivalent.}\label{fig:uneq}
\end{figure}

\begin{rem}\label{rem:lc}
Recall  that NFA language containment is $\NL$-complete if the automata are in fact deterministic, in $\P$ if they are unambiguous \cite[Theorem~3]{colcombet2015unambiguity}, $\coNP$-complete if they are unary~\cite{stockmeyer1973word} and  $\PSPACE$-complete in general~\cite{MS72}.
In all cases this complexity level will match, or be lower than that for our respective algorithm for the big-O problem.
\end{rem}

We observe that, if $s$ is big-O of $s'$, the LC condition must hold and so  the LC condition is the first step in each of our verification routines.
\cref{eg:lccondition} shows that the condition alone is not sufficient to solve the big-O problem,
because two states can admit the same set of words with non-zero weight, yet the  weight ratios become unbounded.

\begin{exa}\label{eg:lccondition}

Consider the unary automaton $\wa$ in~\cref{fig:uneq}.
We have $\lng{s}{\wa}=\lng{s'}{\wa} = \{a^n\ |\ n \ge 1\}$, but
$\frac{\nu_s(a^n)}{\nu_{s'}(a^n)}= \frac{(0.75)^{n-1}\cdot 0.25}{(0.5)^{n-1} \cdot 0.5}= 0.5 \cdot 1.5^{n-1} \trns[n\to\infty]{} \infty.$

\end{exa}

\subsection{The Eventually Big-O Problem}
\label{remark:asymptotic-big-oh}
Recall from \cref{remark:asymptotic-big-oh:first} that the  eventually big-O problem asks whether there exists $C> 0, k> 0$ such that forall $w \in \Sigma^{\ge k}$ we have $ \nu_s(w) \le C \cdot \nu_{s'}(w)$. We now justify our focus on the Big-O problem by showing the close relationship between the two problems.

The big-O problem reduces to its eventual variant by checking both the LC condition and the eventually big-O condition. Thus our undecidability (and hardness) results
transfer to the eventually big-O problem. The eventually big-O problem can be solved via the big-O problem by ``fixing'' the LC condition through the addition of a branch from $s'$ that accepts all appropriate words with very low probability.

Let $\wa=\abra{Q,\Sigma,M,\{t\}}$ be a weighted automaton,  $s,s'\in Q$, and $s\neq s'$.
Below, whenever we write $\nu_s$ (resp. $\nu_{s'}$), this will refer to word weights from $s$ (resp. $s'$) in $\wa$.
We assume without loss of generality that the state $s'$ has no incoming transitions
(if this is not the case, a fresh copy of~$s'$ can be made).

Choose $\delta$ to be a real number such that $0 < \delta < 1$ and $\delta$ is smaller than any positive weight in $\wa$.
Construct $\wa'$ by adding the following transitions for all $x\in\Sigma$:
\[
s'\trns[x]{\delta} t \qquad s'\trns[x]{\delta} \bullet \qquad \bullet \trns[x]{\delta} \bullet \qquad \bullet \trns[x]{\delta} t,
\]
where $\bullet$ is a new state. Consequently, for any $w\in\Sigma^+$, we get:
\begin{enumerate}
\item the weight of $w$ in $\wa'$ from $s'$ is $\nu_{s'}(w)+\delta^{|w|}$,
\item if $\nu_{s'}(w)>0$ then $\nu_{s'}(w) > \delta^{|w|}$.
\end{enumerate}
\begin{prop}
$s$ is eventually big-O of $s'$ in $\wa$ if and only if $\lng{s}{\wa}\setminus\lng{s'}{\wa}$ is finite and $s$ is big-O of $s'$ in $\wa'$.
\end{prop}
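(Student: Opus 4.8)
The plan is to establish both directions by relating the ``eventually big-O'' condition on $\wa$ directly to the (exact) big-O condition on $\wa'$, using the two explicit properties of $\wa'$ recorded above: for $w\in\Sigma^+$, the weight of $w$ from $s'$ in $\wa'$ is $\nu_{s'}(w)+\delta^{|w|}$, and if $\nu_{s'}(w)>0$ then $\nu_{s'}(w)>\delta^{|w|}$ (the latter because $\delta$ is below every positive weight of $\wa$, so a nonzero $\nu_{s'}(w)$, being a sum of products of at least $|w|$ such weights, exceeds $\delta^{|w|}$). Note also that the weight of $w$ from $s$ is unchanged, i.e.\ equal to $\nu_s(w)$, since $s'$ has no incoming transitions and the new transitions all emanate from $s'$ or $\bullet$.

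For the ($\Rightarrow$) direction, assume $s$ is eventually big-O of $s'$ in $\wa$, witnessed by $C>0$ and $k$. First, $\lng{s}{\wa}\setminus\lng{s'}{\wa}$ must be finite: any word $w$ of length $\ge k$ with $\nu_s(w)>0$ satisfies $\nu_s(w)\le C\cdot\nu_{s'}(w)$, forcing $\nu_{s'}(w)>0$, so the only words in $\lng{s}{\wa}\setminus\lng{s'}{\wa}$ have length $<k$, of which there are finitely many. Second, I must produce a single constant $C'$ valid for \emph{all} $w$ in $\wa'$. For $w\in\Sigma^{\ge k}$ with $\nu_s(w)>0$: as just noted $\nu_{s'}(w)>0$, so in $\wa'$ the denominator is $\nu_{s'}(w)+\delta^{|w|}\ge\nu_{s'}(w)$, hence $\nu_s(w)\le C\cdot\nu_{s'}(w)\le C\cdot(\nu_{s'}(w)+\delta^{|w|})$. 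For the finitely many words $w$ with $\nu_s(w)>0$ and either $|w|<k$ or $w=\varepsilon$: each such $w$ has $\nu_{s'}(w)+\delta^{|w|}>0$ in $\wa'$ (the $\delta^{|w|}$ term is strictly positive for $w\in\Sigma^+$; the empty word needs a separate one-line check, or can be excluded since $s\ne s'$ and $s'$ has no incoming edges so $\nu_s(\varepsilon)$ and $\nu_{s'}(\varepsilon)$ are just indicator weights on $F$), so the finite set of ratios $\nu_s(w)/(\nu_{s'}(w)+\delta^{|w|})$ is bounded, say by $C''$. Taking $C'=\max(C,C'')$ witnesses that $s$ is big-O of $s'$ in $\wa'$.

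For the ($\Leftarrow$) direction, assume $\lng{s}{\wa}\setminus\lng{s'}{\wa}$ is finite, say all its elements have length $<k$, and assume $s$ is big-O of $s'$ in $\wa'$ with constant $C'$. Take any $w\in\Sigma^{\ge k}$ with $\nu_s(w)>0$. Then $w\in\lng{s'}{\wa}$ by the finiteness assumption, so $\nu_{s'}(w)>0$ and by property (2) we have $\nu_{s'}(w)>\delta^{|w|}$, hence $\nu_{s'}(w)+\delta^{|w|}<2\nu_{s'}(w)$. Therefore $\nu_s(w)\le C'\cdot(\nu_{s'}(w)+\delta^{|w|})<2C'\cdot\nu_{s'}(w)$. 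For $w\in\Sigma^{\ge k}$ with $\nu_s(w)=0$ the inequality $\nu_s(w)\le 2C'\cdot\nu_{s'}(w)$ is trivial. Thus $C=2C'$ and the same $k$ witness that $s$ is eventually big-O of $s'$ in $\wa$.

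The argument is essentially a bookkeeping exercise; the one point that needs care — and is the only place the specific choice of $\delta$ matters — is property (2), namely that the padding weight $\delta^{|w|}$ is dominated by the genuine weight $\nu_{s'}(w)$ whenever the latter is nonzero, which is exactly what lets us absorb the $\delta^{|w|}$ term into a constant factor in the ($\Leftarrow$) direction without disturbing the bound; symmetrically, in the ($\Rightarrow$) direction the $\delta$-branch is what repairs the LC condition on the finitely many exceptional short words. I would also note explicitly that restricting to $w\in\Sigma^+$ throughout is harmless because the empty-word case contributes at most one word and is handled by the finite-set bound.
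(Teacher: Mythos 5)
Your proof is correct and follows essentially the same route as the paper's: both directions hinge on the two recorded properties of $\wa'$, with the $(\Leftarrow)$ direction identical and the $(\Rightarrow)$ direction differing only in how the constant for the finitely many short words is obtained (you take the maximum of a finite set of well-defined ratios, the paper uses the explicit bound $m/\delta^k$). Your explicit remark about the empty word is a reasonable extra precaution that the paper glosses over.
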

\begin{proof}\hfill
\begin{dir}[$\Rightarrow$] Suppose $s$ is eventually big-O of $s'$ in $\wa$, i.e. there exist $C,k$ such that, for all $w\in \Sigma^{\ge k}$, $\nu_s(w)\le C \nu_{s'}(w)$.
Note that, for $w\in\Sigma^{\ge k}$, this implies that, whenever $\nu_s(w)>0$, we must also have $\nu_{s'}(w)>0$. Consequently, $\lng{s}{\wa}\setminus\lng{s'}{\wa}\subseteq \Sigma^{<k}$, i.e.
 $\lng{s}{\wa}\setminus\lng{s'}{\wa}$ must be finite.

 Let $w\in\Sigma^\ast$, $m= \max_{w\in \Sigma^{<k}} \nu_s (w)$ and $C'=\frac{m}{\delta^k}$.
 \begin{itemize}
 \item If $w\in \Sigma^{\ge k}$, we have $\nu_s(w)\le C \nu_{s'}(w) \le C(\nu_{s'}(w) + \delta^{|w|})$.
 \item If $w\in \Sigma^{<k}$, then $\nu_s(w)\le m = \frac{m}{\delta^k} \delta^k =C' \delta^k \le C'\delta^{|w|} \le C'(\nu_{s'}(w)+\delta^{|w|})$.
 Note that $\delta^k \le \delta^{|w|}$ follows from $w\in\Sigma^{<k}$ and $0<\delta<1$.
 \end{itemize}

 Taking  $\max(C,C')$ as the relevant constant, we can conclude that $s$ is big-O of $s'$ in $\wa'$.
 \end{dir}
 \begin{dir}[$\Leftarrow$]
Suppose $\lng{s}{\wa}\setminus\lng{s'}{\wa}$ is finite and $s$ is big-O of $s'$ in $\wa'$.
Because $\lng{s}{\wa}\setminus\lng{s'}{\wa}$ is finite, there exists $k$ such that, for all $w\in\Sigma^{\ge k}$, $\nu_s(w)>0$ implies $\nu_{s'}(w)>0$.
Because $s$ is big-O of $s'$ in $\wa'$, there exists $C$ such that $\nu_s(w)\le C(\nu_{s'}(w)+\delta^{|w|})$ for any $w\in\Sigma^\ast$.

Let $w\in\Sigma^{\ge k}$.
From $s$ being big-O of $s'$, we get $\nu_s(w)\le C (\nu_{s'}(w)+\delta^{|w|})$.
\begin{itemize}
\item If $\nu_s(w) > 0$ then $\nu_{s'}(w)>0$. By construction of $\wa'$, we get $\nu_{s'}(w)> \delta^{|w|}$, so
\[
\nu_s(w)\le C(\nu_{s'}(w) + \delta^{|w|}) < C(\nu_{s'}(w)+\nu_{s'}(w)) = 2C \nu_{s'}(w).
\]
\item If $\nu_s(w)=0$ then we also have $\nu_s(w) = 0 \le 2C \nu_{s'}(w)$.
\end{itemize}
Consequently, for any $w\in\Sigma^{\ge k}$, $\nu_s(w)\le 2C \nu_{s'}(w)$, i.e. $s$ is eventually big-O of $s'$ in $\wa$.\qedhere
\end{dir}
\end{proof}
The above argument relied on completing the automaton so that any word is accepted with some weight.
To transfer our decidability results for bounded languages, it will be necessary to complete the automaton with respect to a bound, i.e. the extra weights are added only for words from
$a_1^+\cdots a_m^+$, $a_1^\ast\cdots a_m^\ast$, $w_1^\ast\cdots w_k^\ast$ respectively.
This can be done easily by introducing the extra transitions according to DFA for the bounding language.

\section{The big-O problem for unary weighted automata is \texorpdfstring{$\coNP$}{coNP}-complete}\label{sec:bigounaryconp}
In this section we show $\coNP$-completeness in the unary case.
\begin{thm} \label{thm:tvconp}
The big-O problem for unary weighted automata is $\coNP$-complete. It is $\coNP$-hard even for unary labelled Markov chains.
\end{thm}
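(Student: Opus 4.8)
We may assume $\wa$ has a single final state $t$, so that the weight of $a^n$ from a state $u$ is the entry $A^n_{u,t}$ of $A:=M(a)$, a non‑negative rational matrix. I would first verify the language containment condition $\lng{s}{\wa}\subseteq\lng{s'}{\wa}$ of \cref{def:lc}: by \cref{rem:lc} and \cite{stockmeyer1973word}, unary NFA containment is in $\coNP$, which fits the target bound, so if it fails we reject. Assume henceforth LC holds. The heart of the argument is a quantitative version of the analysis of growth of entries of powers of non‑negative matrices due to Friedland and Schneider \cite{schneider1986influence}. Restricting to the states reachable from a start state and co‑reachable from $t$ and decomposing them into strongly connected components, I would show that along each residue class $n\equiv r$ modulo the least common multiple $p$ of the SCC periods one has $A^n_{u,t}=\Theta(\rho_{u,r}^{\,n}\,n^{k_{u,r}})$, where $\rho_{u,r}$ is the largest spectral radius of an SCC lying on a chain of SCCs from $u$ to $t$ that is ``active'' in residue class $r$, and $k_{u,r}+1$ is the maximal number of such top‑spectral‑radius SCCs that such a chain can contain. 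It is crucial that this is a $\Theta$ and not merely an $O$: then comparing two such growths reduces to comparing the pairs $(\rho,k)$ under the order $(\rho_1,k_1)\preceq(\rho_2,k_2)$ iff $\rho_1<\rho_2$, or $\rho_1=\rho_2$ and $k_1\le k_2$. Granting this, $s$ is big-O of $s'$ if and only if $(\rho_{s,r},k_{s,r})\preceq(\rho_{s',r},k_{s',r})$ for every $r$, with LC ensuring that no ratio $A^n_{s,t}/A^n_{s',t}$ involves division by zero.

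\textbf{Why this lands in $\coNP$, and the main obstacle.}
Given that LC holds, a ``no'' instance is witnessed by a residue class $r$ together with a chain of SCCs from $s$ to $t$, active in class $r$, of length $\le|Q|$, whose growth pair strictly exceeds $(\rho_{s',r},k_{s',r})$. Such an $r$ has a polynomial‑size description — it merely specifies a residue of $n$ modulo the period $p_\scc$ of each SCC $\scc$, and $\sum_\scc\log p_\scc\le|Q|$ — even though the combined period $p$ may be super‑polynomial (which is exactly why the problem is not obviously in $\P$). The certificate is checkable in polynomial time: the relevant numbers are spectral radii of rational matrices, and comparing such algebraic numbers is in $\P$; and $(\rho_{s',r},k_{s',r})$, the optimum over all $s'$-chains active in class $r$, is obtained by a maximum‑weight path computation in the condensation DAG, also in $\P$. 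I expect the main obstacle to be precisely this quantitative analysis: pinning down the asymptotics of $A^n_{u,t}$ with both the exponential base and the polynomial degree, tracking periodicity carefully enough to get $\Theta$ inside each residue class (so that equal growth rates genuinely yield a bounded ratio), and checking that deciding which chains are active in a given residue class, and all arithmetic on the resulting algebraic numbers, stay polynomial time.

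\textbf{Plan for the lower bound ($\coNP$-hardness, already for unary LMCs).}
I would reduce from unary NFA universality, which is $\coNP$-complete \cite{stockmeyer1973word} (cf.\ \cref{rem:lc}). Given a unary NFA $\nfa=(Q_\nfa,\{a\},\delta,q_0,F_\nfa)$, put $K=|Q_\nfa|+1$ and build a unary LMC on states $Q_\nfa\uplus\{s,z,t\}$ with final state $t$, in which $s\trns[a]{1/(2K)}s$ and $s\trns[a]{1-1/(2K)}t$; $z\trns[a]{1}z$ (a dead state); and, from each $q\in Q_\nfa$, reading $a$ moves with equal probability to the states of $\delta(q,a)$, plus also to $t$ if $q\in F_\nfa$, with the move going to $z$ if $\delta(q,a)=\emptyset$ and $q\notin F_\nfa$. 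Taking $s':=q_0$, one checks $\nu_s(a^n)=(2K)^{-(n-1)}(1-\tfrac{1}{2K})>0$ for all $n\ge1$, whereas $\nu_{s'}(a^n)>0$ iff $a^{n-1}\in L(\nfa)$, in which case $\nu_{s'}(a^n)\ge K^{-n}$ since every surviving path uses transitions of probability at least $1/K$. Hence, whenever LC holds, $\nu_s(a^n)/\nu_{s'}(a^n)<(2K)^{-(n-1)}K^n=K\cdot 2^{-(n-1)}\le K$ for all $n$, so $s$ is big-O of $s'$ (with constant $K$); and if LC fails then $s$ is not big-O of $s'$. Since $\lng{s}{\wa}=\{a^n:n\ge1\}$, the LC condition is equivalent to $\nu_{s'}(a^n)>0$ for all $n\ge1$, i.e.\ $a^m\in L(\nfa)$ for all $m\ge0$, i.e.\ $L(\nfa)=a^*$. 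Thus $s$ is big-O of $s'$ iff $\nfa$ is universal; the reduction is computable in logarithmic space, which gives $\coNP$-hardness already for unary labelled Markov chains, and a fortiori for unary weighted automata.
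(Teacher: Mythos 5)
Your upper bound has a genuine gap in the certification step. The structural part of your plan (a per-residue-class $\Theta(\rho^n n^k)$ refinement of Friedland--Schneider, with lexicographic comparison of $(\rho,k)$-pairs and the LC condition absorbing the finitely many exceptions) matches the paper's \cref{lemma:unarythetabound} and \cref{lem:boundedff} in spirit and is sound. What does not hold up is the claim that the verifier can compute $(\rho_{s',r},k_{s',r})$ ``by a maximum-weight path computation in the condensation DAG\ldots in $\P$''. Whether a chain of SCCs is active in class $r$ depends on the lengths achievable through that chain, which form (eventually) a coset $L+g\mathbb{N}$ where $g$ is the gcd of the periods of the SCCs on that particular chain and $L$ depends on the entry/exit states; maximizing $(\rho,k)$ over only those $s'$-chains with $r\equiv L\pmod g$ is not a longest-path problem on the DAG, and the obvious dynamic program would have to track path lengths modulo the global lcm $p$, which can be exponentially large. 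If the verifier instead maximizes over all chains it overestimates $(\rho_{s',r},k_{s',r})$, rejects valid witnesses, and the NP machine for the complement loses completeness. This is precisely the difficulty the paper's proof is engineered around: it constructs polynomial-size annotated NFAs (\cref{defn:annotated}) recognizing $\{a^n \mid \dgf_{s,t}(n)\ge x\}$ and reduces ``not big-O'' to failure of eventual inclusion $\lng{}{\nfa_{s,x}}\abfsubset\lng{}{\nfa_{s',x}}$ for some admissible $x$; the certificate there is a single exponent $n$ of polynomial bit-length (in the range $[|M|,2|M|]$ for the exponential-size difference DFA $M$), verified by fast matrix exponentiation (\cref{thm:abfsubsetconpbounded}, \cref{lem:conpprocff}). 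You would need either such a mechanism or an actual polynomial-time algorithm for your residue-class optimization; as written, $\coNP$ membership does not follow.

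Your lower bound is correct but takes a different, weaker route than the paper's. You arrange the construction so that $s$ is big-O of $s'$ exactly when the LC condition holds, i.e.\ you reduce unary NFA universality to the language-containment part of the problem, with the weights chosen so that the ratio is trivially bounded by $K$ whenever LC holds. That is a legitimate logspace reduction proving $\coNP$-hardness for unary LMCs. The paper's reduction (\cref{thm:boundedhard}) instead keeps $\lng{s}{\wa}=\lng{s'}{\wa}$ via the auxiliary $\Theta\bigl((1/4)^n\bigr)$ branch so that the hardness genuinely comes from the weights; that version additionally shows hardness under the promise that LC holds, which yours does not.
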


For the upper bound, our analysis will refine the analysis of the growth of powers of non-negative matrices of Friedland and Schneider~\cite{friedland1980growth,schneider1986influence} which gives the asymptotic order of growth of $A^n_{s,t} + A^{n+1}_{s,t} + \dots + A^{n+q}_{s,t} \approx \rho^nn^k$ for some $\rho,k$ and $q$, which smooths over the periodic behaviour (see \cref{thm:fands}).  Our results require a non-smoothed analysis, valid for each $n$. This is not
provided in \cite{friedland1980growth,schneider1986influence}, where the smoothing forces the existence of a single limit---which we don't require.  Our big-$\Theta$ lemma (\cref{lemma:unarythetabound}) will accurately characterise the asymptotic behaviour of $A^n_{s,t}$ by exhibiting the correct value of $\rho$ and $k$ for every word.

We show that these asymptotic behaviours can be captured using suitably defined finite automata. We then reduce the big-O problem to language containment problems on these automata. We start by introducing several definitions, and the result of Friedland and Schneider in \cref{sec:conpsec:hardness:prelims}, before implementing the proof of the upper bound in \cref{sec:conpsec:upperbound}. We show $\coNP$-hardness in \cref{sec:conphardness} to complete the proof of \cref{thm:tvconp}.

\subsection{Preliminaries}
\label{sec:conpsec:hardness:prelims}
Let $\wa$ be a unary non-negative weighted automaton with states $Q$, transition matrix $A$ and a unique final state $t$.
When we refer to a \textit{path} in $\wa$,  we mean a path in the NFA of $\wa$, i.e. paths only use transitions with non-zero weights and
states on a path may repeat.

\begin{defi}\hfill
\begin{itemize}
\item  A state $q$ can \textit{reach} $q'$ if there is a path from $q$ to $q'$. In particular, any state $q$ can always reach itself.
\item A \textit{strongly connected component} (SCC) $\scc \subseteq Q$ is a maximal set of states such that for each $q,q' \in \scc$, $q$ can reach $q'$. We denote by $SCC(q)$ the SCC of state $q$ and  by $A^\scc$ the $|\scc| \times |\scc|$ transition matrix of $\scc$. Note every state is in a SCC, even if it is a singleton.

\item
The DAG of $\wa$ is the directed acyclic graph of strongly connected components. Components $\scc, \scc'$ are connected by an edge if there exist $q\in\scc$ and $q'\in\scc'$ with $A(q,q') > 0$.

\item
The spectral radius of an $m\times m$ matrix $A$ is the largest absolute value of its eigenvalues. Recall the eigenvalues of $A$ are  $\{\lambda \in \mathbb{C} \mid \text{ exists vector } \vec{x} \in \mathbb{C}^m, \vec{x}\ne 0 \text{ with } A\vec{x} = \lambda \vec{x} \}$. The spectral radius of $\scc$, denoted by $\rho_\scc$, is the spectral radius of $A^\scc$. By $\rho(q)$ we denote the spectral radius of the SCC in which $q$ is a member. 

\item We denote by $\period^\scc$ the period of the SCC $\scc$: the greatest common divisor of return times for some state $s\in \scc$, i.e. $\operatorname*{gcd}\{t\in \mathbb{N} \ | \ A^t(s,s) > 0\}$. It is known that any choice of state in the SCC gives the same value (see e.g.,~\cite[Theorem 1.20]{sericola2013markov}). If  $A^\scc = [0]$ then $\period^\scc = 0$.
\item Let $\mathscr{P}(s,s')$ be the set of paths from the SCC of $s$ to the SCC of $s'$ in the DAG of $\wa$. Thus a path $\pi \in \mathscr{P}(s,s')$ is a sequence of SCCs $\scc_1, \dots, \scc_m$.
\item $\period(s,s')$, called the local period between $s$ and $s'$, is defined by $\period(s,s') = \operatorname*{lcm}\limits_{\pi \in \mathscr{P}(s,s')}\operatorname*{gcd}\limits_{\scc \in \pi}\quad \period^\scc$.
\item The spectral radius between states $s$ and $s'$, written $\rho(s,s')$, is the largest spectral radius of any SCC seen on a path from $s$ to $s'$:
$\rho(s,s') = \max_{\pi \in \mathscr{P}(s,s')} \rho(\pi)$,
where $\rho(\pi) = \max_{\scc \in \pi} \rho_\scc$ for  $\pi \in \mathscr{P}(s,s')$.
\item
 The following function captures the number of SCCs which attain the largest spectral radius on the path that has the most SCCs of maximal spectral radius. Let $k(s,s')  = \max_{\pi \in \mathscr{P}(s,s')} k(\pi) - 1$,
where, for $\pi \in \mathscr{P}(s,s')$, $k(\pi) = |\{\scc \in \pi \ |\  \rho_\scc = \rho(s,s') \}|$.

\end{itemize}
\end{defi}

\begin{lem}\label{lem:algebraic}
Given $A^\scc$, a representation of the value $\rho_\scc$ can be found in polynomial time. This representation will admit polynomial time testing of $\rho_\scc > \rho_{\scc'}$ and $\rho_\scc = \rho_{\scc'}$ and can be embedded into the first order theory of the reals.
\end{lem}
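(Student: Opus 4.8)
*Given $A^\scc$, a representation of the value $\rho_\scc$ can be found in polynomial time. This representation admits polynomial-time testing of $\rho_\scc > \rho_{\scc'}$ and $\rho_\scc = \rho_{\scc'}$ and can be embedded into the first-order theory of the reals.*

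The plan is to represent $\rho_\scc$ by its minimal polynomial together with an isolating interval, and to rely on well-known effective procedures for real algebraic numbers. First I would observe that, by the Perron--Frobenius theorem, $\rho_\scc$ is itself an eigenvalue of the non-negative matrix $A^\scc$, hence a root of the characteristic polynomial $\chi_{\scc}(x) = \det(xI - A^\scc)$. Clearing denominators in the (rationally given) entries of $A^\scc$, we obtain $\chi_\scc$ with integer coefficients whose bit-size is polynomial in the description size of $\wa$ (the determinant of an $m\times m$ integer matrix with entries of bit-size $\beta$ has bit-size $O(m(\beta+\log m))$ by Hadamard's bound). Factoring $\chi_\scc$ over $\mathbb{Q}$ in polynomial time (using, e.g., the LLL-based algorithm of Lenstra--Lenstra--Lov\'asz) and picking out the irreducible factor having $\rho_\scc$ as a root yields the minimal polynomial $p_\scc \in \mathbb{Z}[x]$ of $\rho_\scc$; its degree and coefficient bit-size remain polynomial. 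To pin down $\rho_\scc$ among the roots of $p_\scc$, I would attach a rational isolating interval $[\ell, u]$ containing exactly one real root of $p_\scc$, namely $\rho_\scc$ itself — this is obtained in polynomial time by root-isolation (Sturm sequences / VCA), together with the a priori bounds $0 \le \rho_\scc \le \max_i \sum_j A^\scc(i,j)$ that place us in the right region. The pair $(p_\scc, [\ell,u])$ is the desired representation.

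Given two such representations $(p_\scc,[\ell_\scc,u_\scc])$ and $(p_{\scc'},[\ell_{\scc'},u_{\scc'}])$, deciding $\rho_\scc = \rho_{\scc'}$ and $\rho_\scc > \rho_{\scc'}$ is the standard comparison of real algebraic numbers: test whether $p_\scc$ and $p_{\scc'}$ share a common root in $[\ell_\scc,u_\scc]\cap[\ell_{\scc'},u_{\scc'}]$ via $\gcd(p_\scc,p_{\scc'})$ and sign evaluations, and, if they differ, refine the isolating intervals by bisection — which terminates in polynomially many steps because a root-separation bound (Mahler--Mignotte) lower-bounds $|\rho_\scc - \rho_{\scc'}|$ by a quantity whose negative logarithm is polynomial in the input size. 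All arithmetic here is on integers of polynomial bit-size, so the whole comparison runs in polynomial time. Finally, embeddability into the first-order theory of the reals is immediate: $\rho_\scc$ is defined by the formula $p_\scc(y) = 0 \wedge \ell_\scc \le y \wedge y \le u_\scc$ (with rational $\ell_\scc,u_\scc$ written out explicitly), so any (in)equality involving the $\rho_\scc$'s can be expressed by existentially quantifying such witnesses $y$ and conjoining their defining formulas.

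The only genuinely delicate point is the polynomial-time \emph{factorization} step used to extract the minimal polynomial — but this is classical (Lenstra--Lenstra--Lov\'asz), and in fact one can largely sidestep it: it is enough to work with the (possibly reducible) characteristic polynomial $\chi_\scc$ together with an isolating interval for its \emph{largest} real root, since $\rho_\scc$ is exactly that largest real root. With that simplification the argument needs only polynomial-time root isolation, $\gcd$ computation, and the Mahler--Mignotte separation bound, all of which are textbook; the main obstacle is then purely bookkeeping, namely verifying that every intermediate integer stays of polynomial bit-size, which follows from Hadamard's and Mignotte's bounds as indicated above.
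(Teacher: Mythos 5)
Your proof is correct and follows essentially the same route as the paper: represent $\rho_\scc$ as a distinguished root of the (integer-normalised) characteristic polynomial together with an isolating region, compare such representations using standard real-algebraic-number algorithms backed by Mignotte-type root-separation bounds, and encode the representation as a first-order formula over the reals. The one genuine (and welcome) shortcut is your appeal to Perron--Frobenius, which lets you treat $\rho_\scc$ as the largest \emph{real} root of $\chi_{\scc}$ and thereby avoid the paper's detour through enumerating all complex roots and sorting their moduli; the LLL factorisation step is, as you note yourself, dispensable.
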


\begin{proof}
An algebraic number $z$ can be represented as a tuple $(p_z, a,b,r) \in \mathbb{Q}[x] \times \mathbb{Q}^3$. Here $p_z$ is a polynomial over $x$ and $a,b,r$ form an approximation such that $z$ is the only root of $p_z(x)$ with $|z- (a+bi)| \le r$.

Given a polynomial, one can find the representation of each of its roots in polynomial time using the root separation bound due to Mignotte~\cite{Mig82}. Operations such as addition and multiplication of two algebraic numbers, finding $|x|$, testing if $x > 0$ can be done in polynomial time in the size of the representation $(p,a,b,r)$, yielding the same representation~\cite{basu2005betti,cohen2013course,pan1996optimal}; see, e.g., \cite[Section~3]{ouaknine2014positivity} for a summary.

Any coefficient of the characteristic polynomial of an \textit{integer} matrix can be found in $\GapL$ \cite{hoang2003complexity}. $\GapL$ is the difference of two $\#\L$ calls, each of which can be found in $\NC^2 \subseteq \P$. Here the matrix will be rational; but it can be normalised to an integer matrix by a scalar, the least common multiple of the denominator of each rational. Whilst this number could be exponential, it is representable in polynomial space. The final eigenvalues can be  renormalised by this constant.

The characteristic polynomial of an $n\times n$ matrix has degree at most $n$, since each coefficient can be found in polynomial time, the whole characteristic polynomial can be found in this time. Thus by enumerating its roots (at most $n$), taking the modulus of each, and sorting them ($a > b \iff a + (-1) \times b >0$) we can find the spectral radius in this form $(p_z, a,b,r)$.

Note that the spectral radius is a real number, so that given the spectral radius in the form $(p_z, a,b,r)$ we actually have $b = 0$. Then the number can be encoded exactly in the first order theory of the reals using $\exists x : p_z(x) = 0 \wedge x - a \le r \wedge a - x \le r$.
\end{proof}

The asymptotic behaviours of weighted automata will be characterised using $(\rho,k)$-pairs:
\begin{defi}\label{defn:lexiorder} A  $(\rho,k)$-pair is an element of $\mathbb{R}\times\mathbb{N}$. The ordering on  $\mathbb{R}\times\mathbb{N}$ is lexicographic, i.e.
$(\rho_1, k_1) \le (\rho_2, k_2) \iff \rho_1 < \rho_2 \vee ( \rho_1 = \rho_2 \wedge k_1 \le k_2).$\end{defi}

Friedland and Schneider \cite{friedland1980growth,schneider1986influence} essentially use $(\rho,k)$-pairs to show the asymptotic behaviour of the powers of non-negative matrices. In particular they find the asymptotic behaviour of the sum of several $A_{s,s'}^n$, smoothing the periodic behaviour of the matrix.

\begin{thm}[Friedland and Schneider \cite{friedland1980growth,schneider1986influence}] \label{thm:fands} Let $A$ be an $m\times m$ non-negative matrix, inducing a unary weighted automaton $\wa$ with states $Q = \{1,\dots,m\}$. Given $s,t \in Q$, let $B_{s,t}^n = A_{s,t}^n + A_{s,t}^{n+1} + \dots + A_{s,t}^{n+\period(s,t)-1}$. Then $\lim_{n\to\infty} \frac{B_{s,t}^n}{\rho(s,t)^n n^{k(s,t)}} = c, \quad 0 < c < \infty$.

\end{thm}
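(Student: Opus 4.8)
The plan is to prove the statement by decomposing the weighted automaton along its DAG of strongly connected components and analysing the contribution of each DAG path separately. Since $A^n_{s,t}$ is the total weight of all length-$n$ paths from $s$ to $t$ in $\wa$, and every such path visits a sequence of SCCs $\scc_1,\dots,\scc_\ell$ forming some $\pi\in\mathscr{P}(s,t)$, one can write $A^n_{s,t}$ (and hence $B^n_{s,t}$) as a finite sum over $\pi\in\mathscr{P}(s,t)$ of a discrete convolution: the time spent inside each $\scc_i$, together with the finitely many inter-component transitions, must add up to $n$. The leading asymptotics will come entirely from those $\pi$ that pass through the maximal number, namely $k(s,t)+1$, of SCCs whose spectral radius equals $\rho(s,t)$.

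First I would treat a single irreducible component. If $\scc$ has $\rho_\scc>0$ and period $\period^\scc$, Perron--Frobenius theory for non-negative matrices in its periodic (cyclic) form gives, for $u,v\in\scc$, that $(A^\scc)^n_{u,v}=\rho_\scc^{\,n}\bigl(g_{u,v}(n\bmod\period^\scc)+o(1)\bigr)$, where $g_{u,v}$ is a non-negative periodic function that is positive exactly when $n$ lies in the residue class connecting $u$ to $v$; summing over any window of length $\period^\scc$ removes the residue dependence, yielding $\sum_{j=0}^{\period^\scc-1}(A^\scc)^{n+j}_{u,v}\to c_{\scc,u,v}\,\rho_\scc^{\,n}$ with $c_{\scc,u,v}>0$. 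Singleton components with zero matrix have $\rho_\scc=0$, can be visited at most once along any path, and so only shift lengths by a bounded amount.

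Next I would combine these along a fixed path $\pi=\scc_1\cdots\scc_\ell$. The within-component sums behave like $\Theta(\rho_{\scc_i}^{\,n})$ when $\rho_{\scc_i}=\rho(s,t)$ and like $O(\sigma^n)$ for some $\sigma<\rho(s,t)$ otherwise, and the standard estimate for the convolution of sequences --- $j$ of them $\Theta(\rho^n)$ and the rest geometrically smaller, the number of compositions $n_1+\cdots+n_\ell=n$ being $\Theta(n^{\ell-1})$ when all bases agree --- shows that $\pi$ contributes $\Theta\bigl(\rho(s,t)^n n^{k(\pi)-1}\bigr)$, with a genuine limit (not merely a $\limsup$) once the smoothing has been applied. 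Taking the smoothing window to be $\period(s,t)=\operatorname{lcm}_{\pi}\operatorname{gcd}_{\scc\in\pi}\period^\scc$, summing over the finitely many $\pi$ with $k(\pi)-1=k(s,t)$, and observing that lower-$k(\pi)$ paths and sub-dominant bases contribute $o(\rho(s,t)^n n^{k(s,t)})$, yields the claimed limit with a constant $c$ that is finite (the upper-bound version of all the estimates above) and positive (there is at least one path attaining $k(s,t)$, and an explicit family of paths realising it has positive weight by the very definition of $\mathscr{P}$ in terms of the NFA of $\wa$).

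The main obstacle is the periodicity bookkeeping: one must check that the single window length $\period(s,t)$ simultaneously annihilates the oscillation of every dominant path $\pi$ --- i.e.\ that $\operatorname{gcd}_{\scc\in\pi}\period^\scc$ is genuinely a period of the eventually-periodic length-spectrum of $\pi$, that the $\operatorname{lcm}$ over $\pi$ is a common period, and that the inter-component transitions, which may be available only at certain residues, do not destroy this. Getting the convolution estimate to hold \emph{with a limit} rather than with fluctuating constants, uniformly across all paths, is exactly where the care in the definitions of $\period^\scc$, $\period(s,t)$, $k(\pi)$ and $\rho(s,t)$ has to pay off; everything else is routine, if lengthy, matrix analysis.
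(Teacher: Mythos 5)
First, a point of reference: the paper does not prove this statement at all --- it is imported from Friedland and Schneider as a black box, and downstream (in the big-$\Theta$ lemma) only its two-sided $\Theta$-consequences are ever used. So there is no in-paper argument to compare yours against. Your outline (decompose along the DAG of SCCs, apply cyclic Perron--Frobenius inside each irreducible component, convolve across a path, and observe that the dominant contribution comes from paths maximizing the number of components of maximal spectral radius) is indeed the standard route to results of this type, and the convolution count $\Theta(n^{k(\pi)-1})$ is the right mechanism for the polynomial factor.

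There is, however, a genuine gap exactly at the step you present as immediate from Perron--Frobenius: the claim that summing over a window of length $\period^\scc$ ``removes the residue dependence,'' i.e.\ that $\sum_{j=0}^{\period^\scc-1}(A^\scc)^{n+j}_{u,v}\sim c_{\scc,u,v}\,\rho_\scc^{\,n}$. For a periodic component with $\rho_\scc\neq 1$ this is false: exactly one index $j=j_0(n)$ in the window lands in the admissible residue class for $(u,v)$, so the windowed sum is asymptotic to $c\,\rho_\scc^{\,n+j_0(n)}$, and dividing by $\rho_\scc^{\,n}$ leaves the oscillating factor $\rho_\scc^{\,j_0(n)}$. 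Concretely, for
\[
A=\begin{pmatrix}0&2\\2&0\end{pmatrix},\qquad s=t=1,
\]
the definitions give $\period(s,t)=2$, $\rho(s,t)=2$, $k(s,t)=0$, yet $B^n_{1,1}/2^n$ equals $1$ for even $n$ and $2$ for odd $n$, so no limit exists. The window annihilates the \emph{support} oscillation (which lengths are realizable) but not the \emph{weight} oscillation coming from $\rho^j$; the latter disappears only after normalizing the spectral radius to $1$ (equivalently, dividing each term $A^{n+j}_{s,t}$ by $\rho^{n+j}$ rather than dividing the whole window by $\rho^n$), or after restricting $n$ to a fixed residue class modulo $\period(s,t)$. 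Your convolution step inherits the same defect for paths through several periodic components of maximal radius. As written, your argument can deliver positive $\liminf$ and finite $\limsup$ of the windowed ratio --- which is all the surrounding paper actually needs --- but not the existence of the limit claimed in the statement; to obtain a genuine limit you must build in one of the two renormalizations above.
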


In the case where the local period is $1$ ($\period(s,t)=\period(s',t) = 1$), \cref{thm:fands} can already be used to solve the big-O problem (in particular if the matrix $A$ is aperiodic). In this case  $A_{s,t}^n = B_{s,t}^n = \Theta( \rho(s,t)^n n^{k(s,t)} )$. Then to establish that $s$ is big-O of $s'$ we check that the language containment condition holds and that $(\rho(s,t),k(s,t)) \le (\rho(s',t),k(s',t))$.
However, this is not sufficient if the local period is not $1$.
\begin{exa}
\begin{figure}
\centering
\includegraphics[width=0.6\linewidth]{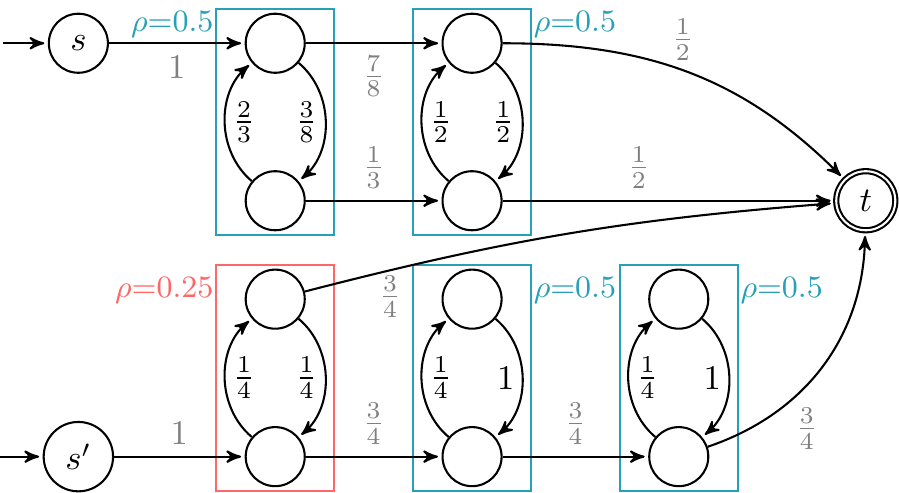}
\caption{Different rates for different phases.}
\label{fig:different}

\end{figure}
Consider the chains shown in \cref{fig:different} with local period $2$. The behaviour for $n\ge 3$ is
$A_{s,t}^n =  \Theta(0.5^n n )$ and $A_{s',t}^n =\Theta(0.25^n )$ when $n$ is odd and $A_{s',t}^n =\Theta(0.5^nn )$ when $n$ is even. However, \cref{thm:fands} tells us $B_{s,t}^n = \Theta(0.5^nn)$ and $B_{s',t}^n = \Theta(0.5^nn)$ suggesting the ratio is bounded, but in fact $s$ is not big-O $s'$ (although $s'$ is big-O of $s$) because $\frac{A_{s,t}^{2n+1}}{A_{s',t}^{2n+1}} \trns[n\to\infty]{} \infty$.

\end{exa}

\subsection{Upper bound: The unary big-O problem is in \texorpdfstring{$\coNP$}{coNP}}
\label{sec:conpsec:upperbound}
Let $\wa$ be a unary weighted automaton and suppose we are asked whether $s$ is big-O of $s'$. We assume  w.l.o.g.\label{conp-ub-assumptions}~(a) that there is a unique final state $t$ with no outgoing transitions, and (b) that $s,s'$ do not appear on any cycle (if this is not the case, copies of $s,s'$ and their transitions can be taken).

Next we define  a ``degree function'', which captures the asymptotic behaviour of each word $a^n$ by a ($\rho,k$)-pair, capturing the exponential and polynomial behaviours respectively.

\begin{defi}\label{def:unarydegreefunction}
Given a unary weighted automaton $\wa$, let $\dgf_{s,t} : \mathbb{N} \to \mathbb{R} \times \mathbb{N}$ be defined by $\dgf_{s,t}(n) = (\rho,k)$, where:
\begin{itemize}
  \item $\rho$ is the largest spectral radius of any vertex visited on any path of length $n$ from $s$ to $t$;
  \item the path from $s$ to $t$ that visits the most SCCs of spectral radius $\rho$ visits $k+1$ such SCCs;
  \item if there is no length-$n$ path  from $s$ to $t$, then $(\rho,k){=}(0,0)$.
\end{itemize}
\end{defi}

The set of \textit{admissible} $(\rho,k)$-pairs is the image of $\dgf_{s,t}$. Observe that this set is finite and of size at most $|Q|^2$: there can be no more than $|Q|$ values of $\rho$ (if at worst each state were its own SCC) and the value of $k$  is also bounded by the number of SCCs and thus $|Q|$.

To prepare for the proof of our key big-$\Theta$ lemma (stated below),
we next define the $(\rho,k)$-annotated version of the weighted automaton $\wa$.
Namely, in each state we record the relevant value of $(\rho, k)$ corresponding to the current run to the state.
\begin{defi}[The weighted automaton $\waannotated$] \label{defn:annotated} Given $\wa=\abra{Q,\Sigma,A,\{t\}}$ and $s \in Q$,
the weighted automaton $\waannotated$ has states of the form $(q,\rho, k)$ for all $q\in Q$ and all admissible $(\rho,k)$-pairs, the same $\Sigma$ and no final states.
For every transition $q\trns{p} q'$ from $\wa$ denoting $A(q,q') = p$, include the following transition in $\waannotated$ for
each admissible $(\rho,k)$:
\pagebreak
\begin{itemize}
\item $(q, \rho, k) \trns{p} (q', \rho, k)$ if $\SCC(q) {=}\SCC(q')$,
\item $(q, \rho, k) \trns{p} (q', \rho, k+1)$ if $\SCC(q){\ne}\SCC(q') \text{ and } \rho = \rho(q')$,
\item $(q, \rho, k) \trns{p} (q', \rho, k)$ if $\SCC(q){\ne}\SCC(q') \text{ and } \rho > \rho(q')$,
\item $(q, \rho, k) \trns{p} (q', \rho(q'), 0)$ if $\SCC(q){\ne}\SCC(q') \text{ and } \rho(q') > \rho$.
\end{itemize}

\end{defi}
Observe that the automaton $\waannotated$ is constructable in polynomial time given $\wa$. Indeed, the spectral radii of all SCCs can be computed and compared to each other in time polynomial in the size of $\wa$ (see \cref{lem:algebraic}).

Let $s,t\in Q$ be fixed. We are now ready to state and prove the key technical lemma of this subsection (cf. \cref{thm:fands}, Friedland and Schneider \cite{friedland1980growth,schneider1986influence}), where we assume the functions $\rho(n), k(n)$, defined by  $\dgf_{s,t}(n) = (\rho(n), k(n))$.
\begin{lem}[The big-$\Theta$ lemma] \label{lemma:unarythetabound}
There exist $c,C > 0$ such that, for every $n > |Q|$,
\[
  c \cdot \rho(n)^n n^{k(n)} \le A^n_{s,t} \le C \cdot \rho(n)^n n^{k(n)}.
\]
\end{lem}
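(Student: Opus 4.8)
The plan is to decompose every length-$n$ path from $s$ to $t$ according to which SCCs it visits, track the time spent inside each SCC, and apply the Friedland--Schneider estimate (\cref{thm:fands}) \emph{inside} each SCC, where the local period is under control. More precisely, fix $n>|Q|$ with $\dgf_{s,t}(n)=(\rho(n),k(n))$; if there is no length-$n$ path the claim is trivial, so assume there is one. Any such path visits a sequence of SCCs $\scc_1,\ldots,\scc_m$ (with $m\le|Q|$), spending $n_i$ steps inside $\scc_i$ and one step on each of the $m-1$ inter-SCC edges, so $\sum n_i = n-(m-1)$. The contribution of all paths through a \emph{fixed} SCC-sequence to $A^n_{s,t}$ is a sum, over compositions $(n_1,\ldots,n_m)$, of products $\prod_i (A^{\scc_i})^{n_i}_{\text{(entry,exit)}}$ times the fixed edge weights. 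For each SCC $\scc_i$, either $(A^{\scc_i})^{n_i}_{u,v}$ is $0$ for all $n_i$ in the relevant residue class, or it is $\Theta\!\big(\rho_{\scc_i}^{\,n_i} n_i^{\,k_i}\big)$ along that residue class for an appropriate $k_i\in\{0,1\}$ depending on whether $\rho_{\scc_i}=\rho(n)$ — here I use \cref{thm:fands} together with the fact that within a single primitive-period block the $\liminf$ and $\limsup$ ratios in that theorem coincide (this is where the periodicity is genuinely tamed, because a single SCC has a \emph{single} period $\period^{\scc_i}$).

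The key computation is then: summing $\prod_i \rho_{\scc_i}^{\,n_i} n_i^{\,k_i}$ over all compositions $(n_1,\ldots,n_m)$ of $n-(m-1)$. Since $1\le n_i\le n$, each $\rho_{\scc_i}^{\,n_i}\le \rho(n)^{\,n_i}\cdot(\rho_{\scc_i}/\rho(n))^{0}$ and the dominant compositions are exactly those that put essentially all of the ``budget'' into the SCCs with maximal spectral radius $\rho(n)$; along an optimal SCC-sequence there are (by definition of $\dgf_{s,t}$) exactly $k(n)+1$ such SCCs. A standard counting estimate shows that $\sum \prod_i \rho(n)^{n_i} n_i^{\,[\rho_{\scc_i}=\rho(n)]} = \Theta\!\big(\rho(n)^{\,n} n^{k(n)}\big)$: the factor $\rho(n)^{\,n-(m-1)}=\Theta(\rho(n)^n)$ comes out, and distributing the remaining length among $k(n)+1$ maximal SCCs, each contributing an $n_i$ factor, produces $n^{k(n)}$ up to constants (the number of compositions is polynomial, $O(n^{m})$, which is absorbed once one checks that SCCs with strictly smaller $\rho$ contribute only a geometrically decaying factor and hence cannot change the order). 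The upper bound $A^n_{s,t}\le C\rho(n)^n n^{k(n)}$ follows by summing these estimates over the finitely many SCC-sequences; the lower bound follows by restricting to one optimal SCC-sequence and one favourable composition (e.g. splitting the budget roughly equally among the $k(n)+1$ maximal SCCs, in residue classes where all the relevant $(A^{\scc_i})^{n_i}$ entries are nonzero — such a choice exists because $n>|Q|$ guarantees each SCC is entered for enough steps to realize any residue it can realize).

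The main obstacle is the residue/periodicity bookkeeping: \cref{thm:fands} only controls a \emph{sum} $B^n_{s,t}$ over a window of length $\period(s,t)$, so to get a per-$n$ bound I must descend to individual SCCs, fix the residue class of each $n_i$ modulo $\period^{\scc_i}$, and argue that for each SCC and each achievable residue the entry $(A^{\scc_i})^{n_i}_{u,v}$ is either identically zero or genuinely $\Theta(\rho_{\scc_i}^{n_i} n_i^{k_i})$ \emph{along that residue class} — and that the particular residues forced by stitching the SCCs together with the connecting edges are compatible with the path lengths actually realizing $\dgf_{s,t}(n)$. Once this is pinned down, I must also check that the definition of $\dgf_{s,t}(n)$ via ``the length-$n$ path visiting the most maximal-$\rho$ SCCs'' matches the combinatorial maximum of the composition sum, i.e. that no gain is possible from a path with fewer maximal SCCs but a longer stay in one of them — this holds because an extra maximal SCC contributes a full extra factor of $n$ (polynomially), which beats any constant, while adding time to an existing SCC only rescales within the same $(\rho,k)$ order. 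I expect these verifications to be somewhat technical but routine; the conceptual content is entirely in the reduction to single-SCC Friedland--Schneider plus the composition-counting estimate.
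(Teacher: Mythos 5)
Your overall strategy --- decompose $A^n_{s,t}$ by SCC-sequences, reduce to single-SCC estimates, and count compositions $(n_1,\dots,n_m)$ of the path length --- is a genuinely different route from the paper's (which never forms the composition sum: for the lower bound it restricts $\wa$ to a single SCC-sequence with fixed entry/exit points so that all $s$-to-$t$ path lengths fall in one residue class modulo the local period, whence the Friedland--Schneider window sum $D^n_{s,t}+\dots+D^{n+T-1}_{s,t}$ collapses to the single term $D^n_{s,t}$; for the upper bound it applies \cref{thm:fands} to the annotated automaton $\waannotated$ and uses $E^n_{s',t'}\le E^n_{s',t'}+\dots+E^{n+T-1}_{s',t'}$). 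Your route is essentially a from-scratch proof in the style of Friedland--Schneider, and could be made to work, but as written it contains a concrete error in where the factor $n^{k(n)}$ comes from.

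The error: you assert that inside a single SCC $\scc_i$ the entry $(A^{\scc_i})^{n_i}_{u,v}$ is $\Theta\bigl(\rho_{\scc_i}^{n_i} n_i^{k_i}\bigr)$ with $k_i=1$ when $\rho_{\scc_i}=\rho(n)$. This is false: for an irreducible non-negative matrix the nonzero entries of the $n_i$-th power grow like $\Theta(\rho_{\scc_i}^{n_i})$ along the achievable residue class, with \emph{no} polynomial factor --- equivalently, applying \cref{thm:fands} to a single SCC always gives $k=0$, since $k(s,t)$ counts maximal-radius SCCs on a path minus one. The polynomial factor $n^{k(n)}$ in the lemma arises \emph{only} from the $\Theta(n^{k(n)})$ ways of distributing the length budget among the $k(n)+1$ maximal SCCs, each distribution contributing $\Theta(\rho(n)^n)$. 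Your ``key computation'' uses both sources at once --- per-SCC factors $n_i^{[\rho_{\scc_i}=\rho(n)]}$ \emph{and} the composition count --- which double-counts and yields an upper bound of order $\rho(n)^n n^{2k(n)+1}$ rather than $\rho(n)^n n^{k(n)}$; and your lower-bound sketch (``one favourable composition'' with $n_i\approx n/(k(n){+}1)$) would give exponent $k(n)+1$, again wrong. Relatedly, the remark that the $O(n^m)$ number of compositions ``is absorbed'' cannot stand as such: you must show that summing over the $n_i$ of the non-maximal SCCs converges geometrically, and that the maximal SCCs contribute exactly $\Theta(n^{k(n)})$ \emph{valid} compositions (i.e.\ ones respecting the forced residues modulo the $\period^{\scc_i}$, of which there are $\Theta(n^{k(n)})$ once one exists). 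Fixing the proof means taking $k_i=0$ for every SCC and deriving $n^{k(n)}$ purely from the residue-constrained composition count --- at which point the periodicity bookkeeping you flag becomes the entire content of the argument, and the paper's device of restricting to one SCC-sequence so that the window sum degenerates is precisely the clean way to avoid it.
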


\begin{proof}
\hfill

\begin{dir}[{lower bound}]
Let us fix $(\rho',k')$ and show the bound for the elements of $\{n\in\mathbb{N} \mid d_{s,t}(n)= (\rho',k')\}$.

A witnessing path in $\wa$ is a length-$n$ path from $s$ to $t$ that visits $k'+1$ SCCs of spectral radius $\rho'$ and no SCC with a  larger spectral radius.
Let $\pi = \scc_1\dots\scc_k \in \mathscr{P}(s,t)$  be the corresponding sequence of SCCs visited by some witnessing path and let $s_i, e_i$ ($1\le i\le k$) be the entry and exit points (respectively into and out of $\scc_i$) on that path,
i.e.  $s=s_1$, $\SCC(s_i) = \SCC(e_i) = \scc_i$ ($1 \le i \le k$), there is a transition (of positive weight) from $e_i$ to $s_{i+1}$ and $e_k=t$. We write $\vec{s_i,e_i}$ to represent the particular sequence of entry/exit points.

Let us define  a new unary weighted automaton  $\wa^{\vec{s_i,e_i}}$ to be a restriction of $\wa$ so that
the only transitions between its SCCs are from $e_i$ to $s_{i+1}$, for $1 \le i < k$,
i.e., the weight is reduced to zero for any violating transition.  There are finitely many such $\vec{s_i,e_i}$, so that if $n \in \{n\in\mathbb{N} \mid d_{s,t}(n)= (\rho',k')\}$ then there exists some $\wa^{\vec{s_i,e_i}}$ with transition matrix $D$ such that $D_{s,t}^n > 0$.

Let us restrict to a single choice of $\vec{s_i,e_i}$, and fix $D$ be the transition matrix of $\wa^{\vec{s_i,e_i}}$. Clearly $A_{s,t}^n \ge D_{s,t}^n$ for all $n$, since $\wa^{\vec{s_i,e_i}}$ is a restriction of $\wa$. We now show that $D_{s,t}^n \ge c_{\vec{s_i,e_i}}(\rho')^n n^{k'}$ for some $c_{\vec{s_i,e_i}} > 0$ whenever $D_{s,t}^{n}>0$. Hence, this will imply that $A_{s,t}^n\ge c_{\vec{s_i,e_i}}(\rho')^n n^{k'}$.

Note that, in $\wa^{\vec{s_i,e_i}}$, $\rho(s,t)=\rho'$ and $k(s,t)=k'$,
because  all paths from $s$ to $t$ must visit $k'+1$ SCC's with spectral radius $\rho'$. Therefore, by \cref{thm:fands} there exists $c'>0$ be such that
\[\lim_{m\to\infty} \frac{D_{s,t}^m + D_{s,t}^{m+1} + \dots + D_{s,t}^{m+T-1}}{(\rho')^m m^{k'}} = c', \]
where $\period$ is the local period from $s$ to $t$ in $\wa^{\vec{s_i,e_i}}$. By the definition of limit,
for all $\varepsilon > 0$ the ratio is at least $c' - \varepsilon$
if $m$ is big enough.
Pick $\varepsilon = c'/2$, then the inequality
$D_{s,t}^m + D_{s,t}^{m+1} + \dots + D_{s,t}^{m+T-1} \ge (c'/2) \cdot (\rho')^m m^{k'}$
holds for all $m \ge N$ where $N$ is a constant dependent
only on $\vec{s_i,e_i}$.

Next we shall show that whenever $D_{s,t}^{n}>0$, that is the length $n$ word has a positive path from $s$ to $t$ in $\wa^{\vec{s_i,e_i}}$, we have $D_{s,t}^{n+1} + \dots + D_{s,t}^{n+T-1} = 0$. This will imply $D_{s,t}^n  \ge (c'/2) (\rho')^n n^{k'}$ (if $n > N$)
and, hence, $A_{s,t}^n \ge (c'/2) (\rho')^n n^{k'}$. Let $L$ be the length of  the shortest path from $s$ to  $t$ in $\wa^{\vec{s_i,e_i}}$.
Observe that paths from $s$ to $t$ in $\wa^{\vec{s_i,e_i}}$ can only have lengths from
$\{L + \ell_1 \cdot \period^{\SCC(s_1)} + \dots + \ell_k \cdot \period^{\SCC(s_k)} \ | \ \ell_1, \dots, \ell_k \in \mathbb{N} \}$
and, thus,
$\{L + \ell\cdot \gcd\{\period^{\SCC(s_1)},\dots, \period^{\SCC(s_k)}\} \ | \ \ell\in \mathbb{N} \}$.
As $\mathscr{P}(s,t)=\{\pi\}$ in $\wa^{\vec{s_i,e_i}}$,  $T = \gcd\{\period^{\SCC(s_1)},\dots, \period^{\SCC(s_k)}\}$.
Consequently, all paths from $s$ to $t$ in $\wa^{\vec{s_i,e_i}}$ have length in $\{ L+\ell T\mid \ell \in\mathbb{N}\}$.
Hence,  whenever $D_{s,t}^n$ is positive, there are no paths which can
contribute positive value to $D_{s,t}^{n+1} + \dots + D_{s,t}^{n+T-1}$.

For small $n\le N$, we can always take $c_{\vec{s_i,e_i}}$ small enough so that $D_{s,t}^n \ge c_{\vec{s_i,e_i}}(\rho')^n n^{k'}$ when $D_{s,t}^n > 0$. Take $c'' = \min D^m_{s,t} / (\rho')^m m^{k'}$ where
the minimum is over all $m \le N$ for which $D^m_{s,t} > 0$. Then $D^m_{s,t} / (\rho')^m m^{k'} \ge c''$ for all such $m$.

This means we can choose $c_{\vec{s_i,e_i}} = \min(c'/2, c'') > 0$,
regardless of whether $n$ is big or small:
this constant depends on $\vec{s_i,e_i}$ but not on~$n$.

As $c_{\vec{s_i,e_i}}$ depends only on $\vec{s_i,e_i}$,
to finish the proof it suffices to take $c$
to be the smallest  among the finitely many $c_{\vec{s_i,e_i}}$.
\end{dir}
\begin{dir}[upper bound]
Let $N_{(\rho',k')} = \{n\ |\ \dgf_{s,t}(n) = (\rho',k')\}$. This gives a finite partition of $\mathbb{N}$ as $\bigcup_{(\rho,k)} N_{(\rho,k)}$.
For each $(\rho',k')$, we shall find a value $C_{(\rho',k')}$ so that, for $n \in N_{(\rho',k')}$, we have $A_{s,t}^n \le C_{(\rho',k')} (\rho')^n n^{k'}$.
Then, to have $A_{s,t}^n \le C  \rho(n)^n n^{k(n)}$  for all $n\in \mathbb{N}$, it will suffice to take $C$ to be the maximum  over all $C_{(\rho',k')}$.

Let us fix $(\rho',k')$. Consider $\wa^\bullet$ to be $\waannotated$ in which,  for every $(\rho,k) \le (\rho',k')$, we merge the states $(t,\rho,k)$ into a single final state $t'$.
Let us rename the state $(s,0,0)$ to $s'$.
(This merger and renaming are justified by assumptions~(a) and~(b) made
 at the beginning of this subsection, see p.~\pageref{conp-ub-assumptions}.)
Let $E$ be the corresponding transition matrix of $\wa^\bullet$.
Note that all paths from $s'$ to $t'$ in $\wa^\bullet$ go through at most $k'+1$ SCCs with spectral radius $\rho'$.

\begin{clm}
\label{c:a=e}
For all $n \in N_{(\rho',k')}$, we have $A_{s,t}^n = E_{s',t'}^n$.
\end{clm}

\begin{subproof}
Consider any path $s \to q_1\to\dots \to q_m \to t$ in $\wa$. There is a corresponding path  in $\wa^\bullet$,
however the states $q_i$ are annotated as $(q_i,\rho,k)$, where $\rho$ is the largest spectral radius
seen so far, and $k+1$ is the number of SCC's of that radius number seen so far.
The only paths removed are those terminating at $(t,\rho,k)$ with $(\rho, k) > (\rho',k')$. Since $\dgf_{s,t}(n) = (\rho',k')$, we know that no path visits more than $k'+1$ SCCs of spectral radius $\rho'$, or an SCC of spectral radius greater than $\rho'$. Consequently,  no such path is disallowed in $\wa^\bullet$. No paths were added either.
Because every SCC in $\wa$ remains a strongly connected component in $\wa^\bullet$ (duplicated with various $(\rho, k)$)
and its transition probability matrix (and hence the spectral radius) remains the same, we can conclude that
$A_{s,t}^n = E_{s',t'}^n$.
This completes the proof of Claim~\ref{c:a=e}.
\end{subproof}

\begin{clm}
\label{c:ub-ub}
There exists $C_{(\rho',k')}$ such that  $A_{s,t}^n \le C_{(\rho',k')} (\rho')^n n^{k'}$.
\end{clm}

\begin{subproof}
We have  $A_{s,t}^n = E_{s',t'}^n \le E_{s',t'}^n + E_{s',t'}^{n+1} + \dots + E_{s',t'}^{n+\period(s',t')-1}$, where $\period(s',t')$ is the local period between states $s'$ and $t'$ in $\wa^\bullet$.
By \cref{thm:fands}, there exists $C_{(\rho',k')}$ such that this quantity is bounded by $C_{(\rho',k')} (\rho')^n n^{k'}$.
Thus, for $n \in N_{(\rho',k')}$, we have $A_{s,t}^n \le C_{(\rho',k')} (\rho')^n n^{k'}$.
This concludes the proof of Claim~\ref{c:ub-ub}.
\end{subproof}

\noindent
By the argument above, the upper bound of Claim~\ref{c:ub-ub}
completes the proof of Lemma~\ref{lemma:unarythetabound}.\qedhere
\end{dir}
\end{proof}

\pagebreak

\begin{rem}\label{rem:whynowork}
When $n \le|Q|$ it is possible that $A^n_{s,t} > 0$ but every path may never take any loops. In which case our degree function allocates $d_{s,t}(n) = (0,0)$, and so Lemma~\ref{lemma:unarythetabound} would not hold as $A^n_{s,t} \not\le 0^nn^0$. The behaviour on short words will be handled by the language containment condition.
\end{rem}

We will now see how the big-$\Theta$ lemma (Lemma~\ref{lemma:unarythetabound})
enables us to characterise the big-O relation on states of weighted automata.
For the following lemma, recall the language containment (LC) condition from \cref{def:lc} and the ordering on $(\rho,k)$-pairs from \cref{defn:lexiorder}.

\begin{lem} \label{lem:boundedff}
A state $s$ is big-O of $s'$ if and only if the LC condition holds and, for all but finitely many $n \in \mathbb{N}$, we have $\dgf_{s,t}(n) \le \dgf_{s',t}(n)$.
\end{lem}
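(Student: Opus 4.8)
The plan is to prove both directions using the big-$\Theta$ lemma (\cref{lemma:unarythetabound}) together with the observation that the LC condition handles the finitely many short words where the degree function is degenerate (cf.\ \cref{rem:whynowork}). Throughout, fix the unique final state $t$ and recall that, by the standing assumptions on p.~\pageref{conp-ub-assumptions}, $s$ and $s'$ lie on no cycle.

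First I would do the ``if'' direction. Assume the LC condition holds and $\dgf_{s,t}(n) \le \dgf_{s',t}(n)$ for all $n \ge N_0$. For $n > \max(N_0, |Q|)$, write $\dgf_{s,t}(n) = (\rho(n),k(n))$ and $\dgf_{s',t}(n) = (\rho'(n),k'(n))$. If $A^n_{s,t} = 0$ there is nothing to check, so assume $A^n_{s,t} > 0$; then $(\rho(n),k(n)) \ne (0,0)$, so $(\rho'(n),k'(n)) \ne (0,0)$ as well, hence $A^n_{s',t} > 0$ and $\rho'(n) > 0$. By \cref{lemma:unarythetabound} applied to both $s$ and $s'$, there are constants $c, C > 0$ with
\[
\frac{A^n_{s,t}}{A^n_{s',t}} \le \frac{C \cdot \rho(n)^n n^{k(n)}}{c \cdot \rho'(n)^n n^{k'(n)}} = \frac{C}{c}\cdot \rho(n)^n n^{k(n)} \cdot \rho'(n)^{-n} n^{-k'(n)}.
\]
If $\rho(n) < \rho'(n)$, then since all $\rho$-values come from a finite set there is a uniform gap $\rho(n)/\rho'(n) \le \gamma < 1$, so $(\rho(n)/\rho'(n))^n \cdot n^{k(n)}$ is bounded over all such $n$ (exponential decay beats the fixed polynomial degree, which is bounded by $|Q|$). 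If $\rho(n) = \rho'(n)$, then $k(n) \le k'(n)$ by hypothesis, so $n^{k(n) - k'(n)} \le 1$ and the whole expression is $\le C/c$. Either way the ratio is bounded by a single constant over all $n > \max(N_0,|Q|)$. For the remaining finitely many words $a^n$ with $n \le \max(N_0,|Q|)$: by the LC condition $A^n_{s,t} > 0$ implies $A^n_{s',t} > 0$, so each ratio $A^n_{s,t}/A^n_{s',t}$ is a finite number, and there are finitely many of them. Taking the maximum over all these and the constant above yields a witness $C'$ that $s$ is big-O of $s'$.

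For the ``only if'' direction, suppose $s$ is big-O of $s'$, witnessed by constant $C$. The LC condition is immediate: if $A^n_{s,t} > 0$ and $A^n_{s',t} = 0$ then $A^n_{s,t} \le C \cdot 0$ fails. Now suppose for contradiction that $\dgf_{s,t}(n) > \dgf_{s',t}(n)$ for infinitely many $n$. For each such $n$ with $n > |Q|$ we have $A^n_{s,t} > 0$ (a path of length $n$ from $s$ to $t$ visiting a cycle exists since $\dgf_{s,t}(n) \ne (0,0)$), and by LC also $A^n_{s',t} > 0$. Then by \cref{lemma:unarythetabound},
\[
\frac{A^n_{s,t}}{A^n_{s',t}} \ge \frac{c}{C}\cdot \frac{\rho(n)^n n^{k(n)}}{\rho'(n)^n n^{k'(n)}},
\]
and since $(\rho(n),k(n)) > (\rho'(n),k'(n))$ in the lexicographic order, either $\rho(n) > \rho'(n)$ — giving a factor $(\rho(n)/\rho'(n))^n$ that tends to infinity along the subsequence, using the finiteness of the value set for a uniform ratio $> 1$ — or $\rho(n) = \rho'(n)$ and $k(n) > k'(n)$, giving a factor $n^{k(n)-k'(n)} \ge n$ that tends to infinity. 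In both cases $A^n_{s,t}/A^n_{s',t} \to \infty$ along an infinite subsequence, contradicting big-O. Hence $\dgf_{s,t}(n) \le \dgf_{s',t}(n)$ for all but finitely many $n$.

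The main obstacle is the bookkeeping around the exponential-versus-polynomial comparison: one must be careful that, because the sets of possible $\rho$-values and $k$-values are finite (of size at most $|Q|$ each, as noted after \cref{def:unarydegreefunction}), a \emph{strict} inequality $\rho(n) < \rho'(n)$ yields a \emph{uniform} multiplicative gap bounded away from $1$, and the polynomial degrees $k(n), k'(n)$ are uniformly bounded — so that the finitely many ``shapes'' $(\rho,k)$ each contribute a bounded quantity and the overall supremum is finite. The other mild subtlety is correctly quarantining the short words $n \le |Q|$ (and any finite exceptional set from the hypothesis), where \cref{lemma:unarythetabound} does not apply, and dispatching them purely via the LC condition.
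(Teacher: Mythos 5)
Your proof is correct and follows essentially the same route as the paper's: both directions rest on the big-$\Theta$ lemma (\cref{lemma:unarythetabound}), the finiteness of the set of admissible $(\rho,k)$-pairs, and the LC condition to dispatch the finitely many exceptional short words. The only cosmetic difference is that you derive boundedness from a uniform gap $\gamma<1$ over the finite value set, where the paper fixes each pair of values and takes a limit; these are interchangeable.
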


\begin{proof}
Let us start off by giving a short summary of the proof.
The idea is that, whenever $\dgf_{s,t}(n) \le \dgf_{s',t}(n)$, by \cref{lemma:unarythetabound},
we have $\nu_s(a^n) \le (\frac{C}{c}(\frac{\rho}{\rho'})^n n^{k-k'})\cdot\nu_{s'}(a^n) $, in which case either $\dgf_{s,t}(n) = \dgf_{s',t}(n)$ and $(\frac{\rho}{\rho'})^n n^{k-k'} =  1$ or $\lim_{n\to\infty}(\frac{\rho}{\rho'})^n n^{k-k'} = 0$ and so $(\frac{\rho}{\rho'})^n n^{k-k'} \le 1$ for all but finitely many $n$.
At the same time, whenever $\dgf_{s,t}(n) > \dgf_{s',t}(n)$, \cref{lemma:unarythetabound} yields $\nu_s(a^n) \ge (\frac{c}{C}(\frac{\rho}{\rho'})^n n^{k-k'})\cdot\nu_{s'}(a^n)$ but then $\lim_{n\to\infty}(\frac{\rho}{\rho'})^n n^{k-k'}= \infty$.

We now show how to fill in the details in this summary.
First we note some consequences of $\dgf_{s,t}(n) \le \dgf_{s',t}(n)$.
Suppose  $\dgf_{s,t}(n) = (\rho,k)$ and $\dgf_{s',t}(n) = (\rho',k')$. Thanks to Lemma~\ref{lemma:unarythetabound},
we have $\nu_s(a^n) \le (\frac{C}{c}(\frac{\rho}{\rho'})^n n^{k-k'})\cdot\nu_{s'}(a^n) $.
If $\dgf_{s,t}(n) \le \dgf_{s',t}(n)$ we can distinguish two cases: either  $(\rho,k)= (\rho',k')$ or $(\rho,k)<(\rho',k')$.
\begin{itemize}
\item In the former case, $(\frac{\rho}{\rho'})^n n^{k-k'} = 1$ and, thus,  ${\nu_s(a^n)} \le (\frac{C}{c})\cdot {\nu_{s'}(a^n)}$.
\item In the latter case, we have $\lim_{m\to\infty} (\frac{\rho}{\rho'})^m m^{k-k'}=0$ and, thus, $(\frac{\rho}{\rho'})^m m^{k-k'} < 1$ for
all but finitely many  $m$. Consequently, for all but finitely many $n$, we can conclude ${\nu_s(a^n)} \le (\frac{C}{c})\cdot {\nu_{s'}(a^n)}$.
\end{itemize}

Thanks to the above analysis,
if $\dgf_{s,t} (n) \le \dgf_{s',t}(n)$ holds for all but finitely many $n$, it follows that ${\nu_s(a^n)}\le (\frac{C}{c}) \cdot{\nu_{s'}(a^n)}$ for all but finitely many $n$.
Moreover, the language containment condition implies that $\nu_s(a^n)\le C'\cdot {\nu_{s'}(a^n)}$  for some $C'$ in the remaining (finitely many) cases.
Hence, $s$ is big-O of $s'$, which shows the right-to-left implication.

For the converse, recall that we have already established that ``$s$ is big-O of $s'$'' implies the language containment condition.
For the remaining part, we reason by contrapositive and suppose  that there are infinitely many $n$ with $\dgf_{s,t}(n) > \dgf_{s',t}(n)$.
As there are finitely many values in the range of $\dgf_{s,t}$ and $\dgf_{s',t}$,
there exist $(\rho,k)$ and $(\rho',k')$ such that $(\rho,k)>(\rho',k')$ and, for infinitely many $n$, $\dgf_{s,t}=(\rho,k)$ and $\dgf_{s',t}=(\rho',k')$.
Note that $(\rho',k') \ne (0,0)$, as otherwise $f_{s'}(a^n) = 0$ and,
by the language containment condition, $f_s(a^n) = 0$ and $(\rho, k) = (0, 0)$, a contradiction.
Therefore, $f_{s'}(a^n) > 0$ for all such~$n$, and moreover
Lemma~\ref{lemma:unarythetabound} yields $\nu_s(a^n) \ge (\frac{c}{C}(\frac{\rho}{\rho'})^n n^{k-k'})\cdot\nu_{s'}(a^n)$.
But $(\rho,k)>(\rho',k')$ implies
\[\lim_{m\to\infty} \left(\frac{\rho}{\rho'}\right)^m m^{k-k'} = \infty,
\]
 i.e. $(\frac{\rho}{\rho'})^n n^{k-k'}$ is unbounded.
Thus, $s$ cannot be big-O of $s'$.
\end{proof}

We are going to use the characterisation from \cref{lem:boundedff} to prove \cref{thm:tvconp}.
As already discussed, the LC condition can be checked via NFA inclusion testing. To tackle the ``for all but finitely many ...'' condition,
we introduce the concept of  {eventual inclusion}.

\begin{defi} Given sets $A,B$, we say $A$ is \textit{eventually included} in $B$,
written $A \abfsubset B$, if and only if the set difference $A \setminus B$ is finite.
\end{defi}

This relation has appeared under the name of ``almost inclusion'' in,
e.g.,~\cite{Senizergues93,FinkelT10,IanovskiMNN14}.
Minimisation of finite-state automata up to finitely many errors has been
studied under the name of ``hyper-minimization'' (see, e.g.,~\cite{BadrGS09,Maletti11}).

The next three lemmas relate deciding the big-O problem using the characterisation of  \cref{lem:boundedff} to eventual inclusion.
\begin{lem}\label{thm:abfsubsetconpbounded}
Given unary NFAs $\nfa_1, \nfa_2$, the problem $\lng{}{\nfa_1} \abfsubset \lng{}{\nfa_2}$ is in $\coNP$.
\end{lem}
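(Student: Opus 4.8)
The plan is to show that the \emph{complement}---deciding whether $\lng{}{\nfa_1}\setminus\lng{}{\nfa_2}$ is infinite---lies in $\NP$; this suffices. Write $m_1,m_2$ for the numbers of states of $\nfa_1,\nfa_2$, set $N := 2^{m_1+m_2}$, and call $n\in\mathbb{N}$ \emph{bad} if $a^n\in\lng{}{\nfa_1}$ and $a^n\notin\lng{}{\nfa_2}$; the task is to certify that there are infinitely many bad~$n$. This is in the spirit of the classical $\coNP$ upper bound for unary NFA universality~\cite{stockmeyer1973word}.

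First I would analyse the structure of the bad set. For $n\ge 0$ let $S_n\subseteq Q_1$ and $T_n\subseteq Q_2$ be the sets of states reached in $\nfa_1$ and $\nfa_2$ after reading $a^n$ from the respective start states; then $n$ is bad precisely when $S_n\cap F_1\ne\emptyset$ and $T_n\cap F_2=\emptyset$. The pair $(S_n,T_n)$ evolves deterministically inside a set of size at most $N$, so the orbit $(S_n,T_n)_{n\ge0}$ is eventually periodic: among the first $N+1$ pairs two coincide, say $(S_i,T_i)=(S_j,T_j)$ with $0\le i<j\le N$, whence $(S_{n+p},T_{n+p})=(S_n,T_n)$ for all $n\ge i$, where $p:=j-i\le N$. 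Consequently, for $n\ge i$ the predicate ``$n$ is bad'' depends only on $(n-i)\bmod p$, so the bad set restricted to $[i,\infty)$ is a union of residue classes modulo $p$; if that union is nonempty it meets every block of $p$ consecutive integers inside $[i,\infty)$ and is infinite, and otherwise the whole bad set is contained in $[0,i)$ and is finite.

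Next I would extract a small witness. Since $i<N$ and $p\le N$, the block $[N,2N)$ lies in $[i,\infty)$ and contains $p$ consecutive integers, so if the bad set is infinite it contains some $n$ with $N\le n<2N$; conversely any bad $n\in[N,2N)$ has $n>i$, so $n,n+p,n+2p,\dots$ are all bad and the bad set is infinite. Hence $\lng{}{\nfa_1}\setminus\lng{}{\nfa_2}$ is infinite if and only if there is a bad $n$ with $N\le n<2N$. Such an $n$ is written with $O(m_1+m_2)$ bits, so it is a legitimate polynomial-size $\NP$ certificate, and to verify that a guessed $n$ is bad one computes $S_n$ and $T_n$ by Boolean matrix exponentiation: $S_n$ is read off from applying the $n$-th power of the Boolean adjacency matrix of $\nfa_1$ to the indicator vector of its start state, and that power is obtained by repeated squaring using $O(\log n)=O(m_1+m_2)$ Boolean matrix multiplications (each $O(m_1^3)$), and similarly for $\nfa_2$; finally one checks $S_n\cap F_1\ne\emptyset$ and $T_n\cap F_2=\emptyset$. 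All steps are polynomial in the input size, so ``$\lng{}{\nfa_1}\setminus\lng{}{\nfa_2}$ infinite'' is in $\NP$ and therefore $\lng{}{\nfa_1}\abfsubset\lng{}{\nfa_2}$ is in $\coNP$.

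The main obstacle to watch is that the period $p$ of the joint reachable-set sequence can genuinely be exponential in $m_1+m_2$ (Chrobak-normal-form cycle lengths of order $e^{\Theta(\sqrt{m\log m})}$), so one cannot enumerate the relevant residue class nor materialise an NFA for $\lng{}{\nfa_1}\setminus\lng{}{\nfa_2}$ (complementing $\nfa_2$ is exponential); the whole point is that a \emph{single} witness $n$, although exponentially large as a number, has polynomial bit-length and can be evaluated by fast exponentiation. The rest---pinning down the bound $N$ on threshold and period, and the existence and polynomial-time checkability of a witness below $2N$---is routine bookkeeping.
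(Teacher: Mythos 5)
Your proposal is correct and follows essentially the same route as the paper: the paper also reduces to detecting a witness length $n$ in an exponential window (phrased via the implicit product DFA for $\lng{}{\nfa_1}\cap\overline{\lng{}{\nfa_2}}$, whose state set is exactly your pair $(S_n,T_n)$), guesses $n$ in binary, and verifies membership by fast matrix exponentiation on the NFAs' transition matrices. Your explicit periodicity argument merely spells out the standard fact the paper invokes, that a unary DFA with $|M|$ states has infinite language iff it accepts some word of length between $|M|$ and $2|M|$.
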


\begin{proof}[Proof of \cref{thm:abfsubsetconpbounded}]
Let $M$ be a DFA  accepting $\lng{}{\nfa_1} \cap \overline{\lng{}{{\nfa_2}}}$ obtained through standard automata constructions, i.e. $|M| \le 2^{|\nfa_1|+|\nfa_2|}$.
Note that $\lng{}{\nfa_1} \abfsubset \lng{}{\nfa_2}$ if and only if $\lng{}{M}$ is finite.
Observe that $\lng{}{M}$ is infinite if and only if there exists $w\in \lng{}{M}$ with $|M| \le |w| \le 2|M|$.

Consequently, violation of eventual inclusion can be detected by
guessing $n\in \mathbb{N}$ such that $|M|\le n \le 2|M|$ and
verifying $a^n \in \lng{}{M}$.

Even though $M$ is of exponential size, it is possible to verify
$a^{n} \in \lng{}{M}$ in polynomial time, given $n$ in binary.
To this end, we use $\nfa_1, \nfa_2$ instead of $M$ and view their transition functions as matrices.
Then one can verify the condition using fast matrix exponentiation (by repeated squaring).
Because the bit size of $n$ must be polynomial in $|\nfa_1|+|\nfa_2|$, the lemma follows.
\end{proof}

\begin{lem}\label{lem:ei}
Suppose $\dgf_1,\dgf_2: \mathbb{N} \to X$, with $(X,\le)$ a finite total order.
Then $\dgf_1(n)\le \dgf_2(n)$ for all but finitely many $n$ if and only if
$\{ n\,|\, \dgf_1(n) \ge x\} \abfsubset \{ n\,|\, \dgf_2(n)\ge x\}$ for all $x\in X$.
\end{lem}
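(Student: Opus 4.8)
The plan is to prove both directions directly from the definition of $\abfsubset$, exploiting the single elementary fact about total orders: for any $n$, the inequality $\dgf_1(n) > \dgf_2(n)$ holds if and only if there is a threshold $x \in X$ — concretely $x := \dgf_1(n)$ — with $\dgf_1(n) \ge x$ and $\dgf_2(n) < x$.

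First I would handle the left-to-right implication. Assume $\dgf_1(n) \le \dgf_2(n)$ for all but finitely many $n$, and fix an arbitrary $x \in X$. If $n$ lies in the set difference $\{n \mid \dgf_1(n) \ge x\} \setminus \{n \mid \dgf_2(n) \ge x\}$, then $\dgf_2(n) < x \le \dgf_1(n)$, and since $\le$ is total this forces $\dgf_1(n) > \dgf_2(n)$. By hypothesis only finitely many $n$ satisfy $\dgf_1(n) > \dgf_2(n)$, so the set difference is finite, i.e. $\{n \mid \dgf_1(n) \ge x\} \abfsubset \{n \mid \dgf_2(n) \ge x\}$; as $x$ was arbitrary, this gives the claim.

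For the converse, suppose $\{n \mid \dgf_1(n) \ge x\} \abfsubset \{n \mid \dgf_2(n) \ge x\}$ for every $x \in X$, and set $S = \{n \mid \dgf_1(n) > \dgf_2(n)\}$; the goal is to show $S$ is finite. Given $n \in S$, take $x := \dgf_1(n)$: then $\dgf_1(n) \ge x$ while $\dgf_2(n) < \dgf_1(n) = x$, so $n \in \{m \mid \dgf_1(m) \ge x\} \setminus \{m \mid \dgf_2(m) \ge x\}$. Hence $S \subseteq \bigcup_{x \in X}\bigl(\{m \mid \dgf_1(m) \ge x\} \setminus \{m \mid \dgf_2(m) \ge x\}\bigr)$. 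Each set in this union is finite by hypothesis and $X$ is finite, so the union — and therefore $S$ — is finite, which is exactly the statement that $\dgf_1(n) \le \dgf_2(n)$ for all but finitely many $n$.

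There is no genuine obstacle in this argument; it is a short set-theoretic computation. The only points requiring a little care are using totality of $(X,\le)$ to pass from $\neg(\dgf_1(n) \le \dgf_2(n))$ to $\dgf_1(n) > \dgf_2(n)$, and using the finiteness of $X$ to turn a union of finitely many finite sets back into a finite set — without finiteness of $X$ the equivalence would fail.
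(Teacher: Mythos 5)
Your proof is correct and follows essentially the same route as the paper's: both directions rest on the observation that, by totality, $\dgf_1(n) > \dgf_2(n)$ exactly when the threshold $x = \dgf_1(n)$ separates the two values, combined with finiteness of $X$ to control the union of exceptional sets. The paper phrases the converse contrapositively (infinitely many violations force, by pigeonhole, the failure of eventual inclusion for some single $x$) while you argue directly by covering the violation set with a finite union of finite sets, but this is the same argument.
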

\begin{proof} 
The left-to-right implication is clear. For the opposite direction, observe that, because the order on $X$ is total, $\dgf_1(n) > \dgf_2(n)$
implies the existence of $x\in X$ such that $\dgf_1(n) \ge x$ and $\dgf_2(n) < x$ (it suffices to take $x=\dgf_1(n)$).
Because $X$ is finite,  $\dgf_1(n)>\dgf_2(n)$ for infinitely many $n$ implies failure of $\{ n\,|\, \dgf_1(n) \ge x\} \abfsubset \{ n\,|\, \dgf_2(n)\ge x\}$ for some $x$.
\end{proof}

\begin{lem} \label{lem:conpprocff}
Given a unary weighted automaton $\wa$, the associated problem whether $\dgf_{s,t}(n) \le \dgf_{s',t}(n)$ for all but finitely many $n \in \mathbb{N}$ is in $\coNP$.
\end{lem}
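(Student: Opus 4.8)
The plan is to turn the condition ``$\dgf_{s,t}(n)\le\dgf_{s',t}(n)$ for all but finitely many $n$'' into a conjunction of polynomially many instances of unary eventual NFA inclusion, each of which is in $\coNP$ by \cref{thm:abfsubsetconpbounded}, and then to observe that the complement of such a conjunction lies in $\NP$. Concretely, I would first fix a finite totally ordered set $X\subseteq\mathbb{R}\times\mathbb{N}$ containing the ranges of both $\dgf_{s,t}$ and $\dgf_{s',t}$: let $\rho$ range over $\{0\}$ together with the spectral radii $\rho_\scc$ of the SCCs of $\wa$, and $k$ range over $\{0,\dots,|Q|\}$. This $X$ has only polynomially many elements, each of polynomial description size, and any two of them can be compared in polynomial time by \cref{lem:algebraic}. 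By \cref{lem:ei}, the condition is then equivalent to: for every $x\in X$, $\{n\mid\dgf_{s,t}(n)\ge x\}\abfsubset\{n\mid\dgf_{s',t}(n)\ge x\}$.

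Next, for each fixed $x=(\rho',k')\in X$ I would build a polynomial-size unary NFA $\nfa^x_{s,t}$ accepting exactly $\{a^n\mid\dgf_{s,t}(n)\ge x\}$, and likewise $\nfa^x_{s',t}$. Take the annotated automaton $\waannotated$ of \cref{defn:annotated} with start state $(s,\rho(s),0)$, and declare $(t,\rho,k)$ accepting whenever $(\rho,k)\ge x$ in the lexicographic order. A length-$n$ run of $\waannotated$ from $s$ to $(t,\rho,k)$ records, by construction, the largest spectral radius $\rho$ occurring on the corresponding length-$n$ path from $s$ to $t$ in $\wa$, together with the number $k+1$ of SCCs of that radius on the path; since $\dgf_{s,t}(n)$ is by \cref{def:unarydegreefunction} the lexicographic maximum of these pairs over all length-$n$ paths from $s$ to $t$, one obtains that $\dgf_{s,t}(n)\ge(\rho',k')$ iff some length-$n$ path visits an SCC of spectral radius $>\rho'$ or some length-$n$ path visits at least $k'+1$ SCCs of spectral radius exactly $\rho'$ — and each of these two disjuncts is witnessed precisely by an accepting run of $\nfa^x_{s,t}$. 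The automaton $\waannotated$, hence $\nfa^x_{s,t}$, is constructible in polynomial time.

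Finally, $\{n\mid\dgf_{s,t}(n)\ge x\}\abfsubset\{n\mid\dgf_{s',t}(n)\ge x\}$ is literally $\lng{}{\nfa^x_{s,t}}\abfsubset\lng{}{\nfa^x_{s',t}}$, which is in $\coNP$ by \cref{thm:abfsubsetconpbounded}; moreover the proof of that lemma yields, when eventual inclusion fails, a short certificate — a number $n$ bounded by twice the size of the product DFA, verifiable in polynomial time by fast matrix exponentiation directly on $\nfa^x_{s,t}$ and $\nfa^x_{s',t}$. A nondeterministic verifier for the complement of our problem therefore guesses a single $x\in X$ together with such a certificate of failure of eventual inclusion for that $x$; this puts the complement in $\NP$, so the problem is in $\coNP$.

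I expect the main obstacle to be the correctness of the NFA construction: verifying carefully that the pair-component $(\rho,k)$ of $\waannotated$ really evolves along every run so as to track the ``largest spectral radius / number of maximal-radius SCCs'' pair of \cref{def:unarydegreefunction}, and hence that ``$\dgf_{s,t}(n)\ge x$'' is exactly ``some run from $(s,\rho(s),0)$ of length $n$ reaches an accepting state''. Everything else is a direct appeal to \cref{lem:ei} and \cref{thm:abfsubsetconpbounded}, plus the standard fact that $\coNP$ is closed under polynomially many conjunctions.
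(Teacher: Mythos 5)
Your proposal is correct and follows essentially the same route as the paper: it invokes \cref{lem:ei} to reduce to polynomially many eventual-inclusion tests, builds for each admissible threshold $x$ the NFA $\nfaof{\waannotated}{s}$ with accepting states $(t,\rho',k')$ for $(\rho',k')\ge x$, and concludes via \cref{thm:abfsubsetconpbounded} together with closure of $\coNP$ under polynomial conjunction. The minor differences (taking a slightly larger product set $X$ of thresholds, and spelling out the failure certificate) are immaterial.
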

\begin{proof}
Given an admissible pair $x=(\rho,k)$, we construct an NFA $\nfa_{s,x}$ accepting $\{a^n \ | \ \dgf_{s,t}(n) \ge x\}$ (similarly $\nfa_{s',x}$ for $s'$), by taking the NFA $\nfaof{\waannotated}{s}$ (Definitions~\ref{def:nfaof},~\ref{defn:annotated}) with a suitable choice of accepting states.
Recall that states in $\waannotated$ are of the form $(q,\rho',k')$, where $q$ is a state from $\wa$ and $(\rho',k')$ is admissible.
If we designate states $(t,\rho',k')$ with $(\rho',k')\ge x$ as accepting, it will accept  $\{a^n \ | \ \dgf_{s,t}(n) \ge x\}$.
This is a polynomial-time construction.

Then, by Lemma~\ref{lem:ei}, the problem whether $\dgf_{s,t}(n) \le \dgf_{s',t}(n)$ for all but finitely many $n \in \mathbb{N}$
is equivalent to $\lng{}{\nfa_{s,x}}\abfsubset \lng{}{\nfa_{s',x}}$ for all admissible $x$.
As there are at most $|Q|^2$ values of $x$ and each can be verified co-nondeterministically in $\coNP$, it suffices to show that deciding $\lng{}{\nfa_{s,x}}\abfsubset \lng{}{\nfa_{s',x}}$ is in $\coNP$ for each $x$.
This is the case by  Lemma~\ref{thm:abfsubsetconpbounded}.
\end{proof}

\cref{lem:boundedff}, \cref{rem:lc}, and \cref{lem:conpprocff} together complete the  upper bound result for \cref{thm:tvconp}.

\begin{rem}
\cref{lem:ei} may appear simpler using $\{ n\,|\, \dgf_1(n) = x\} \abfsubset \{ n\,|\, \dgf_2(n) \ge x\}$.
However, it does not seem possible to construct an NFA
for $\{a^n \ | \ \dgf_{s,t}(n) = x\}$ in polynomial time.  Taking just $(t,\rho,k)$, for $x = (\rho,k)$, as accepting
would not be correct, as there could be paths of the same length ending in $(t,\rho',k')$ with $(\rho',k')>(\rho,k)$.
Using $\dgf_1(n) \ge x$ instead of $\dgf_1(n) = x$ avoids this problem.
\end{rem}

\begin{rem}An alternative approach for obtaining an upper bound could be to compute the Jordan normal form of the transition matrix and consider
its powers. Instead of the interplay of strongly connected components
in the transition graph, we would need to consider linear combinations of
the $n$th powers of complex numbers (such as roots of unity).
It is not clear this algebraic approach leads to
a  representation more convenient for our purposes.
\end{rem}

\subsection{\texorpdfstring{$\coNP$}{coNP}-hardness for unary LMC}
\label{sec:conphardness}

Given a unary NFA $\nfa$, the \textit{NFA universality problem} asks if $\lng{}{\nfa} = \{a^n\ |\ n\in \mathbb{N}\}$. This problem is $\coNP$-complete \cite{stockmeyer1973word}.
We exhibit a polynomial-time reduction from  (a variant of) the unary universality problem  to the big-O problem on unary Markov chains.

\begin{thm}\label{thm:boundedhard}
The big-O problem is $\coNP$-hard on unary Markov chains.
\end{thm}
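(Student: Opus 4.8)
The plan is to reduce from (a variant of) the unary $\nfa$ universality problem: given a unary $\nfa$, decide whether $\lng{}{\nfa}\supseteq a^+$. This is $\coNP$-hard; it is a minor variant of the Stockmeyer--Meyer result~\cite{stockmeyer1973word}, obtained from ordinary unary universality by first testing the (trivial, polynomial-time) condition $\epsilon\in\lng{}{\nfa}$ and, if it fails, outputting a fixed negative instance. So fix a unary $\nfa$ with initial state $\iota$; I may assume $\iota$ is non-accepting, since otherwise I replace $\iota$ by a fresh non-accepting state with the same outgoing transitions, which preserves $\lng{}{\nfa}\cap a^+$ and hence the answer. First I would build an $\nfa'$ over the same single-letter alphabet, with a fresh state $t$: keep all transitions of $\nfa$, add a transition $q\to t$ whenever $q$ has a transition into an accepting state of $\nfa$, and let $t$ be the unique accepting state of $\nfa'$, with no outgoing transitions. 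A routine run analysis then gives $\lng{}{\nfa'}=\lng{}{\nfa}\cap a^+$: an accepting run of $\nfa'$ on $a^n$ ($n\ge 1$) ends with a step $q\to t$, so $q$ has a transition into an accepting state of $\nfa$ and the prefix is a run of $\nfa$ reaching $q$; conversely any accepting run of $\nfa$ on $a^n$ splits as such a prefix followed by one more step.

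Next I would probabilise $\nfa'$ into a unary LMC $\lmc$. Let $D:=|Q_{\nfa}|+1$, which bounds the out-degree of every state of $\nfa'$. Add a non-accepting sink state $\bot$ with a self-loop of weight $1$. For each non-accepting state $q$ of $\nfa'$: if $q$ has $d(q)\ge 1$ outgoing transitions in $\nfa'$, give each of them weight $1/d(q)$; if $d(q)=0$, add a transition $q\to\bot$ of weight $1$. Finally add a fresh state $s$ with a self-loop of weight $1/(2D)$ and a transition $s\to t$ of weight $1-1/(2D)$, and set $s':=\iota$. Then $\lmc$ is a genuine unary LMC of size polynomial in $|\nfa|$ and the construction is polynomial time. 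The estimates I need are all immediate from the construction. Positive-weight runs of $\lmc$ coincide with the runs of $\nfa'$ and no run through $\bot$ ever reaches $t$, so $\nu_{s'}(a^n)>0$ iff $a^n\in\lng{}{\nfa'}$ iff ($n\ge 1$ and $a^n\in\lng{}{\nfa}$); moreover a single accepting run of length $n$ already has weight at least $D^{-n}$, so $\nu_{s'}(a^n)\ge D^{-n}$ whenever it is positive. On the other branch, $\nu_s(a^n)=(1/(2D))^{n-1}\bigl(1-1/(2D)\bigr)$ for every $n\ge 1$, hence $0<\nu_s(a^n)\le (1/(2D))^{n-1}\le 2D\cdot D^{-n}$, while $\nu_s(\epsilon)=0=\nu_{s'}(\epsilon)$; in particular $\lng{s}{\lmc}=a^+$.

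Putting the two branches together finishes the reduction. If $\lng{}{\nfa}\supseteq a^+$, then for every $n\ge 1$ we get $\nu_s(a^n)\le 2D\cdot D^{-n}\le 2D\cdot\nu_{s'}(a^n)$, together with $\nu_s(\epsilon)=0=\nu_{s'}(\epsilon)$, so $s$ is big-O of $s'$ with constant $2D$. Conversely, if $a^n\notin\lng{}{\nfa}$ for some $n\ge 1$, then $\nu_{s'}(a^n)=0<\nu_s(a^n)$, so the LC condition (\cref{def:lc}) already fails and $s$ is not big-O of $s'$. Hence $s$ is big-O of $s'$ if and only if $\lng{}{\nfa}\supseteq a^+$, which completes the proof. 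The one delicate point, which I expect to be the main obstacle, is the design of the $s$-branch: the slogan ``let $s$ accept everything and $s'$ accept $\lng{}{\nfa}$'' is not enough by itself, because one must ensure the weights emitted from $s$ decay at least as fast, up to a constant factor, as the \emph{smallest} positive weight emitted from $s'$ on any word --- otherwise boundedness of the ratio would encode a finer asymptotic comparison rather than plain language containment. This is exactly why it matters that a single run through the probabilised $\nfa'$ already contributes weight $\ge D^{-n}$, which lets the $s$-branch be a single self-loop of weight $1/(2D)<1/D$ and avoids any appeal to the big-$\Theta$ lemma.
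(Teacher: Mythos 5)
Your reduction is correct: the variant of unary NFA universality you reduce from is indeed $\coNP$-hard, the constructed object is a genuine unary LMC, the bound $\nu_{s'}(a^n)\ge D^{-n}$ for accepted words versus $\nu_s(a^n)\le 2D\cdot D^{-n}$ gives the positive direction with constant $2D$, and a non-accepted $a^n$ with $n\ge 1$ kills the LC condition and hence the big-O property. So the theorem as stated follows.

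The paper's proof starts from the same idea (reduce from unary NFA universality; run a geometrically decaying branch from $s$ against a probabilised copy of the NFA from $s'$) but implements it differently, and the difference is worth understanding. The paper first converts the NFA to restricted Chrobak normal form so that accepting paths can be weighted to contribute $\Theta\pbra{(\frac12)^n}$ exactly, and --- crucially --- it adds an extra branch from $s'$ of weight $\Theta\pbra{(\frac14)^n}$ that accepts \emph{every} $a^n$. As a result, in the paper's construction the LC condition always holds, and non-universality is detected because the ratio $\nu_s(a^{n_k})/\nu_{s'}(a^{n_k})$ grows like $2^{n_k}$ along the (periodically recurring) non-accepted lengths. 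In your construction, by contrast, the negative case is witnessed solely by $\nu_{s'}(a^n)=0<\nu_s(a^n)$, i.e., the hardness is entirely absorbed by the language-containment subproblem, which is already known to be $\coNP$-hard for unary NFAs (\cref{rem:lc}). The paper's \cref{sec:conphardness} remark after the proof acknowledges exactly this: the $\frac14$ branch ``is not strictly necessary, but it demonstrates that the problem is hard even if the LC condition is satisfied (i.e., `it can be the numbers that make the hardness').'' So your argument is a valid and more elementary proof of \cref{thm:boundedhard} --- it avoids Chrobak normal form entirely, using only out-degree normalisation and the crude per-run bound $D^{-n}$ --- but it establishes a strictly weaker qualitative point than the paper's construction, since it does not show that the numeric comparison remains hard once language containment is granted.
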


Let us first consider a particular form of unary NFAs.
\begin{defi}\label{defn:chro}
A unary NFA $\nfa = \abra{Q, \to, q_s,F}$ is in \textit{Chrobak normal form} \cite{chrobak1986finite} if
\begin{itemize}
  \item $Q = S \uplus C^1 \uplus \dots \uplus C^m$ and $q_s\in S$;
  \item $S=\{ s_1,\cdots,s_k \}$, $q_s=s_1\in S$ and transitions between states from $S$ form a path $s_1 \trns{a} s_2 \trns{a}  \dots \trns{a} s_k$;
  \item $C^i=\{c^i_0,\cdots, c^i_{|C^i|-1}\}$ ($1\le i\le m$) and transitions between states from  $C^i$ form  a cycle $c^i_0 \trns{a} c^i_1 \trns{a}  \dots \trns{a} {c^i_{|C^i|-1}} \trns{a} c^i_0$;
  \item the remaining transitions connect the end of the path to each cycle:  $s_k \trns{a} c^i_0$ for all $1\le i\le m$.
\end{itemize}
\end{defi}
Any unary NFA can be translated to this representation with at most quadratic blow-up in the size of the machine~\cite{chrobak1986finite}, and such representation can be found in polynomial time~\cite{to2009unary,martinez2002efficient}.
In addition, to simplify our arguments, we introduce a \textit{restricted} Chrobak normal form, which requires that there is exactly one accepting state in each cycle.
This restricted form can be found with at most a further quadratic blow-up over Chrobak normal form, by creating copies of cycles---one for each accepting state in the cycle.

Observe that $S\subseteq F$ is a necessary condition for the universality of a unary NFA in Chrobak normal form. Consequently, the universality problem for unary NFA  \textit{in restricted
Chrobak normal form} such that $k=1$  is already $\coNP$-hard. This is the problem we are going to reduce from in the following.

\begin{proof}[Proof of \cref{thm:boundedhard}]
Let $\nfa = \abra{Q, \trns{}, q_s, F}$ be a unary NFA in restricted Chrobak normal form with $k=1$.
We will construct a unary Markov chain  $\lmc$, depicted in \cref{fig:nfatolmc}, with states $Q' =  Q\cup \{s, u, v, t\}$, where $t$ is final.
\begin{figure}[t]
\centering
\includegraphics[width=0.83\linewidth]{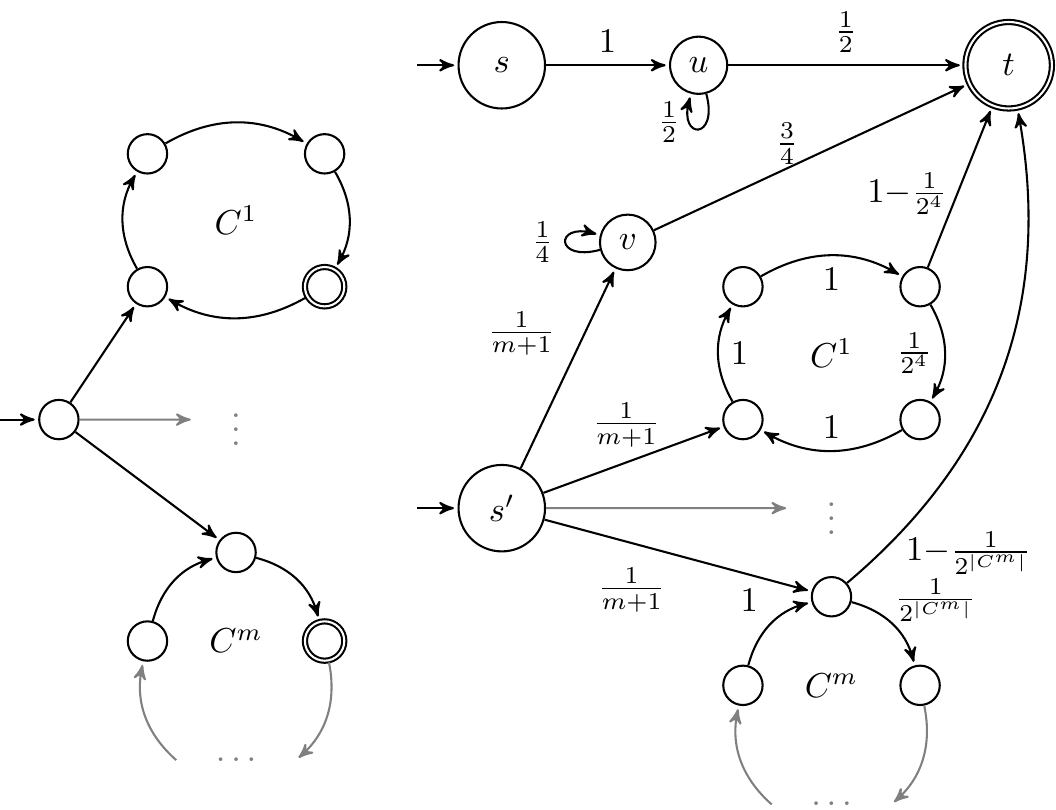}
\caption{Reduction from NFA (left) to LMC (right)}
\label{fig:nfatolmc}
\end{figure}
The branch starting from $s$, defined below, guarantees $\nu_{s}(a^n) = \Theta\pbra{(\frac{1}{2})^n}$.
\[s \trns{1} u \quad \quad u\trns{\frac{1}{2}}u \quad \quad u\trns{\frac{1}{2}}t\]
We take $s'=q_s$ and create a similar branch from $s'$, albeit with a smaller weight,
to create paths of weight $\Theta\pbra{(\frac{1}{4})^n}$ when reading $a^n$.
\[s'\trns{\frac{1}{m+1}}v \quad\quad v\trns{\frac{1}{4}}v \quad \quad v\trns{\frac{3}{4}}t\]
Moreover, we add weights to the original NFA transitions from $\nfa$ as follows:
\[\begin{array}{lll}
s' \trns{\frac{1}{m+1}} c^i_0 \quad (1\le i \le m)&&
c^i_{j\ominus 1} \trns{(\frac{1}{2})^{|C^i|}} c^i_j \quad (c^i_j\in F) \\[3mm]
c^i_{j\ominus 1} \trns{1-(\frac{1}{2})^{|C^i|}} t \quad (c^i_j\in F)&& c^i_{j\ominus 1} \trns{1} c^i_j \quad (c^i_j\not\in F) \
\end{array}\]
where $j\ominus 1=(|C^i| + j-1)\bmod |C^i|$.
Note that the weights have been selected as if each letter were read with weight $\frac{1}{2}$ except for a bounded number of transitions, where the bound is $\max |C^i|$.
Consequently, whenever there are accepting paths for $a^n$ in $\nfa$, their overall weight  in $\lmc$ will be $\Theta\pbra{(\frac{1}{2})^n}$.

It it easy to check that the reduction produces an LMC and can be carried out in polynomial time. It remains to argue that the reduction is correct.

If $\nfa$ is not universal, there exists $n$ such that $a^n\not\in F$. Because of the cyclic structure of Chrobak normal form, $a^{n_k}\not\in F$ for $n_k = n + k  q$, where
$q = \operatorname*{lcm} \{|C^1|,\dots, |C^m|\}$ and $k\in \mathbb{N}$.
Then,  by the earlier observations about growth,
there exists $C>0$ such that $\sup_k  \frac{\nu_s(a^{n_k})}{\nu_{s'}(a^{n_k})}   = \sup_k C\, \frac{(1/2)^{n_k}}{ (1/4)^{n_k}} = \sup_k C\,2^{n_k} = \infty$,
i.e. $s$ is not big-O of $s'$.

If $\nfa$ is universal then, starting from $s'$ in $\lmc$, every word $a^n$ will have a path weighted $\Theta\pbra{(\frac{1}{4})^n}$ as well as
paths weighted $\Theta\pbra{(\frac{1}{2})^n}$. Hence,
there exists $C>0$ such that
\[
\sup_n \pbra{\frac{\nu_s (a^{n}) }{\nu_{s'} (a^{n}) } }\le
\sup_n \pbra{C\, \frac{ (\frac{1}{2})^{n} }{ (\frac{1}{4})^{n} + (\frac{1}{2})^{n} }}\le C,
\]
i.e. $s$ is big-O of $s'$.
\end{proof}

\begin{rem}We note that the $\frac{1}{4}$ branch via state $v$ is not strictly necessary, but it demonstrates that the problem is hard even if the LC condition is satisfied (i.e., ``it can be the numbers that make the hardness'').
\end{rem}

\section{Decidability for weighted automata with bounded languages}\label{sec:bounded}

In this section we consider the big-O problem for a weighted automaton $\wa$ and states $s,s'$ such that $\lng{s}{\wa}$, $\lng{s'}{\wa}$ are bounded.
Throughout the section, we assume that the LC condition has already been checked, i.e. $\lng{s}{\wa}\subseteq\lng{s'}{\wa}$.
We will show that the problem is conditionally decidable, subject to Schanuel's conjecture.
We give a quick introduction to this conjecture in \cref{sec:logicaltheoriesschan}.

\begin{thm}\label{thm:bounded}
Given a weighted automaton $\wa = \abra{Q,\Sigma,M,F}$, $s,s'\in Q$, with $\lng{s}{\wa}$ and $\lng{s'}{\wa}$ bounded, it is decidable whether $s$ is big-O of $s'$, subject to Schanuel's conjecture.
\end{thm}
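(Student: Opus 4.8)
The plan is to reduce the big-O problem for bounded languages to deciding a single sentence in the first-order theory of $\langle\mathbb R,+,\cdot,\le,\exp\rangle$, which is decidable subject to Schanuel's conjecture~\cite{macintyre1996decidability}; see \cref{sec:logicaltheoriesschan}.

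\textbf{Step 1 (reduction to the letter-bounded case).}
Since $\lng{s}{\wa}$ and $\lng{s'}{\wa}$ are bounded, one can effectively extract (via~\cite{GawrychowskiKRS10}) finite words $w_1,\dots,w_m$ with $\lng{s}{\wa}\cup\lng{s'}{\wa}\subseteq w_1^\ast\cdots w_m^\ast$. Introducing a fresh letter $a_i$ for each block $w_i$ and running $\wa$ synchronously on the expansion of $a_1^{n_1}\cdots a_m^{n_m}$ (a standard product construction), we may assume $\lng{s}{\wa},\lng{s'}{\wa}\subseteq a_1^\ast\cdots a_m^\ast$ for distinct $a_1,\dots,a_m$. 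Writing $M_i:=M(a_i)$ and letting $t$ be the unique final state, for $w=a_1^{n_1}\cdots a_m^{n_m}$ we then have
\[
  \nu_s(w)=\bigl(M_1^{n_1}M_2^{n_2}\cdots M_m^{n_m}\bigr)_{s,t},
\]
and likewise for $\nu_{s'}$. We assume the LC condition (\cref{def:lc}) throughout — it is necessary for big-O and checkable by NFA inclusion — and discard words with $\nu_{s'}(w)=0$. We proceed by induction on $m$: words lying on a proper face $n_i=\text{const}\le|Q|$ yield an instance with fewer free parameters, which is handled by the induction hypothesis after completing $\wa$ along the bound in the spirit of \cref{remark:asymptotic-big-oh}; the base $m=0$ is a finite language. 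It remains to treat the ``interior'' region where all $n_i>|Q|$.

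\textbf{Step 2 (per-phase asymptotics via an annotated product).}
Applying the big-$\Theta$ lemma (\cref{lemma:unarythetabound}) to each $M_i$ individually, for every pair of states $p,q$ there are a computable algebraic $\rho_i(p,q)$ and an integer $k_i(p,q)$ with $(M_i^{n_i})_{p,q}=\Theta\bigl(\rho_i(p,q)^{n_i}n_i^{k_i(p,q)}\bigr)$ as $n_i\to\infty$ within each residue class modulo the relevant period (\cref{thm:fands}). Expanding $M_1^{n_1}\cdots M_m^{n_m}$ over the intermediate ``threading'' states and generalising the annotated automaton $\waannotated$ of \cref{defn:annotated} so that it carries one $(\rho,k)$-annotation per phase, one obtains a partition of $\mathbb N^m$ into finitely many pieces — indexed by residues modulo the finitely many periods and by which coordinates exceed $|Q|$ — such that, on each piece,
\[
  \nu_s(a_1^{n_1}\cdots a_m^{n_m})=\Theta\Bigl(\max_{j\in J}\ \prod_{i=1}^m \rho_{ij}^{\,n_i}\,n_i^{\,k_{ij}}\Bigr),
\]
with finitely many computable algebraic $\rho_{ij}$ and integers $k_{ij}$ (bounded factors from small coordinates absorbed into $\Theta$), and analogously $\nu_{s'}=\Theta\bigl(\max_{j'\in J'}\prod_i(\rho'_{ij'})^{n_i}n_i^{k'_{ij'}}\bigr)$.

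\textbf{Step 3 (encoding in the theory of the real exponential field).}
On each piece, $s$ is big-O of $s'$ iff $\sup_{\vec n}\ \frac{\max_j\prod_i\rho_{ij}^{n_i}n_i^{k_{ij}}}{\max_{j'}\prod_i(\rho'_{ij'})^{n_i}n_i^{k'_{ij'}}}<\infty$. Because all functions involved are log-affine in $\vec n$ up to the monotone corrections $n_i^{k}$, this supremum over the integer points of the piece is finite iff the supremum of the same expression over the real cone $\mathbb R_{\ge0}^m$ is finite, which removes the need to quantify over $\mathbb N$. Each $\rho_{ij}$ and $\rho'_{ij'}$ is definable in the real field as the unique root of an explicit integer polynomial inside an explicit rational isolating interval (\cref{lem:algebraic}); $\log(\cdot)=\exp^{-1}(\cdot)$ is definable from $\exp$; and ``there exists $C>0$ such that for all $\vec x\in\mathbb R_{\ge0}^m$ the ratio is $\le C$'', conjoined over the finitely many pieces, is a first-order sentence over $\langle\mathbb R,+,\cdot,\le,\exp\rangle$. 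By the Macintyre–Wilkie theorem this theory is decidable assuming Schanuel's conjecture~\cite{macintyre1996decidability,lang1966introduction}, establishing \cref{thm:bounded}.

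\textbf{Main obstacle.}
The crux is Step~3, and this is exactly where Schanuel's conjecture is genuinely needed. The competing terms $\prod_i\rho_{ij}^{n_i}n_i^{k_{ij}}$ cut $\vec n$-space into cells whose bounding hyperplanes $\{\sum_i n_i\log\rho_{ij}=\sum_i n_i\log\rho_{ij''}\}$ have transcendental normal vectors, so one cannot determine which cells are full-dimensional — nor compare the dominant exponential rates inside them — by elementary means; moreover the polynomial factors $n_i^{k}$ cannot be discarded, precisely because they decide the comparison on the boundary hyperplanes where the exponential parts cancel. Pushing all of this faithfully into a first-order $\exp$-sentence, while simultaneously justifying the passage from $\mathbb N^m$ to $\mathbb R_{\ge0}^m$ and the inductive treatment of the proper faces, is the technical heart of the proof; granted that, decidability is inherited from that of the real exponential theory.
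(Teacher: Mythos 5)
Your overall route coincides with the paper's: reduce the bounded case to (plus-)letter-bounded form, lift the unary big-$\Theta$ lemma to a per-phase asymptotic of the form $\Theta\bigl(\sum_j \prod_i \rho_{ij}^{\,n_i} n_i^{\,k_{ij}}\bigr)$ (the paper's \cref{lemma:thetaletterbounded}), split $\mathbb{N}^m$ into finitely many pieces on which this form is fixed, and decide finiteness of the supremum by a sentence of \rexp. However, one step of your Step~3 is not merely under-detailed but false as stated: the equivalence between the supremum over the integer points of a piece and the supremum over the full real cone $\mathbb{R}_{\ge 0}^m$. The pieces are Parikh images of plus-letter-bounded regular languages, and (\cref{claim:finiteunion}) they decompose into linear sets $\prod_i(b_{ki}+r_{ki}\mathbb{N})$ in which some periods $r_{ki}$ may be \emph{zero} while the offsets $b_{ki}$ are far larger than $|Q|$ (the automata involved are annotated products, so their Chrobak tails can be long). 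On such a piece the coordinate $n_i$ is pinned to a constant; letting the corresponding real variable $x_i$ range freely can make the real supremum infinite (e.g., a factor $2^{x_i}$ in the ratio) while the supremum over the piece is finite, so your procedure would wrongly answer ``not big-O''. Your face induction in Step~1 only peels off $n_i\le|Q|$ and does not prevent this. The repair is exactly the paper's \cref{lem:real}: quantify the real variables only over the set $U$ of coordinates that have positive period and are unbounded along a witnessing sequence, absorbing the remaining coordinates into constant offsets (\cref{claim:referencenats} and \cref{claim:equalnaturalreal}).

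Two further points need more than assertion. First, your claim that the pieces are ``indexed by residues modulo the finitely many periods and by which coordinates exceed $|Q|$'' requires an argument: the coordinates are coupled through the threading states, so not every combination of residues is realized, and one must actually construct automata recognizing $\{\vec n \mid X\in \dgf_s(\vec n),\ \mathcal{Y}=\dgf_{s'}(\vec n)\}$ and decompose their Parikh images (\cref{lem:detectors} and \cref{claim:finiteunion}); this is what makes condition~(a) of \cref{lem:natlogicalformulationfirst} effective. Second, your Step~1 treatment of letter-boundedness by induction on faces $n_i=\mathrm{const}$ is workable in principle but unspecified (one must fold the fixed matrix power into the automaton and argue the resulting instance is again of the required shape); the paper instead takes a union over the finitely many sub-expressions $a_{i_1}^+\cdots a_{i_k}^+$ (\cref{lem:lbtoplb}) and handles small positive exponents inside the degree function via the $(\delta,0)$ device. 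With these repairs your argument becomes the paper's proof.
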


Before delving into proof details, we illustrate the arising challenges using a simple representative example in \cref{sec:boundednotunary}. After that, the proof of \cref{thm:bounded} is spread over \cref{sec:plusletterbounded,sec:letterboundedcase,sec:boundedtoletterbounded}. First, a version of \cref{thm:bounded} restricted to  plus-letter-bounded languages is shown (\cref{lem:plbdecide} in \cref{sec:plusletterbounded}). Subsequently, letter-bounded languages are also captured by reduction to this case (\cref{lem:lbtoplb} in \cref{sec:letterboundedcase}). Finally, the case of bounded languages is reduced to the case of letter-bounded languages (\cref{lem:btolb}) in \cref{sec:boundedtoletterbounded}.

\subsection{Logical theories of arithmetic and Schanuel's conjecture}
\label{sec:logicaltheoriesschan}\mbox{}\\In \emph{first-order logical theories of arithmetic,}
variables denote numbers (from $\mathbb Z$ or $\mathbb R$, as appropriate),
and atomic predicates are equalities and inequalities between terms
built from variables and function symbols.
Nullary function symbols are constants, always from $\mathbb Z$.
If binary addition and multiplication are available, then:
\begin{itemize}
\item for $\mathbb R$
we obtain the first-order theory of the reals, where the truth value of
sentences is decidable due to the celebrated Tarski--Seidenberg theorem~\cite[Chapter~11 and Theorem~2.77]{BPR};
\item for $\mathbb Z$,
the first-order theory of the integers is, in contrast, undecidable
(see, e.g,~\cite{poonen2003hilbert}).
\end{itemize}
In the case of $\mathbb R$, adding the unary symbol for the exponential function
$x \mapsto e^x$ leads to
\emph{the first-order theory of the real numbers with exponential
function} (\rexp).
Logarithms base~2, for example, are easily expressible in~\rexp.
The decidability of~\rexp\ is an open problem
and hinges upon Schanuel's conjecture~\cite{macintyre1996decidability}.

\emph{Schanuel's conjecture}~\cite{lang1966introduction} is a unifying conjecture of
transcendental number theory, saying that for all $z_1,\dots,z_n \in \mathbb C$ linearly
independent over $\mathbb{Q}$ the field extension
$\mathbb{Q}(z_1,\dots,z_n,e^{z_1},\dots,e^{z_n})$ has  transcendence degree at
least $n$ over $\mathbb{Q}$,
meaning that for some $S \subseteq \{z_1, \dots, z_n,e^{z_1}, \dots, e^{z_n}\}$
of cardinality~$n$, say $S = \{s_1, \ldots, s_n\}$,
the only polynomial $p$ over $\mathbb Q$ satisfying
$p(s_1, \ldots, s_n) = 0$ is $p \equiv 0$. See, e.g., Waldschmidt's book~\cite[Section~1.4]{Waldschmidt00} for
  further context.
If indeed true, this conjecture would
generalise several known results, including the Lindemann--Weierstrass theorem
and Baker's theorem,
and would entail the decidability of~\rexp.
A recent exciting line of research
reduces problems from verification~\cite{DAVIAUD202178,Majumdar20}, linear dynamical systems~\cite{AlmagorCO018,ChonevOW16},
and symbolic computation~\cite{Huang18} to the decision problem for~\rexp.

\subsection{Difference to the unary case}
\label{sec:boundednotunary}

In the unary case, it was sufficient to consider the \textit{relative order} between spectral radii, with careful handling of the periodic behaviour.
This approach is insufficient in the bounded case. \cref{exa:relativeorderinsuff} highlights that
 the actual values of the spectral radii have to be examined.

\begin{exa}[Relative orderings are insufficient]
\label{exa:relativeorderinsuff}
\begin{figure}[t]
\centering
\includegraphics[width=0.7\linewidth]{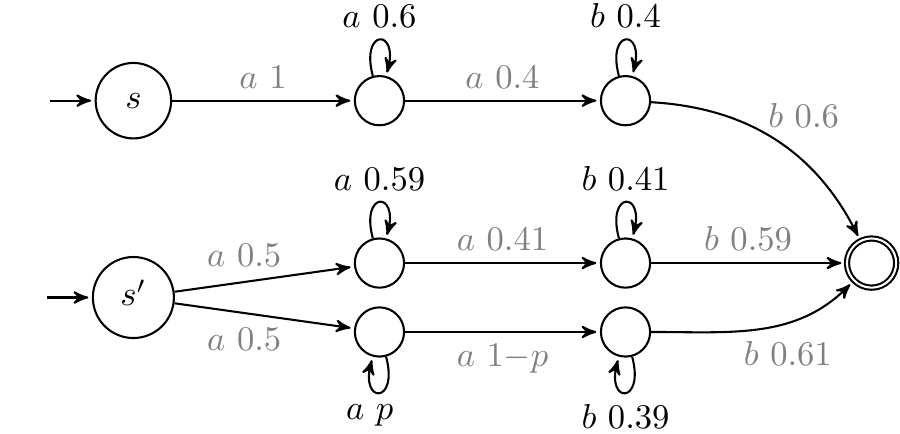}
\caption{Relative orderings are the same, but the big-O relations are different. }
\label{fig:relativeordernotsuff}
\end{figure}
Consider the LMC in \cref{fig:relativeordernotsuff}, with $0.61\le p \le 0.62$. We have $\nu_s(a^m b^n) =\Theta (0.6^m 0.4^n)$ and $\nu_{s'}(a^m b^n) =\Theta(p^m 0.39^n + 0.59^m 0.41^n)$. Note that neither $0.59^m 0.41^n$ nor $p^m 0.39^n$ dominate, nor are dominated by, $0.6^m 0.4^n$ for any value of $0.61\le p \le 0.62$. That is, there are values of $m,n$ where $0.59^m 0.41^n \gg 0.6^m 0.4^n$ (in particular large $n$) and values of $m,n$ where $0.59^m 0.41^n \ll 0.6^m 0.4^n$ (in particular large $m$); similarly for $p^m 0.39^n$ vs $0.6^m 0.4^n$ (but the cases in which $n$ or $m$ needs to be large are swapped). However, the big-O status can be different for different values of $ p \in[0.61,0.62]$, despite the same relative ordering between spectral radii. When $p=0.62$, the ratio  turns out to be bounded:
\[\frac{\nu_s(aa^ma b^{n}b)}{\nu_{s'}(aa^ma b^{n}b)}
=\frac{1\cdot  0.6^m \cdot 0.4\cdot 0.4^n\cdot 0.6}{
  0.5\cdot  0.59^m \cdot 0.41\cdot 0.41^n\cdot 0.59 + 0.5\cdot  0.62^m \cdot 0.38\cdot 0.39^n\cdot 0.61
}\le \frac{1600}{1579}
\]
for all $m,n$ (in particular, maximal at $m=n=0$, $\frac{\nu_s(aab)}{\nu_{s'}(aab)} = \frac{1600}{1579}$, for all larger $m,n$ the ratio is smaller).

In contrast, when $p=0.61$, we have $\frac{\nu_s(a^m b^{0.66m})}{\nu_{s'}(a^m b^{0.66m})}\trns[m\to\infty]{}\infty$.  To see that, observe there is an $x \in \mathbb{Q}$ such that  $0.61 \cdot 0.39^x < 0.6 \cdot 0.4^x$ and $0.59 \cdot 0.41^x < 0.6 \cdot 0.4^x$, e.g., $x = 0.66$. We can take $m = x n$ such that $m, n \in \mathbb{N}$ and $m, n \to \infty$. Whilst useful for illustration in this example, this effect is not limited to a linear relation between the characters, and so heavier machinery is required.
\end{exa}

We first prove \cref{thm:bounded} for the plus-letter-bounded case, which is the most technically involved; the other bounded cases will be reduced to it.
 In the plus-letter-bounded case, we will characterise the behaviour of such automata, generalising $(\rho,k)$-pairs of the unary case.
We will  need to rely upon the first-order theory of the reals with exponentials to compare these behaviours.

\subsection{The plus-letter-bounded case}
\label{sec:plusletterbounded}
We assume $\lng{s'}{\wa} \subseteq a_1^+\cdots a_m^+$, where $a_1, \cdots, a_n\in\Sigma$ and, because the LC condition holds, we also have  $\lng{s}{\wa} \subseteq a_1^+\cdots a_m^+$. In the plus-letter-bounded cases, without loss of generality, we assume $\chr_i\ne \chr_j$ for $i\neq j$.  Then any word $w = a_1^{n_1}\dots a_m^{n_m}$ is uniquely specified by a vector $(n_1,\dots,n_m)\in \mathbb{N}^m_{>0}$, where $n_i$ is the number of $a_i$'s in $w$.

\begin{prop}\label{claim:plbadditionaform}
In the plus-letter-bounded cases, without loss of generality, we may assume $\chr_i\ne \chr_j$ for $i\neq j$.
\end{prop}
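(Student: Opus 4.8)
The plan is to reduce the general plus-letter-bounded situation to one with pairwise distinct bounding letters by a product construction that ``splits'' each state according to which block it is currently processing. The first move is to normalise the bounding template: whenever $\chr_i=\chr_{i+1}$ we have $\chr_i^+\chr_{i+1}^+\subseteq\chr_i^+$, so deleting $\chr_{i+1}$ from the list keeps $\lng{s'}{\wa}\subseteq \chr_1^+\cdots\chr_m^+$ (and, by the LC condition, $\lng{s}{\wa}\subseteq\chr_1^+\cdots\chr_m^+$ as well); iterating, I may assume $\chr_i\neq\chr_{i+1}$ for all $i$ (possibly with a smaller $m$). The point of this step is that, with consecutive template letters distinct, the \emph{block index} of a position in a word of $\chr_1^+\cdots\chr_m^+$ can be tracked \emph{deterministically}; without it the block decomposition of a word is ambiguous, and the product below would sum over competing decompositions and corrupt the weights.

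Concretely I would build a DFA $\mathcal{C}$ over $\Sigma$ with live states $\{0,1,\dots,m\}$ (plus a dead sink), where state $i\ge 1$ means ``currently reading the $i$-th block'': state $0$ has an $\chr_1$-transition to $1$, and state $i\ge1$ has an $\chr_i$-transition back to $i$ and an $\chr_{i+1}$-transition to $i+1$, with $m$ accepting. Since $\chr_i\ne\chr_{i+1}$, the two transitions out of state $i$ carry distinct letters, so $\mathcal{C}$ is genuinely deterministic and $\lng{}{\mathcal{C}}=\chr_1^+\cdots\chr_m^+$. Now take the synchronised product $\wa'=\wa\otimes\mathcal{C}$: states $Q\times\{0,\dots,m\}$, transitions $(q,c)\trns[a]{M(a)(q,q')}(q',c')$ whenever $c\trns{a}c'$ in $\mathcal{C}$ with $c'$ live, and final states $F\times\{m\}$. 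Because $\mathcal{C}$ is deterministic, each run of $\wa'$ projects to a unique run of $\mathcal{C}$, so $\nu'_{(s,0)}(w)=\nu_s(w)\cdot[w\in\lng{}{\mathcal{C}}]$; as $\lng{s}{\wa}\subseteq\chr_1^+\cdots\chr_m^+$, this is just $\nu'_{(s,0)}(w)=\nu_s(w)$ for every $w$, and likewise $\nu'_{(s',0)}=\nu_{s'}$. The key structural observation is that in $\wa'$ every transition whose target has second component $c$ reads the letter $\chr_c$ (both the in-block self-transitions and the transitions advancing from block $c-1$ into block $c$).

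Next I would relabel: introduce \emph{fresh, pairwise distinct} letters $\basevecraw_1,\dots,\basevecraw_m$ and replace the label of each transition of $\wa'$ entering a state with second component $c$ by $\basevecraw_c$; by the observation this is well-defined, giving $\wa''$ over the alphabet $\{\basevecraw_1,\dots,\basevecraw_m\}$. Runs of $\wa''$ from $(s,0)$ to $F\times\{m\}$ visit components $0,1,\dots,1,2,\dots,m$, hence spell words of $\basevecraw_1^+\cdots\basevecraw_m^+$, and relabelling is a weight-preserving bijection on runs, so $\nu''_{(s,0)}(\basevecraw_1^{n_1}\cdots\basevecraw_m^{n_m})=\nu'_{(s,0)}(\chr_1^{n_1}\cdots\chr_m^{n_m})=\nu_s(\chr_1^{n_1}\cdots\chr_m^{n_m})$ for all $(n_1,\dots,n_m)\in\mathbb{N}_{>0}^m$, with $\nu''_{(s,0)}$ vanishing off $\basevecraw_1^+\cdots\basevecraw_m^+$; similarly for $s'$. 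Since $\chr_i\ne\chr_{i+1}$, the words $\chr_1^{n_1}\cdots\chr_m^{n_m}$ are pairwise distinct and exhaust $\chr_1^+\cdots\chr_m^+\supseteq\lng{s}{\wa}$, so ``$s$ is big-O of $s'$ in $\wa$'' is equivalent to the existence of $C$ with $\nu_s(\chr_1^{n_1}\cdots\chr_m^{n_m})\le C\,\nu_{s'}(\chr_1^{n_1}\cdots\chr_m^{n_m})$ for all such tuples, which (because both $\nu''_{(s,0)}$ and $\nu''_{(s',0)}$ are supported on $\basevecraw_1^+\cdots\basevecraw_m^+$) is exactly ``$(s,0)$ is big-O of $(s',0)$ in $\wa''$''. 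As $\wa''$ is plus-letter-bounded via the distinct letters $\basevecraw_1,\dots,\basevecraw_m$, this establishes the claim. The work is essentially bookkeeping; the only genuine subtlety is the normalisation step making block-tracking deterministic (hence the product weight-exact), plus the routine check that the product-and-relabel preserves the big-O relation in \emph{both} directions, which is what the support analysis above provides.
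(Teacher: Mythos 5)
Your proof is correct, and its skeleton matches the paper's: first collapse consecutive equal template letters (so $\chr_i\ne\chr_{i+1}$), then introduce a fresh alphabet $b_1,\dots,b_m$ and relabel each transition by the block it serves. Where you diverge is in how the block of a transition is pinned down. The paper's proof works directly on $\wa$: it argues that no single transition of $\wa$ can be used in two non-adjacent blocks $a_i^+$ and $a_j^+$ ($j\ge i+2$), since such a transition would let a word skip the block $a_{i+1}^+$ entirely, contradicting $\lng{s'}{\wa}\subseteq a_1^+\cdots a_m^+$; hence each transition already has a unique block and can be relabelled in place, with no change to the state set. You instead take the product with the deterministic block-tracking DFA $\mathcal C$, which makes the block assignment well-defined by construction and renders the paper's structural lemma unnecessary, at the cost of multiplying the state count by $m+1$. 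Both arguments are sound; yours is somewhat more robust (it never needs to reason about how the original automaton's transitions are shared between blocks, and the determinism of $\mathcal C$ plus the uniqueness of the decomposition of words in $\chr_1^+\cdots\chr_m^+$ with distinct consecutive letters gives the weight-preserving bijection cleanly), while the paper's is more economical. Your explicit verification that the big-O relation transfers in both directions, via the support analysis, is the right thing to check and is carried out correctly.
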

\begin{proof}
We show how to reduce the big-O problem in the plus-letter-bounded case to the version of the same problem where $\chr_i\ne \chr_j$ (for $i\neq j$). Suppose, as above, $\lng{s'}{\wa} \subseteq a_1^+\cdots a_m^+$.
We can assume $a_i\neq a_{i+1}$ ($1\le i < m$) because $a^+ a^+$ can be replaced with $a^+$ (even though this results in a less restrictive ``bounding'' expression).
If $\wa$ had an $a$-labelled transition that can be used in two different blocks $a_i^+$ and $a_j^+$ ($j \ge i+2$) within $a_1^+\cdots a_m^+$, then $a = a_i = a_j$ and this transition could be used to ``skip'' the block $a_{i+1}^+$, i.e., there would exist a word $w \in \lng{s'}{\wa}$ that does not use the $a_{i+1}^+$ block, contradicting  $\lng{s'}{\wa} \subseteq a_1^+\cdots a_m^+$.
Therefore, every transition can be associated with exactly one block.
So we can define a fresh alphabet $\Sigma' = \{b_1, \ldots, b_m\}$ and relabel each transition associated with the $i$th block by~$b_i$.
\end{proof}

Our proof plan, for the plus-letter-bounded case (i.e., in the present \cref{sec:plusletterbounded}), will pass the following milestones:
\begin{enumerate}[1.]
\item Generalise the degree function $\dgf$ from \cref{def:unarydegreefunction} to non-unary words.
      (\Cref{def:boundedcompat} and \cref{def:boundeddegree} below.)
\item Generalise the big-$\Theta$ lemma (\cref{lemma:unarythetabound}), to lower- and upper-bound
      the values $\nu_s(a_1^{n_1}a_2^{n_2}\br\dots a_m^{n_m})$ by more convenient expressions,
      sometimes called ``exponential polynomials''.
      (\Cref{lemma:thetaletterbounded}.)
\item Characterise the negative instances of the big-O problem.
      (\Cref{lem:natlogicalformulationfirst}.)
\item Show that this characterisation is effective, assuming Schanuel's conjecture.
      (\Cref{lem:real}.)
\end{enumerate}
A comment on the third milestone is in order. In the unary case,
an effective characterisation of positive vis-a-vis negative instances of the big-O problem
(\cref{lem:boundedff}) can be obtained by ``mimicking'' the behaviour
of the original automaton, once the big-$\Theta$ lemma is in place.
As it turns out, the non-unary case
creates a more subtle, multivariate situation. Our new characterisation
(\Cref{lem:natlogicalformulationfirst}) will involve, in particular, the existence
of a sequence of vectors with some properties. Accordingly, new arguments will be required
(involving further automata and logic) to show algorithmic decidability.

We will now carry out the plan above.

Like in \cref{def:unarydegreefunction}, we define a degree function $\dgf$, which will be used to study
the asymptotic behaviour of words. This time we will associate a separate $(\rho,k)$ pair to each of the $m$ characters and, consequently,
words will induce sequences of the form $(\rho_1,k_1)\cdots(\rho_m,k_m)$.

Further, as there may be multiple, incomparable behaviours, words will induce sets of such sequences, i.e. $\dgf\colon \mathbb{N}^m \to \mathcal{P}( (\mathbb{R}\times \mathbb{N})^m)$.
For the sake of comparisons, it will be convenient to focus on maximal elements with respect to the pointwise order on $ (\mathbb{R} \times \mathbb{N})^m$, written $\le$,
where the lexicographic order (recall \cref{defn:lexiorder}) is used to compare elements of $\mathbb{R} \times \mathbb{N}$.

Recall from the discussion in \cref{rem:whynowork}, that our big-$\Theta$ lemma (\cref{lemma:unarythetabound}) does not capture the asymptotics when $n \le |Q|$ because it cannot distinguish between short paths with no loops and non-paths. In the unary case this is inconsequential as small words are covered by the \textit{finitely many} exceptions and the LC condition. However, here, a small number of one character may be used to enable access to a particular part of the automaton in another character. For this case, we introduce a new number $\delta= \frac{1}{2}\min_{\scc{} : \rho_\scc{} > 0} \rho_\scc{}$ which is strictly smaller than the spectral radius of every non-zero SCC (so will not dominate with the partial order), but non-zero. This allows us to distinguish between no-path using the $(0,0)$-pair and a short non-looping path using the $(\delta,0)$-pair.

\begin{defi}
\label{def:boundedcompat}
Let $\hat{\rho}= (\rho_1,k_1), \cdots, (\rho_m,k_m)\in (\mathbb{R} \times \mathbb{N})^m$.
An $a_1^{n_1}a_2^{n_2}\dots a_m^{n_m}$-labelled path from $s$ (to the final state) is \emph{compatible with} $\hat{\rho}$
if, for each $i = 1,\dots, m$, it visits $k_i + 1$ SCCs with spectral radius $\rho_i$ while reading $a_i$, unless the path visits only singletons with no loops, in which case $(\rho_i,k_i) = (\delta,0)$. The notation $(\rho,k)\in \hat{\rho}$ is used for ``$(\rho,k)$ is an element of  $\hat{\rho}$''.
\end{defi}
\begin{defi}
\label{def:boundeddegree}
Let $\dgf_{s} : \mathbb{N}^m \to \mathcal{P}((\mathbb{R} \times \mathbb{N})^m)$ be s.t.:
$\hat{\rho}\in \dgf_s(n_1,\dots,n_m) $ if and only if
\begin{enumerate}[(1)]
\item there exists an  $a_1^{n_1}a_2^{n_2}\dots a_m^{n_m}$-labelled path from $s$ to the final state compatible with $\hat{\rho}$, and
\item for every $a_1^{n_1}a_2^{n_2}\dots a_m^{n_m}$-labelled path from $s$ compatible with $\hat{\sigma}$ s.t.\ $\hat{\rho}\le\hat{\sigma}$, we have $\hat{\rho}=\hat{\sigma}$.
\end{enumerate}
\end{defi}

\noindent Observe that $\hat{\rho}$ may range over at most $|Q|^{2m}$ possible values. Let $\mathcal{D}$ be the set of all of them, so that $\dgf_{s}: \mathbb{N}^m \to \mathcal{P}(\mathcal{D})$.
In this extended setting, the big-$\Theta$ lemma (\cref{lemma:unarythetabound}) may be generalised as follows.
\begin{lem} \label{lemma:thetaletterbounded}
Denote $\z{n_1,\cdots,n_m} =  \sum_{\hat{\rho}\in \dgf_s(n_1,\dots,n_m)}\; \prod_{(\rho_i,k_i) \in \hat{\rho}} \ \rho_i^{n_i} \cdot n_i^{k_i}.$\\
There exist $c,C > 0$ such that for all $n_1\dots,n_m\in \mathbb{N}$:
\[
c\cdot \z{n_1,\cdots,n_m} \le  \nu_s(a_1^{n_1}a_2^{n_2}\dots a_m^{n_m}) \le C\cdot \z{n_1,\cdots,n_m}.
\]
\end{lem}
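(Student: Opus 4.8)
The plan is to reduce the multivariate statement to the unary big-$\Theta$ lemma (\cref{lemma:unarythetabound}) by fixing the block structure of a word $w = a_1^{n_1}\cdots a_m^{n_m}$ and analysing each block separately. First I would fix a ``profile'' $P$ consisting of, for each block boundary $i$ ($1\le i\le m-1$), a pair of states $(e_i,s_{i+1})$ with $e_i$ the state at which the $a_i$-block is exited and $s_{i+1}$ the state at which the $a_{i+1}$-block is entered (together with $s_1 = s$ and $s_{m+1} = t$, the final state). There are only finitely many profiles, independent of $n_1,\dots,n_m$, and
\[
\nu_s(a_1^{n_1}\cdots a_m^{n_m}) \;=\; \sum_{P}\ \prod_{i=1}^{m} (B_i^P)^{\,n_i}_{s_i,e_i},
\]
where $B_i^P$ is the transition matrix of $\wa$ restricted to the SCCs reachable while reading the $i$th block under profile $P$ (this restricted automaton is unary in the single letter $a_i$, using \cref{claim:plbadditionaform} to ensure each transition belongs to exactly one block). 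For each factor $(B_i^P)^{n_i}_{s_i,e_i}$ I would invoke \cref{lemma:unarythetabound} (with the short-word caveat handled via the $\delta$-pair, recalling \cref{rem:whynowork} and \cref{def:boundedcompat}): there are constants $c_i^P, C_i^P>0$ with
\[
c_i^P\cdot \rho_i^{n_i} n_i^{k_i} \;\le\; (B_i^P)^{n_i}_{s_i,e_i} \;\le\; C_i^P\cdot \rho_i^{n_i} n_i^{k_i},
\]
where $(\rho_i,k_i)$ is exactly the $(\rho,k)$-pair of the restricted block automaton from $s_i$ to $e_i$ on length-$n_i$ paths (or $(0,0)$ if no such path, or $(\delta,0)$ if only loop-free short paths).

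The second step is bookkeeping: multiplying the per-block bounds, each profile $P$ that admits a length-$(n_1,\dots,n_m)$ path contributes a term $\Theta\!\big(\prod_i \rho_i^{n_i} n_i^{k_i}\big)$ to $\nu_s(w)$, with constants uniform over all $(n_1,\dots,n_m)$ because there are finitely many profiles. For the upper bound, I would observe that the tuple $\hat\sigma^P = (\rho_1,k_1),\dots,(\rho_m,k_m)$ induced by $P$ is, by construction, dominated (pointwise, under the order of \cref{def:boundeddegree}) by some maximal $\hat\rho \in \dgf_s(n_1,\dots,n_m)$; hence $\prod_i \rho_i^{n_i} n_i^{k_i} \le \prod_i \tilde\rho_i^{n_i} n_i^{k_i}$ where $(\tilde\rho_i,\tilde k_i)\in\hat\rho$ — but this last inequality needs a small argument since the monomials are compared for \emph{all} $n_i$ at once, so I would split on whether $(\rho_i,k_i)=(\tilde\rho_i,\tilde k_i)$ (equal monomials) or $(\rho_i,k_i)<(\tilde\rho_i,\tilde k_i)$ (then $\rho_i^{n_i}n_i^{k_i}/(\tilde\rho_i^{n_i}n_i^{\tilde k_i})\to 0$, hence is bounded by a constant, except for finitely many $n_i$, and the finitely many exceptions are absorbed into the constant). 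Summing over the at most $|Q|^{2(m-1)}$ profiles and the at most $|Q|^{2m}$ elements of $\dgf_s$ gives $\nu_s(w)\le C\cdot \z{n_1,\cdots,n_m}$. For the lower bound, since every $\hat\rho\in\dgf_s(n_1,\dots,n_m)$ is realised by at least one profile $P$ (by the definition of $\dgf_s$), each summand $\prod_{(\rho_i,k_i)\in\hat\rho}\rho_i^{n_i}n_i^{k_i}$ of $\z{n_1,\cdots,n_m}$ is bounded above by the single corresponding term $\prod_i (C_i^P)^{-1}(B_i^P)^{n_i}_{s_i,e_i}\le (\prod_i (C_i^P)^{-1})\,\nu_s(w)$; dividing by the number of summands yields $c\cdot\z{n_1,\cdots,n_m}\le \nu_s(w)$.

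The main obstacle I anticipate is the interplay between different blocks in the definition of $\dgf_s$: a profile induces a fixed tuple $\hat\sigma^P$, but whether $\hat\sigma^P$ is \emph{maximal} (hence in $\dgf_s$) depends on $(n_1,\dots,n_m)$, since two profiles with incomparable tuples may both be ``active'' and the maximal set can vary with the vector. The delicate point is to make sure the sum $\z{n_1,\cdots,n_m}$ over \emph{maximal} $\hat\rho$ is indeed $\Theta$-equivalent to the sum over \emph{all} active tuples $\hat\sigma^P$ — the upper bound (maximal dominates) needs the monomial-comparison split described above, and the fact that the ``for all but finitely many $n_i$'' exceptions in each coordinate, taken across finitely many profile-pairs, only ever enlarge the constants and never spoil uniformity. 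A secondary routine subtlety is verifying that restricting $\wa$ to a block genuinely yields a unary automaton to which \cref{lemma:unarythetabound} applies, including that its SCCs and their spectral radii coincide with the relevant SCCs and spectral radii of $\wa$ (so that the $(\rho_i,k_i)$ produced matches \cref{def:boundedcompat}); this is where \cref{claim:plbadditionaform} is essential.
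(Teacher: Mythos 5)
Your proposal is correct and follows essentially the same route as the paper's proof: expand $\nu_s(a_1^{n_1}\cdots a_m^{n_m})$ as a sum over the (finitely many) tuples of block-boundary states, apply the unary big-$\Theta$ lemma (\cref{lemma:unarythetabound}) to each factor $M(a_i)^{n_i}_{q_{i-1},q_i}$ with the $(\delta,0)$-pair handling short blocks, and then absorb non-maximal and duplicate tuples into the constants via the monomial-domination argument. The "main obstacle" you flag (maximality varying with $(n_1,\dots,n_m)$) is exactly the point the paper also addresses, and your resolution matches its.
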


\begin{proof}

\begin{align*}
\nu_s(a_1^{n_1}a_2^{n_2}\dots a_m^{n_m}) &=  (M(a_1)^{n_1}\times M(a_2)^{n_2}\times \dots \times M(a_m)^{n_m})_{s,t}
\\ &= \sum_{q_1 \in Q} M(a_1)^{n_1}_{s,q_1}(\times M(a_2)^{n_2}\times \dots \times M(a_m)^{n_m})_{q_1,t}
\\ & \quad \vdots
\\ &= \sum_{(q_1,\dots,q_{m-1})\in Q^{m-1}} M(a_1)^{n_1}_{s,q_1}\times M(a_2)^{n_2}_{q_1,q_2}\times \dots \times M(a_m)^{n_m}_{q_{m-1},t}
\end{align*}
Let $s = q_0$ and $t = q_m$, then a sequence $q_0,\dots,q_m$ describes a possible path through the automaton.

By \cref{lemma:unarythetabound} in the unary case, for each $M(a_i)^{n_i}_{q_{i-1},q_i} >0$, there is a $(\rho_{q_{i-1},q_i},k_{q_{i-1},q_i}),c,C$, such that if $n_i > |Q|$ \begin{equation}\label{eq:boundingcc}c\rho_{q_{i-1},q_i}^{n_i}n_i^{k_{q_{i-1},q_i}} \le M(a_i)^{n_i}_{q_{i-1},q_i} \le C \rho_{q_{i-1},q_i}^{n_i}n_i^{k_{q_{i-1},q_i}}.\end{equation}

Otherwise if $n_i \le |Q|$, there exists $c,C$ satisfying \cref{eq:boundingcc} regardless of the value of $\dgf_{q_{i-1},q_i}(n_i) = (\rho,k)$, since there are at most $|Q|$ characters thus bounding $M(a_i)^{n_i}_{q_{i-1},q_i}$. In particular \cref{eq:boundingcc} holds for $(\rho,k) = (\delta,0)$, that is, if no path of length $n_i$ between $q_{i-1}$ and $q_i$ goes through a loop.

When $M(a_i)^{n_i}_{q_{i-1},q_i}$ is zero, $M(a_i)^{n_i}_{q_{i-1},q_i}$ correspond to the $(0,0)$-pair, and $M(a_i)^{n_i}_{q_{i-1},q_i} = 0^{n_i}n_i^{0} = 0$).

Take $c,C$ so that $C$ is maximised over all such $C$ and $c$ is minimised over all such $c$. We have  $
c^{m} \tilde{z}(n_1,\dots,n_m)  \le \nu_s(a_1^{n_1}a_2^{n_2}\dots a_m^{n_m}) \le  C^{m}\tilde{z}(n_1,\dots,n_m $, where
\[
\tilde{z}(n_1,\dots,n_m) = \sum_{(q_1,\dots,q_{m-1})\in Q^{m-1}} \rho_{s,q_1}^{n_1}n_1^{k_{s,q_1}}
\cdot \ldots \cdot \rho_{q_{m-1},t}^{n_m}n_m^{k_{q_{m-1},t}}.
\]

We observe that $\tilde{z}(n_1,\dots,n_m)$ is nearly $z(n_1,\dots,n_m)$.  First we omit paths which contain a $(0,0)$ element, then each summand $\rho_{s,q_1}^{n_1}n_1^{k_{s,q_1}}\cdot
\ldots \cdot \rho_{q_{m-1},t}^{n_m}n_m^{k_{q_{m-1},t}}$ of the sum corresponds with a candidate element $\hat{\rho}$ of $\dgf_s(n_1,\dots,n_m)$.
Given any two elements where $\hat{\rho}= \hat{\sigma}$ only one representative need be kept in $\dgf_s(n_1,\dots,n_m)$ by replacing $c$ and $C$ with $c/2$ and $2C$.

Similarly, we only need to keep maximal elements. By standard manipulations, given $\hat{\sigma}=(\sigma_1,\ell_1)\dots(\sigma_m,\ell_m),\hat{\rho}=(\rho_1,k_1)\dots(\rho_m,k_m)$,  such that $(\sigma_i,\ell_i) \le (\rho_i,k_i)$ for all $i\in\nat{m}$, then for some $c',C'$ we have
\[
 c' \prod_{(\rho_i,k_i) \in \hat{\rho}} \ \rho_i^{n_i} \; n_i^{k_i}
 \le \prod_{(\sigma_i,\ell_i) \in \hat{\sigma}} \ \sigma_i^{n_i} \; n_i^{\ell_i}
  + \prod_{(\rho_i,k_i) \in \hat{\rho}} \ \rho_i^{n_i} \; n_i^{k_i}
  \le C' \prod_{(\rho_i,k_i) \in \hat{\rho}} \ \rho_i^{n_i} \; n_i^{k_i}.\] and so  $\hat{\sigma}$ can be omitted from $\dgf_s(n_1,\dots,n_m)$ by replacing $c$ and $C$ with $cc'$ and $CC'$.
\end{proof}

The following lemma provides the key characterisation of negative instances of the big-O problem, in the plus-letter-bounded case and assuming the LC condition. Here and below, we write $n(t)$ to refer to the the $t$th vector in a sequence $n\colon \mathbb{N}\to\mathbb{N}^m$.

\begin{lem}[Main lemma]\label{lem:natlogicalformulationfirst}
Assume  $\lng{s}{\wa}\subseteq \lng{s'}{\wa}$.  Then
$s$ is not big-O of $s'$ if and only if there exists a sequence $n \colon \mathbb{N} \to \mathbb{N}^m$ and $X\in\adm$, $\mathcal{Y}\subseteq\adm$ such that
\begin{enumerate}[(a)]
\item \label{item:condition1} $X\in \dgf_s(n(t))$ and $\mathcal{Y}=\dgf_{s'}(n(t))$ for all $t$, and
\item \label{item:condition2} for all $j \in h_{\mathcal{Y}}$, the sequence $n$ satisfies
\[\sum_{i = 1}^m \alpha_{j,i} \,n(t)_i +  p_{j,i}\log n(t)_i \trns[t\to\infty]{} -\infty,\]
\end{enumerate}
where $h_{\mathcal{Y}} \subseteq \{1,\dots, |\mathcal{Y}|\}$, $\alpha_{j,i}\in \mathbb{R}$, $p_{j,i}\in \mathbb{Z}$ ($1\le i \le m$) are uniquely determined by $X$ and $\mathcal{Y}$ (in a way detailed below),  $h_{\mathcal{Y}}$  and $p_{j,i}$'s are effectively computable and $\alpha_{j,i}$'s are first-order expressible (with exponential function).

\end{lem}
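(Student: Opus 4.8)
The plan is to unwind the definition of big-O, replace the weights $\nu_s(\cdot)$ and $\nu_{s'}(\cdot)$ by the exponential-polynomial surrogates furnished by \cref{lemma:thetaletterbounded}, and then use the finiteness of the ranges of $\dgf_s$ and $\dgf_{s'}$ to convert an unbounded ratio into a single divergent sequence of vectors. First, $s$ is not big-O of $s'$ precisely when $\sup_w \nu_s(w)/\nu_{s'}(w)=\infty$. By the standing LC assumption, $\lng{s}{\wa}\subseteq\lng{s'}{\wa}\subseteq a_1^+\cdots a_m^+$, so any word contributing a non-zero ratio has the form $a_1^{n_1}\cdots a_m^{n_m}$ with every $n_i\ge 1$ and $\nu_{s'}(\cdot)>0$; hence there is a sequence $n\colon\mathbb N\to\mathbb N_{>0}^m$ with $\nu_s(a_1^{n(t)_1}\cdots a_m^{n(t)_m})/\nu_{s'}(a_1^{n(t)_1}\cdots a_m^{n(t)_m})\to\infty$, and, since a bounded set of vectors yields only finitely many words, some coordinate of $n(t)$ must be unbounded. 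Passing numerator and denominator through \cref{lemma:thetaletterbounded} shows this is equivalent to $z_s(n(t))/z_{s'}(n(t))\to\infty$, where $z_s(n)=\sum_{X\in\dgf_s(n)}\prod_{(\rho_i,k_i)\in X}\rho_i^{n_i}n_i^{k_i}$ and $z_{s'}$ is the analogous sum over $\dgf_{s'}(n)$.

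Next I would exploit finiteness. Since $\dgf_s(n)\subseteq\adm$ and $\dgf_{s'}(n)\subseteq\adm$, and $z_s(n)\le|\adm|\cdot\max_{X\in\dgf_s(n)}\prod_i\rho_i^{n_i}n_i^{k_i}$, a pigeonhole argument extracts a subsequence along which a single $X\in\adm$ lies in $\dgf_s(n(t))$ and realises this maximum, while $\dgf_{s'}(n(t))$ is constantly equal to some fixed $\mathcal Y$; refining once more, each coordinate $n(t)_i$ is eventually constant or tends to $\infty$. This yields condition~\cref{item:condition1}. On this subsequence, $z_s(n(t))/z_{s'}(n(t))\to\infty$ is equivalent, up to the bounded factors $|\adm|$ and $|\mathcal Y|$, to $\prod_i\rho_i^{n(t)_i}n(t)_i^{k_i}\big/\prod_i(\sigma^{(j)}_i)^{n(t)_i}n(t)_i^{\ell^{(j)}_i}\to\infty$ for every $Y_j=((\sigma^{(j)}_i,\ell^{(j)}_i))_i\in\mathcal Y$; taking logarithms produces exactly condition~\cref{item:condition2} with $\alpha_{j,i}=\log\sigma^{(j)}_i-\log\rho_i$ and $p_{j,i}=\ell^{(j)}_i-k_i$, and $h_{\mathcal Y}$ collecting the indices $j$ for which this divergence is an informative constraint rather than one already forced by \cref{item:condition1}. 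The converse direction runs these equivalences backwards: from $n,X,\mathcal Y$ as in the statement, the lower bound of \cref{lemma:thetaletterbounded} applied through the single summand $X$ and the upper bound applied through $z_{s'}=\sum_{Y\in\mathcal Y}\prod_i\sigma_i^{n_i}n_i^{\ell_i}$ give $\nu_s/\nu_{s'}\ge\mathrm{const}\cdot\min_{Y\in\mathcal Y}\big(\prod_i\rho_i^{n_i}n_i^{k_i}\big/\prod_i\sigma_i^{n_i}n_i^{\ell_i}\big)$, which tends to $\infty$ by \cref{item:condition2}, so $s$ is not big-O of $s'$.

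I expect the main obstacle to be the bookkeeping, at two points. First, \cref{lemma:thetaletterbounded}, like its unary ancestor \cref{lemma:unarythetabound}, carries the auxiliary value $\delta$ for coordinates with $n_i\le|Q|$: one must check that a coordinate on which a $(\delta,0)$ pair occurs behaves harmlessly, i.e.\ it pins that coordinate of $n(t)$ to a bounded value and the corresponding terms $\alpha_{j,i}n(t)_i+p_{j,i}\log n(t)_i$ are then immaterial to the limit. Second, and more delicately, one must define $h_{\mathcal Y}$, the reals $\alpha_{j,i}$, and the integers $p_{j,i}$ precisely enough that the equivalence is tight in both directions and that this data is effective: $h_{\mathcal Y}$ and the $p_{j,i}$ should be computable from $X$ and $\mathcal Y$ by direct inspection of the annotated automaton, whereas each $\alpha_{j,i}$ is a difference of logarithms of spectral radii, which are algebraic by \cref{lem:algebraic} but whose logarithms are expressible only in the first-order theory \rexp\ of the reals with exponentiation. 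It is this last feature that forces the subsequent decidability argument to route through \rexp, and hence through Schanuel's conjecture.
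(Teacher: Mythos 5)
Your proposal is correct and follows essentially the same route as the paper's proof: reduce to the exponential-polynomial surrogates via \cref{lemma:thetaletterbounded}, fix $X$ and $\mathcal{Y}$ by finiteness of $\adm$, require each summand of the reciprocal to vanish, and take logarithms to obtain condition~(b) with the same $\alpha_{j,i}$ and $p_{j,i}$. The one detail you leave open, the precise definition of $h_{\mathcal{Y}}$, is resolved in the paper exactly as you anticipate: $h_{\mathcal{Y}}$ is the set of $j$ with $\sigma_{ji}>0$ for all $i$, i.e.\ the indices whose summands do not vanish identically.
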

\begin{proof}
Observe that then $s$ is \emph{not} big-O of $s'$ iff there exists an infinite sequence of words such that, for  all $C >0$,
the sequence contains a word $w$ such that $\frac{\nu_s(w)}{\nu_{s'}(w)} > C$.
Thanks to \cref{lemma:thetaletterbounded}, this is equivalent to the existence of  a sequence $n : \mathbb{N} \to \mathbb{N}^m$ such that \begin{equation*}
\frac{\displaystyle\sum_{X\in \dgf_{s}(n(t)_1,\dots, n(t)_m)} \; \prod_{(\rho_i,k_i) \in X} \  \rho_i^{n(t)_i}\; n(t)_i^{k_i}}{\displaystyle\sum_{Y\in \dgf_{s'}(n(t)_1,\dots, n(t)_m) \;}\prod_{(\sigma_i,\ell_i) \in Y} \  \sigma_i^{n(t)_i}\; n(t)_i^{\ell_i}} \trns[t\to\infty]{} \infty,
\end{equation*}
where $n(t)_i$ denotes  the $i$th component of $n(t)$.
Since there are finitely many possible values of $\dgf_s$ and $\dgf_{s'}$, it suffices to look for sequences $n$ such that $\dgf_s(n(t))$ and $\dgf_{s'}(n(t))$ are fixed.
Further, because of the sum in the numerator, only one $X \in \mathcal{X}$ is required such that $X \in \dgf_s(n_1,\dots,n_m)$.
Thus, we need to determine
whether there exist $X\in\adm$, $\mathcal{Y}\subseteq \adm$ and $n: \mathbb{N} \to \mathbb{N}^m$
such that $X \in \dgf_s(n(t))$, $\dgf_{s'}(n(t)) =\mathcal{Y}$ (for all $t$) and
\begin{equation*}
\frac{\prod_{i=1}^m \rho_i^{n(t)_i}\; n(t)_i^{k_i}}{\sum_{j=1}^{h_\mathcal{Y}} \prod_{i=1}^m \sigma_{ji}^{n(t)_i}\; n(t)_i^{\ell_{ji}}} \trns[t\to\infty]{} \infty.
\end{equation*}
where $X = (\rho_1, k_1)\cdots (\rho_m, k_m)$,
$\mathcal{Y} = \{ Y_1,\cdots, Y_{|\mathcal{Y}|} \}$, and $Y_j= (\sigma_{j1},\ell_{j1})\cdots(\sigma_{jm},\ell_{jm})$ ($1\le j\le |\mathcal{Y}|$).
Taking the reciprocal and requiring each of the summands to go to zero, we obtain
\[
\frac{\prod_{i=1}^m \sigma_{ji}^{n(t)_i}\; n(t)_i^{\ell_{ji}}}{\prod_{i=1}^m \rho_i^{n(t)_i} \; n(t)_i^{k_i}}= \prod_{i = 1}^m\left(\frac{\sigma_{ji}}{\rho_i}\right)^{{n(t)_i}} {n(t)_i}^{\ell_{ji} - k_i }  \trns[t\to\infty]{} 0
\qquad \text{for all $1\le j\le |\mathcal{Y}|$.}
\]
If we take  logarithms, letting $\alpha_{j,i} = \log(\frac{\sigma_{ji}}{\rho_i})$ and $p_{j,i} = \ell_{ji} - k_i$, we get
\[
\sum_{i = 1}^m \alpha_{j,i} \,n(t)_i +  p_{j,i}\log n(t)_i \trns[t\to\infty]{} -\infty\] for all $j $ in $h_\mathcal{Y}= \{ 1\le j \le |\mathcal{Y}| \mid  \sigma_{ji} >0 \text{ for all }  1\le i\le m \}.$

The number $\alpha_{j,i}$ is the logarithm of the ratio of two algebraic numbers, which are not given explicitly. However, they admit an unambiguous, first-order expressible characterisation (recall \cref{lem:algebraic}). The logarithm is encoded using the exponential function: $\log(z)$ is the $x \in \mathbb{R}$ such that $\exp(x) = z$.
\end{proof}

\cref{lem:natlogicalformulationfirst} identifies violation of the big-O property using two conditions. In the remainder of this subsection we will handle  Condition~\ref{item:condition1} using automata-theoretic tools (the Parikh theorem and semi-linear sets) and Condition~\ref{item:condition2} using logics. In summary, the characterisation of \cref{lem:natlogicalformulationfirst} will be expressed in the first-order theory of the reals with exponentiation, which is decidable subject to Schanuel's conjecture.

\paragraph*{Condition~\ref{item:condition1} via automata}

It turns out that sequences $n$ satisfying Condition~\ref{item:condition1} in \cref{lem:natlogicalformulationfirst} can be captured by a finite automaton.  In more detail, for any  $X\in \mathcal{D}$, there exists an automaton $\mathcal{N}^s_{X}$ such that
$\lng{}{\mathcal{N}^s_{X}} = \{ a_1^{n_1}\cdots a_m^{n_m} \ | \ X\in \dgf_s(n_1,\cdots,n_m) \}$. For any $\mathcal{Y}\subseteq \mathcal{D}$, there exists an automaton $\mathcal{N}^s_{\mathcal{Y}}$ such that
$\lng{}{\mathcal{N}^s_{\mathcal{Y}}} = \{ a_1^{n_1}\cdots a_m^{n_m} \ | \ \dgf_s(n_1,\cdots,n_m) = \mathcal{Y} \}$. The relevant automaton capturing $X$ and $\mathcal{Y}$ is then found by taking  the intersection of $\lng{}{\mathcal{N}^s_{{X}}}$ and $\lng{}{\mathcal{N}^{s'}_{\mathcal{Y}}}$.
\begin{lem} \label{lem:detectors}
For any $X\in\adm$ and $\mathcal{Y}\subseteq\adm$, there exists an automaton $\mathcal{N}_{X,\mathcal{Y}}$ such that
$\lng{}{\mathcal{N}_{X,\mathcal{Y}}} = \{ a_1^{n_1}\cdots a_m^{n_m}\ |\  X\in \dgf_s(n_1,\cdots,n_m),\, \mathcal{Y} = \dgf_{s'}(n_1,\cdots, n_m) \}$.
\end{lem}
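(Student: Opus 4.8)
The plan is to realise $\mathcal{N}_{X,\mathcal{Y}}$ as the product (the standard intersection construction over the common alphabet $\{a_1,\dots,a_m\}$) of two automata $\mathcal{N}^s_X$ and $\mathcal{N}^{s'}_{\mathcal{Y}}$ with $\lng{}{\mathcal{N}^s_X}=\{a_1^{n_1}\cdots a_m^{n_m}\mid X\in\dgf_s(n_1,\dots,n_m)\}$ and $\lng{}{\mathcal{N}^{s'}_{\mathcal{Y}}}=\{a_1^{n_1}\cdots a_m^{n_m}\mid \dgf_{s'}(n_1,\dots,n_m)=\mathcal{Y}\}$, since intersecting these two languages gives precisely the set in the statement. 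Both automata will in turn be obtained by finite Boolean combinations from one basic gadget: for a start state $r\in\{s,s'\}$ and a tuple $\hat\rho\in\adm$, a \emph{compatibility detector} $\mathcal{C}^r_{\hat\rho}$ whose language consists of those words $a_1^{n_1}\cdots a_m^{n_m}$ for which some $a_1^{n_1}\cdots a_m^{n_m}$-labelled path from $r$ is compatible with $\hat\rho$ in the sense of \cref{def:boundedcompat}. I would start by invoking \cref{claim:plbadditionaform} so that the $m$ blocks use pairwise distinct letters and each transition of $\wa$ belongs to exactly one block, and intersect everything at the end with the DFA for $a_1^+\cdots a_m^+$; this makes block boundaries (a first $a_{i+1}$ after a run of $a_i$'s) unambiguously detectable while reading a word, and keeps the languages inside the bounding expression.

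The detector $\mathcal{C}^r_{\hat\rho}$ is built by chaining $m$ copies of the annotated automaton $\waannotated$ of \cref{defn:annotated}, one per block. Inside block $i$ the machine behaves like the annotated version of the unary automaton $M(a_i)$, so its local state there is a triple $(q,\rho,k)$ recording the current state $q\in Q$ of $\wa$ together with the running $(\rho,k)$-pair, updated on each $a_i$-transition by the four rules of \cref{defn:annotated}; in addition the global state carries the block index $i$, a bit recording whether a loop has been taken in block $i$, and the sequence of pairs $(\rho_1,k_1),\dots,(\rho_{i-1},k_{i-1})$ already frozen for the preceding blocks. On reading the first $a_{i+1}$, the running pair for block $i$ is frozen --- replaced by $(\delta,0)$ if no loop was taken in it, as prescribed by \cref{def:boundedcompat} --- the index advances to $i+1$, and a fresh copy of the $M(a_{i+1})$-annotation is started from $q$. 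Accepting states are those in which $\wa$ sits in its final state, block $m$ is complete, and the frozen tuple of $m$ pairs equals $\hat\rho$. Correctness of $\mathcal{C}^r_{\hat\rho}$ follows by induction over the blocks, applying \cref{lemma:unarythetabound} within each block exactly as in the proof of \cref{lemma:thetaletterbounded}; the state set is finite, and the whole construction is effective because all spectral radii involved, and all comparisons among them, are computable (\cref{lem:algebraic}).

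From the detectors the two target automata follow by closure properties of regular languages. By \cref{def:boundeddegree}, $X\in\dgf_s(n_1,\dots,n_m)$ holds exactly when $X$ is a compatible behaviour and no strictly larger element of $\adm$ (in the pointwise order) is a compatible behaviour, so, $\adm$ being finite, I would set
\[
\lng{}{\mathcal{N}^s_X}\ =\ \lng{}{\mathcal{C}^s_X}\ \cap\!\!\bigcap_{\substack{X'\in\adm\\ X\le X',\ X\ne X'}}\!\!\overline{\lng{}{\mathcal{C}^s_{X'}}}.
\]
Likewise $\dgf_{s'}(n_1,\dots,n_m)=\mathcal{Y}$ holds exactly when every $Y\in\mathcal{Y}$ belongs to $\dgf_{s'}(n_1,\dots,n_m)$ and no $Y'\in\adm\setminus\mathcal{Y}$ does, so, with each $\mathcal{N}^{s'}_Y$ defined as above, I would set
\[
\lng{}{\mathcal{N}^{s'}_{\mathcal{Y}}}\ =\ \bigcap_{Y\in\mathcal{Y}}\lng{}{\mathcal{N}^{s'}_Y}\ \cap\!\!\bigcap_{Y'\in\adm\setminus\mathcal{Y}}\!\!\overline{\lng{}{\mathcal{N}^{s'}_{Y'}}}.
\]
All of these are regular and effectively constructible (complement and intersect NFAs, determinising where convenient), and $\mathcal{N}_{X,\mathcal{Y}}$ is the product automaton for $\lng{}{\mathcal{N}^s_X}\cap\lng{}{\mathcal{N}^{s'}_{\mathcal{Y}}}$, which is precisely $\{a_1^{n_1}\cdots a_m^{n_m}\mid X\in\dgf_s(n_1,\dots,n_m),\ \mathcal{Y}=\dgf_{s'}(n_1,\dots,n_m)\}$.

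The main obstacle is the correctness of the gadget $\mathcal{C}^r_{\hat\rho}$: one must pin down, and check against \cref{def:boundedcompat}, which SCC decomposition is the relevant one inside each block (that of $M(a_i)$), the $(\delta,0)$-versus-$(0,0)$ distinction between a loop-free short segment and an absent path, and the bookkeeping at block boundaries --- the latter being exactly the reason to first pass to the normal form of \cref{claim:plbadditionaform}. Once the gadget is in place, the rest is a routine application of the Boolean closure properties of finite automata.
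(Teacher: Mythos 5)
Your proposal is correct and follows essentially the same route as the paper: both build an annotated automaton that tracks, block by block, the running $(\rho,k)$-pair of a nondeterministically guessed path (with the $(\delta,0)$ convention for loop-free segments), then use Boolean combinations over the finite set $\adm$ to isolate the maximal compatible tuples, and finally intersect the $s$- and $s'$-automata; the only cosmetic difference is that you use exact compatibility detectors complemented against all strictly larger tuples, whereas the paper builds $\ge\hat\rho$ and $>\hat\rho$ detectors and takes their set difference. (One small note: correctness of the gadget is purely a matter of matching \cref{def:boundedcompat} path-combinatorially; \cref{lemma:unarythetabound} is not needed there.)
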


\begin{proof}
Let $\hat{\rho}=(\rho_1,k_1)\cdots (\rho_m,k_m)$.
Building on \cref{defn:annotated},
one can construct an automaton $\mathcal{N}^s_{{\ge \hat{\rho}}}$ with
\[
  \lng{}{\mathcal{N}^s_{{\ge \hat{\rho}}}} = \{a_1^{n_1} \dots a_m^{n_m}\ |\ { \exists\hat{\sigma} \in \dgf_s(n_1,\dots,n_m)\,\, \hat{\sigma}\ge \hat{\rho} }\}
\]
by tracking the current maximum spectral radius seen and the number of different SCCs with this spectral radius. If the only states seen so far have been singletons with no loops (formally having spectral radius $0$), the value should be tracked as $(\delta,0)$ regardless of how many have been seen.

Passage from states reading $a_j$ to states reading $a_{j+1}$ is allowed only if the tracked value is at least $(\rho_j, k_j)$,
and states should be final if the tracked value of $a_m$ is at least $(\rho_m, k_m)$.
As previously, comparisons are with respect to the partial order of \cref{defn:lexiorder}.

Similarly, one can construct $\mathcal{N}^s_{{>\hat{\rho}}}$ with
\[
  \lng{}{\mathcal{N}^s_{{>\hat{\rho}}}} = \{a_1^{n_1}\dots a_m^{n_m}\ |\ { \exists\hat{\sigma} \in \dgf_s(n_1,\dots,n_m)\,\, \hat{\sigma} > \hat{\rho} }\}.
\]
The construction is the same as for $\mathcal{N}^s_{{\ge \hat{\rho}}}$ except that, in order to accept, we need to be sure that at least
one of the `at least' comparisons was strict. This can be achieved by maintaining an extra bit at run time.

Note that $\lng{}{\mathcal{N}^s_{{\ge \hat{\rho}}}} \setminus \mathcal{N}^s_{{>\hat{\rho}}}$ contains  all $a_1^{n_1} \dots a_m^{n_m}$ such that there exists $\hat{\sigma}\in \dgf_s(n_1,\cdots,n_m)$ with
$\hat{\sigma}\ge \hat{\rho}$ and, for all $\hat{\tau}\in \dgf_s(n_1,\cdots,n_m)$, we do \emph{not} have $\hat{\tau}>\hat{\rho}$. Consequently, we must have $\hat{\rho}\in \dgf_s(n_1,\cdots,n_m)$, which
implies (by maximality) that we cannot have $\hat{\tau} > \hat{\rho}$ for any $\hat{\sigma}\in   \dgf_s(n_1,\cdots,n_m)$. Hence,
\[
\lng{}{\mathcal{N}^s_{{\ge \hat{\rho}}}} \setminus \lng{}{\mathcal{N}^s_{{>\hat{\rho}}}} = \{ a_1^{n_1}\dots a_m^{n_m} \ |\ \hat{\rho}\in \dgf_s(n_1,\cdots,n_m) \}.
\]
Denote by $\mathcal{N}^s_{{\hat{\rho}}}$ the corresponding automaton;
we can take $\mathcal{N}^s_{X} = \mathcal{N}^s_{{\hat{\rho}}}$.

Given $\mathcal{Y}\subseteq \mathcal{D}$, we can then take $\mathcal{N}^{s'}_{\mathcal{Y}}$ to be the automaton corresponding to
\[
\bigcap_{\hat{\rho}\in\mathcal{Y}} \lng{}{\mathcal{N}^{s'}_{{\hat{\rho}}}} \cap \bigcap_{\hat{\rho}\in \mathcal{D}\setminus\mathcal{Y}} \pbra{a_1^+\cdots a_m^+ \setminus \lng{}{\mathcal{N}^{s'}_{{\hat{\rho}}}}}.
\]
The relevant automaton $\mathcal{N}_{X,\mathcal{Y}}$   is then found by taking  the intersection of $\lng{}{\mathcal{N}^s_{{X}}}$ and $\lng{}{\mathcal{N}^{s'}_{\mathcal{Y}}}$.\qedhere
\end{proof}

Because of our $a_i\neq a_{j}$ assumption, the vector $(n_1,\cdots,n_m)$ indicates the number of occurrences of each character.
The set of such vectors derived from the language of an automaton  is known  as the \emph{Parikh image} of this language~\cite{parikh1966context}.
It is  well known that the Parikh image of every regular language is a semi-linear set, i.e. a finite union of linear sets
(a linear set  has the form $\{\basevec+\lambda_1\vec{r}^1 + \dots + \lambda_s \vec{r}^s\ |\   \lambda_1,\dots,\lambda_s \in \mathbb{N} \}$,
where $\basevec \in \mathbb{N}^m$ is the base vector and $\vec{r}^1,\cdots \vec{r}^s\in\mathbb{N}^m$ are called period vectors).
However, since $\lng{}{\mathcal{N}_{X,\mathcal{Y}}} \subseteq \chr_1^+\chr_2^+\dots \chr_m^+$,
the linear sets are of a very particular form,  where each $\vec{r}^i$ is a constant multiple of the $i$th unit vector.

\begin{lem}\label{claim:finiteunion}
The Parikh image of every plus-letter-bounded regular language
is a finite union of linear sets, where each period vector is a constant
multiple of some unit vector.

In particular,
the language of $\mathcal{N}_{X,\mathcal{Y}}$ can be effectively decomposed as $\lng{}{\mathcal{N}_{X,\mathcal{Y}}} = \bigcup\nolimits_{k = 1}^{S_{X,\mathcal{Y}}}\mathcal{L}_k$,
where
$\mathcal{L}_k = \left\{ \chr_1^{\basevecraw_{k1}+ \pvecraw_{k1}\lambda_1}\cdots \chr_m^{\basevecraw_{km}+\pvecraw_{km}\lambda_m } \ | \ \lambda_1,\cdots,\lambda_m\in\mathbb{N}\right\}$, $S_{X,\mathcal{Y}}\in\mathbb{N}$ and  $b_{ki},r_{ki}\in\mathbb{N}$  $(1\le k \le S_{X,\mathcal{Y}}$, $1\le i \le m)$.
\end{lem}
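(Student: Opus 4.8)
The plan is to exploit the block structure of $\mathcal{N}_{X,\mathcal{Y}}$. Recall that, under our standing assumption (\cref{claim:plbadditionaform}), the letters $\chr_1,\dots,\chr_m$ are pairwise distinct, and by the construction in \cref{lem:detectors} the state set of $\mathcal{N}_{X,\mathcal{Y}}$ splits into \emph{phases} $Q_1 \uplus \dots \uplus Q_m$, where every $\chr_i$-labelled transition leaves a phase-$i$ state and enters a phase-$i$ or a phase-$(i{+}1)$ state (for $i=m$ it may instead enter the accepting sink), and the initial state lies in phase~$1$. For the general statement about an arbitrary plus-letter-bounded regular language one first trims the recognising NFA and takes its product with the minimal DFA for $\chr_1^+\cdots\chr_m^+$; since the $\chr_i$ are distinct, every surviving transition then belongs to exactly one block (otherwise one could ``skip'' a block and produce a word outside $\chr_1^+\cdots\chr_m^+$), which yields the same phase structure.

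First I would observe that an accepting run on a word $\chr_1^{n_1}\cdots\chr_m^{n_m}$ is pinned down, block by block, once we fix the states at which it crosses from phase~$i$ to phase~$i{+}1$ (for $1 \le i < m$) together with the accepting state reached at the end; call such a tuple a \emph{profile}, and note there are at most $|Q|^{m}$ of them. For a fixed profile $\pi$, the vector $(n_1,\dots,n_m) \in \mathbb{N}_{>0}^m$ is realised by a run with profile $\pi$ if and only if, for each $i$, there is an appropriately long $\chr_i$-labelled path between the two phase-$i$ states prescribed by $\pi$ using only phase-$i$ transitions; crucially these $m$ conditions are independent of one another. Hence the set of vectors realised under $\pi$ is a Cartesian product $P_1^\pi \times \dots \times P_m^\pi$, and the Parikh image of $\lng{}{\mathcal{N}_{X,\mathcal{Y}}}$ is the finite union of these products over all profiles.

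Next, each $P_i^\pi$ differs by at most a fixed shift from the set of lengths of walks between two fixed vertices of a finite directed graph, which is a well-known example of an \emph{ultimately periodic} subset of $\mathbb{N}$ (this is essentially the content behind Chrobak normal form); in particular it is effectively a finite union of one-dimensional linear sets $\{b + r\lambda \mid \lambda \in \mathbb{N}\}$, with $r = 0$ permitted to absorb the finitely many exceptional lengths. Distributing each Cartesian product over these finite unions rewrites $P_1^\pi \times \dots \times P_m^\pi$ as a finite union of sets $\{(b_1 + r_1\lambda_1, \dots, b_m + r_m\lambda_m) \mid \lambda_1,\dots,\lambda_m \in \mathbb{N}\}$, i.e.\ linear sets whose period vectors are the $r_i\vec{e}_i$, each a constant multiple of a unit vector. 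Taking the union over the finitely many profiles and re-encoding each vector $(n_1,\dots,n_m)$ as the word $\chr_1^{n_1}\cdots\chr_m^{n_m}$ (unambiguous since the $\chr_i$ are distinct) gives the asserted decomposition $\lng{}{\mathcal{N}_{X,\mathcal{Y}}} = \bigcup_{k=1}^{S_{X,\mathcal{Y}}}\mathcal{L}_k$. All ingredients — the phase decomposition, the enumeration of profiles, the phase adjacency matrices, and the ultimately-periodic decomposition of walk lengths — are computable, so the decomposition is effective.

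The main obstacle I anticipate is organisational rather than mathematical: setting up the ``profile'' bookkeeping so that the Cartesian-product (independence) claim is genuinely airtight, in particular handling the boundary transitions between consecutive blocks and the $n_i \ge 1$ constraints without off-by-one slips, and — if one wants a self-contained account — recalling the standard argument that the set of walk lengths between two vertices of a finite directed graph is ultimately periodic. Everything beyond that is routine Parikh-style reasoning.
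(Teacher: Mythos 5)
Your proposal is correct and follows essentially the same route as the paper: your ``profiles'' (fixing the block-crossing states and the accepting state) correspond to the paper's restriction of $\mathcal{N}_{X,\mathcal{Y}}$ to a single state permitted to transition from $\chr_i$ to $\chr_{i+1}$, and your appeal to the ultimate periodicity of walk-length sets between two fixed vertices is exactly what the paper extracts via Chrobak normal form (with the further choice of one accepting cycle per block). The independence/Cartesian-product observation and the final distribution over finite unions match the paper's argument.
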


\begin{proof}
Consider the machine $\mathcal{N}_{X,\mathcal{Y}}$, accepting a language which is a subset of $\chr_1^+\chr_2^+\dots \chr_m^+$, with any state not reachable from the starting state or not leading to an accepting state removed. To induce a form with the property we want, we intersect $\mathcal{N}_{X,\mathcal{Y}}$ with the standard DFA\footnote{By DFA we permit a partial transition function, that is 0 or 1 transition for each character from every state, rather than exactly 1.} for $\chr_1^+\chr_2^+\dots \chr_m^+$, without changing the language.

Hence every state corresponds to reading from exactly one character block of $\chr_1,\chr_2,\dots,\chr_m$. At each state there can be at most two characters enabled, either the character to remain in the current character block, or the character to move to the next. Every state can be labelled as \begin{itemize} \item  \textit{only having transition for $\chr_{i}$}; or \item \textit{also having transition with $\chr_{i+1}$}.\end{itemize} Consider all possible choices of automaton formed by restricting $\mathcal{N}_{X,\mathcal{Y}}$ so that there is a single state which is allowed to transition from $\chr_{i}$ to $\chr_{i+1}$ for each $i$ and any other state which had this property in $\mathcal{N}_{X,\mathcal{Y}}$ has its $\chr_{i+1}$ transitions removed (but keeps its $\chr_i$ transitions). Each such choice corresponds with a partition of the accepting runs of $\mathcal{N}_{X,\mathcal{Y}}$.

Thus $\lng{}{\mathcal{N}_{X,\mathcal{Y}}}$ is the finite union over the languages induced by all such machines. We further show that such machines can further be expressed as a finite union of  linear sets in the form prescribed.

Let us assume $\mathcal{N}_{X,\mathcal{Y}}^j$ is such a machine with a single state capable of transitioning from $\chr_{i}$ to $\chr_{i+1}$ for each $i$, and again remove any state not reachable from the starting state or not leading to an accepting state. The part of the machine reading $a_i$ has a single starting state and a single final state, which is a unary NFA when the transitions to $\chr_{i+1}$ are discarded.

This unary NFA can be converted to Chrobak normal form; the section of  $\mathcal{N}_{X,\mathcal{Y}}^j$ corresponding to $a_i$ can be replaced with this unary NFA, and any accepting state has additionally the transitions for transitioning from $\chr_{i}$ to $\chr_{i+1}$ of the single such state in $\mathcal{N}_{X,\mathcal{Y}}^j$.

Let us repeat the process above for all $i$, decomposing $\mathcal{N}_{X,\mathcal{Y}}^j$ into the subsets of languages where there are exactly one state transitioning from $\chr_{i}$ to $\chr_{i+1}$. Let $\mathcal{N}_{X,\mathcal{Y}}^j = \bigcup_k \mathcal{N}_{X,\mathcal{Y}}^{j,k}$, a finite union; where each $k$ corresponds to a selection of accepting states $(q_1,\dots,q_m)$ with $q_l$ being the accepting state in the Chrobak normal form for $\chr_l$.

Consider such an $\mathcal{N}_{X,\mathcal{Y}}^{j,k}$. The steps spent in each block corresponding to $\chr_i$ is either formed by the finite path or the a single cycle at the end of the path. If the transition occurs in the finite path then $\basevecraw_{ki}$ is the length of the path to that transition and $\pvecraw_{ki}$ is zero. If the transition occurs in the cycle at the end of the path, then $\basevecraw_{ki}$ is the length of the path to that transition from the start of the path and $\pvecraw_{ki}$ is the length of the cycle. In $\mathcal{N}_{X,\mathcal{Y}}^{j,k}$ the time spent in block $\chr_i$ has no influence on the time spent in $\chr_j$ for $j\ne i$. Then $\lng{}{\mathcal{N}_{X,\mathcal{Y}}^{j,k}} = \{\chr^{n_1}\chr^{n_2}\dots \chr^{n_m} |  \exists \vec{\lambda}\in\mathbb{N}^m \text{ s.t. }\forall i\in[m] n_i = \basevecraw_{ki} + \pvecraw_{ki}\cdot \lambda_i \}$. The language $\lng{}{\mathcal{N}_{X,\mathcal{Y}}}$ is the union over all $\lng{}{\mathcal{N}_{X,\mathcal{Y}}^{j,k}}$.
\end{proof}

\cref{claim:finiteunion} captures Condition~\ref{item:condition1} of \cref{lem:natlogicalformulationfirst} precisely.
\paragraph*{Condition~\ref{item:condition2} via logic}
With \cref{claim:finiteunion} in place, we now move on to add   Condition~\ref{item:condition2} to the existing machinery. In fact, the logical formulae in the following lemmas will express the conjunction of   both conditions of \cref{lem:natlogicalformulationfirst}.

\begin{lem}\label{lem:natlogicalformulation}
Assume  $\lng{s}{\wa}\subseteq \lng{s'}{\wa}$. Then
$s$ is not big-O of $s'$ if and only if there exists  $X\in\adm$, $\mathcal{Y}\subseteq\adm$, $1\le k\le S_{X,\mathcal{Y}}$ such that
\
\[
\forall C < 0\  \exists \vec{\lambda} \in \mathbb{N}^m \\
\bigwedge_{j\in h_\mathcal{Y}}\ \sum_{i=1}^m \alpha_{j,i} \,(\basevecraw_{ki} + \pvecraw_{ki} \, \lambda_i) +  p_{j,i}\log(\basevecraw_{ki} + \pvecraw_{ki} \,\lambda_i) < C,
\]
where $h_\mathcal{Y}, \alpha_{j,i}, p_{j,i}$ (resp.  $\basevecraw_{ki},  \pvecraw_{ki}$) satisfy the same conditions as in~\cref{lem:natlogicalformulationfirst} (resp.~\ref{claim:finiteunion}).
\end{lem}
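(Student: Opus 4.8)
The plan is to obtain \cref{lem:natlogicalformulation} by combining the Main Lemma (\cref{lem:natlogicalformulationfirst}) with the semilinear decomposition of \cref{claim:finiteunion} and the detector automata of \cref{lem:detectors}. All three are already proved, so the argument is essentially bookkeeping plus one pigeonhole step; the bulk of the care goes into making the quantifier alternations line up, and into the fact that the data $h_{\mathcal Y},\alpha_{j,i},p_{j,i},\basevecraw_{ki},\pvecraw_{ki}$ in the statement are precisely the ones furnished by those two lemmas, so no new quantities need to be produced.

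For the ``only if'' direction I would start from the sequence $n\colon\mathbb N\to\mathbb N^m$ and the data $X\in\adm$, $\mathcal Y\subseteq\adm$ supplied by \cref{lem:natlogicalformulationfirst}, so that Condition~\ref{item:condition1} holds for every $t$ and, for each $j\in h_{\mathcal Y}$, the sum in Condition~\ref{item:condition2} tends to $-\infty$. By \cref{lem:detectors} each vector $n(t)$ is the Parikh image of a word of $\lng{}{\mathcal N_{X,\mathcal Y}}$, and by \cref{claim:finiteunion} that Parikh image is a finite union $\bigcup_{k=1}^{S_{X,\mathcal Y}}\mathcal L_k$ of linear sets of the prescribed shape. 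The key move is then a pigeonhole argument: since there are finitely many indices $k$, some $\mathcal L_k$ contains $n(t)$ for infinitely many $t$, so I pass to that subsequence and write $n(t)_i=\basevecraw_{ki}+\pvecraw_{ki}\lambda_i$ for the corresponding $\vec{\lambda}\in\mathbb N^m$. A subsequence of a sequence diverging to $-\infty$ still diverges to $-\infty$, and $h_{\mathcal Y}$ is finite, so for every $C<0$ I can pick a term of the subsequence far enough along that all the inequalities $\sum_{i=1}^m\alpha_{j,i}(\basevecraw_{ki}+\pvecraw_{ki}\lambda_i)+p_{j,i}\log(\basevecraw_{ki}+\pvecraw_{ki}\lambda_i)<C$ hold at once; the corresponding $\vec{\lambda}$ witnesses the displayed formula for that $C$, and hence the formula holds for this $X,\mathcal Y,k$.

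For the ``if'' direction I would reverse this. Given $X,\mathcal Y,k$ satisfying the formula, I instantiate $C=-t$ for $t=1,2,\dots$ to obtain $\vec{\lambda}^{(t)}\in\mathbb N^m$ realising the conjunction of inequalities at threshold $-t$, and set $n(t)_i=\basevecraw_{ki}+\pvecraw_{ki}\lambda^{(t)}_i$. Then $n(t)\in\mathcal L_k\subseteq\lng{}{\mathcal N_{X,\mathcal Y}}$ (exploiting that in the plus-letter-bounded setting words and exponent vectors correspond bijectively), so \cref{lem:detectors} yields Condition~\ref{item:condition1} for every $t$; and the choice $C=-t$ makes each sum in Condition~\ref{item:condition2} smaller than $-t$, hence divergent to $-\infty$ as $t\to\infty$. \cref{lem:natlogicalformulationfirst} then gives that $s$ is not big-O of $s'$.

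I do not expect a genuine obstacle: the hard work (the big-$\Theta$ estimates of \cref{lemma:thetaletterbounded}, the construction of $\mathcal N_{X,\mathcal Y}$, and the special form of the semilinear sets) is already packaged. The only points that need care are the pigeonhole step — passing to a single linear set $\mathcal L_k$ is exactly what lets the ``for all $t$'' in Condition~\ref{item:condition1} be respected after subsequencing — and handling the finitely many divergence requirements indexed by $h_{\mathcal Y}$ simultaneously; both are routine given the finiteness of $\adm$ and of $h_{\mathcal Y}$.
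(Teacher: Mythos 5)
Your proposal is correct and matches the paper's (implicit) argument: the paper states \cref{lem:natlogicalformulation} without a separate proof, treating it as the direct combination of \cref{lem:natlogicalformulationfirst} with \cref{lem:detectors} and \cref{claim:finiteunion}, which is exactly what you spell out. Your pigeonhole step to fix a single linear set $\mathcal{L}_k$ on a subsequence, and the reverse substitution $n(t)_i=\basevecraw_{ki}+\pvecraw_{ki}\lambda^{(t)}_i$ with $C=-t$, are precisely the missing bookkeeping, so no gap remains.
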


Note that the formula of \cref{lem:natlogicalformulation} uses quantification over natural numbers. Our next step will be to replace integer variables with real variables. In other words, we will obtain an equivalent condition in the first-order theory of the reals with exponentiation, as follows.
\begin{lem}\label{lem:real}
Assume  $\lng{s}{\wa}\subseteq \lng{s'}{\wa}$. Then
$s$ is not big-O of $s'$ if and only if there exist $X\in\adm$, $\mathcal{Y}\subseteq\adm$, $1\le k\le S_{X,\mathcal{Y}}$ and $U\subseteq  \{i \in \{1,\cdots,m\} \ | \ \pvecraw_{ki} > 0 \}$ such that
\[
\forall C < 0 \  \exists \revec \in \mathbb{R}^{|U|}_{\ge B_k}\\
\bigwedge_{j\in h_\mathcal{Y}}\   \sum_{i \in U} \alpha_{j,i}\, \pvecraw_{ki}\, \revar_i + p_{j,i}\log(\revar_i) < C,
\]
where $B_k=\max_i{\basevecraw_{ki}}$ and $h_\mathcal{Y}, \alpha_{j,i}, p_{j,i}, \basevecraw_{ki},  \pvecraw_{ki}$ are as in~\cref{lem:natlogicalformulation}.
\end{lem}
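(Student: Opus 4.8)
The plan is to show that the integer-quantified condition of \cref{lem:natlogicalformulation} is equivalent to the real-quantified condition claimed here, for a suitable choice of the index set $U$. The key observation is that in \cref{lem:natlogicalformulation} the quantity $\basevecraw_{ki}+\pvecraw_{ki}\lambda_i$ only actually \emph{grows} with $\lambda_i$ for those indices $i$ with $\pvecraw_{ki}>0$; for the indices with $\pvecraw_{ki}=0$ the term $\alpha_{j,i}\basevecraw_{ki}+p_{j,i}\log\basevecraw_{ki}$ is a fixed constant and can be absorbed into the threshold $C$. So the divergence to $-\infty$ is driven entirely by the ``active'' coordinates, and we may restrict attention to them. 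Setting $\revar_i \coloneqq \basevecraw_{ki}+\pvecraw_{ki}\lambda_i$ (for $i$ in the active set) turns $\alpha_{j,i}(\basevecraw_{ki}+\pvecraw_{ki}\lambda_i)$ into $\alpha_{j,i}\revar_i$; the statement writes it as $\alpha_{j,i}\pvecraw_{ki}\revar_i$ after a further rescaling $\revar_i \coloneqq \lambda_i$-like change of variables (pushing $\pvecraw_{ki}$ out front and the constant $\basevecraw_{ki}$ into the lower bound $B_k$), which is harmless since $\pvecraw_{ki}>0$ makes the map affine with positive slope.

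First I would fix $X$, $\mathcal{Y}$ and $k$, and for the chosen $k$ let $U = \{\, i \mid \pvecraw_{ki}>0 \,\}$ — this is the only value of $U$ for which the two conditions can be equivalent, but the lemma existentially quantifies $U$ over subsets of this set, which I would justify by noting that shrinking $U$ only strengthens the inner formula while the ``correct'' $U$ always works, so the existential over $U$ is in fact realized at the full set; alternatively one keeps the existential and argues both directions. Then the forward direction ($\Rightarrow$): from an integer witness sequence $\vec\lambda^{(t)}$ driving the sum to $-\infty$, at least one active coordinate $\lambda_i^{(t)}$ must be unbounded; passing to a subsequence and using the real points $\revar_i = \basevecraw_{ki}+\pvecraw_{ki}\lambda_i^{(t)} \ge B_k$ gives a real witness, after absorbing the finitely many passive-coordinate constants into $C$. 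The converse ($\Leftarrow$): given real witnesses $\revec^{(t)} \in \mathbb{R}^{|U|}_{\ge B_k}$ making each sum $< C$, round each $\revar_i$ down to the nearest value of the form $\basevecraw_{ki}+\pvecraw_{ki}\lambda_i$ with $\lambda_i \in \mathbb{N}$; since each summand $\alpha_{j,i}\pvecraw_{ki}\revar_i + p_{j,i}\log\revar_i$ is continuous and changes by at most a bounded amount under a rounding of size $\le \pvecraw_{ki}$ (the logarithm term's variation being bounded once $\revar_i \ge B_k \ge 1$), the inequality is preserved up to an additive constant, which is again absorbed by reindexing $C$. The passive coordinates are reinstated with their fixed constants.

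Finally I would remark that the resulting condition is genuinely a sentence of \rexp: the $\alpha_{j,i} = \log(\sigma_{ji}/\rho_i)$ are first-order expressible with the exponential function (by \cref{lem:algebraic} and the encoding $\log z = x \iff \exp(x)=z$ used in \cref{lem:natlogicalformulationfirst}), the $\pvecraw_{ki}$, $p_{j,i}$, $B_k$, $h_\mathcal{Y}$ are effectively computable integers, $\log(\revar_i)$ is $\exp$-expressible, and the outer ``$\forall C<0\ \exists \revec$'' together with the finite disjunction over $X,\mathcal{Y},k,U$ stays within first-order logic over the reals with $\exp$. Hence decidability of the big-O problem for plus-letter-bounded languages follows, subject to Schanuel's conjecture, via the decidability of \rexp.

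The main obstacle I anticipate is the \textbf{converse direction's rounding argument}: one must confirm that replacing a real witness by an integer one does not destroy the divergence, uniformly across all finitely many $j \in h_\mathcal{Y}$. The linear terms behave well because the rounding error in each coordinate is bounded by $\pvecraw_{ki}$, contributing a bounded perturbation; the subtle point is the logarithmic terms $p_{j,i}\log\revar_i$, where one needs that $\revar_i$ stays bounded away from $0$ (guaranteed by $\revar_i \ge B_k$, and one may further assume $B_k \ge 1$ without loss of generality) so that $\log$ is Lipschitz on the relevant range and the rounding perturbs each such term by $O(1)$. Getting the bookkeeping of these additive constants right — and making sure the choice of $U$ and the direction of the existential quantifier over $U$ are handled cleanly — is where the care is needed; the rest is a routine change of variables.
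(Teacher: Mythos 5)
Your proposal is correct and follows essentially the same route as the paper's proof, which also proceeds in two steps: first restricting to a subset $U$ of coordinates along which the integer witnesses can be taken unboundedly large (absorbing the bounded coordinates and the offsets $\basevecraw_{ki}$ into the constant $C$), and then passing between integer and real witnesses by taking floors and observing that both the linear and the logarithmic terms change by an additive constant. The only cosmetic differences are that the paper chooses $U$ as the set of coordinates unbounded along the witness sequence rather than arguing monotonicity in $U$, and that it consistently identifies $\revar_i$ with $\lambda_i$ (your sketch wavers between $\revar_i=\lambda_i$ and $\revar_i=\basevecraw_{ki}+\pvecraw_{ki}\lambda_i$, but either normalisation works since they differ by absorbable constants).
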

\begin{proof}[Proof idea]
Compare the logical characterisation in \cref{lem:real} and \cref{lem:natlogicalformulation}. The first difference to note is that the effect of $\basevecraw_{ki}$'s is simply a constant offset, and so the sequence would tend to $-\infty$ with or without its presence. Similar simplifications can be made inside the logarithm: the multiplicative effect of $\pvecraw_{ki}$ inside the logarithm can be extracted as an additive offset and thus similarly be discarded.

The second crucial difference is to relax the variable domains from integers to  reals. If each of the $\lambda_i$ in the satisfying assignment is sufficiently large, we show we  can relax the condition to real numbers rather than integers without affecting whether the sequence goes to $-\infty$. To do this, we test sets of indices $U$, where if $i\in U$ then $\lambda_i$ needs to be arbitrarily large over all $C$ (i.e. unbounded). The positions where $\lambda_i$ is always bounded are again a constant offset and are omitted.
\end{proof}
\begin{proof}[Proof of \cref{lem:real}]

First we argue that we can restrict to some subset of the components which enable the satisfying choice of $\lambda$ to be sufficiently large in all components.
\begin{clm} \label{claim:referencenats}
The assertion
\begin{equation}\label{eqn:claim:referencenats}
\forall C\  \exists \vec{\lambda} \in \mathbb{N}^m \\
\bigwedge_{j \in h_\mathcal{Y}}\ \sum_{i=1}^m \alpha_{j,i}\, (\basevecraw_{ki} + \pvecraw_{ki} \, \lambda_i) +  p_{j,i}\log(\basevecraw_{ki} + \pvecraw_{ki}\, \lambda_i) < C
\end{equation}
 holds if and only if the following assertion holds for some $U \subseteq [m]$:\begin{equation}\label{eqn:claim:referenceboundednat}
\forall C\  \exists \vec{\lambda} \in \mathbb{N}^{U}_{\ge \max_i{\basevecraw_{ki}}}
\bigwedge_{j \in h_\mathcal{Y}}\ \sum_{i \in U} \alpha_{j,i}\, (\basevecraw_{ki} + \pvecraw_{ki}\, \lambda_i) +  p_{j,i}\log(\basevecraw_{ki} + \pvecraw_{ki}\, \lambda_i) < C.
\end{equation}
\end{clm}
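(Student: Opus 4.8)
The plan is to prove the two implications of \cref{claim:referencenats} separately. Throughout I would write $B_k := \max_i \basevecraw_{ki}$ and, for $j\in h_\mathcal{Y}$, abbreviate the $j$-th sum appearing in \cref{eqn:claim:referencenats} by $F_j(\vec{\lambda})$, so that \cref{eqn:claim:referencenats} asks, for every target $C$, for some $\vec{\lambda}\in\mathbb{N}^m$ with $F_j(\vec{\lambda})<C$ for all $j\in h_\mathcal{Y}$; write $F_j^U$ for the restriction of this sum to the indices $i\in U$. Every logarithm argument $\basevecraw_{ki}+\pvecraw_{ki}\lambda_i$ is at least $1$, since $\mathcal{L}_k\subseteq\chr_1^+\cdots\chr_m^+$ forbids $\basevecraw_{ki}=\pvecraw_{ki}=0$, so all terms are well defined; I would also observe that, by monotonicity of the assertions in $C$, it suffices to let $C$ range over $\{-N \mid N\in\mathbb{N}\}$.

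For the easy direction, that \cref{eqn:claim:referenceboundednat} implies \cref{eqn:claim:referencenats}: given $U$ and, for each $N$, a witness $(\lambda_i^{(N)})_{i\in U}\in\mathbb{N}^{U}_{\ge B_k}$ of \cref{eqn:claim:referenceboundednat}, I would extend it to an assignment in $\mathbb{N}^m$ by fixing the coordinates outside $U$ to, say, $B_k$. The contribution of these fixed coordinates to each $F_j$ is a constant $D$ independent of $N$, so given a target $C$, applying \cref{eqn:claim:referenceboundednat} with target $C-|D|$ and adding back the fixed part yields a witness of \cref{eqn:claim:referencenats} at $C$.

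For the substantive direction, that \cref{eqn:claim:referencenats} implies the existence of a suitable $U$: for each $N\in\mathbb{N}$ I would fix a witness $\vec{\lambda}^{(N)}\in\mathbb{N}^m$ of \cref{eqn:claim:referencenats} for $C=-N$, so $F_j(\vec{\lambda}^{(N)})<-N$ for all $j\in h_\mathcal{Y}$. Then, by a pigeonhole/compactness argument over the extended ordered set $\mathbb{N}\cup\{\infty\}$, I would pass to a subsequence of the $N$'s along which each coordinate $\lambda_i^{(N)}$ either stabilises to a constant $c_i\in\mathbb{N}$ or tends to $\infty$, and set $U$ to be the set of coordinates of the latter type. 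Along this subsequence the partial sum $F_j(\vec{\lambda}^{(N)}) - F_j^U(\vec{\lambda}^{(N)})$ over the coordinates outside $U$ is eventually equal to a constant $E_j$, and since $h_\mathcal{Y}$ is finite there is an $M$ with $|E_j|\le M$ for all $j\in h_\mathcal{Y}$ from some index on; hence $F_j^U(\vec{\lambda}^{(N)})<-N+M$ for all $j\in h_\mathcal{Y}$ once $N$ is large. Moreover, for $i\in U$ we have $\lambda_i^{(N)}\ge B_k$ once $N$ is large. Therefore, given any target $C$, choosing an index of the subsequence large enough that $-N+M\le C$ and all $U$-coordinates are $\ge B_k$, the tuple $(\lambda_i^{(N)})_{i\in U}$ witnesses \cref{eqn:claim:referenceboundednat} at $C$ for this $U$, completing the proof.

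I expect the main obstacle to be the compactness step in the substantive direction: one must extract a \emph{single} index set $U$ that works uniformly for every target $C$, and check that the residual part (over $[m]\setminus U$) of each of the finitely many inequalities indexed by $h_\mathcal{Y}$ genuinely stays bounded along the chosen subsequence. Both points are handled using only the finiteness of $[m]$ and of $h_\mathcal{Y}$. The remaining care points---keeping all logarithm arguments positive and reducing to integer targets $C=-N$---are routine consequences of plus-letter-boundedness and of monotonicity, respectively.
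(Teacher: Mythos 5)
Your proof is correct and follows essentially the same route as the paper's: the easy direction by absorbing the constant contribution of the coordinates outside $U$ into the target, and the converse by viewing the witnesses as a sequence, separating the coordinates that stay bounded (contributing only a constant to each of the finitely many sums indexed by $h_\mathcal{Y}$) from those tending to infinity, which form $U$. Your compactness step over $\mathbb{N}\cup\{\infty\}$, which makes the non-$U$ coordinates actually stabilise along a subsequence, is a slightly cleaner rendering of the paper's selection of a maximal set of bounded components, but it is the same underlying argument.
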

\begin{subproof}[Proof of \cref{claim:referencenats}]

First note that \cref{eqn:claim:referenceboundednat} immediately implies \cref{eqn:claim:referencenats}. We show the converse.

Recall we can alternatively characterise the formulation as a sequence $n: \mathbb{N} \to \mathbb{N}^m$. That is, for each negative integer $C$, the choice of  $\vec{\lambda}$ corresponds to $n(C)$ in the sequence.

Note that in the sequence $n$ some components may be bounded. Either because $\pvecraw_{ki}=0$, or the choice of $n$ makes it so. Suppose there exists a $\theta >0$ such that $n(t)_x \le \theta$ for some $x \in [m]$, then  $\sum_{i = 1}^m \alpha_{j,i}\, n(t)_i +  p_{j,i}\log(n(t)_i) \le \sum_{i = 1, i\ne x}^m \alpha_{j,i}\, n(t)_i +  p_{j,i}\log(n(t)_i) + |\alpha_{j,i}|\, \theta +  |p_{j,i}|\theta$. Hence the sequence  $\sum_{i = 1, i\ne x}^m \alpha_{j,i}\, n(t)_i +  p_{j,i}\log(n(t)_i)$ goes to $-\infty$ as well.

Consider each choice of components $B \subseteq [m]$ which will be bounded. For some components there will be no choice as $\pvecraw_{ki} = 0$. Let us assume that the chosen set is maximal with respect to set-inclusion; that is, there should be no subsequence maintaining the property with fewer components unbounded. Let the remaining unbounded components be $U = [m] \setminus B$.

Since each remaining component is not bounded, there is always a later point in the sequence in which the value is larger; thus one can take a subsequence of $n(t)$ so that $n(t)_i \le n(t+1)_i$ for every $t$. Repeat for every remaining component $i \in U$; this can be done as  the minimal choice of unbounded components has been selected. Hence, without loss of generality if there exists some sequence, then for any $\theta$, there exists a subsequence of $n(t)$, such that $n(t)_i > \theta$ for all $i \in U$. To enable a more succinct analysis later, restrict $n(t)$ to those in which $\lambda_i \ge \max_i{\basevecraw_{ki}}$ where $n(t)_i = \basevecraw_{ki} + \pvecraw_{ki} \, \lambda_i$ for some $\lambda_i$.\end{subproof}

Next we argue that the offset component $\basevec$ does not affect whether the formula holds and that  we can relax the restriction of $\vec{\lambda}$ from naturals to positive reals and maintain the satisfiability of the formula.  The advantage here is that this relaxation can be solved with the first order theory of the reals with exponential function; which is decidable subject to Schanuel's conjecture.

\begin{clm}\label{claim:equalnaturalreal}The assertion
\begin{equation}\label{equn:naturalversion}
\forall C\  \exists \vec{\lambda} \in \mathbb{N}^{U}_{\ge \max_i{\basevecraw_{ki}}}
\bigwedge_{j \in h_\mathcal{Y}}\ \sum_{i \in U} \alpha_{j,i}\, (\basevecraw_{ki} + \pvecraw_{ki}\, \lambda_i) +  p_{j,i}\log(\basevecraw_{ki} + \pvecraw_{ki}\, \lambda_i) < C
\end{equation}
holds if and only if  the following assertion holds:
\begin{equation}\label{equn:realsversion}
\forall C\  \exists \revec \in \mathbb{R}^{U}_{\ge \max_i{\basevecraw_{ki}}}
\bigwedge_{j \in h_\mathcal{Y}}\ \sum_{i \in U} \alpha_{j,i}\, \pvecraw_{ki}\, \revar_i + \sum_{i \in U}  p_{j,i}\log(\revar_i) < C.
\end{equation}
\end{clm}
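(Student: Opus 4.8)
The plan is to derive both implications of the equivalence from a single uniform-perturbation estimate together with a rounding step. The point is that replacing $b_{ki}+r_{ki}v$ by $v$ — both inside the linear terms and inside the logarithms — and passing between integer and real arguments each change every summand $\sum_{i\in U}\alpha_{j,i}(\cdot)+p_{j,i}\log(\cdot)$ by at most a constant that does \emph{not} depend on the threshold $C$. Since both assertions have the shape ``$\forall C\ \exists\,\text{witness}:\ \bigwedge_{j}(\cdots)<C$'', a perturbation bounded uniformly in $C$ is harmless: one simply applies the hypothesis at a shifted threshold.

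First I would record the estimate. For $i\in U$ we have $r_{ki}\ge 1$, so $b_{ki}+r_{ki}v\ge v$ whenever $v\ge 1$; and since plus-letter-boundedness excludes a zero exponent, every variable occurring in either assertion may be taken $\ge 1$, so all logarithms below are defined. Writing $F_j(\vec v)=\sum_{i\in U}\alpha_{j,i}(b_{ki}+r_{ki}v_i)+\sum_{i\in U}p_{j,i}\log(b_{ki}+r_{ki}v_i)$ and $G_j(\vec v)=\sum_{i\in U}\alpha_{j,i}r_{ki}v_i+\sum_{i\in U}p_{j,i}\log v_i$, the difference $F_j(\vec v)-G_j(\vec v)$ equals $\sum_{i\in U}\alpha_{j,i}b_{ki}$ plus $\sum_{i\in U}p_{j,i}\bigl(\log(b_{ki}+r_{ki}v_i)-\log v_i\bigr)$, and for $v_i\ge 1$ each bracket lies in the bounded interval $[\log r_{ki},\log(r_{ki}+b_{ki})]$. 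Hence $|F_j(\vec v)-G_j(\vec v)|\le K$ for all such $\vec v$ and all $j\in h_{\mathcal Y}$, with $K$ depending only on the $\alpha_{j,i},p_{j,i},b_{ki},r_{ki}$.

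For \eqref{equn:naturalversion}$\Rightarrow$\eqref{equn:realsversion}, given $C$ I would apply \eqref{equn:naturalversion} at threshold $C-K$ to get $\vec\lambda\in\mathbb N^U_{\ge B_k}$ with $F_j(\vec\lambda)<C-K$ for every $j\in h_{\mathcal Y}$; then this same tuple, viewed as a point of $\mathbb R^U_{\ge B_k}$, satisfies $G_j(\vec\lambda)\le F_j(\vec\lambda)+K<C$, which is \eqref{equn:realsversion}. For the converse I add rounding. Set $K'=\max_{j\in h_{\mathcal Y}}\sum_{i\in U}\bigl(|\alpha_{j,i}|\,r_{ki}+|p_{j,i}|\bigr)$; given $C$, apply \eqref{equn:realsversion} at threshold $C-K-K'$ to obtain $\vec x\in\mathbb R^U_{\ge B_k}$ with $G_j(\vec x)<C-K-K'$ for all $j$, and put $\lambda_i=\lfloor x_i\rfloor$. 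Since $B_k\in\mathbb N$ and $x_i\ge B_k$, we get $\vec\lambda\in\mathbb N^U_{\ge B_k}$ with $\lambda_i\ge 1$. Decreasing each coordinate by less than $1$ changes the linear part of $G_j$ by less than $|\alpha_{j,i}|\,r_{ki}$ and, because $x_i<\lambda_i+1\le 2\lambda_i$ gives $0\le\log x_i-\log\lambda_i<\log 2<1$, changes the logarithmic part by less than $|p_{j,i}|$; hence $G_j(\vec\lambda)\le G_j(\vec x)+K'$. Combining, $F_j(\vec\lambda)\le G_j(\vec\lambda)+K\le G_j(\vec x)+K'+K<C$ for all $j\in h_{\mathcal Y}$, which is \eqref{equn:naturalversion}. (Note the rounding uses only $\lambda_i\ge 1$, not that $\vec x$ is coordinatewise large — \cref{claim:referencenats} merely supplies $x_i\ge B_k$, which suffices.)

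I do not expect a genuine obstacle; the content is the \emph{uniformity} of the three error terms — the offset $\sum_i\alpha_{j,i}b_{ki}$, the log-discrepancy bound $\sum_i|p_{j,i}|\log(r_{ki}+b_{ki})$, and the rounding bound $K'$ — in the threshold $C$, which is what lets the outer ``$\forall C$'' survive, and it rests on the logarithm being slowly varying so that rounding to the nearest integer perturbs $\log$ by less than $\log 2$. The only other thing to dispatch is well-definedness of the logarithms at the boundary of the domain, handled by the observation that plus-letter-boundedness forces positive exponents; the rest is bookkeeping.
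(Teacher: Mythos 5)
Your proposal is correct and follows essentially the same route as the paper: both arguments bound the discrepancy between the two expressions by a constant independent of $C$ (the offset $\sum_i\alpha_{j,i}\basevecraw_{ki}$ together with the identity $\log(\basevecraw_{ki}+\pvecraw_{ki}v)=\log v+\log(\basevecraw_{ki}/v+\pvecraw_{ki})$ and its uniform bound), and both pass from reals back to naturals by taking floors and absorbing the rounding error $\sum_i\bigl(|\alpha_{j,i}|\pvecraw_{ki}+|p_{j,i}|\bigr)$ into a shifted threshold. The paper merely organises this as two successive equivalences (via an intermediate ``reduced natural'' form) rather than your single $F_j$-versus-$G_j$ comparison, but the estimates and the uniformity-in-$C$ argument are the same.
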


\begin{subproof}[Proof of \cref{claim:equalnaturalreal}]
Observe that \[\sum_{i \in U} \alpha_{j,i}\, (\basevecraw_{ki} + \pvecraw_{ki}\, \lambda_i) = \sum_{i \in U} \alpha_{j,i}\, \basevecraw_{ki} + \sum_{i \in U} \alpha_{j,i}\, \pvecraw_{ki}\, \lambda_i \] and that $\sum_{i \in U} \alpha_{j,i}\, \basevecraw_{ki}$ is constant so it does not affect whether the sequence goes to $-\infty$, hence \cref{equn:naturalversion}  holds if and only if
\begin{equation}
\forall C\  \exists \vec{\lambda} \in \mathbb{N}^{U}_{\ge \max_i{\basevecraw_{ki}}}
\bigwedge_{j \in h_\mathcal{Y}}\ \sum_{i \in U} \alpha_{j,i}\, \pvecraw_{ki}\, \lambda_i +  p_{j,i}\log(\basevecraw_{ki} + \pvecraw_{ki}\, \lambda_i) < C \end{equation}

Now let us extract the log component by using the following rewriting \[\log(\basevecraw_{ki} + \pvecraw_{ki}\, \lambda_i) = 
\log\pbra{\lambda_i\,\pbra{\frac{\basevecraw_{ki}}{\lambda_i} + \pvecraw_{ki}}} = 
\log(\lambda_i) + \log\pbra{\frac{\basevecraw_{ki}}{\lambda_i} + \pvecraw_{ki}}.\]

Since $\pvecraw_{ki} \ge 1$ and $\lambda_i \ge \basevecraw_{ki}$ we have $\log\pbra{\frac{\basevecraw_{ki}}{\lambda_i} + \pvecraw_{ki}}  \le \log(\pvecraw_{ki} + 1)$, which is constant. Hence \cref{equn:naturalversion} is equivalent to:
\begin{equation}\label{equn:reducednatural}
\forall C'\  \exists \vec{\lambda} \in \mathbb{N}^{U}_{\ge \max_i{\basevecraw_{ki}}}
\bigwedge_{j \in h_\mathcal{Y}}\ \sum_{i \in U} \alpha_{j,i}\, \pvecraw_{ki}\, \lambda_i + \sum_{i \in U}  p_{j,i}\log(\lambda_i) < C'
\end{equation}

We now show that this is equivalent to \cref{equn:realsversion}. Clearly  \cref{equn:reducednatural} implies \cref{equn:realsversion}. Now consider  \cref{equn:realsversion} holding, and we show the \cref{equn:reducednatural} is satisfied, by exhibiting a choice of $ \vec{\lambda} \in \mathbb{N}^{U}_{\ge \max_i{\basevecraw_{ki}}}$ for every $C'$.
Given $C' < 0 $, let $C = C' - \max_j\sum_{i \in U} |\alpha_{j,i}|\pvecraw_{ki} - \max_j\sum_{i \in U} |p_{j,i}|$, and choose $\revec \in \mathbb{R}^{|U|}_{\ge \max_i{\basevecraw_{ki}}}$ satisfying \cref{equn:realsversion}.
Now let $\revar_i = \intvar_i + y_i$, with $y_i  <  1, \intvar_i = \lfloor \revar_i \rfloor$. First observe that since $\revar_i \ge \max_i \basevecraw_{ki}$, an integer, also $\intvar_i \ge \max_i \basevecraw_{ki}$.
Observe that ${\abs{\sum_{i \in U} \alpha_{j,i}\, \pvecraw_{ki}\,  y_i } }  \le \sum_{i \in U} |\alpha_{j,i}|\pvecraw_{ki}$. Since
\[ \sum_{i \in U} \alpha_{j,i}\, \pvecraw_{ki}\, \intvar_i + {\sum_{i \in U} \alpha_{j,i}\, \pvecraw_{ki}\,  y_i} + \sum_{i \in U}  p_{j,i}\log(\intvar_i + y_i) < C
\]
we have
\[ \sum_{i \in U} \alpha_{j,i}\, \pvecraw_{ki}\, \intvar_i  + \sum_{i \in U}  p_{j,i}\log(\intvar_i + y_i) < C  + \sum_{i \in U} |\alpha_{j,i}|\pvecraw_{ki}
\]
Let us again rewrite $\log(\intvar_i + y_i) = \log\pbra{\intvar_i\pbra{1 + \frac{y_i}{\intvar_i}}}= \log(\intvar_i) + \log\pbra{1 + \frac{y_i}{\intvar_i}}$. Then since $\intvar_i > y_i$, $\log\pbra{1 + \frac{y_i}{\intvar_i}} \le 1$, so \[{\abs{\sum_{i \in U}  p_{j,i}\log\pbra{1 + \frac{y_i}{\intvar_i}} }} \le \sum_{i \in U}  |p_{j,i}|.\] 
We thus have
\[ \sum_{i \in U} \alpha_{j,i}\, \pvecraw_{ki}\, \intvar_i  + \sum_{i \in U}  p_{j,i}\log(\intvar_i)< C  + \sum_{i \in U} |\alpha_{j,i}|\pvecraw_{ki} + \sum_{i \in U} |p_{j,i}| \le C'
\]
and hence, \cref{equn:reducednatural} holds.\end{subproof}

Together \cref{claim:equalnaturalreal} and \cref{claim:referencenats} reformulate the logical condition of \cref{lem:natlogicalformulation} concluding the proof of \cref{lem:real}.
\end{proof}

By testing the LC condition and the condition from~\cref{lem:real}  for each possible $X,\mathcal{Y}, k,U$, in turn using the (conditionally decidable) first-order theory of the reals with exponential function, we have:
\begin{lem}\label{lem:plbdecide}
Given a weighted automaton $\wa$ and states $s,s'$ such that $\lng{s}{\wa}$ and $\lng{s'}{\wa}$ are  plus-letter-bounded, it is decidable whether $s$ is big-O $s'$, subject to Schanuel's conjecture.
\end{lem}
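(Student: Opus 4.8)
The plan is to assemble the apparatus built above into an algorithm whose only appeal to an undecidable-in-general oracle is the decision procedure for the first-order theory of the reals with exponentiation. First I would dispose of the side condition: the algorithm checks whether the LC condition $\lng{s}{\wa}\subseteq\lng{s'}{\wa}$ holds by testing NFA inclusion of $\nfaof{\wa}{s}$ in $\nfaof{\wa}{s'}$ (decidable, cf.\ \cref{rem:lc}). If it fails, $s$ is by definition not big-O of $s'$, so the algorithm answers ``no''. Otherwise the hypothesis of \cref{lem:real} is satisfied, and that lemma reduces the \emph{negation} of ``$s$ is big-O of $s'$'' to the existence of a tuple $(X,\mathcal{Y},k,U)$ with $X\in\adm$, $\mathcal{Y}\subseteq\adm$, $1\le k\le S_{X,\mathcal{Y}}$ and $U\subseteq\{i\mid\pvecraw_{ki}>0\}$ for which the $\forall C\,\exists\revec$ assertion displayed there holds.

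The heart of the procedure is then to enumerate the finitely many tuples $(X,\mathcal{Y},k,U)$ and, for each, decide the associated assertion. I would check that all the data entering it are effective: $\adm$ has at most $|Q|^{2m}$ members and is listable; $\mathcal{N}_{X,\mathcal{Y}}$ is constructed as in \cref{lem:detectors} and its decomposition in \cref{claim:finiteunion} yields $S_{X,\mathcal{Y}}$ together with the integers $\basevecraw_{ki},\pvecraw_{ki}$, so $k$ and $U$ also range over finitely many explicit options; the set $h_\mathcal{Y}$, the integers $p_{j,i}$ and the bound $B_k=\max_i\basevecraw_{ki}$ are computable. The only irrational ingredients are the coefficients $\alpha_{j,i}=\log(\sigma_{ji}/\rho_i)$, and these are first-order expressible over $\langle\mathbb{R};+,\times,\exp,<\rangle$: by \cref{lem:algebraic} each spectral radius $\sigma_{ji},\rho_i$ is pinned down uniquely by a polynomial together with an isolating disc, and $\log z$ is captured by the predicate $\exp(x)=z$. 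Substituting these definitions turns the assertion of \cref{lem:real} into a closed sentence of $\mathrm{Th}(\mathbb{R}_{\mathsf{exp}})$ (the constraints $\revar_i\ge B_k$ and the outer quantifier over negative $C$ are ordinary first-order quantifiers), and the truth of such a sentence is decidable subject to Schanuel's conjecture~\cite{macintyre1996decidability}. Hence $s$ is big-O of $s'$ precisely when none of the finitely many resulting sentences holds.

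I do not anticipate a genuine mathematical obstacle: the substantive work is already isolated in \cref{lem:natlogicalformulationfirst,lem:detectors,claim:finiteunion,lem:real}. The one step that needs care is the bookkeeping of the previous paragraph — confirming that every quantity fed into the final $\mathrm{Th}(\mathbb{R}_{\mathsf{exp}})$ sentence is either a computable integer/rational or admits an honest first-order definition (in particular that the algebraic constants $\sigma_{ji},\rho_i$ are uniquely characterised by the representation of \cref{lem:algebraic}), so that the reduction is effective and not merely semantic, and that testing the LC condition and the condition of \cref{lem:real} over \emph{all} choices of $X,\mathcal{Y},k,U$ is indeed exhaustive.
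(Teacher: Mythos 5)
Your proposal is correct and follows exactly the paper's route: the paper derives \cref{lem:plbdecide} in a single sentence by testing the LC condition and then the condition of \cref{lem:real} for each of the finitely many choices of $X,\mathcal{Y},k,U$ via the (conditionally decidable) theory \rexp{}. Your additional bookkeeping — verifying that $\adm$, $S_{X,\mathcal{Y}}$, $\basevecraw_{ki}$, $\pvecraw_{ki}$, $h_\mathcal{Y}$, $p_{j,i}$ are effectively computable and that the $\alpha_{j,i}$ are first-order expressible via \cref{lem:algebraic} — is exactly the content the paper leaves implicit, and it is handled correctly.
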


\subsection{The letter-bounded case}\label{sec:letterboundedcase}

Here we consider the case where $\lng{s}{\wa}$ and $\lng{s'}{\wa}$ are letter-bounded,
$\lng{s}{\wa}$ and $\lng{s'}{\wa}$ are subsets of $a_1^*\dots a_m^*$ for some $a_1,\dots,a_m\in\Sigma$,
which is a relaxation of the preceding case.
For the plus-letter-bounded case, we relied on a 1-1 correspondence between numeric vectors and words.
This correspondence no longer holds in the letter-bounded case: for example, $a^n$ matches  $a^* b^* a^*$, but it could correspond to $(n,0,0)$, $(0,0,n)$, as well as any $(n_1,0,n_2)$ with $n_1+n_2=n$.
Still, there is a reduction to the plus-letter-bounded case.

\begin{lem}\label{lem:lbtoplb}
The big-O problem for $\wa, s, s'$ with $\lng{s}{\wa}$ and $\lng{s'}{\wa}$ letter-bounded reduces to the plus-letter-bounded case.
\end{lem}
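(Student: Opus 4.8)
The plan is to resolve the parse ambiguity flagged above by \emph{guessing which blocks are non-empty}. Assume, as throughout this section, the LC condition $\lng{s}{\wa}\subseteq\lng{s'}{\wa}$, and that $\lng{s}{\wa},\lng{s'}{\wa}\subseteq a_1^\ast\cdots a_m^\ast$. For every subset $S=\{i_1<\cdots<i_\ell\}\subseteq\{1,\dots,m\}$ I would build a single plus-letter-bounded instance $(\wa_S,s_S,s_S')$ that isolates the words $a_{i_1}^{n_1}\cdots a_{i_\ell}^{n_\ell}$ with all $n_j\ge 1$. The point that makes this work despite the ambiguity is that, over all $S$, the languages $a_{i_1}^+\cdots a_{i_\ell}^+$ need only \emph{cover} $\lng{s'}{\wa}$, not partition it: since big-O is a supremum over words, overlaps between the pieces are harmless.

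Concretely, I would fix fresh pairwise distinct letters $b_1,\dots,b_\ell$ and let $D_S$ be the \emph{deterministic} automaton for $b_1^+b_2^+\cdots b_\ell^+$ with states $d_0,\dots,d_\ell$ ($d_0$ initial, $d_\ell$ accepting; on letter $b_j$ the state $d_{j-1}$ moves to $d_j$ and $d_j$ loops). Take $\wa_S$ over the alphabet $\{b_1,\dots,b_\ell\}$ with state set $Q\times\{d_0,\dots,d_\ell\}$, start states $s_S=(s,d_0)$ and $s_S'=(s',d_0)$, final states $\{(t,d_\ell)\mid t\in F\}$, and, for each transition $q\trns[a_{i_j}]{p}q'$ of $\wa$, the transitions $(q,d_{j-1})\trns[b_j]{p}(q',d_j)$ and $(q,d_j)\trns[b_j]{p}(q',d_j)$. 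Because $D_S$ is deterministic, for all $n_1,\dots,n_\ell\ge 1$ the runs of $\wa_S$ on $b_1^{n_1}\cdots b_\ell^{n_\ell}$ are in weight-preserving bijection with the runs of $\wa$ on $a_{i_1}^{n_1}\cdots a_{i_\ell}^{n_\ell}$ (the $D_S$-component being forced). Hence $\nu_{s_S}(b_1^{n_1}\cdots b_\ell^{n_\ell})=\nu_s(a_{i_1}^{n_1}\cdots a_{i_\ell}^{n_\ell})$ and likewise for $s'$; on all other words $\nu_{s_S}$ and $\nu_{s_S'}$ vanish; $\lng{s_S}{\wa_S},\lng{s_S'}{\wa_S}\subseteq b_1^+\cdots b_\ell^+$; and LC for $(\wa_S,s_S,s_S')$ is inherited from LC for $(\wa,s,s')$. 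So each $(\wa_S,s_S,s_S')$ is a legitimate plus-letter-bounded instance.

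It then remains to check that $s$ is big-O of $s'$ in $\wa$ if and only if $s_S$ is big-O of $s_S'$ in $\wa_S$ for every $S$, after which \cref{lem:plbdecide} applied to the finitely many ($\le 2^m$) instances completes the reduction. The forward direction is immediate: a witnessing constant for $\wa$ works in every $\wa_S$, since on the relevant words the two ratios coincide and elsewhere the numerator is $0$. For the converse I would take $C=\max_S C_S$ over the witnessing constants; any nonempty word $w$ with $\nu_s(w)>0$ or $\nu_{s'}(w)>0$ lies in $a_1^\ast\cdots a_m^\ast$, so fixing some writing $w=a_1^{n_1}\cdots a_m^{n_m}$ and letting $S=\{i\mid n_i>0\}$ exhibits $w$ as $a_{i_1}^{n_1}\cdots a_{i_\ell}^{n_\ell}$ with all exponents $\ge 1$, whence $\nu_s(w)/\nu_{s'}(w)=\nu_{s_S}(b_1^{n_1}\cdots b_\ell^{n_\ell})/\nu_{s_S'}(b_1^{n_1}\cdots b_\ell^{n_\ell})\le C_S\le C$ (and $w=\varepsilon$ gives ratio $\le 1$ by LC). I expect the only genuine subtlety to be exactly the non-uniqueness of the block decomposition, which is absorbed cleanly by quantifying over all selections $S$ and by using the deterministic $D_S$ to prevent any spurious duplication of weight; the remaining points (empty word, zero-weight words, inheritance of LC, finiteness of the family) are routine bookkeeping.
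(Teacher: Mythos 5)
Your proposal is correct and follows essentially the same route as the paper: cover $a_1^\ast\cdots a_m^\ast$ by the finitely many languages $a_{i_1}^+\cdots a_{i_\ell}^+$ indexed by increasing subsequences, intersect $\wa$ with a DFA for each, and observe that $s$ is big-O of $s'$ iff every resulting plus-letter-bounded instance is a positive instance. The only difference is that you relabel the blocks to fresh distinct letters inside the product construction, whereas the paper defers that normalisation to \cref{claim:plbadditionaform} within the plus-letter-bounded case.
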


\begin{proof}
Suppose the LC condition holds and  $\lng{s}{\wa}\subseteq \lng{s'}{\wa} \subseteq a_1^\ast\cdots a_m^\ast$.
Let $I$ be the set of strictly increasing sequences $\vec{\imath} = i_1\cdots i_k$ of integers between $1$ and $m$.
Given $ \vec{\imath}\in I$, let $\wa_{\vec{\imath}}$ be the weighted automaton obtained by intersecting $\wa$ with a DFA for $a_{i_1}^+\cdots a_{i_k}^+$ whose initial state is $q$.
Note that $s$ is big-O of $s'$ (in $\wa$) iff $(s,q)$ is big-O of $(s',q)$ in $\wa_{\vec{\imath}}$ for all $\vec{\imath}\in I$, because $a_1^\ast\cdots a_m^\ast = \bigcup_{\vec{\imath}\in I} a_{i_1}^+\cdots a_{i_k}^+$.
Because the big-O problem for each $\wa_{\vec{\imath}}$, $(s,q)$, $(s',q)$ falls into the plus-letter-bounded case, the results follows from~\cref{lem:plbdecide}.
\end{proof}

\subsection{The bounded case}
\label{sec:boundedtoletterbounded}

Here we consider the case where $\lng{s}{\wa}$ and $\lng{s'}{\wa}$ are bounded, which is a relaxation of letter-boundedness (see \cref{def:bounded}): $\lng{s}{\wa}$ and $\lng{s'}{\wa}$ are subsets of $w_1^*\dots w_m^*$ for some $w_1,\dots,w_m\in\Sigma^*$. We show a reduction to the letter-bounded case from \cref{sec:letterboundedcase}.

\begin{lem}\label{lem:btolb}
The big-O problem for $\wa, s, s'$ with $\lng{s}{\wa}$ and $\lng{s'}{\wa}$ bounded reduces to the letter-bounded case.
\end{lem}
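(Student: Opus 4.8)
The plan is to reduce the bounded case to the letter-bounded case by ``spelling out'' each word $w_i$ as a block of fresh single letters, so that a power $w_i^{n_i}$ becomes a letter-bounded language over a new alphabet, while keeping the weighted-automaton structure essentially intact. Concretely, suppose the LC condition has been checked, so that $\lng{s}{\wa}\subseteq\lng{s'}{\wa}\subseteq w_1^\ast\cdots w_m^\ast$ with $w_i = w_{i,1}w_{i,2}\cdots w_{i,\ell_i}\in\Sigma^*$. Introduce a fresh alphabet $\Sigma'=\{c_{i,j} \mid 1\le i\le m,\ 1\le j\le \ell_i\}$, and let $h\colon (\Sigma')^\ast\to\Sigma^\ast$ be the monoid homomorphism sending $c_{i,j}\mapsto w_{i,j}$. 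The image of the letter-bounded language $L' = c_{1,1}^\ast\cdots c_{1,\ell_1}^\ast\, c_{2,1}^\ast\cdots c_{m,\ell_m}^\ast$ (reading one occurrence of each letter of each block in order) contains exactly $w_1^\ast\cdots w_m^\ast$ — more precisely, $h$ restricted to the sublanguage where all $\ell_i$ exponents inside block $i$ agree picks out $w_i^{n_i}$. As in the letter-bounded-to-plus case (\cref{lem:lbtoplb}), the mismatch (exponents inside a block differing, or a block used ``partially'') will be handled by intersecting with a DFA; the key point is that we only ever need finitely many such DFAs.

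The construction I would carry out: build $\wa'$ over $\Sigma'$ by replacing, in the relevant part of $\wa$, each $\Sigma$-transition by the appropriate $\Sigma'$-transition according to which letter of which block it reads, being careful about the fact that a single transition of $\wa$ may legitimately be reused across different positions $(i,j)$ and $(i',j')$ — exactly the same ``a transition can be associated with at most one block'' argument as in \cref{claim:plbadditionaform} applies here at the level of $w_i^\ast$ blocks (since $\lng{s'}{\wa}\subseteq w_1^\ast\cdots w_m^\ast$ forbids ``skipping'' a block), and within a block the positions $1,\dots,\ell_i$ are forced in order. One then intersects $\wa'$ with the DFA recognising the ``synchronised'' letter-bounded language $\{\,c_{1,1}^{n_1}\cdots c_{1,\ell_1}^{n_1}\,c_{2,1}^{n_2}\cdots c_{m,\ell_m}^{n_m} \mid n_1,\dots,n_m\in\mathbb{N}\,\}$, which is letter-bounded (a subset of $c_{1,1}^\ast\cdots c_{m,\ell_m}^\ast$). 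Under $h$, a word in this synchronised language maps to $w_1^{n_1}\cdots w_m^{n_m}$, and the homomorphism is length-preserving up to the fixed constants $|w_i|$, so $\nu^{\wa'}_{s}(c\text{-word}) = \nu^{\wa}_{s}(h(c\text{-word}))$ for the matching words. Hence $s$ is big-O of $s'$ in $\wa$ iff $(s,q_0)$ is big-O of $(s',q_0)$ in this letter-bounded intersection automaton $\wa'$, and \cref{lem:lbtoplb} (via \cref{lem:plbdecide}) finishes the job. As in \cref{lem:lbtoplb}, if one prefers to be safe one can range over all strictly increasing sub-selections $\vec{\imath}$ of the blocks $1,\dots,m$ and take a conjunction over the finitely many resulting letter-bounded instances.

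The main obstacle I anticipate is bookkeeping around transitions of $\wa$ that are shared between several block-positions $(i,j)$: one must verify that after substituting fresh letters the resulting automaton still has $\lng{(s,q_0)}{\wa'}\subseteq\lng{(s',q_0)}{\wa'}$ and that the weight of each relevant word is genuinely preserved (no spurious new paths, no lost paths). The cleanest way to handle this is to not relabel transitions of $\wa$ at all, but instead take the synchronised product of $\wa$ with the DFA for $\{w_1^{n_1}\cdots w_m^{n_m}\}$-words over $\Sigma$ reading letters of $\Sigma'$ as ``commands'' — i.e. the DFA's state encodes which position $(i,j)$ we expect to be reading, it emits $c_{i,j}$ and feeds $w_{i,j}$ to $\wa$. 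This product is a weighted automaton over $\Sigma'$ whose language is letter-bounded by construction and whose word-weights equal those of $\wa$ on the corresponding $\Sigma$-words, so correctness is immediate and the only remaining content is that the product is effectively constructible, which is routine. Then \cref{lem:lbtoplb} applies, completing the proof of \cref{thm:bounded}.
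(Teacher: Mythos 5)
There is a genuine gap, and it lies in the choice of the fresh alphabet. You introduce one new letter $c_{i,j}$ \emph{per position} $j$ of each block word $w_i$, but the correct granularity (and the one the paper uses) is one new letter $a_i$ \emph{per block} $w_i$. With your alphabet, the word $w_i^{n}$ corresponds to $(c_{i,1}c_{i,2}\cdots c_{i,\ell_i})^{n}$, which for $n\ge 2$ is \emph{not} a member of $c_{i,1}^\ast\cdots c_{i,\ell_i}^\ast$; so the language of your product automaton is not letter-bounded, contrary to what you claim. The homomorphism variant fails for the dual reason: $h(c_{i,1}^{n}c_{i,2}^{n})=w_{i,1}^{n}w_{i,2}^{n}$, which is not $(w_{i,1}w_{i,2})^{n}=w_i^{n}$ (e.g.\ $aabb\neq abab$), so the ``synchronised'' sublanguage of $L'$ does not pick out $w_1^{n_1}\cdots w_m^{n_m}$. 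Moreover that synchronised language $\{c_{i,1}^{n}\cdots c_{i,\ell_i}^{n}\mid n\in\mathbb{N}\}$ is of the $a^nb^n$ type and is not regular, so there is no DFA to intersect with. The fix is to let a single letter $a_i$ stand for one complete copy of $w_i$, so that $w_1^{n_1}\cdots w_m^{n_m}$ maps to $a_1^{n_1}\cdots a_m^{n_m}$, which is genuinely letter-bounded; this forces the translating device to consume $|w_i|$ input letters per output letter, i.e.\ a transducer with $\varepsilon$-outputs whose $\varepsilon$-moves are then eliminated by summing bounded powers of the $\varepsilon$-matrix. That is exactly the paper's construction.

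A second, independent problem is your appeal to the uniqueness argument of \cref{claim:plbadditionaform}. That argument relies on the plus-letter-bounded shape $a_1^+\cdots a_m^+$ with distinct letters, where a transition cannot serve two blocks without permitting a block to be skipped. In the bounded case this fails outright: blocks may be empty, distinct $w_i$ may share letters or even be powers of one another, and consequently a single word can admit several decompositions $w=w_1^{n_1}\cdots w_m^{n_m}$ and a single transition of $\wa$ can legitimately play several roles. The paper's example $(abab)^\ast a^\ast b^\ast(ab)^\ast$ with the word $(ab)^4$ (five distinct decompositions) is precisely the obstruction. The reduction must therefore guarantee that for \emph{every} valid decomposition vector $(n_1,\dots,n_m)$ of $w$, the weight of $a_1^{n_1}\cdots a_m^{n_m}$ in $\wa'$ equals $\nu_s(w)$; this is achieved by nondeterminism in the transducer (after completing a copy of $w_{i}$, restart that block or advance to a later one), not by assigning each transition of $\wa$ to a unique block. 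Your proposal does not address this multiplicity at all, so even with the alphabet repaired the correctness claim ``$s$ is big-O of $s'$ in $\wa$ iff $(s,q_0)$ is big-O of $(s',q_0)$ in $\wa'$'' would remain unjustified.
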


Suppose $\wa$ is bounded over $w_1^*\dots w_m^*$, our approach is to construct a new weighted automaton $\wa'$ letter-bounded over a new alphabet $a_1^*\dots a_m^*$  with the following property. For \textit{every} decomposition of a word $w$, as $w_1^{n_1}\dots w_m^{n_m}$, the weight of $a_1^{n_1}\dots a_m^{n_m}$ in $\wa'$ is equal to the weight of $w$ in $\wa$.

\begin{sloppypar}To showcase the difference to the letter-bounded case, consider the language $(abab)^*a^*b^*(ab)^*$. Observe that, for example the word $(ab)^4$ can be decomposed in a number of ways: $(abab)^2a^0b^0(ab)^0$, $(abab)^1a^1b^1(ab)^1$, $(abab)^1a^0b^0(ab)^2$, $(abab)^0a^1b^1(ab)^3$ or $(abab)^0a^0b^0(ab)^4$.
One must be careful to consider all such decompositions.\end{sloppypar}

\begin{proof}[Proof of \cref{lem:btolb}]
Let $\wa = \abra{Q,\Sigma,M,F}$. Then we have $w_1,\dots,w_m$ such that for all $w$  with $\nu_s(w) > 0$, $w = w_1^{n_1}\dots w_m^{n_m}$ for some $n_1,\dots,n_m\in \mathbb{N}$. Let us assume $w_i = b_{i,1}b_{i,2},\dots,b_{i,|w_i|}$.

Given a word $w$, there may be multiple paths $\pi_1,\pi_2,\dots$ from $s$ to $t$ respecting that word. Further there may be multiple decomposition vectors $\vec{n}_1,\vec{n}_2,\dots \in \mathbb{N}^m$ such that $\vec{n}_i = (n_1,\dots,n_m)$ and  $w = w_1^{n_1}\dots w_m^{n_m}$. Our goal will be to construct a weighted automaton $\wa'$ with states $\hat{s}$ and $\hat{s'}$ letter-bounded over $a_1^*\dots a_m^*$ such that, for every word $w$,
the weight of $a_1^{n_1}\dots a_m^{n_m}$ in $\wa'$ from $\hat{s}$ (resp. $\hat{s'}$), for every valid decomposition vector $\vec{n} \in \mathbb{N}^m$ of $w$,
will be the sum of the weights of all paths $\pi_1,\pi_2,\dots$ respecting $w$ in $\wa$  from $s$ (resp. $s'$).
To compute $\wa'$, we will define a transducer and apply it to our automaton $\wa$.

A nondeterministic finite transducer is an NFA with transitions labelled by pairs from  $\Sigma \times (\Sigma' \cup \{\varepsilon\})$, denoted by $a/b$  for $a\in\Sigma$ and $b\in\Sigma' \cup \{\varepsilon\}$. In our construction, we only require edges of this form,
i.e. we do not consider a definition with transitions labelled with $\varepsilon $ in the first component (e.g., $\varepsilon/a$).  Our transducer induces a translation $\mathcal{T}: \Sigma^* \to \Sigma'^*$.

Consider the set of regular expressions $w_{i_1}^+\dots w_{i_{m'}}^+$ each induced by a sequence $\vec{\imath}=(i_1,\dots,i_{m'}) \in \mathbb{N}^{m'}$,  $m' \le m $, with $1  \le i_1 < \dots < i_{m'} \le m$. Note that two sequences $(i'_1,\dots,i'_{m'})$, $(i''_1,\dots,i''_{m''})$ may yield the same expression $w_{i_1}^+\dots w_{i_m}^+$, in which case we need not consider more than one. The transducer $\mathcal{T}$ will be defined as follows.

For each $\vec{\imath} = (i_1,\dots,i_{m'})$  described above, build the following automaton. For each $i_j$, construct the following section, which simply reads the word $w_{i_j}$:
\[ f^{\vec{\imath}}_j\trns{b_{i_j,1}/\varepsilon} s^{\vec{\imath}}_j\trns{b_{i_j,2 }/\varepsilon} \cdot \trns{b_{i_j,3 }/\varepsilon} \cdot  \ldots \cdot \trns{b_{i_j,|w_i| -1}/\varepsilon} e^{\vec{\imath}}_j.\]
Then, on the final character, nondeterministically restart or move to the next word, emitting a character representing the word:
\[e^{\vec{\imath}}_j \trns{b_{i_j, |w_{i_j}|}/a_{i_j}} f^{\vec{\imath}}_j \quad \text{ and }\quad  e^{\vec{\imath}}_j \trns{b_{i_j, |w_{i_j}|}/a_{i_j}} f^{\vec{\imath}}_{j+1}.\]

The transducer $\mathcal{T}$ is defined by the union of the above transitions over all $\vec{\imath}$.
We also add a global start state $q_0$, from which we would like to  move nondeterministically to $f_1^{\vec{\imath}}$ for each ${\vec{\imath}}$.
To achieve this and avoid $\varepsilon$ transitions,  we duplicate the transitions $f^{{\vec{\imath}}}_1 \trns{x} s^{{\vec{\imath}}}_1$ with $q_0\trns{x}s^{{\vec{\imath}}}_1$.
Observe that the valid output sequences are $(\varepsilon^* a_1)^*(\varepsilon^* a_2)^* \dots(\varepsilon^* a_m)^*$.

Assume $\wa = \abra{Q,\Sigma,M,\{t\}}$ and $\mathcal{T}=\abra{Q',\Sigma\times (\Sigma'\cup\{\varepsilon\}), \to, q_0}$. Then construct the weighted automaton $\mathcal{T}(\wa)=\abra{Q\times Q',\Sigma',M^\mathcal{T},\{t\}\times Q'}$ using a product construction. The probability is associated in the following way $M^\mathcal{T}(a) ((s,q),(s',q'))= p$ if there is a transition $q\trns{b/a} q'$ in $\mathcal{T}$ and $s \trns[b]{p} s'$ in $\wa$. Note that, by this definition, there is a matrix $M^\mathcal{T}(\varepsilon)$; however, in every run of $\mathcal{T}(\wa)$ there can be a finite number of $\varepsilon$'s in a row, at most $r =\max_{1\le i\le m} |w_i| -1$.

Now let $\wa'$ be a copy of $\mathcal{T}(\wa)$ with $\varepsilon$ removed: $M'(a_i) = (\sum_{x=0}^r M^\mathcal{T}(\varepsilon)^x) M^\mathcal{T}(a^i)$. Then $\nu_\wa(w) = \nu_{\wa'}(a_1^{n_1}\dots a_m^{n_m})$ for all $n_1,\dots,n_m$ such that $w = w_1^{n_1}\dots w_m^{n_m}$. Hence, $\wa'$ is a weighted automaton with letter-bounded languages from $(s,q_0)$ and $(s',q_0)$ such that $(s,q_0)$ is big-O of $(s',q_0)$ in $\wa'$ if and only if $s$ is big-O of $s'$ in $\wa$.
\end{proof}

\section{Analysis by Ambiguity}
\label{sec:ambiguity}
When a problem is shown to be undecidable, it is often only undecidable for relatively complex instances. We have already considered the problem when the language of the weighted automaton is restricted to bounded languages. It is also interesting to consider whether decision  problem are decidable depending on the ambiguity of the instances~\cite{FijalkowR017,DAVIAUD202178}.

We say a weighted automaton $\wa$ is \textit{unambiguous from a state $s$} if every word has at most one accepting path in $\nfaof{\wa}{s}$. A weighted automaton $\wa$ is \textit{finitely ambiguous from a state $s$} if there exists a constant $k$ such that every word has at most $k$ accepting paths in $\nfaof{\wa}{s}$.
A weighted automaton may also be polynomially ambiguous, if there exists a polynomial $p(\cdot)$ such that every word $w$ has at most $p(|w|)$ accepting paths in $\nfaof{\wa}{s}$. If a weighted automaton is not polynomially ambiguous, then it is \emph{exponentially ambiguous}.

\cref{thm:approxboundundecidable} proves that the big-O problem is undecidable. However, the weighted automaton resulting from the reduction, for which the big-O problem cannot be decided, is exponentially ambiguous. This raises the question whether it is decidable for restricted ambiguities. We answer this positively, showing polynomial time decidability for  unambiguous weighted automata, and decidability for finite ambiguous weighted automata subject to Schanuel's conjecture (\cref{exa:relativeorderinsuff} is finitely ambiguous).

We leave open whether the big-O problem is decidable for polynomially ambiguous weighted automata.

\subsection{Unambiguous weighted automata}

In this section, we prove the polynomial-time solvability in the unambiguous case.

\begin{thm}\label{thm:detpolly}
If a weighted automaton $\wa$ is unambiguous from states $s$ and $s'$, the big-O problem is decidable in polynomial time.
\end{thm}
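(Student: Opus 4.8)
The plan is to reduce the big-O problem for unambiguous weighted automata to the unary case already solved in \cref{sec:bigounaryconp}, but with the twist that the numeric weights along the single accepting path must be tracked, rather than the combinatorial structure of a general NFA. First I would record the key consequence of unambiguity: for every $w \in \lng{s}{\wa}$ there is a unique accepting path $\pi_s(w)$ in $\nfaof{\wa}{s}$, and $\nu_s(w)$ is exactly the product of the transition weights along $\pi_s(w)$ (no summation over paths). Hence $\log \nu_s(w) = \sum_{e \in \pi_s(w)} \log p_e$, a sum of logarithms of the rational weights used, each counted with multiplicity. Since $s$ is big-O of $s'$ iff the LC condition holds and $\log\nu_s(w) - \log\nu_{s'}(w)$ is bounded above over all $w$, the problem becomes: is the difference of two path-weight-logarithms bounded above, where the paths are the unique accepting runs from $s$ and from $s'$ on the same word $w$?

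The natural device is the product automaton. I would form the ``synchronised'' NFA $\nfa$ on state set (reachable, co-reachable part of) $Q\times Q$ with start state $(s,s')$, where a transition $(q_1,q_2)\trns{a}(q_1',q_2')$ exists iff both $q_1\trns{a}q_1'$ and $q_2\trns{a}q_2'$ are edges of $\nfaof{\wa}{}$; assign to this edge the ``reward'' $\log p_{(q_1,q_1')} - \log p_{(q_2,q_2')} \in \mathbb{R}$. Accepting states are $(t,t')$ with $t,t'$ final. First check LC (by \cref{rem:lc}, for unambiguous automata this is in $\P$). Given LC, every accepting path from $s$ is matched by an accepting path from $s'$ on the same word, and by unambiguity on both sides the pair of paths is itself the \emph{unique} accepting path of $\nfa$ on $w$; so $\log\nu_s(w)-\log\nu_{s'}(w)$ equals the total reward of that path. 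Thus $s$ is big-O of $s'$ iff the supremum of total rewards over accepting paths of $\nfa$ is finite, i.e.\ iff $\nfa$ contains no accepting path traversing a cycle of strictly positive total reward. Equivalently: there is no reachable-and-co-reachable cycle $C$ in $\nfa$ with $\sum_{e\in C}(\log p_{(q_1,q_1')} - \log p_{(q_2,q_2')}) > 0$. Detecting a positive-weight cycle is the classical (negate weights) Bellman--Ford / Floyd--Warshall task, solvable in polynomial time in the number of vertices and edges of $\nfa$, which is polynomial in $|\wa|$.

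The one genuine obstacle is that the edge rewards are logarithms of rationals, so the arithmetic used by a shortest-path algorithm must compare and add such quantities exactly. I would handle this by observing that a cycle reward is $\sum_i c_i \log p_i$ for integers $c_i$ and rationals $p_i$ appearing in $\wa$; its sign is the sign of $\log\prod_i p_i^{c_i}$, i.e.\ the comparison of $\prod_i p_i^{c_i}$ with $1$ — a comparison of two integers (after clearing denominators) whose bit-lengths are polynomial, since any simple-plus-one-extra-edge cycle has length $\le |Q|^2+1$ and the exponents $c_i$ stay polynomially bounded. So one can run Bellman--Ford over the value semigroup $(\mathbb{Q}_{>0},\times)$ (tracking, for each vertex, the best product of weight-ratios along a path, represented as a reduced fraction) and test at the end whether relaxing any edge strictly increases some vertex's value beyond its $|Q|^2$-step value; care is needed to keep the fractions from blowing up, which is ensured by the length bound on the relevant cycles/paths. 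Putting it together: (i) build $\nfaof{\wa}{s}$, $\nfaof{\wa}{s'}$ and check unambiguity and LC in polynomial time; (ii) build the trimmed product automaton $\nfa$; (iii) test for a positive-reward reachable-co-reachable cycle via a polynomial-time shortest-path computation over $(\mathbb{Q}_{>0},\times)$; answer ``$s$ is big-O of $s'$'' iff LC holds and no such cycle exists. I expect the write-up's delicate point to be the bookkeeping in step (iii) showing the intermediate rationals remain polynomial-size, which follows from bounding the length of a minimal witnessing cycle by $|Q|^2+1$.
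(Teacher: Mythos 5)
Your proposal is correct and follows essentially the same route as the paper: check the LC condition, form the product automaton whose edge weights are the ratios of corresponding transition weights, use unambiguity to conclude that the unique surviving product path computes $\nu_s(w)/\nu_{s'}(w)$, and detect a reachable-and-co-reachable cycle of product weight greater than $1$ via a Bellman--Ford-style computation carried out multiplicatively over the rationals (exactly the paper's device of replacing $\log x + \log y$ by $xy$ and $\log x < \log y$ by $x < y$ to avoid irrational arithmetic). The bit-size bookkeeping you flag as the delicate point is handled the same way in the paper, by bounding path/cycle lengths so that the intermediate products stay polynomial-size.
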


\begin{proof}
Let $\wa=\abra{Q, \Sigma{}, M, F}$ be a weighted automaton.
Suppose $s,s'\in Q$, $t$ is a unique final state, and $\wa$ is unambiguous from $s, s'$.

If $\wa$ fails the LC condition (recall that it can be checked in polynomial time), we return {no}. Otherwise,
let us construct a weighted automaton  $\wa'$  through a restricted product construction involving two copies of $\wa$:
for all $q_1, q_2, q_1', q_2'\in Q$, we add edges $(q_1,q_1') \trns[a]{p} (q_2,q_2')$ provided
$M(a)(q_1,q_2)>0$, $M(a)(q_1',q_2')>0$ and $p=\frac{M(a)(q_1,q_2)}{M(a)(q_1',q_2')}$.
Note that there exists a positively-weighted $w$-labelled path from $(s,s')$ to $(t,t)$ in $\wa'$
iff $w\in \lng{s}{\wa}\cap \lng{s'}{\wa}$. By the LC condition, this is equivalent to $w\in \lng{s}{\wa}$,
and, to examine the big-O problem, it suffices to consider only such words.

By unambiguity of $\wa$ from $s$ and $s'$, for any $w\in  \lng{s}{\wa}$, there can be exactly one positively-weighted path from $(s,s')$ to $(t,t)$ in $\wa'$.
Consequently, the product of weights along this path is equal to $\nu_s(w)/\nu_{s'}(w)$.
Hence,  $s$ is not big-O of $s'$ (for $\wa$) if and only there exists a positively-weighted path from $(s,s')$ to $(t,t)$ in $\wa'$
that contains a cycle such that the product of the weights in that cycle is greater than $1$.

Thus, to decide the big-O problem for $s,s'$, it suffices to be able to detect such cycles.
This can be done, for instance, by a modified version of
the Bellman-Ford algorithm~\cite{CLR90} applied to the weighted directed graph consisting of positively-weighted edges of $\wa'$.
The algorithm is normally used to find negative cycles in the sense that the sum of weights is negative.
To adapt it to our setting, we can apply the logarithm function to the weights.
However, to preserve rationality of weights and polynomial-time complexity,
we cannot afford to do that explicitly. Instead, whenever $\log(x)<\log(y)$ would be tested, we test $x<y$ and, whenever $\log(x)+\log(y)$ would be performed,
we compute $xy$ instead.
\end{proof}

Note the relevant behaviours are those on cycles---transitions which are taken at most once are of little significance to the big-O problem.
Such transitions have at most a constant multiplicative effect on the ratio. This is the case whether or not the system is unambiguous.

\subsection{Finitely ambiguous weighted automata}
A weighted automaton is finitely ambiguous if there exists a constant $k$ such that for every word $w$, there are at most $k$ accepting paths in  $\nfaof{\wa}{\mathcal{A}}$. Here show the following result for the Big-O problem on finitely ambiguous weighted automata, a problem left open in~\cite{ChistikovKMP20}:

\begin{thm}\label{thm:finambigdecstschanuels}
  Given two finitely ambiguous weighted automata, $\mathcal{A},\mathcal{B}$, it is decidable, subject to Schanuel's conjecture, whether $\mathcal{A}$ is big-O of $\mathcal{B}$.
\end{thm}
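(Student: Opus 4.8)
The plan is to reduce the big-O problem for finitely ambiguous automata, via a chain of effectivity-preserving steps, to the decision problem for \rexp, which is decidable subject to Schanuel's conjecture (recall \cref{sec:logicaltheoriesschan}). As usual I phrase the question for a single automaton $\wa$ with two states; by the standard disjoint-union trick this is the same as comparing two finitely ambiguous automata $\mathcal A,\mathcal B$ via $f_{\mathcal A},f_{\mathcal B}$. First I would invoke the (effective) fact that a $k$-ambiguous weighted automaton equals a finite sum of unambiguous ones: fixing a canonical total order on accepting runs (e.g.\ as words over the state set) and building, for each rank $\ell\le k$, an automaton that deterministically follows ``the $\ell$-th surviving run'' (remembering how many still-alive runs precede the current one), each resulting automaton is unambiguous and summing their weights recovers $f_{\mathcal A}$. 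Writing $f_{\mathcal A}=\sum_i f_{\mathcal A_i}$, $f_{\mathcal B}=\sum_j f_{\mathcal B_j}$ with all $\mathcal A_i,\mathcal B_j$ unambiguous, non-negativity gives $f_{\mathcal A}=O(f_{\mathcal B})$ iff $f_{\mathcal A_i}=O(f_{\mathcal B})$ for every $i$, and since $\max_j f_{\mathcal B_j}\le f_{\mathcal B}\le k_2\max_j f_{\mathcal B_j}$ this is equivalent to $f_{\mathcal A_i}=O(\max_j f_{\mathcal B_j})$ for every $i$. So it suffices to decide, for one of the unambiguous numerators $\mathcal A_i$ and finitely many unambiguous denominators $\mathcal B_1,\dots,\mathcal B_{k_2}$, whether $f_{\mathcal A_i}=O(\max_j f_{\mathcal B_j})$.

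Next I would check the LC condition $\lng{}{\mathcal A}\subseteq\lng{}{\mathcal B}$ by NFA inclusion, answering ``no'' if it fails. Otherwise, $f_{\mathcal A_i}$ fails to be $O(\max_j f_{\mathcal B_j})$ iff, after pigeonholing on the finitely many possible support sets $J=\{j:w\in\lng{}{\mathcal B_j}\}$, there is a non-empty $J$ and a sequence of words $w_n$ with support set exactly $J$, with $w_n\in\lng{}{\mathcal A_i}$, and with $f_{\mathcal A_i}(w_n)/f_{\mathcal B_j}(w_n)\to\infty$ for every $j\in J$ (the minimum over $j\in J$ of these ratios is exactly $f_{\mathcal A_i}(w_n)/\max_j f_{\mathcal B_j}(w_n)$). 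For fixed $i$ and $J$ I would build the product automaton $\mathcal P$ of $\mathcal A_i$, of the $\mathcal B_j$ with $j\in J$, and of a DFA for $\bigcap_{j\notin J}\overline{\lng{}{\mathcal B_j}}$; each factor being unambiguous or deterministic, every word has at most one accepting run in $\mathcal P$, and such a run exists precisely for the $w\in\lng{}{\mathcal A_i}$ with support set $J$. Decorating each edge $e$ of $\mathcal P$ on letter $a$ with the vector $\vec v(e)=\big(\log M_{\mathcal A_i}(a)-\log M_{\mathcal B_j}(a)\big)_{j\in J}\in\mathbb R^{|J|}$ (the appropriate matrix entries), the vector $\big(\log\tfrac{f_{\mathcal A_i}(w)}{f_{\mathcal B_j}(w)}\big)_{j\in J}$ is just the sum of the $\vec v(e)$ along the unique run for $w$ — a pure sum, with no polynomial correction terms of the kind appearing in the bounded case of \cref{thm:bounded}; indeed the case $k_1=k_2=1$ collapses to the ``no cycle of weight-product $>1$'' criterion of \cref{thm:detpolly}. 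Trimming $\mathcal P$ and splitting each accepting run into a bounded skeleton plus cycles, a pigeonhole argument then shows the required $w_n$ exist iff some path $P$ in the SCC-DAG of $\mathcal P$ from an initial to a final component has the property that the convex cone generated by the cycle-vectors $\vec c_C=\sum_{e\in C}\vec v(e)$ of the simple cycles $C$ on $P$ contains a coordinatewise positive vector, i.e.\ there are reals $\lambda_C\ge0$ with $\sum_C\lambda_C\vec c_C>\vec 0$.

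Finally, each coordinate of each $\vec c_C$ is a $\mathbb Z$-linear combination of logarithms of the positive rational transition weights, so introducing a variable $y_k$ per weight $p_k$ constrained by $e^{y_k}=p_k$ turns ``$\exists\,\lambda_C\ge0$ with $\sum_C\lambda_C c_{C,j}>0$ for all $j\in J$'' into an \rexp sentence, decidable subject to Schanuel. Running this over all $i$, all non-empty $J$, and all SCC-paths $P$ (finitely many) and combining with the LC check decides the problem. I expect the main obstacle to be the middle step: showing that the escape to infinity of $f_{\mathcal A_i}/\max_j f_{\mathcal B_j}$ is captured exactly by coordinatewise-positive elements of the cycle cone of a single SCC-path. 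One has to handle that the witnessing denominator $\mathcal B_j$ varies with the word, that a single (unambiguous) run may still traverse several interacting cycles with independently chosen multiplicities, and that degenerate limits of those multiplicity vectors can nonetheless yield valid divergent sequences; the cone reformulation is what lets all of this be offloaded to \rexp.
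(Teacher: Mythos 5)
Your route is genuinely different from the paper's and, modulo two steps that you assert rather than prove, it is sound. The paper does not decompose into unambiguous automata itself: it imports wholesale a characterisation of Daviaud et al.\ (Propositions 11 and 15 of \cite{DAVIAUD202178}) that computes a finite set $\Delta$ of tuples of positive rationals such that the pairs $(f_{\mathcal A}(w),f_{\mathcal B}(w))$ are exactly the values of explicit exponential expressions $\sum_i p_i\prod_j (q^i_j)^{n_j}$ over $n\in\mathbb N^m$; for each tuple the big-O question then becomes an unboundedness question for a ratio of two such expressions, and a rounding argument (the analogue of your final relaxation) replaces $n\in\mathbb N^m$ by $x\in\mathbb R_{\ge 0}^m$ to land in \rexp. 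Your approach re-derives a version of that characterisation from scratch via the sum-of-unambiguous decomposition, support-set pigeonholing and the product construction; what it buys is a purely geometric criterion (cone feasibility) whose coefficients are logarithms of the rational transition weights only --- no spectral radii --- so it is arguably closer to an unconditional argument, whereas the paper's encoding is a direct $\forall C\,\exists x$ sentence. Two caveats. First, the effective decomposition of a finitely ambiguous weighted automaton into a finite sum of unambiguous ones is indeed a known fact, but your one-line ``$\ell$-th surviving run'' construction is not obviously realisable with finite memory (one cannot know on-line which runs will survive to acceptance); cite the standard construction rather than sketching it. Second, the step you flag as the main obstacle --- divergence iff the cycle cone of some SCC-path meets the open positive orthant --- is true, and it has a short proof that entirely avoids the degenerate-limit analysis you worry about: if the cone $K$ generated by the relevant $\vec c_C$ contains no coordinatewise positive vector, then every $x\in K$ satisfies $\min_{j\in J}x_j\le 0$ (otherwise $x$ itself would be such a vector), so every accepting run's log-ratio vector has $\min_j$ bounded above by the finitely many skeleton contributions, giving $f_{\mathcal A_i}(w)\le e^B\max_j f_{\mathcal B_j}(w)$ for a uniform $B$; conversely, a positive element of the cone can be perturbed to a rational one, scaled to integers, and realised by pumping cycles, which sends all coordinates of the log-ratio vector to $+\infty$. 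With that lemma made explicit, your argument goes through.
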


\cref{thm:finambigdecstschanuels} is proven by encoding the big-O problem into  the first order theory of the reals with exponential function in~\cref{lemma:finrealversion}. As a consequence, the Big-O problem is decidabile, conditional on Schanuel's conjecture (recall from~\cref{sec:logicaltheoriesschan} that Schanuel's conjecture entails decidability of ~\rexp).

Daviaud~et~al.~\cite{DAVIAUD202178} show, in Proposition 11 (and Proposition 15 therein), the following characterisation of attainable weights for pairs of finitely ambiguous weighted automata, which they use in the inclusion problem. In the following we adopt the notation $f_{\mathcal{A}}$ for the weight from the dedicated starting state of weighted automaton $\mathcal{A}$ and similarly for a weighted automaton $\mathcal{B}$.

\begin{thm}[\cite{DAVIAUD202178}]
Given two finitely ambiguous weighted automata, $\mathcal{A},\mathcal{B}$, one can compute a finite set of tuples $\Delta$ with elements $({p},{q}^1,\dots, {q}^{k}, {r},{s}^1,\dots, {s}^{\ell})$ where $p\in\mathbb{Q}_{>0}^{k},r\in\mathbb{Q}_{>0}^{\ell}$, $q^{i},s^{i} \in \mathbb{Q}_{>0}^{m}$ for some $m$, such that:
\begin{itemize}
\item for all $({p},{q}^1,\dots, {q}^{k}, {r},{s}^1,\dots, {s}^{\ell}) \in \Delta$ and ${n} \in \mathbb{N}^m$ there exists a word $w\in \Sigma^*$ such that
\begin{equation}
\label{eq:wa-pairs}
f_\mathcal{A}(w) = \sum_{i=1}^{k} p_i{({q}^i_1)}^{n_1}\dots, {({q}^i_m)}^{n_m}
\quad \text{ and } \quad f_\mathcal{B}(w) = \sum_{i=1}^{\ell} r_i{({s}^i_1)}^{n_1}\dots, {({s}^i_m)}^{n_m};
\end{equation}
\item for all words  $w\in \Sigma^*$ there exist  $({p},{q}^1,\dots, {q}^{k}, {r},{s}^1,\dots {s}^{\ell}) \in \Delta$ and ${n} \in \mathbb{N}^m$  such that Equation~\eqref{eq:wa-pairs} holds.
\end{itemize}

\end{thm}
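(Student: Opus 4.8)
The plan is to establish this structural characterisation following the route of Daviaud~et~al.~\cite{DAVIAUD202178}: first reduce finite ambiguity to a finite union of unambiguous automata, then analyse a single unambiguous automaton through the strongly connected components (SCCs) of its transition graph, and finally synchronise the resulting descriptions of $\mathcal{A}$ and $\mathcal{B}$ via a product construction so that one integer vector drives both weights on the same word. Concretely, in \emph{Step~1} I would use the classical fact that a finitely ambiguous weighted automaton is equivalent, as a weight function, to a sum of finitely many unambiguous weighted automata, writing $f_{\mathcal{A}} = \sum_{j=1}^{k_0} f_{\mathcal{A}_j}$ and $f_{\mathcal{B}} = \sum_{j=1}^{\ell_0} f_{\mathcal{B}_j}$ with all the $\mathcal{A}_j$ and $\mathcal{B}_j$ unambiguous (each with a unique final state). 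It then suffices to express the weight of each unambiguous component as a finite union of monomials in integer parameters, and to arrange a common parameter vector.

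\emph{Step~2 (a single unambiguous automaton).} Fix an unambiguous $\mathcal{C}$. For a word $w$ with $f_{\mathcal{C}}(w) > 0$, the unique accepting path visits a sequence of SCCs in the order given by the DAG of SCCs, entering and leaving each SCC at fixed vertices via fixed transient transitions; there are only finitely many such ``shapes''. Fixing a shape, the weight of the path is the product of the transient-transition weights with, for each visited SCC, a factor $\prod_{e} \mathrm{wt}(e)^{\mu_e}$, where $\mu$ is the edge-multiplicity vector of the sub-walk performed inside that SCC between its entry and exit vertices; crucially this factor depends only on $\mu$, not on the order of the edges. The set of realisable such $\mu$ (edge multisets admitting an entry-to-exit walk within the SCC) is definable in Presburger arithmetic --- flow conservation together with a connectivity condition --- hence semilinear, say a finite union of linear sets $\beta + \sum_t \lambda_t \gamma_t$. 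On each linear piece the SCC factor is a monomial $\mathrm{const}\cdot \prod_t \big(\prod_e \mathrm{wt}(e)^{(\gamma_t)_e}\big)^{\lambda_t}$ in the $\lambda_t$, the length of the sub-walk is affine in $\lambda$, and every $\lambda \in \mathbb{N}^{s}$ is realised by an actual sub-walk. Multiplying over the visited SCCs and the transient part, and ranging over the finitely many shapes and linear pieces, we obtain a finite family of patterns, each sending $w$ to $p\cdot\prod_{t} q_t^{n_t}$ with $p, q_t \in \mathbb{Q}_{>0}$, such that every $w \in \lng{}{\mathcal{C}}$ is governed by some pattern at a suitable $n \in \mathbb{N}^{s}$, and conversely every pattern is realised at every $n$.

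\emph{Step~3 (synchronisation) and effectiveness.} The pattern governing $w$ in $\mathcal{C}$ is a function of the shape of $w$ --- the sequence of SCCs, the transient transitions, and the linear piece of each SCC's cycle data. To make one integer vector record, for a single word, the loop counts relevant to all of the $\mathcal{A}_j$ and $\mathcal{B}_j$ at once, I would run the shape bookkeeping of all these components in parallel, i.e.\ pass to the product of their underlying automata, and apply the SCC/semilinear analysis of Step~2 to this product while tracking the weight of each component separately --- now all parametrised by the same vector $n$. Summing the per-component weights yields, for each joint shape and each linear piece, a tuple $(p, q^1, \dots, q^{k}, r, s^1, \dots, s^{\ell})$ as in the statement; the two bullet points are exactly the ``realised at every $n$'' and ``every word is governed by some shape'' properties, and the finite collection of these tuples is the desired $\Delta$. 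All the ingredients --- SCC decomposition, enumeration of shapes, the Presburger/semilinear description of realisable cycle data (via the flow equations, or via Parikh's theorem applied to suitable automata), and the products --- are effective, so $\Delta$ is computable.

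The main obstacle is Step~3, together with the part of Step~2 it rests on: one must set up the product so that a single vector $n$ simultaneously records the loop multiplicities seen by \emph{every} component along one word, and, conversely, so that for \emph{every} $n$ there is a word producing exactly those multiplicities. This is precisely where semilinearity of the realisable edge-multiplicity sets is used (and where one must be careful that the semilinear sets describe genuinely realisable walks, not merely flow-feasible multisets); everything else is routine graph- and automata-theoretic manipulation.
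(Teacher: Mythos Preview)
The paper does not give its own proof of this theorem: it is quoted verbatim as a result of Daviaud~et~al.~\cite{DAVIAUD202178} (their Propositions~11 and~15), with no argument supplied here. So there is nothing in the present paper to compare your attempt against; you are, in effect, sketching the proof of the cited reference rather than of this paper.

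That said, your outline tracks the structure of the cited argument faithfully: the reduction from finite ambiguity to a finite sum of unambiguous automata, the SCC/shape analysis of an unambiguous automaton yielding monomial weights parametrised by a semilinear set of loop counts, and the product construction to synchronise the parameter vectors across all components. Your identification of Step~3 as the delicate point is accurate --- arranging that a \emph{single} integer vector $n$ simultaneously governs all components on one word, and that every $n$ is realised, is exactly where the work lies, and the semilinear/Parikh machinery is what makes it go through. One place to be more careful: in Step~2 you assert that ``every $\lambda \in \mathbb{N}^s$ is realised by an actual sub-walk''; this is true provided the semilinear decomposition is chosen so that the period vectors correspond to genuine cycles through the entry/exit pair (not merely flow-feasible edge multisets), which is standard but should be stated explicitly.
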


Since $\Delta$ is a finite set, if there is a sequence of words $(w_i)_i$ such that $\frac{f_\mathcal{A}(w_i)}{f_\mathcal{B}(w_i)} \to\infty$, then there is an infinite subsequence such that every $w_i$ can be associated with a single choice of $({p},{q}^1,\dots {q}^{k}, {r},{s}^1,\dots {s}^{\ell}) \in \Delta$. As such, we can characterise the big-O condition using the following logical characterisation:
\begin{lem}
$\mathcal{A}$ is not big-O of $\mathcal{B}$ if and only if there exists $({p},{q}^1,\dots {q}^{k}, {r},{s}^1,\dots {s}^{\ell}) \in \Delta$, such that for all $C > 0$, there exists ${n} \in \mathbb{N}^m$ such that  \[ \sum_{i=1}^{k} p_i{({q}^i_1)}^{n_1}\dots {({q}^i_m)}^{n_m} \ge C\sum_{i=1}^{\ell} r_i{({s}^i_1)}^{n_1}\dots {({s}^i_m)}^{n_m}.\]
\end{lem}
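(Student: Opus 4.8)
The plan is to derive the lemma directly from the characterisation of Daviaud et~al.\ quoted just above (the theorem from~\cite{DAVIAUD202178}) by a finiteness argument on the set $\Delta$, after first recasting ``not big-O'' in a form that avoids the conventions $\frac{0}{0}=0$ and $\frac{x}{0}=\infty$. First I would record the elementary reformulation: $\mathcal{A}$ is not big-O of $\mathcal{B}$ if and only if for every $C>0$ there is a word $w$ with $f_{\mathcal{A}}(w) > C\cdot f_{\mathcal{B}}(w)$. One direction is just the negation of the definition; for the other, any witness $w$ with $f_{\mathcal{B}}(w)=0$ has $f_{\mathcal{A}}(w)>0$, so the LC condition fails and $\mathcal{A}$ is again not big-O of $\mathcal{B}$. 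It also suffices to quantify over $C\in\mathbb{N}$, by monotonicity of the condition in $C$. Throughout I would restrict attention to words $w$ with $f_{\mathcal{A}}(w)>0$ --- the only ones relevant to the big-O relation --- so that every tuple of $\Delta$ that is actually used has $k\ge 1$.

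For the forward implication, fix for each $C\in\mathbb{N}$ a word $w_C$ with $f_{\mathcal{A}}(w_C) > C\,f_{\mathcal{B}}(w_C)$ and apply the second part of the theorem of~\cite{DAVIAUD202178} to obtain a tuple $\delta_C\in\Delta$ and a vector $n_C\in\mathbb{N}^m$ realising $f_{\mathcal{A}}(w_C)$ and $f_{\mathcal{B}}(w_C)$ as the two exponential-polynomial sums associated with $\delta_C$. Since $\Delta$ is finite, some single $\delta\in\Delta$ equals $\delta_C$ for an infinite (hence unbounded) set $S$ of values of $C$. For any target $C_0>0$, choose $C\in S$ with $C\ge C_0$; the corresponding $n_C$ then gives $\sum_{i=1}^k p_i (q^i_1)^{n_{C,1}}\cdots (q^i_m)^{n_{C,m}} = f_{\mathcal{A}}(w_C) > C\,f_{\mathcal{B}}(w_C) = C\sum_{i=1}^\ell r_i (s^i_1)^{n_{C,1}}\cdots (s^i_m)^{n_{C,m}} \ge C_0\sum_{i=1}^\ell r_i (s^i_1)^{n_{C,1}}\cdots (s^i_m)^{n_{C,m}}$, using that the right-hand sum is non-negative and $C\ge C_0$. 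Thus $\delta$ witnesses the right-hand side of the lemma.

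For the converse, take the witnessing tuple $\delta=(p,q^1,\dots,q^k,r,s^1,\dots,s^\ell)\in\Delta$. Given $C_0>0$, instantiate the hypothesis with $C:=C_0+1$ to get $n\in\mathbb{N}^m$ with $\sum_{i=1}^k p_i (q^i_1)^{n_1}\cdots (q^i_m)^{n_m} \ge (C_0+1)\sum_{i=1}^\ell r_i (s^i_1)^{n_1}\cdots (s^i_m)^{n_m}$, and apply the first part of~\cite{DAVIAUD202178} to obtain a word $w$ with $f_{\mathcal{A}}(w)$ and $f_{\mathcal{B}}(w)$ equal to these two sums. If $f_{\mathcal{B}}(w)>0$ then $f_{\mathcal{A}}(w)\ge (C_0+1)f_{\mathcal{B}}(w) > C_0 f_{\mathcal{B}}(w)$; if $f_{\mathcal{B}}(w)=0$ then, since all $r_i$ and all coordinates of the $s^i$ are positive, necessarily $\ell=0$, and hence (as $k\ge 1$ with positive $p_i,q^i$) $f_{\mathcal{A}}(w)>0=C_0 f_{\mathcal{B}}(w)$. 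Either way $f_{\mathcal{A}}(w) > C_0 f_{\mathcal{B}}(w)$; as $C_0$ was arbitrary, $\mathcal{A}$ is not big-O of $\mathcal{B}$.

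I do not expect a genuine obstacle inside this lemma: its entire content is a compactness/pigeonhole repackaging of the (already established) theorem of~\cite{DAVIAUD202178}, and the only care needed is the bookkeeping above around zero-weight words and the mismatch between the strict inequality in the definition of big-O and the non-strict inequality in the statement --- both resolved by quantifying over all $C$ and by the observation that a zero-weight witness on the $\mathcal{B}$-side collapses to a failure of language containment. The substantive work lies downstream, in \cref{lemma:finrealversion}: turning this characterisation into a sentence of \rexp, where logarithms of ratios of the (non-explicitly-given) algebraic numbers $q^i_j, s^i_j$ are expressed via the exponential function, as in \cref{lem:algebraic} and \cref{lem:natlogicalformulationfirst}, which is where the reliance on Schanuel's conjecture enters.
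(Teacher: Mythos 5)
Your proposal is correct and follows essentially the same route as the paper, which disposes of this lemma with a one-sentence pigeonhole argument on the finite set $\Delta$ (any sequence of words witnessing unboundedness of the ratio has an infinite subsequence associated with a single tuple). Your write-up is in fact more careful than the paper's, spelling out both directions and the degenerate cases where $f_{\mathcal{B}}(w)=0$ or $\ell=0$, which the paper leaves implicit.
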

Since we can iterate through $({p},{q}^1,\dots {q}^{k}, {r},{s}^1,\dots {s}^{\ell}) \in \Delta$, it remains to show that it is possible to check the logical characterisation. Like in \cref{claim:equalnaturalreal}, we show this logical characterisation using naturals is equivalent to a statement that  can be encoded in \rexp, the theory of the reals with exponential function.

\begin{lem}\label{lemma:finrealversion}
$\mathcal{A}$ is not big-O of $\mathcal{B}$ if and only if there exists $({p},{q}^1,\dots {q}^{k}, {r},{s}^1,\dots {s}^{\ell}) \in \Delta$, such that for all $C > 0$, there exists ${x} \in \mathbb{R}_{\ge 0}^m$ such that  \[ \sum_{i=1}^{k} p_i{({q}^i_1)}^{x_1}\dots {({q}^i_m)}^{x_m} \ge C \sum_{i=1}^{\ell} r_i{({s}^i_1)}^{x_1}\dots {({s}^i_m)}^{x_m}.\]
\end{lem}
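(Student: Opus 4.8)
The plan is to derive \cref{lemma:finrealversion} from the previous lemma (the characterisation of negative instances via integer vectors $n \in \mathbb{N}^m$) by transferring the innermost existential quantifier from $\mathbb{N}^m$ to $\mathbb{R}^m_{\ge 0}$, in the same spirit as \cref{claim:equalnaturalreal}. One implication is immediate: since $\mathbb{N}^m \subseteq \mathbb{R}^m_{\ge 0}$, every integer witness $n$ is also a real witness, so if $\mathcal{A}$ is not big-O of $\mathcal{B}$ then the previous lemma already yields the real characterisation. All the content is in the converse, where real witnesses must be converted into integer ones.

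For the converse, fix a tuple $(p, q^1, \dots, q^k, r, s^1, \dots, s^\ell) \in \Delta$ witnessing the real characterisation and, given $x \in \mathbb{R}^m_{\ge 0}$, round down componentwise to $n = (\lfloor x_1 \rfloor, \dots, \lfloor x_m \rfloor) \in \mathbb{N}^m$, so that $0 \le x_i - n_i < 1$ for all $i$. Since each coordinate moves by less than $1$, each monomial changes only by a bounded multiplicative factor: for all $i, j$ one checks (splitting on whether the base is $\ge 1$ or $< 1$) that $(q^i_j)^{n_j} \ge (q^i_j)^{x_j} / \max\{1, q^i_j\}$ and $(s^i_j)^{n_j} \le (s^i_j)^{x_j} \cdot \max\{1, 1/s^i_j\}$. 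Taking products over $j$ and summing over $i$, there is a constant $K \ge 1$, built only from the finitely many positive rational entries appearing in the finitely many tuples of $\Delta$, with $\sum_i p_i (q^i_1)^{n_1} \cdots (q^i_m)^{n_m} \ge K^{-1} \sum_i p_i (q^i_1)^{x_1} \cdots (q^i_m)^{x_m}$ and $\sum_i r_i (s^i_1)^{n_1} \cdots (s^i_m)^{n_m} \le K \sum_i r_i (s^i_1)^{x_1} \cdots (s^i_m)^{x_m}$. Hence, given $C > 0$, applying the real characterisation with threshold $K^2 C$ produces an $x$, and rounding gives an $n \in \mathbb{N}^m$ satisfying the integer inequality with threshold $C$ for the \emph{same} tuple; the first bullet of the cited theorem of Daviaud~et~al.\ then supplies a word $w$ realising the corresponding weights, so the previous lemma applies and $\mathcal{A}$ is not big-O of $\mathcal{B}$.

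The step I expect to be the main obstacle is not deep but needs care: the multiplicative constant $K$ must be \emph{uniform}, i.e.\ independent both of the threshold $C$ and of the particular real witness $x$, since it is exactly this uniformity that lets the rounding survive the alternation $\forall C\, \exists x$ (an $x$-dependent factor would be useless). Uniformity holds because $\Delta$ is finite and each $q^i_j, s^i_j$ is a fixed positive rational; degenerate coordinates with $x_j \in [0,1)$, for which $n_j = 0$, cause no difficulty. Finally, to conclude \cref{thm:finambigdecstschanuels} it remains to observe that the real statement of \cref{lemma:finrealversion} is expressible in \rexp: each power $(q^i_j)^{x_j}$ equals $\exp(x_j \cdot y_{ij})$ where the constant $y_{ij}$ is characterised by $\exp(y_{ij}) = q^i_j$, and the remaining structure (finite sums, $\forall C\, \exists x$, a single inequality, iteration over the finite set $\Delta$) is first-order over the reals with exponentiation; decidability subject to Schanuel's conjecture then follows as recalled in \cref{sec:logicaltheoriesschan}.
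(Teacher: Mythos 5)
Your proposal is correct and follows essentially the same route as the paper: both directions are handled identically (integer witnesses are real witnesses; for the converse, round $x$ down to $n=\lfloor x\rfloor$, bound the resulting multiplicative distortion of each monomial by a constant depending only on the finitely many rational entries in $\Delta$, and absorb that constant into the threshold by invoking the real statement with a suitably inflated $C$). The only cosmetic difference is that the paper uses two constants $A$ and $B$ (one per side) and threshold $C\cdot A/B$, whereas you use a single $K$ and threshold $K^2C$.
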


\begin{proof}
Clearly if the real formulation is unsatisfied, then the formulation with naturals is unsatisfied. It remains to show that if the real formulation is satisfied, then so too is the formulation with naturals. We assume the condition in \cref{lemma:finrealversion} is satisfied for $({p},{q}^1,\dots {q}^{k}, {r},{s}^1,\dots {s}^{\ell}) \in \Delta$ fixed and we  show: \[\forall C > 0, \ \exists {n} \in \mathbb{N}^m \text{ s.t. } \sum_{i=1}^{k} p_i{({q}^i_1)}^{n_1}\dots {({q}^i_m)}^{n_m} \ge C\sum_{i=1}^{\ell} r_i{({s}^i_1)}^{n_1}\dots {({s}^i_m)}^{n_m}.\]

Let $C$ be given, we show the existence of a relevant vector ${n}$. Let us choose $C' = TC$, the exact value of $T\in\mathbb{R}_{+}$ will be chosen later, so that by assumption there exists ${x} \in \mathbb{R}_{\ge 0}^m$ such that  \begin{equation}\label{eq:mainin} \sum_{i=1}^{k} p_i{({q}^i_1)}^{x_1}\dots {({q}^i_m)}^{x_m} \ge C' \sum_{i=1}^{\ell} r_i{({s}^i_1)}^{x_1}\dots {({s}^i_m)}^{x_m}.\end{equation}

We decompose ${x}$ into its integer  and fractional parts. Let ${n},{y}$ be such that ${x} = {n} + {y}$, ${0}\le{y}< {1}$ and ${n} \in \mathbb{N}^m $.

Let $\displaystyle A = \max_{1\le i \le k}   \prod_{1\le j \le m} \max_{0\le z \le 1} {({q}^i_j)}^{z} $ and observe that
\begin{align} \notag \sum_{i=1}^{k} p_i{({q}^i_1)}^{x_1}\dots {({q}^i_m)}^{x_m} &= \sum_{i=1}^{k} p_i{({q}^i_1)}^{n_1}\dots {({q}^i_m)}^{n_m}{({q}^i_1)}^{y_1}\dots {({q}^i_m)}^{y_m}
\\
\label{eq:firstin}&\le A \sum_{i=1}^{k} p_i{({q}^i_1)}^{n_1}\dots {({q}^i_m)}^{n_m}.\end{align}

Similarly, let $\displaystyle B = \min_{1\le i \le \ell}   \prod_{1\le j \le m} \min_{0\le z \le 1} {({s}^i_j)}^{z} $, and observe that
\begin{align}
\sum_{i=1}^{\ell} r_i{({s}^i_1)}^{x_1}\dots {({s}^i_m)}^{x_m} &= \sum_{i=1}^{\ell} r_i{({s}^i_1)}^{n_1}\dots {({s}^i_m)}^{n_m}{({s}^i_1)}^{y_1}\dots {({s}^i_m)}^{y_m}\notag
\\
\label{eq:thirdin}&\ge B \sum_{i=1}^{\ell} r_i{({s}^i_1)}^{n_1}\dots {({s}^i_m)}^{n_m}.\end{align}

Finally, letting $T=\frac{A}{B}$ we can conclude that
\begin{align*}
 \sum_{i=1}^{k} p_i{({q}^i_1)}^{n_1}\dots {({q}^i_m)}^{n_m} &\ge
\frac{1}{A}\sum_{i=1}^{k} p_i{({q}^i_1)}^{x_1}\dots {({q}^i_m)}^{x_m}
&\text{by \cref{eq:firstin}}
\\ & \ge\frac{1}{A}  C'\sum_{i=1}^{\ell} r_i{({s}^i_1)}^{x_1}\dots {({s}^i_m)}^{x_m}
&\text{by \cref{eq:mainin}}
\\ & \ge  \frac{ C'}{A}B \sum_{i=1}^{\ell} r_i{({s}^i_1)}^{n_1}\dots {({s}^i_m)}^{n_m}&\text{by \cref{eq:thirdin}}
\\ & \ge C \sum_{i=1}^{\ell} r_i{({s}^i_1)}^{n_1}\dots {({s}^i_m)}^{n_m},&\text{as $C' = C\frac{A}{B}$}
\end{align*}
as required.
\end{proof}

\section{Conclusion}
Despite undecidability results, we have identified several decidable cases of the big-O problem.
However, for  bounded languages, the result depends on a conjecture from number theory, leaving open the exact borderline between decidability
and undecidability.

Natural directions for future work include the analogous problem for infinite words, further analysis on ambiguity (e.g., is the big-O problem decidable for polynomially-ambiguous weighted automata?), and the extension to negative edge weights.

\section*{Acknowledgement}

The authors would like to thank to Engel Lefaucheux, Filip Mazowiecki, Jo\"el Ouaknine, and James Worrell for discussions during the development of this work.

\bibliographystyle{alphaurl}
\bibliography{refs.bib}
\end{document}